\newtheorem{theorem}{Theorem}[section]
\newtheorem{Proposition}[theorem]{Proposition}
\newtheorem{lemma}[theorem]{Lemma}
\newtheorem{corollary}[theorem]{Corollary}
\newtheorem{remark}[theorem]{Remark}
\newtheorem{definition}[theorem]{Definition}
\newtheorem{assumption}[theorem]{Assumption}
\newcommand{\g}{\mathfrak{g}}
\newcommand{\RR}{\mathbb{R}}
\newcommand{\ZZ}{\mathbb{Z}}
\newcommand{\cB}{\mathcal B}
\newcommand{\LL}{{\mathcal L}}
\newcommand{\M}{{\mathcal M}}
\newcommand{\MM}{\mathcal{M}}
\newcommand{\F}{{\mathcal F}}
\newcommand{\cF}{{\mathcal F}}
\newcommand{\FF}{\mathcal{F}}
\newcommand{\CL}{{\mathcal L}}
\newcommand{\Vect}{\mathrm{Vect}}
\newcommand{\gh}{\mathrm{gh}}
\newcommand{\pa}{\partial}
\newcommand{\dd}{\partial}
\newcommand{\Om}{\Omega}
\newcommand{\cA}{\mathcal A}
\newcommand{\ev}{\mathrm{ev}}
\newcommand{\bt}{\bullet}
\newcommand{\ra}{\rightarrow}
\newcommand{\mr}{\mathrm}
\newcommand{\calA}{{\mathcal{A}}}
\newcommand{\xra}{\xrightarrow}
\newcommand{\be}{\begin{equation}}
\newcommand{\ee}{\end{equation}}
\newcommand{\EL}{\mathcal{EL}}
\newcommand{\cL}{\mathcal{L}}
\newcommand{\cM}{\mathcal{M}}
\newcommand{\tX}{\tilde{X}}
\newcommand{\teta}{\tilde{\eta}}
\newcommand{\cS}{\mathcal{S}}
\newcommand{\rel}{\mathrm{rel}}
\newcommand{\p}{\mathrm{symp}}
\newcommand{\ver}{\mathrm{vert}}
\newcommand{\abs}{Q}
\DeclareMathOperator{\Map}{Map}
\DeclareMathOperator{\Mor}{Mor}
\begin{document}

\title{Classical BV theories on manifolds with boundary}
\author{Alberto S. Cattaneo}
\address{Institut f\"ur Mathematik,
Universit\"at Z\"urich,
Winterthurerstrasse 190,
CH-8057, Z\"urich, Switzerland}

\email{alberto.cattaneo@math.uzh.ch}

\author{Pavel Mnev}
\address{Institut f\"ur Mathematik,
Universit\"at Z\"urich,
Winterthurerstrasse 190,
CH-8057, Z\"urich, Switzerland}

\email{pavel.mnev@math.uzh.ch}

\author{Nicolai Reshetikhin}
\address{Department of Mathematics,
University of California, Berkeley
California 94305,
USA}

\email{reshetik@math.berkeley.edu}
\date{\today}
\maketitle

\begin{abstract} In this paper we extend the classical BV framework to gauge theories
on spacetime manifolds with
boundary. In particular, we connect the BV construction in the bulk with the BFV construction on the boundary
and we develop its extension to strata of higher codimension in the case of manifolds with corners.
We present several examples including electrodynamics, Yang-Mills theory
and topological field theories coming from the AKSZ construction, in particular,
the Chern-Simons theory, the $BF$ theory, and the Poisson sigma model. This paper is the first step towards
developing the perturbative quantization of such theories on manifolds with boundary in a way consistent
with gluing.
\end{abstract}

\tableofcontents

\section{Introduction}

One of the key features in quantum field theory is locality.
Physically, it is based on the concept of an ideal point-like particle. Mathematically, it means that the action in the corresponding classical field theory should be a local functional.

In quantum field theory locality is expected to correspond to the gluing
property of the partition function for manifolds with boundary.
For a quantum field theory on  Minkowski
cylinders this implies the usual composition law for
evolution operators. In topological and conformal field theories the locality as a gluing property
of partition functions was discussed in  \cite{At,Seg,W}.

In this paper we formulate the classical Batalin-Vilkovisky (BV) framework
for theories on manifolds with boundary. This combines the
properties of the standard BV \cite{BV} and the Batalin-Fradkin-Vilkovisky (BFV) \cite{BFV} frameworks.
Recall that standard BV theories are odd symplectic extensions of classical gauge theories by
ghosts, anti-fields and antighosts. The gauge symmetry appears as a cohomological Hamiltonian
vector field whose Hamiltonian function is an extension of the classical action. The BFV theories
give a similar cohomological description of the gauge symmetries on the boundary. Another relation between BV and BFV theories
is described in \cite{Grig}.

This paper should be considered as a first step towards the perturbative quantization of
classical BV theories on manifolds with boundary, and possibly with corners.

\vspace{1cm}

In case of topological field theories the ultimate goal is to construct perturbative topological invariants
of manifolds with boundary which satisfy the cutting-gluing principle.
All known classical topological field theories are gauge theories.
Examples are the $BF$ theory, the Chern-Simons theory, and the Poisson sigma model (see Sections \ref{sec: examples} and \ref{examples-aksz} for definitions and details).

Recall some basic facts about BV theories. These are classical
field theories with $\ZZ$-graded spaces of fields which appear naturally
in quantization of gauge theories. The BV construction is a generalization
of the Faddeev-Popov \cite{FP} and BRST \cite{BRST} methods. The BV construction
is particularly important in cases when the gauge symmetry is given by
a non-integrable distribution on the space of fields or when the gauge symmetry
is reducible with stabilizers of varying rank.
Examples of such theories include the Poisson sigma model and
the non-abelian $BF$ theory when the dimension of spacetime is greater then 3.
Examples of theories where the gauge
symmetry is an integrable distribution include Yang-Mills and Chern-Simons.
Even in cases when the distribution is integrable,
BV formalism is useful because it is compatible with Wilsonian
renormalization of Feynman diagrams, see e.g. \cite{Co_renorm}.

Topological field theories such as Chern-Simons, $BF$ theories and
the Poisson sigma model are special for several reasons.
First, in the perturbative quantization of such theories Feynman diagrams have no ultraviolet divergencies.
Another reason is that the natural choice of gauge condition, the Lorenz gauge, is
induced by a choice of metric on the spacetime manifold.
Thus the gauge independence of the Feynman diagram expansion in such theories implies that
the partition function is metric independent and therefore is a topological invariant.

All these three examples, Chern-Simons, $BF$ and the Poisson sigma model, fit into the general AKSZ construction \cite{AKSZ}. The AKSZ construction also
gives other examples of topological field theories such as the Rozansky-Witten model and others
\cite{CQZ}.

There is a consensus that perturbative
quantization of the classical Chern-Simons theory
gives the same asymptotical expansions as the combinatorial
topological field theory based on quantized universal enveloping
algebras at roots of unity \cite{RT}, or, equivalently,
on the modular category corresponding to the
Wess-Zumino-Witten conformal field theory \cite{W,MS}
with the first semiclassical computations involving torsion made in \cite{W}.
However this conjecture is still open despite a number of important results in this direction, see for example\cite{Roz,And}.

One of the reasons why the conjecture is still
open is that for manifolds with boundary the perturbative quantization of Chern-Simons theory  has not been developed yet. On the other hand,
for closed manifolds the perturbation theory involving Feynman
diagrams was developed in \cite{GM,FK,BN} and in \cite{AxS,K1,BC}. For the latest development see \cite{CM}.
Closing this gap and developing the perturbative 
quantization of Chern-Simons theory for manifolds with boundary is
one of the main motivations for the project started in this paper.


Having laid the framework of classical BV-BFV theory, in a forthcoming work we plan to discuss the formal perturbative quantization of BV theories on manifolds with boundary. The main difference with respect to the formal perturbative quantization
for closed spacetime manifolds is that in addition to a choice of gauge fixing in the bulk, one has to fix a polarization in the space of boundary fields. Given such choices, the partition function is a function on the space of
leaves of boundary polarization. Its value on a leaf
is the path integral (understood as a series of Feynman diagrams) over fluctuations lying in the gauge fixing submanifold
with boundary values being in a fixed leaf of this polarization. The resulting partition function is a state in the space of boundary states determined by the polarization. We expect that the equation (\ref{CME}) will be  replaced by a deformed quantum master equation for this state, as it happens in the example of 1d Chern-Simons theory \cite{AM}.
Gluing along a boundary component will be simply given by pairing the partition functions in the space of boundary states.

A classical BV-BFV theory is a functor from the spacetime category to the BFV category.
One should expect a similar description in the quantum case. Classically it extends to higher categories, and we anticipate
that quantum counterparts of such extensions are higher category versions of
topological quantum field theories.

In this paper spacetime manifolds are always compact oriented, possibly with boundary and corners.

In this paper we will use freely the notion of a $\ZZ$-graded manifold. See \cite{S-manif,Gr-manif}, and the Appendix \ref{graded} for details. To simplify the terminology we will call $\ZZ$-graded manifolds simply graded manifolds.
By a point in a graded manifold we understand the application of the functor of points.

\vspace{1cm}

This paper is organized in the following manner.
In section \ref{bv-cl} we recall the basic structure of classical BV theory for closed spacetime manifolds.

Section \ref{BV-bound} is the central part of the paper.
Here we formulate the BV-BFV framework for gauge theories.
In this section we propose the analog of the Classical Master Equation for spacetime manifolds with boundary. We define the relevant moduli spaces and address the problem of composing them  when
we glue two spacetime manifolds together. We also show that first order BV theories naturally extend to
lower dimensional strata.

In section \ref{bfv-category}  we introduce the notion of BFV category. We also define the classical BV theory as a
functor from the spacetime category to the BFV category.

In section \ref{sec: examples} we describe several examples of BV-BFV extensions of classical gauge theories: electrodynamics, Yang-Mills theory, scalar field and the abelian $BF$ theory. The latter is in fact topological and is an example of the AKSZ construction.



The AKSZ construction which provides a class of examples of topological BV-BFV theories
is described in section \ref{sec-aksz}. This construction is determined by the choice of a target manifold, which should be a Hamiltonian differential graded manifold. An AKSZ theory extends to strata of all  codimensions.

Further examples of BV-BFV extensions of AKSZ theories are described
in section \ref{examples-aksz}. We start with abelian and non-abelian Chern-Simons theories, then we describe the non-abelian $BF$ theory 
and the Poisson sigma model.

In Appendix \ref{co-isotr} we review some useful facts about coisotropic submanifolds and reduction. In Appendix \ref{graded} we recall some basic facts on graded manifolds. A brief discussion of smooth points on moduli spaces associated to differential graded manifolds is given in Appendix \ref{smooth}. Elements of Cartan calculus for local
forms and functionals on mapping spaces are developed in Appendix \ref{Cart-Calc} and applications to AKSZ theories are presented.

\vspace{1cm}

{\bf Acknowledgements} The authors want to thank J. Andersen, C. Arias-Abad, V. Fock,  T. Johnson-Freyd, D. Kazhdan, F. Sch\"atz, P. Teichner, C. Teleman,   M. Zambon for stimulating discussions. Authors are grateful
to the QGM center in Aarhus, where the work started, for the hospitality.
A.C. and P. M. are grateful for hospitality to the KdV institute of the University of Amsterdam and to the Department of Mathematics of the University of California at Berkeley. The work of A.C. was partially supported by SNF grant No. 200020-131813/1. P.~M. acknowledges partial support of RFBR Grants
Nos.~11-01-00570-a and 11-01-12037-ofi-m-2011 and of SNF Grant No.~200021-137595. The work of N.R. was partially supported by the NSF grant DMS-0901431, by the Danish Research Foundation through the Niels Bohr visiting professor initiative at Aarhus University and by
the Chern-Simons and Hewlett chair at UC Berkeley.

\section{Classical BV theories for closed spacetimes}\label{bv-cl}

\subsection{Non-reduced theory}

When the spacetime manifold $N$ is closed, the space of fields
$\F_N$ in a BV theory is a graded  manifold with
a symplectic form $\omega_N$ of degree $-1$ and the action functional
$S_N$ of degree zero. Usually fields are sections of or connections on a fiber bundle on $N$.
Some basic facts on graded manifolds
are summarized in Appendix \ref{graded}. The degree in the grading on the space of fields is usually called the ghost number. Thus,
$gh(\omega_N)=-1$ and $gh(S_N)=0$.
The space of fields is usually infinite dimensional.
The BV data should satisfy the following property: the Poisson bracket of the action functional with itself is zero \cite{BV,BFV} (see also \cite{AS1,Sev}).
This property is known as the classical master equation (CME).
The action functional generates the Hamiltonian vector
field $Q_N$ of degree $1$. The CME implies that
the Lie derivative of this vector field squares to zero.

Most interesting for applications both in physics and in
geometry and topology are local theories. The
notion of locality is more transparent for theories on spacetime manifolds with
boundary, see section \ref{BV-bound}.  For closed manifolds locality implies natural isomorphisms $\cF_{N_1\sqcup N_2}\simeq
\cF_{N_1}\times \cF_{N_2}$ and the additivity of the symplectic form and of the action: $\omega_{N_1\sqcup N_2}=\omega_{N_1}+
\omega_{N_2}$, $S_{N_1\sqcup N_2}=S_{N_1}+
S_{N_2}$.

We will use the following definition of classical BV theory for closed spacetime manifolds.

\begin{definition} For a closed spacetime $N$ a BV theory \cite{BV} on the space of fields
$\F_N$ is a triple $(\omega_N, Q_N, S_N)$ where $\omega_N$
is a symplectic form on $\F_N$ with $gh=-1$, $Q_N$ is
a vector field on $\F_N$ with $gh=1$, and $S_N$ is a function
on $\F_N$ with $gh=0$, subject to
\[
\iota_{Q_N} \omega_N= (-1)^n\delta S_N, \ \  L_{Q_N}^2=0
\]
where $L_Q$ is the Lie derivative for the vector field $Q$.
\end{definition}
These two equations imply the {\it classical master
equation}
\[
\{S_N, S_N\}=0 ,
\]
which can also be written as
\[
L_{Q_N}S_N=0
\]
The first equation in the definition means that the vector field  $Q_N$ is Hamiltonian
and therefore it preserves the symplectic form:
\[
L_{Q_N}\omega_N=0,
\]

Critical points of the action functional $S_N$ are solutions
to Euler-Lagrange (EL) equations
\begin{equation}\label{el}
\delta S_N=0,
\end{equation}

Because the space of fields is an odd-symplectic manifold,
solutions to the EL equations are also zeroes of the
Hamiltonian vector field generated by $S_N$, i.e. of
the cohomological vector field $Q_N$. Denote the
space of solutions to Euler-Lagrange equations (\ref{el}) by
$\EL_N$.

Because vector field $Q_N$ is Hamiltonian, and Poisson
brackets of local functionals are defined, we can write
$L_{Q_N}F=\{S_N, F\}$ for the action of the vector field $Q_N$
on the functional $F$.  Denote by $V_F$ the vector field
generated by $F$, then $L_{V_F}G=\{F,G\}$. It is clear that
vector fields $[Q_N, V_F]$ annihilate the functional $S_N$:
\[
L_{[Q_N, V_F]}S_N=\{\{S_N,F\},S_N\}=0
\]
This follows from CME and from the Jacobi identity.
Thus, any such vector field can be considered as a local (off-shell)
gauge symmetry of $S_N$.

From now on we will consider only first order BV theories.
In such theories the action functional is local and it is linear
in derivatives of fields. It is well known that any local theory
can be reformulated as a first order theory by adding corresponding ``momenta" variables in fields.

\subsection{Other degrees}\label{o-degr}
So far we have assumed the symplectic form $\omega_{N}$ to have degree $-1$. But everything can be generalized to the case when the degree is some integer $k$ (we fix however the degree of the cohomological vector field $Q_{N}$ to be $1$). In this setting the action $S_{N}$ will have degree $k+1$ \cite{Royt}. Only a few remarks are in order
\begin{enumerate}
\item Unless $k=0$, $\omega_{N}$ is automatically exact (so for $k=0$ we need this extra assumption). In the following, we will be interested in specifying a primitive $1$-form of the same degree $k$.
\item Unless $k=-1$, the condition $L_{Q_{N}}\omega=0$ implies that $Q_{N}$ is Hamiltonian and that the Hamiltonian function $S_{N}$ is uniquely determined.
\item Unless $k=-2$, the condition $Q_{N}^{2}=0$ implies $\{S_{N},S_{N}\}=0$.
\end{enumerate}
We will mainly be interested in the cases $k=-1$ (BV manifolds) and $k=0$ (BFV manifolds). For extended BV theories, we will also be interested in $k$ positive. We are not aware of any application of the cases $k<-1$.

\subsection{The $Q$-reduction}

The vector field $Q_N$ vanishes on the subspace $\EL_N\subset \cF_N$. It also defines the Lie subalgebra $\widetilde{Vect}_Q$ of the Lie algebra of vector fields on $\cF_N$ generated by vector fields which are Lie brackets with $Q_N$, i.e. by vector fields of the form $[V,Q_N]$. This Lie subalgebra in general does not determine a distribution on $\cF_N$. However, the restriction of $\widetilde{Vect}_Q$ to $\EL_N$ defines an integrable distribution $Vect_Q$.

\begin{definition} The $Q$-reduction $\EL_N/Q$ of $\EL_N$ is the space of leaves of the distribution $Vect_Q$.
\end{definition}

Assume that $X$ is a zero of $Q_N$ such that the leaf of $Vect_Q$ through $X$ is a
smooth point in $\EL_N/Q$.  Define the linear operator $\hat{Q}_X:
T_X\F_N\to T_X\F_N$ as the linearization of $Q_N$ at $X$. If we write $Q_N$ in local coordinates as $Q_N=\sum_a q^a(X)\pa_a$,
the operator $\hat{Q}_X$ acts on the basis $\pa_a$ in $T_X\F_N$ as
\[
\hat{Q}_X \pa_a=\sum_b \pa_a q^b(X)\pa_b
\]
Because $Q_N$ squares to zero we have $\hat{Q}_X^2=0$.
By definition $T_X\EL_N=\ker(\hat{Q}_X)$. Also observe that
$Vect_Q|_X=Im (\hat{Q}_X)\subset T_X\EL_N$. From now on for geometric
considerations we will focus only on smooth points as defined in the Appendix \ref{smooth}.

\begin{Proposition}\label{QisS} For a smooth point $[X]\in \EL_N/Q$,
there is a natural linear isomorphism
\[
T_{[X]}\EL_N/Q=ker(\hat{Q}_X)/Im (\hat{Q}_X)
\]
\end{Proposition}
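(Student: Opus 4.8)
The plan is to treat $\EL_N/Q$ as the leaf space of the integrable distribution $Vect_Q$ on $\EL_N$ and to invoke the standard picture for quotients by a foliation: at a point where the leaf space is smooth, the projection $\pi\colon \EL_N\to \EL_N/Q$ is a submersion, so its differential $d\pi_X\colon T_X\EL_N\to T_{[X]}(\EL_N/Q)$ is surjective with kernel exactly the tangent space of the leaf through $X$. Granting this, the quotient tangent space is $T_X\EL_N$ modulo the tangent to the leaf, and the whole proposition reduces to identifying those two spaces with $\ker(\hat Q_X)$ and $\im(\hat Q_X)$, respectively.

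The first identification, $T_X\EL_N=\ker(\hat Q_X)$, is already recorded before the statement: $\EL_N$ is the zero locus of $Q_N$ (equivalently of $\delta S_N$), so at a smooth point its tangent space is the kernel of the linearization $\hat Q_X$ of $Q_N$. For the second identification I would re-derive the value at $X$ of a generator $[V,Q_N]$ of $Vect_Q$. Writing $Q_N=\sum_a q^a\pa_a$ and $V=\sum_a v^a\pa_a$ in local coordinates,
\[
[V,Q_N]=\sum_{a,b}\bigl(v^b\,\pa_b q^a-q^b\,\pa_b v^a\bigr)\pa_a .
\]
Since $q^a(X)=0$ at a zero $X$ of $Q_N$, the second term drops out at $X$, leaving
\[
[V,Q_N]\big|_X=\sum_{a,b}v^b(X)\,\pa_b q^a(X)\,\pa_a=\hat Q_X\bigl(V|_X\bigr).
\]
As $V$ ranges over all vector fields, $V|_X$ ranges over all of $T_X\F_N$, so $Vect_Q|_X=\im(\hat Q_X)$, recovering the observation preceding the statement. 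Moreover $\hat Q_X^2=0$ gives $\im(\hat Q_X)\subseteq\ker(\hat Q_X)=T_X\EL_N$, so this subspace really is tangent to $\EL_N$ and the quotient $\ker(\hat Q_X)/\im(\hat Q_X)$ is well defined, being precisely the cohomology of $\hat Q_X$.

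Assembling the pieces, $d\pi_X$ is surjective with kernel $Vect_Q|_X=\im(\hat Q_X)$, so it descends to a canonical, and hence natural, isomorphism $\ker(\hat Q_X)/\im(\hat Q_X)\xrightarrow{\sim}T_{[X]}(\EL_N/Q)$. The main obstacle is concealed entirely in the phrase \emph{smooth point}: one must know that near $[X]$ the leaf space is genuinely a manifold, that $\pi$ is a submersion there, and that $\ker d\pi_X$ coincides with the full tangent space of the leaf rather than just containing $\im(\hat Q_X)$. Because $\F_N$ is typically infinite-dimensional, one must also ensure that $\ker(\hat Q_X)$ and $\im(\hat Q_X)$ are closed and split, so that the quotient is a well-behaved (Hausdorff) topological vector space. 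With the smoothness packaged into Appendix \ref{smooth} taken for granted, the remainder is the direct application of the foliation–submersion picture sketched above.
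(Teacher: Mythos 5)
Your proof is correct and takes essentially the same route as the paper: the paper states this proposition without a separate proof, having just recorded the two identifications you rederive --- $T_X\EL_N=\ker(\hat{Q}_X)$ (taken as the definition of the tangent space to $\EL_N$) and $Vect_Q|_X=\mathrm{Im}(\hat{Q}_X)$ (via the same bracket computation $[V,Q_N]|_X=\hat{Q}_X(V|_X)$ at a zero of $Q_N$) --- and then combines them through the leaf-space picture exactly as you do. The smoothness caveats you rightly flag are what the paper's Appendix \ref{smooth} is designed to absorb: there a smooth point is \emph{defined} via homotopy transfer so that the formal neighborhood of $[X]$ in $\EL_N/Q$ is the graded vector space $H_{\hat{Q}_X}(T_X\F_N)=\ker(\hat{Q}_X)/\mathrm{Im}(\hat{Q}_X)$, which makes the final identification essentially definitional rather than an appeal to a foliation--submersion theorem in infinite dimensions.
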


\subsection{The symplectic reduction}\label{bv-sym-red}

The following Proposition says that if the subspace $\EL_N$
is a submanifold then it is a coisotropic submanifold.

\begin{Proposition} \label{el-coiso}
If $X\in \EL_N$ is a smooth point, then $T_X \EL_N\subset T_X\F_N$
is a coisotropic subspace.
\end{Proposition}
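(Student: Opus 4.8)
The plan is to work entirely at the level of the linearization $\hat Q_X$ at the smooth point $X$, using that $\EL_N$ is cut out as the zero locus of the Hamiltonian vector field $Q_N$. Write $V:=T_X\F_N$, equipped with the (weakly nondegenerate, graded) symplectic form $\omega:=\omega_N|_X$, and abbreviate $A:=\hat Q_X\colon V\to V$. For a subspace $W\subseteq V$ let $W^\perp=\{v\in V:\omega(v,w)=0 \text{ for all } w\in W\}$ be the symplectic orthogonal, so that coisotropy of $W$ means $W^\perp\subseteq W$. Since the excerpt already identifies $T_X\EL_N=\ker A$ and records $\operatorname{Im}A\subseteq\ker A$ (a consequence of $A^2=0$), the whole proposition reduces to the single inclusion $(\ker A)^\perp\subseteq\ker A$.

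The crucial structural input I would establish is that $A$ is graded self-adjoint with respect to $\omega$, i.e.\ that the bilinear form $G(u,v):=\omega(Au,v)$ is graded-symmetric. I plan to obtain this by linearizing the defining relation $\iota_{Q_N}\omega_N=(-1)^n\delta S_N$ at the critical point $X$. Because $Q_N(X)=0$, the only surviving contribution to the linearization of the one-form $\iota_{Q_N}\omega_N$ comes from differentiating $Q_N$ (the term in which the derivative hits $\omega_N$ is multiplied by $Q_N(X)=0$), so that for $u,v\in V$
\[
\omega(Au,v)=(-1)^n\operatorname{Hess}_X(S_N)(u,v).
\]
The Hessian of a function at a critical point is a well-defined graded-symmetric bilinear form, hence so is $G$. (Equivalently, one may linearize $L_{Q_N}\omega_N=0$ to place $A$ in the symplectic Lie algebra $\mathfrak{sp}(V,\omega)$, which yields the same self-adjointness up to graded-sign conventions.)

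With self-adjointness in hand the inclusion is immediate. Let $v\in(\ker A)^\perp$. Since $\operatorname{Im}A\subseteq\ker A$, we have $\omega(v,Au)=0$ for every $u\in V$. Combining the graded antisymmetry of $\omega$ with the graded symmetry of $G$ rewrites this, up to an overall nonzero sign, as $\omega(Av,u)=0$ for every $u\in V$; nondegeneracy of $\omega$ then forces $Av=0$, i.e.\ $v\in\ker A$. This proves $(\ker A)^\perp\subseteq\ker A$, so $T_X\EL_N$ is coisotropic. The main point requiring care is the bookkeeping of graded signs in passing from $\omega(v,Au)$ to $\omega(Av,u)$ — they are all units and never obstruct the vanishing conclusions — together with the infinite-dimensional caveat that only weak nondegeneracy of $\omega_N$ is available. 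The argument above is deliberately tailored to use exactly the implication ``$\omega(Av,u)=0$ for all $u\Rightarrow Av=0$'', so it avoids any dimension count (which would compute $(\ker A)^\perp=\operatorname{Im}A$ via ranks) and thereby remains valid in the field-theoretic setting.
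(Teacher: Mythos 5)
Your proposal is correct and takes essentially the same route as the paper: the key input in both is the graded self-adjointness of $\hat{Q}_X$ with respect to $\omega_N$ at a zero of $Q_N$, combined with $\hat{Q}_X^2=0$ and weak nondegeneracy to conclude $\ker(\hat{Q}_X)^\perp\subseteq\ker(\hat{Q}_X)$. The only cosmetic differences are that you obtain self-adjointness from the Hessian of $S_N$ at the critical point (noting the paper's derivation via $L_{Q_N}\omega_N=0$ as an equivalent alternative), and you chase elements directly rather than phrasing the argument through the isotropy of $Im(\hat{Q}_X)$ and the identification $Im(\hat{Q}_X)^\perp=\ker(\hat{Q}_X)$.
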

\begin{proof}

The invariance of $\omega_N$ with respect to the vector field
$Q_N$ means that $L_{Q_N}\omega_N=0$. If $X$ is a zero of $Q_N$
this implies $\omega_N(\hat{Q}_X \xi, \eta)=\omega_N(\xi, \hat{Q}_X \eta)$.

These properties of $\hat{Q}_X$ imply $\omega_N(\hat{Q}_X \xi, \hat{Q}_X \eta)=\omega_N(\xi, \hat{Q}_X^2 \eta)=0$. Therefore
$Im(\hat{Q}_X)$ is an isotropic subspace and therefore
its symplectic orthogonal $Im(\hat{Q}_X)^{\perp}$ is coisotropic.

On the other hand
\begin{multline}
Im(\hat{Q}_X)^{\perp}=\{\xi \;|\; \omega_N(\xi, \hat{Q}_X\eta)=0, \mbox{ for any } \eta\}\\
=\{\xi \;|\; \omega_N(\hat{Q}_X\xi, \eta)=0, \mbox{ for any } \eta\}=
\ker(\hat{Q}_X)
\end{multline}

This proves that $T_X\EL_N=\ker(\hat{Q}_X)$ is a coisotropic
subspace. Because we used only nondegeneracy of the form
$\omega_N$ on $T_X\cF$, the proof works both in finite-dimensional and in the infinite-dimensional case.

\end{proof}

Recall the definition of the symplectic reduction  of a coisotropic submanifold of a symplectic manifold.
The symplectic reduction $\underline{C}$ of the coisotropic submanifold $C$ in the symplectic manifold $S$ is the space of leaves of the characteristic foliation of $C$.
The characteristic foliation of $C$ is spanned by Hamiltonian vector fields generated by the vanishing ideal $I_C$ (the ideal in the commutative algebra of functions on $S$ generated by functions vanishing on $C$). When $\underline{C}$ is a smooth manifold, it is a symplectic manifold. Otherwise only at smooth points the tangent
space to $\underline{C}$  has a natural symplectic form. If $[X]$ is such a smooth point, the tangent space $T_{[X]} \underline{C}$ is naturally isomorphic to $T_XC/(T_XC)^{\perp}$.

\begin{remark} An alternative proof of the Proposition \ref{el-coiso} is algebraic. Observe that the vanishing ideal
for $\EL_N$ is generated by functionals of the form $\{S_N, T\}$ where $\{.,.\}$ is the Poisson bracket for the form $\omega_N$ and
$T$ is a local functional.
The simple calculations shows that
\[
\{\{S, U\}, \{S, T\}\}=\{S, \{U, \{S, T\}\}\}
\]
The bracket $\{U, \{S, T\}\}$ is a local functional if $U$ and $T$ are. Therefore the vanishing ideal is a Poisson algebra and therefore $\EL_N$ is coisotropic \emph{ in the algebraic sense}.
\end{remark}

In Proposition \ref{el-coiso} we proved that $Im(\hat{Q}_X)^\perp=\ker(\hat{Q}_X)$. This
implies $Im(\hat{Q}_X)\subset\ker(\hat{Q}_X)^\perp$, which means that the $\hat{Q}$-distribution on
$\EL_N$ is contained in the characteristic distribution. In the finite dimensional case the inclusion
becomes an equality.
\begin{assumption}
In classical field theory we set
\begin{equation}\label{assump}
Im(\hat{Q}_X)=\ker(\hat{Q}_X)^\perp ,
\end{equation}
as an assumption.
\end{assumption}

This corresponds to certain ellipticity condition for $Q_N$.
In all our examples this condition follows from the usual Hodge-de Rham decomposition.

\begin{remark} \label{rem: symp red = Q-red BV w/o bdry}As a consequence of condition (\ref{assump}) the characteristic foliation of $\EL_N$ is the
same as the foliation by $Vect_Q$, and in particular, if $[X]\in \underline{\EL}_N$ is a smooth point, then
\begin{equation}\label{tsp}
T_{[X]}\underline{\EL}_N=\ker(\hat{Q}_X)/Im(\hat{Q}_X)
\end{equation}
In other words, locally the symplectic reduction of $\EL_N$ coincides with its $Q$-reduction: $\underline{\EL}_N=\EL_N/Q$.

\end{remark}

We will call  $\underline{\EL}_N=\EL_N/Q$ the $\EL$-moduli space for $N$.
If the cohomology spaces of $\hat{Q}_X$ are
finite dimensional at a generic point $X$, the $\EL$-moduli space
is finite dimensional, but possibly singular.

\begin{remark} With the appropriate assumptions the ring of functions on the reduced space $\EL_N/Q$ is isomorphic to the  cohomology space of the ring of functions on $\F_N$ with the differential $L_{Q_N}$ (the Lie derivative with respect to $Q_N$). To be more specific, one should expect that the smooth locus
of $\EL/Q$ is isomorphic to the smooth locus of $Spec(H_Q(Fun(\cF_N)))$ with the appropriate definition of
$Spec$ and of the ring of functionals $Fun(\cF_N)$.

\end{remark}

Because $\delta S_N=0$ on the subspace $\EL_N\subset \cF_N$ we have the following Proposition.

\begin{Proposition} The action is constant on connected components of $\EL_N$.
\end{Proposition}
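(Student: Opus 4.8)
The plan is to show that $S_N$ is locally constant on $\EL_N$ and then invoke the elementary fact that a locally constant function is constant on each connected component. Since the assertion concerns only the \emph{values} of $S_N$ and $gh(S_N)=0$, there is no grading subtlety: it suffices to control the ordinary directional derivatives of $S_N$ along $\EL_N$, which we read through the functor of points as in the conventions fixed in the Introduction. The essential observation is the defining property encoded in (\ref{el}): $\EL_N$ is \emph{precisely} the zero locus of the one-form $\delta S_N$, so at every point $X\in\EL_N$ the full differential $(\delta S_N)_X\in T_X^*\F_N$ vanishes, not merely its restriction to $T_X\EL_N$. This is what makes the computation immediate.

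Concretely, I would fix a smooth point $X\in\EL_N$ and a tangent vector $\xi\in T_X\EL_N\subset T_X\F_N$, realized as the velocity of a path $\gamma$ in $\EL_N$ with $\gamma(0)=X$, $\dot\gamma(0)=\xi$. The derivative of $S_N$ in the direction $\xi$ is the pairing
\[
\langle (\delta S_N)_X,\xi\rangle = 0,
\]
which vanishes simply because $X\in\EL_N$. Running the same computation along the whole path gives
\[
\frac{d}{dt}\,S_N(\gamma(t))=\langle (\delta S_N)_{\gamma(t)},\dot\gamma(t)\rangle = 0,
\]
so $S_N$ is constant along every smooth path in $\EL_N$; hence $dS_N$ restricted to $\EL_N$ vanishes at all smooth points and $S_N$ is locally constant on the smooth locus, and therefore constant on its connected components.

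The step I expect to require the most care is passing from the smooth locus to all of $\EL_N$, so that the conclusion is genuinely about connected components of $\EL_N$ rather than of its smooth stratum. The cleanest way to handle the possibly singular points is the algebraic reformulation already used in the Remark after Proposition \ref{el-coiso}. Let $I_{\EL_N}$ be the vanishing ideal of $\EL_N$; by (\ref{el}) its generators are exactly the components of $\delta S_N$, so each of these lies in $I_{\EL_N}$. For any vector field $V=\sum_a V^a\pa_a$ on $\F_N$ one then has $L_V S_N=\langle \delta S_N,V\rangle=\sum_a (\pa_a S_N)\,V^a\in I_{\EL_N}$, i.e. the derivative of $S_N$ along \emph{any} vector field vanishes on $\EL_N$. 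This shows $S_N$ is constant modulo $I_{\EL_N}$ and bypasses the need to move along paths through singular points. The actual differentiation $\langle\delta S_N,\dot\gamma\rangle=0$ is trivial; the only genuine content is the bookkeeping that turns ``vanishing directional derivatives'' into ``constant on connected components'' uniformly over the singular stratum, which the ideal-theoretic statement settles.
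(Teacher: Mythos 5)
Your proof is correct and is essentially the paper's own argument: the paper derives this Proposition in one line from the observation that $\delta S_N=0$ holds identically on $\EL_N$ (since $\EL_N$ is by definition the critical locus), which is exactly the key fact you use to kill all directional derivatives along $\EL_N$. Your additional ideal-theoretic care at singular points is a reasonable elaboration but does not change the substance of the argument.
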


\section{ BV theories for spacetime manifolds with boundary}\label{BV-bound}

\subsection{Non-reduced theory}\label{bvbfv}
\subsubsection{Non-reduced BV-BFV theory}
In this section, unless otherwise specified, $N$ is an
$n$-dimensional spacetime manifold, with $\pa N$ being its $n-1$ dimensional boundary. The BV theory for spacetime manifolds with boundary consists
of two parts.

First, in the bulk we have the same data as for BV theory,
that is: the space of fields $\F_N$ which is a graded (the degree is called ghost number) symplectic manifold with the symplectic form $\omega_N$ with $gh=-1$, the cohomological vector field $Q_N$ with $gh=1$ and the
action functional $S_N$ of ghost number zero.

Second, on the boundary we have the BFV data \cite{BFV} which consist of: the space of boundary fields $\F_{\pa N}$ which is a graded (usually infinite dimensional) symplectic manifold with the symplectic
form $\omega_{\pa N}$ which is exact $\omega_{\pa N}=(-1)^{n-1}\delta\alpha_{\pa N}$ with $gh=0$, and a vector field $Q_{\pa N}$ with $gh=1$. The sign in $(-1)^{n-1}$ is here because
it appears naturally in AKSZ examples of topological gauge theories. The BFV data satisfy the following conditions:

\begin{equation}\label{BFV}
Q_{\pa N}^2=0, \ \ L_{Q_{\pa N}}\omega_{\pa N}=0
\end{equation}

These conditions imply that $Q_{\pa N}$ is Hamiltonian vector field \cite{Royt} with ghost number $1$,
with the Hamiltonian $S_{\pa N}$, which is by definition
the BFV action. In Appendix \ref{graded} we recall this construction.

A BV theory on a spacetime with boundary can be
regarded as a pairing of the BV data with the BFV data
through the natural mapping $\pi: \F_N\to \F_{\pa N}$
which is the restriction of fields to the boundary.

We will say that BV theory in the bulk and BFV theory on
the boundary agree and form the BV-BFV theory on the
manifold with boundary if the restriction mapping is a surjective submersion and if the BV data in the bulk and BFV data on the boundary satisfy the following conditions

\begin{equation}
Q_N^2=0, \ \ \delta\pi (Q_N)=Q_{\pa N}, \ \ L_{Q_{\pa N}}\omega_{\pa N}=0
\end{equation}
\begin{equation}\label{cme}
\iota_{Q_N}\omega_N=(-1)^n\delta S_N+\pi^*(\alpha_{\pa N}),
\end{equation}
The sign is purely conventional, but it fits well with the
natural definition of BV data for AKSZ theories.

Note that first two equations imply that
\begin{equation}\label{q2z}
Q_{\pa N}^2=0
\end{equation}
which means BFV axioms are satisfied.
The equation (\ref{cme}) implies
\begin{equation}\label{lo}
L_{Q_N}\omega_N=(-1)^n\pi^*(\omega_{\pa N})
\end{equation}
which means, in particular, that in BV-BFV theory the symplectic form is no longer $Q$-invariant.

The bulk action satisfies an important identity which describes
how the action changes with respect to gauge transformations.

\begin{Proposition} \label{LQS} The following identity holds in any BV-BFV theory
\be L_{Q_N} S_N = (-1)^{\dim(N)}\pi^* (2S_{\dd N} - \iota_{Q_{\dd N}} \alpha_{\dd N})  \label{CME}\ee
\end{Proposition}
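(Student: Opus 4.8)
The plan is to compute $L_{Q_N} S_N$ directly by exploiting the defining relation (\ref{cme}) together with the Cartan-calculus identities that govern how the interior product, the Lie derivative, and the de Rham differential interact. The starting point is to contract the fundamental equation $\iota_{Q_N}\omega_N = (-1)^n \delta S_N + \pi^*(\alpha_{\dd N})$ with $Q_N$ itself. On the right-hand side this immediately produces the term $(-1)^n \iota_{Q_N}\delta S_N = (-1)^n L_{Q_N} S_N$ (since $\iota_{Q_N}\delta S_N = L_{Q_N}S_N$ on a function), plus the term $\iota_{Q_N}\pi^*(\alpha_{\dd N})$. On the left-hand side we get $\iota_{Q_N}\iota_{Q_N}\omega_N$, which I expect to vanish or reduce because $\iota_{Q_N}\iota_{Q_N}$ applied to a $2$-form picks up $\omega_N(Q_N,Q_N)$, and the relevant symmetry/antisymmetry for an odd vector field must be tracked carefully with signs.

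Next I would rewrite $\iota_{Q_N}\pi^*(\alpha_{\dd N})$ in boundary terms. Since $\delta\pi(Q_N) = Q_{\dd N}$, the pushed-forward vector field is $Q_{\dd N}$, so $\iota_{Q_N}\pi^*(\alpha_{\dd N}) = \pi^*(\iota_{Q_{\dd N}}\alpha_{\dd N})$ by naturality of pullback under $\pi$. This converts the bulk contraction into a pullback of a boundary object. To bring in $S_{\dd N}$, I would use the BFV relation on the boundary: $Q_{\dd N}$ is Hamiltonian for $\omega_{\dd N} = (-1)^{n-1}\delta\alpha_{\dd N}$ with Hamiltonian $S_{\dd N}$, which gives the boundary analog $\iota_{Q_{\dd N}}\omega_{\dd N} = \pm\,\delta S_{\dd N}$. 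Feeding $\omega_{\dd N}=(-1)^{n-1}\delta\alpha_{\dd N}$ into this and applying the Cartan magic formula $L_{Q_{\dd N}}\alpha_{\dd N} = \iota_{Q_{\dd N}}\delta\alpha_{\dd N} + \delta\iota_{Q_{\dd N}}\alpha_{\dd N}$ should let me solve for $\iota_{Q_{\dd N}}\delta\alpha_{\dd N}$ and hence relate $\delta S_{\dd N}$, $L_{Q_{\dd N}}\alpha_{\dd N}$, and $\delta\iota_{Q_{\dd N}}\alpha_{\dd N}$.

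Assembling the pieces, the identity should emerge as
\[
L_{Q_N}S_N = (-1)^{\dim(N)}\,\pi^*\bigl(2S_{\dd N} - \iota_{Q_{\dd N}}\alpha_{\dd N}\bigr),
\]
after substituting the boundary Hamiltonian relation and canceling the $L_{Q_{\dd N}}\alpha_{\dd N}$ contribution, which I expect to drop because $S_{\dd N}$ is itself built from $\iota_{Q_{\dd N}}\alpha_{\dd N}$ via the standard BFV formula $S_{\dd N} = \tfrac{1}{2}\iota_{Q_{\dd N}}\iota_{Q_{\dd N}}(\text{something})$ or, more directly, because the $Q_{\dd N}$-invariance $L_{Q_{\dd N}}\omega_{\dd N}=0$ forces $L_{Q_{\dd N}}\alpha_{\dd N}$ to be $\delta$-exact. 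The factor of $2$ in front of $S_{\dd N}$ is the telltale sign that I should invoke the explicit Hamiltonian formula relating $S_{\dd N}$ to $\iota_{Q_{\dd N}}\alpha_{\dd N}$ rather than just the abstract Hamiltonian condition.

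The main obstacle I anticipate is bookkeeping of the Koszul signs throughout. Because $Q_N$ and $Q_{\dd N}$ are odd (ghost number $1$) vector fields and $\delta$, $\iota$, $L$ all carry degrees, every step — in particular the double contraction $\iota_{Q_N}\iota_{Q_N}\omega_N$, the commutation of $\iota_{Q_N}$ past $\pi^*$, and the Cartan formula on the graded side — must be done with the correct graded-commutator conventions. I would therefore fix sign conventions at the outset (presumably those of Appendix \ref{Cart-Calc}) and verify the total power of $-1$ reduces to $(-1)^{\dim(N)} = (-1)^n$; checking this against a worked AKSZ example such as the abelian $BF$ theory would be the natural consistency test.
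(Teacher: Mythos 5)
Your opening move --- contracting (\ref{cme}) with $\iota_{Q_N}$ --- does produce a correct identity: using $\iota_{Q_N}\delta S_N = L_{Q_N}S_N$ and the projectability $\delta\pi(Q_N)=Q_{\pa N}$ one gets
\[
\iota_{Q_N}\iota_{Q_N}\omega_N \;=\; (-1)^{n}\,L_{Q_N}S_N \;+\; \pi^*\bigl(\iota_{Q_{\pa N}}\alpha_{\pa N}\bigr).
\]
But your proof then stalls exactly at the left-hand side, and this is a genuine gap, not a sign-bookkeeping issue. Since $Q_N$ has ghost number $1$, the operator $\iota_{Q_N}$ is \emph{even}, so $\iota_{Q_N}\iota_{Q_N}\omega_N=\omega_N(Q_N,Q_N)$ does not vanish by any antisymmetry --- and it cannot vanish here: comparing the display above with the statement you are trying to prove shows that consistency forces $\iota_{Q_N}\iota_{Q_N}\omega_N = 2\,\pi^*S_{\pa N}$. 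In other words, the double contraction is precisely where the $2S_{\pa N}$ term lives; if it vanished, your computation would give $L_{Q_N}S_N=-(-1)^{n}\pi^*\iota_{Q_{\pa N}}\alpha_{\pa N}$, contradicting the Proposition whenever $S_{\pa N}\neq 0$. Your proposed mechanisms for recovering the factor of $2$ (a formula of the type $S_{\pa N}=\tfrac12\iota_{Q_{\pa N}}\iota_{Q_{\pa N}}(\cdots)$, or $\delta$-exactness of $L_{Q_{\pa N}}\alpha_{\pa N}$) do not evaluate $\omega_N(Q_N,Q_N)$: nothing in your outline invokes $Q_N^2=0$ or the relation $L_{Q_N}\omega_N=(-1)^n\pi^*\omega_{\pa N}$ (equation (\ref{lo})), and without these the identity above merely trades one unknown, $L_{Q_N}S_N$, for another, $\omega_N(Q_N,Q_N)$ --- the argument is circular. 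A proof of this Proposition that never uses $Q_N^2=0$ should already raise a red flag, since that axiom enters the paper's proof essentially.

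For comparison, the paper applies $L_{Q_N}$ (not $\iota_{Q_N}$) to (\ref{cme}): the hypothesis $Q_N^2=0$ enters through $[L_{Q_N},\iota_{Q_N}]=0$, equation (\ref{lo}) converts $L_{Q_N}\iota_{Q_N}\omega_N=\iota_{Q_N}L_{Q_N}\omega_N$ into boundary data, the Cartan formula on the boundary brings in $\delta S_{\pa N}=\iota_{Q_{\pa N}}\omega_{\pa N}$ \emph{twice} --- which is where the coefficient $2$ actually comes from --- and one lands on $\delta\bigl((-1)^nL_{Q_N}S_N-\pi^*(2S_{\pa N}-\iota_{Q_{\pa N}}\alpha_{\pa N})\bigr)=0$; the final step is the degree argument that a $\delta$-closed function of ghost number $1$ vanishes. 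Your route can be rescued, but only by importing all of these ingredients: apply $\delta$ to $\iota_{Q_N}\iota_{Q_N}\omega_N$, use $\delta\iota_{Q_N}=\iota_{Q_N}\delta-L_{Q_N}$ together with $[L_{Q_N},\iota_{Q_N}]=0$ and (\ref{lo}) to show $\delta\bigl(\iota_{Q_N}\iota_{Q_N}\omega_N - 2\pi^*S_{\pa N}\bigr)=0$, and then conclude by the same ghost-number-$1$ argument that the difference vanishes. As written, your outline omits every one of these steps, and they are the entire content of the proof.
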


\begin{remark} We can remove the signs $(-1)^n$ by redefining
vector fields $Q_N$, $Q_{\pa N}$ and the $1$-form $\alpha_{\pa N}$. Set $Q=(-1)^nQ_N$, $Q_\pa =(-1)^nQ_{\pa N}$ and $\alpha_\pa =(-1)^{n-1}\alpha_{\pa N}$. Then $\delta\pi(Q)=Q_\pa$, $\iota_Q\omega_N=S_N-\pi^*(\alpha_\pa)$ and $L_Q\omega_N=\pi^*(\omega_{\pa N})$.
\end{remark}

\begin{proof}By definition of the Lie derivative
\[
L_{Q_N}=\iota_{Q_N}\delta-\delta\iota_{Q_N}
\]
and the same formula for the Lie derivative holds for
$Q_{\pa N}$. The sign is minus in this formula because $\iota_Q$ is an even operation. We will prove the Proposition by applying
$L_{Q_N}$ to the  equation:
\[
\iota_{Q_N} \omega_N = (-1)^n \delta S_N + \pi^*(\alpha_{\pa N})
\]

By definition of $L_{Q_N}$ and because of $Q_N^2=0$ we have $[L_{Q_N}, \iota_{Q_N}]=L_{Q_N}\iota_{Q_N}-\iota_{Q_N}L_{Q_N}=0$ and
we already know that
\[
L_{Q_N} \omega_N = -\pi^*\delta\alpha_{\pa N}=(-1)^{n+1}\pi^* \omega_{\pa N},
\]

Therefore we have identities:
\begin{equation}\label{p1}
L_{Q_N}\iota_{Q_N}\omega_N=\iota_{Q_N}L_{Q_N}\omega_N=-\pi^*\iota_{Q_{\pa N}}\delta\alpha_{\pa N},
\end{equation}

Now we can apply the Lie derivative of $Q_N$ to the
classical master equation:
\begin{multline}
L_{Q_N}\iota_{Q_N}\omega_N=(-1)^nL_{Q_N}\delta S_N+L_{Q_N}\pi^*\alpha_{\pa N}=
-(-1)^n\delta\iota_{Q_N}\delta S_N+\pi^*(L_{Q_{\pa N}}\alpha_{pa N})=\\-(-1)^n\delta L_{Q_N}S+
\pi^*(\iota_{Q_{\pa N}}\delta\alpha_{\pa N}-\delta\iota_{Q_{\pa N}}\alpha_{\pa N})
\end{multline}
Here, in the last line, we used the formula for the Lie derivative $L_{Q_{\pa N}}$. Using (\ref{p1})
and the identity $\delta S_{\pa N}=\iota_{Q_\pa N}\omega_{\pa N}$ we arrive at:
\[
-\pi^*(\delta S_{\pa N})=-(-1)^n\delta L_{Q_N}S_N+\pi^*(\delta S_{\pa N})-\pi^*(\delta\iota_{Q_{\pa N}}\alpha_{\pa N})
\]

This is equivalent to
\[
\delta((-1)^nL_{Q_N}S_N-\pi^*(2S_{\pa N}-\iota_{Q_{\pa N}}\alpha_{\pa N}))=0
\]
We have $\delta (\mbox{ function of degree } 1) = 0$, which
means the function must vanish. This proves the Proposition.

\end{proof}

\begin{remark}\label{line-bun} In the definition of BV-BFV theory we made
an assumption that $\omega_\pa$ is exact. This assumption can be removed. Two examples, a charged particle in an electro-magnetic
field, and the WZW model suggest that a more natural condition
is to require that there is a line bundle ${\mathfrak L}_\pa$
over $\cF_{\pa N}$. The symplectic form is $\omega_{\pa N}= (-1)^{n-1}\delta\alpha_{\pa N}$
where $\alpha_{\pa N}$ is a connection on ${\mathfrak L}_\pa$.
The condition
\[
L_{Q_N} \omega_N = (-1)^n\pi^* \omega_{\pa N}
\]
implies that the connection $(-1)^n\frac{i}{\hbar}(-\iota_{Q_N} \omega_N +\pi^* \alpha_{\pa N})$  on the line bundle $\pi^*({\mathfrak L}_\pa)$ over $\cF_N$ is flat. In this case the action functional $S_N$
should be regarded as a horizontal section $s_N=\exp(\frac{i S_N}{\hbar})$ of $\pi^*({\mathfrak L}_\pa)$ and the equation (\ref{cme}) becomes
\[
(\delta+ (-1)^n\frac{i}{\hbar}(-\iota_{Q_N} \omega_N + \pi^* \alpha_{\pa N})) s_N=0
\]

\end{remark}

\subsubsection{Digression: gauge invariance}
Now let us discuss the gauge invariance of BV-BFV theory.
When $\pa N\neq \emptyset$, Poisson bracket of local functionals
is not defined. In this case we will call the vector field
$V$ Hamiltonian if there exists a local functional $F$
such that
\[
\iota_V\omega_N=\delta F
\]
Recall that a vector field $V$ is called {\it projectable} if for all $X\in \cF_N$
\[
\delta\pi_X (V_X)=v_{\pi(X)}
\]
for some vector field $v$ on $\cF_{\pa N}$.
The gauge invariance of the BV-BFV action
can be formulated as the following statement:
\begin{Proposition}
If $V$ is a projectable Hamiltonian vector field, then
the vector field $[Q_N, V]$ preserves $S_N$ up to a pullback from the boundary:
\[
L_{[Q_N,V]}S_N=(-1)^{\dim N} \pi^* (\iota_{v}L_{Q_{\pa N}}\alpha_{\pa N}+\iota_{Q_{\pa N}}L_{v}\alpha_{\pa N})
\]
\end{Proposition}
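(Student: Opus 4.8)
The plan is to turn the bracket $[Q_N,V]$ into a graded commutator of Lie derivatives acting on the \emph{function} $S_N$, and then evaluate the two resulting pieces separately. Since $[Q_N,V]$ is a vector field and $S_N$ a function of ghost number zero, I would start from the derivation identity
\[
L_{[Q_N,V]}S_N = L_{Q_N}(L_V S_N) - (-1)^{|V|} L_V(L_{Q_N} S_N),
\]
the graded commutator of the derivations $Q_N$ (ghost number $1$) and $V$. Everything then reduces to computing the two terms on the right using the structural equations (\ref{cme}), (\ref{lo}) and Proposition \ref{LQS}, together with the two defining properties of $V$.

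For the first term I would use that $V$ is Hamiltonian and projectable. Writing $\iota_V\omega_N=\delta F$ and substituting $\delta S_N$ from (\ref{cme}), the contraction $\iota_V\iota_{Q_N}\omega_N$ rearranges---since $\iota_{Q_N}$ is an even operator and thus commutes past $\iota_V$---into $\iota_{Q_N}\delta F=L_{Q_N}F$, so that $L_V S_N=\pm(L_{Q_N}F-\pi^*\iota_v\alpha_{\pa N})$. Applying $L_{Q_N}$, the term $L_{Q_N}L_{Q_N}F$ vanishes because $L_{Q_N}^2=0$, and projectability (so that $Q_N$ and $Q_{\pa N}$ are $\pi$-related) turns the remaining term into a boundary pullback $-\pi^*L_{Q_{\pa N}}(\iota_v\alpha_{\pa N})$. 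This is exactly the step where the hypotheses ``Hamiltonian'' and ``projectable'' are both essential.

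For the second term I would feed in Proposition \ref{LQS}, which already expresses $L_{Q_N}S_N$ as the boundary pullback $(-1)^{\dim N}\pi^*(2S_{\pa N}-\iota_{Q_{\pa N}}\alpha_{\pa N})$. Since $V$ is $\pi$-related to $v$, applying $L_V$ commutes with $\pi^*$ and produces $\pi^*L_v(2S_{\pa N}-\iota_{Q_{\pa N}}\alpha_{\pa N})$; here the BFV Hamiltonian relation $\delta S_{\pa N}=\iota_{Q_{\pa N}}\omega_{\pa N}$ converts $L_vS_{\pa N}$ into $\iota_v\iota_{Q_{\pa N}}\omega_{\pa N}$.

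It then remains to combine the two contributions into the claimed right-hand side. Expanding, via Cartan's formula, $L_{Q_{\pa N}}\alpha_{\pa N}=\iota_{Q_{\pa N}}\delta\alpha_{\pa N}-\delta\iota_{Q_{\pa N}}\alpha_{\pa N}$ and $L_v\alpha_{\pa N}=\iota_v\delta\alpha_{\pa N}+(-1)^{|V|}\delta\iota_v\alpha_{\pa N}$, and using $\omega_{\pa N}=(-1)^{n-1}\delta\alpha_{\pa N}$ to rewrite $\iota_v\iota_{Q_{\pa N}}\omega_{\pa N}$, one checks that the function-valued pieces $L_{Q_{\pa N}}(\iota_v\alpha_{\pa N})$, $L_v(\iota_{Q_{\pa N}}\alpha_{\pa N})$ and the ``double contraction'' $\iota_{Q_{\pa N}}\iota_v\delta\alpha_{\pa N}$ assemble precisely into $\iota_v L_{Q_{\pa N}}\alpha_{\pa N}+\iota_{Q_{\pa N}}L_v\alpha_{\pa N}$. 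I expect this final reassembly, together with the sign bookkeeping throughout, to be the main obstacle: one must keep track that $\iota_{Q_N},\iota_{Q_{\pa N}}$ are \emph{even} operators while $L_{Q_N},L_{Q_{\pa N}}$ are \emph{odd} (so that $L_{Q_N}$ anticommutes with $\delta$), and carry the Koszul signs depending on the ghost number $|V|$. As a consistency check one can instead compute $L_{[Q_N,V]}S_N=\iota_{[Q_N,V]}\delta S_N$ directly, expanding $\iota_{[Q_N,V]}=[L_{Q_N},\iota_V]$ and projecting $[Q_N,V]$ to $[Q_{\pa N},v]$ on the boundary; the two routes must agree and thereby pin down the signs.
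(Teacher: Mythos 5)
Your proposal is correct and follows essentially the same route as the paper's own proof: the same graded-commutator splitting of $L_{[Q_N,V]}S_N$ into $L_{Q_N}L_V S_N$ and $L_V L_{Q_N}S_N$, with the first piece handled via $\iota_V\omega_N=\delta F$, equation (\ref{cme}), $L_{Q_N}^2=0$ and projectability, the second via Proposition \ref{LQS}, $\pi$-relatedness of $V$ and $v$, and $\delta S_{\pa N}=\iota_{Q_{\pa N}}\omega_{\pa N}$, followed by the same Cartan-formula reassembly of the boundary terms. The only difference is cosmetic: you keep the Koszul sign $(-1)^{|V|}$ explicit in the commutator where the paper writes a plus sign (implicitly taking $V$ of odd ghost number), which is if anything a slight improvement in bookkeeping.
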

Indeed, we have:
\begin{multline}
L_{[Q_N,V]}S_N=L_{Q_N} L_V S +L_V L_{Q_N} S_N = \\ \iota_{Q_N} \delta \iota_V \delta S_N + (-1)^{\dim N}L_V \pi^* (2 S_{\pa N} - \iota_{Q_{\pa N}}\alpha_{\pa N})=
\\
= (-1)^{\dim N} (\iota_{Q_{\pa N}} \delta \iota_{Q_N} \iota_V \omega_N + \pi^*(-\iota_{Q_{\pa N}}\delta\iota_{v}\alpha_{\pa N}+\\2 \iota_{v}\delta S_{\pa N} - \iota_{v}\delta \iota_{Q_{\pa N}}\alpha_{\pa N}))=(-1)^{\dim N} \pi^* (\iota_{v}L_{Q_{\pa N}}\alpha_{\pa N}+\iota_{Q_{\pa N}}L_{v}\alpha_{\pa N})
\end{multline}
Here we used identities $\delta S_N=(-1)^{\dim N}(\iota_{Q_N} \omega_N-\pi^*\alpha_{\pa N})$, \ \  $\iota_V \omega_N=\delta F$,
 $L_{Q_N}L_{Q_N} F= \frac{1}{2}L_{[Q_N,Q_N]}F=0$, $\delta S_{\pa N}=(-1)^{\dim N-1}\iota_{Q_{\pa N}}\omega_{\pa N}=\iota_{Q_{\pa N}}\omega_{\pa N}=\iota_{Q_{\pa N}}\delta\alpha_{\pa N}$, and
 $F$ is the generating function for $V$ and $v$ is the projection of $V$ to the boundary.

As a corollary we also have the formula
\[
\delta L_{[Q_N,V]}S = - (-1)^{\dim N} \pi^* (L_{[Q_{\pa N},v]}\alpha_{\pa N})
\]

\subsubsection{The boundary structure from the bulk} Let us show that a first order BV theory of spacetime manifolds (on the bulk)
induces BFV theory on the boundary. Denote the pullback
of fields to the boundary by $\F_N|_{\pa N}$ (we say pullback because fields are usually either sections of a fiber bundle, or connections). The differential of
the action can be written as the bulk part plus the boundary
contribution from the use of Stokes theorem:
\[
\delta S_N[X]=\int_N A\wedge \delta X-(-1)^n\pi^*\tilde{\alpha}_{\pa N}[X]
\]
where $\tilde{\alpha}_{\pa N}$ is a one-form on the space $\F_N|_{\pa N}$
and $A$ defines the Euler-Lagrange equation.
The kernel of
$\delta \tilde{\alpha}_{\pa N}$ forms a distribution on the
space $\F_N|_{\pa N}$. Denote by $\F_{\pa N}$ the space of leaves
of this distribution. We have a natural projection $\pi: \F_N\to \F_{\pa N}$.

Denote by $\alpha_{\pa N}$ the form on $\F_{\pa N}$ corresponding to the
form $\tilde{\alpha}_{\pa N}$ on $\F_N|_{\pa N}$\footnote{Strictly speaking the form $\tilde{\alpha}_{\pa N}$ becomes a connection on a line bundle over $\cF_{\pa N}$ with the curvature $\omega_{\pa N}$. However in our examples this line bundle is trivial. This is why for the time being we will assume this and will regard $\alpha_{\pa N}$ as a $1$-form.} .

Taking into account that for closed manifolds the vector field $Q_N$ is Hamiltonian, generated by the classical action functional, we conclude that the bulk term in the
differential of the action functional is $
(-1)^n\iota_{Q_N}\omega_N$. Thus, we have equation (\ref{cme}).

Observe that $Q_N$ is projectable to $\cF_{\pa N}$ denote its
projection by $Q_{\pa N}=\delta\pi (Q_N)$.
For any form $\theta$ on boundary fields we will have $L_{Q_N}\pi^*(\theta)=\pi^*(L_{Q_{\pa N}}\theta)$. Equations
$Q_N^2=0$ and (\ref{lo}) imply that $L_{Q_N}^2\omega_N=\pi^*(L_{Q_{\pa N}}\omega_{\pa N})=0$.
Therefore $L_{Q_{\pa N}}\omega_{\pa N}=0$.

It is clear that the BFV action induced by a first order
BV theory is also of first order.

\subsubsection{Some properties of non-reduced BV-BFV theories} Associated with BV and BFV data we have the following
important subspaces in the space of bulk fields and in the spaces
of boundary fields:

\begin{itemize}

\item The space $\EL_N\subset \F_N$ of zeroes of the vector field $Q_N$.

\item The space $\EL_{\pa N}\subset \F_{\pa N}$ of zeroes of the vector field $Q_{\pa N}$.

\item The space $\cL_N=\pi(\EL_N)\subset \F_{\pa N}$ of boundary values of solutions of Euler-Lagrange equations. It is
    clear that $\cL_N\subset \EL_{\pa N}\subset \EL_N$.

\end{itemize}

\begin{Proposition}\label{paN-coisot} a) The subspace $\EL_{\pa N}$ is locally coisotropic in $\F_{\pa N}$ (its tangent space at a smooth point is coisotropic in the tangent space to the space of fields at this point).

b) $\EL_N$ is still locally coisotropic when $\pa N\neq \emptyset$.
\end{Proposition}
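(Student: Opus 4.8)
The plan is to work entirely at the linearized level at a smooth point (Appendix \ref{smooth}), exactly as in the proof of Proposition \ref{el-coiso}, and to isolate the single place where the boundary changes the argument.

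\emph{Part (a).} Here nothing genuinely new happens. On $(\F_{\pa N},\omega_{\pa N})$ the vector field $Q_{\pa N}$ is cohomological, $Q_{\pa N}^2=0$ by (\ref{q2z}), and symplectic, $L_{Q_{\pa N}}\omega_{\pa N}=0$ by (\ref{BFV}). These are precisely the hypotheses used in the proof of Proposition \ref{el-coiso}, and that proof never used the degree of the symplectic form, only its nondegeneracy. So at a smooth zero $Y$ of $Q_{\pa N}$ the linearization $\hat Q_Y$ is $\omega_{\pa N}$-self-adjoint, $\mathrm{Im}(\hat Q_Y)$ is isotropic, $\mathrm{Im}(\hat Q_Y)^\perp$ is coisotropic, and self-adjointness gives $\mathrm{Im}(\hat Q_Y)^\perp=\ker(\hat Q_Y)=T_Y\EL_{\pa N}$. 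Hence $T_Y\EL_{\pa N}$ is coisotropic, which is the claim.

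\emph{Part (b).} Fix a smooth $X\in\EL_N$, write $\hat Q=\hat Q_X$, $\omega=\omega_N|_X$, and $P=\delta\pi_X\colon T_X\F_N\to T_{\pi(X)}\F_{\pa N}$, so that $T_X\EL_N=\ker\hat Q$. The one difference from Part (a) is that, by (\ref{lo}), $\omega_N$ is no longer $Q_N$-invariant. Linearizing $L_{Q_N}\omega_N=(-1)^n\pi^*\omega_{\pa N}$ at $X$ yields the \emph{corrected} adjointness identity
\[
\omega(\hat Q\xi,\eta)-\omega(\xi,\hat Q\eta)=(-1)^n\,\omega_{\pa N}(P\xi,P\eta),\qquad \xi,\eta\in T_X\F_N,
\]
so that $\mathrm{Im}(\hat Q)$ fails to be isotropic and the proof of Proposition \ref{el-coiso} does not apply verbatim. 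To reduce the problem, take $v\in(\ker\hat Q)^\perp$. For any $\eta$ we have $\hat Q\eta\in\mathrm{Im}(\hat Q)\subseteq\ker\hat Q$ (using $\hat Q^2=0$), hence $\omega(v,\hat Q\eta)=0$; substituting this into the displayed identity gives $\omega(\hat Q v,\eta)=(-1)^n\omega_{\pa N}(Pv,P\eta)$ for all $\eta$. Since $\pi$ is a surjective submersion, $P$ is onto, so $P\eta$ runs over all of $T_{\pi(X)}\F_{\pa N}$; by nondegeneracy of $\omega$ and of $\omega_{\pa N}$ this shows $\hat Q v=0\iff Pv=0$. Therefore $(\ker\hat Q)^\perp\subseteq\ker\hat Q$ is equivalent to $(\ker\hat Q)^\perp\subseteq\ker P$, i.e., passing to symplectic orthogonals, to the inclusion
\[
(\ker\delta\pi_X)^\perp\ \subseteq\ \ker\hat Q_X=T_X\EL_N .
\]

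Establishing this last inclusion—that the symplectic orthogonal of the fibre directions of $\pi$ consists of infinitesimal solutions—is the main obstacle. One step is free: for $u\in(\ker P)^\perp$ and $f\in\ker P$ the corrected identity gives $\omega(\hat Q u,f)=\omega(u,\hat Q f)$, and since $\hat Q$ preserves $\ker P$ (because $P\hat Q=\hat Q^{\pa}_{\pi(X)}P$, from the condition $\delta\pi(Q_N)=Q_{\pa N}$) and $u\in(\ker P)^\perp$, the right-hand side vanishes; thus $\hat Q$ preserves $(\ker P)^\perp$, and together with $\mathrm{Im}(\hat Q)\subseteq\ker\hat Q$ this yields $\hat Q\big((\ker P)^\perp\big)\subseteq(\ker P)^\perp\cap\ker\hat Q$. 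The remaining, genuinely new point is to force this intersection to vanish. I would do so using the first-order locality structure: by construction $\omega_{\pa N}$ (and $\alpha_{\pa N}$) arise as the reduction of the boundary term $\tilde\alpha_{\pa N}$ in $\delta S_N=\int_N A\wedge\delta X-(-1)^n\pi^*\tilde\alpha_{\pa N}$, so that $\ker\delta\pi_X$ is exactly the space of variations supported away from $\pa N$; combining this with Part (a) and an ellipticity/Hodge-type decomposition—the boundary analogue of Assumption (\ref{assump})—identifies $(\ker\delta\pi_X)^\perp$ with a space of solutions and closes the argument.
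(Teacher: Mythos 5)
Your part (a) is correct and coincides with the paper's proof: the argument for Proposition \ref{el-coiso} uses only $Q^2=0$, $L_Q\omega=0$ and nondegeneracy of the symplectic form, so it applies verbatim to $(\F_{\pa N},\omega_{\pa N},Q_{\pa N})$.

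In part (b) there is a genuine gap at the step ``passing to symplectic orthogonals.'' Up to that point you are fine: your corrected identity is (\ref{Q kinda self-adjoint}), and your biconditional (for $v\in(\ker\hat Q_X)^\perp$ one has $\hat Q_X v=0\iff \delta\pi_X v=0$) is correct, so coisotropy is indeed equivalent to $(\ker\hat Q_X)^\perp\subseteq\ker\delta\pi_X$. But the passage from this to the statement $(\ker\delta\pi_X)^\perp\subseteq\ker\hat Q_X$ requires the double-orthogonal identity $W^{\perp\perp}=W$ (for $W=\ker\delta\pi_X$, resp.\ $W=\ker\hat Q_X$), and this fails here --- not as a pedantic infinite-dimensional caveat, but in the most drastic way. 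For the ultralocal BV symplectic forms of the paper (e.g.\ $\omega_N=\int_N\delta A\wedge\delta A^\dag+\dots$ in electrodynamics), the fundamental lemma of the calculus of variations gives $(\ker\delta\pi_X)^\perp=\{0\}$: a vector whose pairing with every variation vanishing at the boundary is zero must itself be zero. Hence your target inclusion $(\ker\delta\pi_X)^\perp\subseteq\ker\hat Q_X$ is vacuously true, while $(\ker\delta\pi_X)^{\perp\perp}=T_X\F_N$, so no amount of taking orthogonals recovers the substantive inclusion $(\ker\hat Q_X)^\perp\subseteq\ker\delta\pi_X$ from it. Everything after your reduction (the $\hat Q_X$-invariance of $(\ker\delta\pi_X)^\perp$, and the concluding appeal to an ellipticity/Hodge assumption) is therefore aimed at a trivial statement. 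Note also that the proposition is proved in the paper before, and without, any regularity assumption such as (\ref{assump 1 for bdry}) or (\ref{assump bdry}); invoking one would in any case only prove a weaker result.

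The paper's proof avoids double orthogonals by reversing a single inclusion, which is monotone under $\perp$. Restrict the second argument of (\ref{Q kinda self-adjoint}) to vertical vectors $f\in\ker\delta\pi_X$ (this is what ``requiring $\eta$ to vanish on the boundary'' means): the boundary term drops, so $\hat Q_X(\ker\delta\pi_X)$ is isotropic and
$\bigl(\hat Q_X(\ker\delta\pi_X)\bigr)^\perp=\{\xi\;|\;\hat Q_X\xi\in(\ker\delta\pi_X)^\perp\}=\ker\hat Q_X$,
the last equality being precisely the fundamental-lemma fact $(\ker\delta\pi_X)^\perp=\{0\}$. Since $\hat Q_X(\ker\delta\pi_X)\subseteq\ker\hat Q_X$ by $\hat Q_X^2=0$, orthogonal-reversal gives $(\ker\hat Q_X)^\perp\subseteq\bigl(\hat Q_X(\ker\delta\pi_X)\bigr)^\perp=\ker\hat Q_X$, i.e.\ coisotropy. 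So the locality input you correctly sensed was needed does enter --- but as the statement $(\ker\delta\pi_X)^\perp=\{0\}$, which is exactly the fact that collapses your reduction, not as a Hodge-type decomposition.
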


The proof of a) is identical to the proof of Proposition \ref{el-coiso} for closed manifolds. The same proof
carries through for b) by requiring $\eta$ to vanish on the boundary. The algebraic proof outlined after Proposition \ref{el-coiso} also works,  the only difference is
that one should take local functionals with test functions vanishing on the boundary. An example of such functional is $\int_N \alpha_a\wedge X^a$ where $X^a$ are coordinate fields and $\alpha_a$ are test
functions which vanish on $\pa N$.

\begin{Proposition} \label{sub-loc-isot} The subspace $\cL_N\subset \EL_{\pa N}\subset \F_{\pa N}$ is locally isotropic.
\end{Proposition}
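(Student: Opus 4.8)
The plan is to show that the pullback of $\omega_{\pa N}$ to $\cL_N$ vanishes identically at smooth points, which is exactly the statement of local isotropy. Since by hypothesis $\pi\colon\F_N\to\F_{\pa N}$ is a surjective submersion, at a smooth point $[X]=\pi(X)$ with $X\in\EL_N$ the tangent space $T_{[X]}\cL_N$ equals the image $\delta\pi(T_X\EL_N)$, and as in the discussion preceding Proposition \ref{el-coiso} we have $T_X\EL_N=\ker(\hat Q_X)$. Thus it suffices to prove that the two-form $(\pi|_{\EL_N})^*\omega_{\pa N}$ vanishes on $\EL_N$, i.e. that $\omega_{\pa N}(\delta\pi\,\xi,\delta\pi\,\eta)=0$ for all $\xi,\eta\in T_X\EL_N$.

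The key input is the non-invariance identity \ref{lo}, namely $L_{Q_N}\omega_N=(-1)^n\pi^*\omega_{\pa N}$, which lets me trade the boundary form for a Lie derivative in the bulk: $\pi^*\omega_{\pa N}=(-1)^n L_{Q_N}\omega_N$. Since $\omega_N$ is $\delta$-closed, the Cartan formula (with the paper's convention $L_{Q_N}=\iota_{Q_N}\delta-\delta\iota_{Q_N}$) gives $L_{Q_N}\omega_N=-\delta\,\iota_{Q_N}\omega_N$. Now I restrict along the inclusion $i\colon\EL_N\hookrightarrow\F_N$. On $\EL_N$ the vector field $Q_N$ vanishes, so the one-form $\iota_{Q_N}\omega_N$ pulls back to zero: for any $\xi\in T_X\EL_N$ one has $(\iota_{Q_N}\omega_N)(\xi)=\omega_N(Q_N(X),\xi)=0$. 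Because pullback commutes with $\delta$, the pullback of $L_{Q_N}\omega_N=-\delta\,\iota_{Q_N}\omega_N$ to $\EL_N$ is $-\delta\,i^*(\iota_{Q_N}\omega_N)=0$. Hence $i^*\pi^*\omega_{\pa N}=0$, which is precisely the desired isotropy of $\cL_N$.

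As a cross-check consistent with the closed-manifold argument of Proposition \ref{el-coiso}, one can verify this pointwise: evaluating $L_{Q_N}\omega_N$ at a zero $X$ of $Q_N$ on $\xi,\eta$ produces only terms of the form $\omega_N(\hat Q_X\xi,\eta)$ and $\omega_N(\xi,\hat Q_X\eta)$ (the term $Q_N(\omega_N(\xi,\eta))$ drops because $Q_N(X)=0$), and both vanish on $\ker(\hat Q_X)$. This is the boundary analogue of the self-adjointness of $\hat Q_X$ exploited in the closed case, where $L_{Q_N}\omega_N=0$ forced the symmetry $\omega_N(\hat Q_X\xi,\eta)=\omega_N(\xi,\hat Q_X\eta)$ rather than the outright vanishing we use here.

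The computation itself is short; the main obstacle is the careful bookkeeping at smooth points in the (generally infinite-dimensional) setting. I must make sure that $\pi|_{\EL_N}$ has image whose tangent space is genuinely $\delta\pi(\ker\hat Q_X)$, and that the identifications $T_X\EL_N=\ker(\hat Q_X)$ together with the submersion property hold on the smooth locus of Appendix \ref{smooth}. Once these are granted, closedness of $\omega_N$ together with the vanishing of $Q_N$ on $\EL_N$ does all the work, and notably no ellipticity hypothesis such as (\ref{assump}) is required.
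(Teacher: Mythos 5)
Your proof is correct, and while it rests on the same structural input as the paper's proof --- the modified master equation (\ref{cme}), equivalently its consequence (\ref{lo}), together with the vanishing of $Q_N$ along $\EL_N$ --- the execution is genuinely different. The paper linearizes $Q_N$ at a zero $X$ and extracts from (\ref{cme}) the pointwise bilinear identity (\ref{Q kinda self-adjoint}), $\omega_X(\hat Q_X\xi,\eta)-\omega_X(\xi,\hat Q_X\eta)=\omega_{\pa N}(\delta\pi(\xi),\delta\pi(\eta))$, and then evaluates it on $\xi,\eta\in\ker(\hat Q_X)=T_X\EL_N$, where the left-hand side dies. You instead stay entirely at the level of differential forms: (\ref{lo}), Cartan's formula and $\delta\omega_N=0$ give $\pi^*\omega_{\pa N}=(-1)^{n+1}\delta\,\iota_{Q_N}\omega_N$, and since the one-form $\iota_{Q_N}\omega_N$ pulls back to zero on $\EL_N$ (because $Q_N$ vanishes there) and pullback commutes with $\delta$, the two-form $\pi^*\omega_{\pa N}$ pulls back to zero as well; notably your main argument never actually uses the identification $T_X\EL_N=\ker(\hat Q_X)$ that you quote, only that $\EL_N$ is locally a submanifold on which $Q_N$ vanishes. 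Your cross-check paragraph correctly identifies the paper's identity as the pointwise shadow of your form-level computation, so the two proofs are two packagings of the same mechanism. What the paper's packaging buys is the identity (\ref{Q kinda self-adjoint}) itself, which is reused later (e.g.\ for the well-definedness and self-adjointness statements around the Lefschetz pairing in Section \ref{symp-red-sub}); what yours buys is a more invariant argument with no linearization or kernel bookkeeping, making manifest that neither (\ref{assump}) nor (\ref{assump 1 for bdry}) is needed --- though, to be fair, the paper's proof does not use them either. Both versions carry the same caveats about smooth points in the infinite-dimensional setting, which you flag appropriately.
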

\begin{proof} Assume $X$ is a smooth point of $\EL_N$ The equation (\ref{cme}) implies
that
\be \label{Q kinda self-adjoint}
\omega_X(\hat{Q}_X\xi, \eta)-\omega_X(\xi, \hat{Q}_X\eta)=
\tilde{\omega}_{\pi(X)}(\delta\pi(\xi),\delta\pi(\eta))
\ee
were $\xi,\eta$ are tangent vectors to $\F_N$ at the point $X$.
If $\xi$ and $\eta$ are tangent to $\EL_N\subset \F_N$ then
$\hat{Q}_X\xi=\hat{Q}_X\eta=0$. Therefore $\tilde{\omega}_{\pi(X)}(\delta\pi(\xi),\delta\pi(\eta))=0$ which means that
$\delta\pi(T_X\EL_N)=T_{\pi(X)}\cL_N$ is an isotropic subspace in $T_{\pi(X)}\F_{\pa N}$.
\end{proof}

We proved that $\cL_N$ is always isotropic. We will prove in Section \ref{symp-red-sub} (Corollary \ref{cor: L Lagr}) that under a natural regularity assumption (see Definition \ref{def: regular}), $\CL_N$ is in fact Lagrangian.

Another natural property of a BV-BFV classical field theory is
locality. If two spacetime manifolds $N_1$ and $N_2$ have
common boundary $\Sigma$ (say $\Sigma\subset \pa N_1$ and
it is identified by an orientation preserving diffeomorphism
with part of the boundary of $N_2$) then
\[
\F_{N_1\cup_\Sigma N_2}=\F_{N_1}\times_{\F_\Sigma}\F_{N_2}
\]
\[
S_{N_1\cup_\Sigma N_2}=S_{N_1}+S_{N_2}
\]

\subsection{The reduction of the boundary BFV theory}
\label{sec: BFV red}

The boundary manifold $\pa N$ is closed. This is why both
the $Q$-reduction  and the symplectic reduction for the
boundary BFV theory work as they do for the BV theory on closed spacetime manifolds. The additional structure in the BV-BFV theory is
the reduction of the 
submanifold $\cL_N\subset \EL_{\pa N}$ and the reduction of the fibers of the fiber bundle $\pi: \EL_N\to \cL_N$.

\subsubsection{The Q-reduction}

The following Proposition is an immediate corollary of the identity $\delta\pi\circ Q_N=Q_{\pa N}\circ \delta\pi$.

\begin{Proposition} \label{prop: Vect_Q tangent to L} The distribution $Vect_{Q_{\pa}}$ on $\F_{\pa N}$ generated by Lie brackets of vector fields with $Q_{\pa N}$ is parallel to $\cL_N$.
\end{Proposition}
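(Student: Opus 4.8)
The plan is to reduce the statement to the corresponding tangency fact in the bulk and then push it forward along $\pi$, with the hint $\delta\pi\circ Q_N=Q_{\pa N}\circ\delta\pi$ (i.e. $Q_N$ is $\pi$-related to $Q_{\pa N}$) serving as the bridge. First I would fix a smooth point $[X]=\pi(X)\in\cL_N$ with $X\in\EL_N$ and recall from the proof of Proposition \ref{sub-loc-isot} that $T_{[X]}\cL_N=\delta\pi(T_X\EL_N)=\delta\pi(\ker\hat{Q}_X)$. Since $Vect_{Q_{\pa}}$ is by definition generated by vector fields of the form $[w,Q_{\pa N}]$, it suffices to show that for every vector field $w$ on $\F_{\pa N}$ one has $[w,Q_{\pa N}]_{[X]}\in\delta\pi(\ker\hat{Q}_X)=T_{[X]}\cL_N$.

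Next I would exploit that $\pi$ is a surjective submersion to lift $w$ to a $\pi$-projectable vector field $v$ on $\F_N$ with $\delta\pi(v)=w$. Because $v$ is $\pi$-related to $w$ and $Q_N$ is $\pi$-related to $Q_{\pa N}$, naturality of the Lie bracket under $\pi$-relatedness yields that $[v,Q_N]$ is $\pi$-related to $[w,Q_{\pa N}]$, that is $\delta\pi_X([v,Q_N]_X)=[w,Q_{\pa N}]_{[X]}$; this is precisely the infinitesimal content of the displayed identity. Then, at the zero $X$ of $Q_N$, the same local-coordinate computation as in the $Q$-reduction subsection gives $[v,Q_N]_X=\hat{Q}_X(v_X)$, so that $[v,Q_N]_X\in\mathrm{Im}(\hat{Q}_X)\subset\ker\hat{Q}_X=T_X\EL_N$, i.e. $[v,Q_N]_X$ is tangent to $\EL_N$ (this is exactly the tangency of $\widetilde{Vect}_Q$ to $\EL_N$ already used for closed manifolds). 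Applying $\delta\pi$ and combining with the previous step gives $[w,Q_{\pa N}]_{[X]}=\delta\pi_X([v,Q_N]_X)\in\delta\pi(T_X\EL_N)=T_{[X]}\cL_N$, which is exactly the assertion that $Vect_{Q_{\pa}}$ is parallel (tangent) to $\cL_N$.

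I expect the only genuine subtlety to be the existence of the projectable lift $v$ of $w$, which is guaranteed by the surjective-submersion hypothesis on $\pi$, together with the care needed to apply naturality of the Lie bracket in the infinite-dimensional graded setting. Everything else is bookkeeping: the bulk tangency $[v,Q_N]_X\in T_X\EL_N$ is essentially already established, and the description of $T_{[X]}\cL_N$ as $\delta\pi(\ker\hat{Q}_X)$ is taken from Proposition \ref{sub-loc-isot}, so no new estimates are needed.
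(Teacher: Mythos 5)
Your proof is correct and is essentially the paper's argument: the paper dismisses this proposition as ``an immediate corollary of the identity $\delta\pi\circ Q_N=Q_{\pa N}\circ\delta\pi$'', and your chain --- lift $w$ to a projectable $v$, use $\pi$-relatedness of brackets, note $[v,Q_N]_X=\hat{Q}_X(v_X)\in Im(\hat{Q}_X)\subset\ker(\hat{Q}_X)=T_X\EL_N$, and push forward with $T_{\pi(X)}\cL_N=\delta\pi(T_X\EL_N)$ --- is exactly the computation that makes that claim rigorous. As a minor simplification, you can avoid the infinite-dimensional vector-field lifting issue you flag by working pointwise: since $\delta\pi_X$ is surjective and the linearization of the intertwining identity gives $\delta\pi_X\circ\hat{Q}_X=\hat{Q}_{\pi(X)}\circ\delta\pi_X$, every vector in $Im(\hat{Q}_{\pi(X)})=(Vect_{Q_\pa})_{\pi(X)}$ is already of the form $\delta\pi_X(\hat{Q}_X\xi)$ for a tangent vector $\xi\in T_X\F_N$, with no need to extend $w$ to a vector field on $\F_N$.
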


\begin{definition} Define $Q$-reductions of $\cL_N$ and of $\EL_{\pa N}$ as the space of leaves of $Vect_{Q_{\pa}}$ through $\cL_N$ and $\EL_{\pa N}$ respectively.
\end{definition}

In general these spaces are singular. Because our goal here is to
develop the set up for the perturbative quantization in a vicinity of a generic classical solution, we will focus on smooth points of $\EL_{\pa N}$ and $\cL_N$ and tangent spaces at such points.
See Appendix \ref{smooth} below for the discussion of smooth points.

\subsubsection{The symplectic reduction}

Let $\underline{\EL}_{\pa N}$ be the symplectic reduction
of $\EL_{\pa N}$.
We make the following assumption, which is the analogue of (\ref{assump}) for the boundary fields:

\begin{assumption}
We assume that for any $l\in \EL_{\pa N}$,
\be \label{assump 1 for bdry}
\ker (\hat{Q}_l)^\perp = Im (\hat{Q}_l)
\ee
in $T_l \FF_{\pa N}$, where $\hat{Q}_l$ is the linearization of $Q_{\pa N}$ at $l$.
\end{assumption}

The following Proposition is the counterpart of the
Remark \ref{rem: symp red = Q-red BV w/o bdry}.

\begin{Proposition} Under the assumption (\ref{assump 1 for bdry}), locally, the symplectic reduction
of $\EL_{\pa N}$ is equal to its $Q$-reduction.
\end{Proposition}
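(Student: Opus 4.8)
The plan is to mirror the closed-manifold argument from Remark \ref{rem: symp red = Q-red BV w/o bdry}, transporting it verbatim to the boundary BFV data $(\FF_{\pa N},\omega_{\pa N},Q_{\pa N})$. Since $\pa N$ is itself a closed manifold, the boundary theory $(\omega_{\pa N},Q_{\pa N},S_{\pa N})$ is a genuine BFV theory of degree $k=0$, so $Q_{\pa N}$ is Hamiltonian and $L_{Q_{\pa N}}\omega_{\pa N}=0$ holds by hypothesis (\ref{BFV}). First I would invoke Proposition \ref{paN-coisot}(a), which already establishes that $\EL_{\pa N}$ is locally coisotropic, so at a smooth point $l$ the characteristic distribution $(T_l\EL_{\pa N})^\perp = \ker(\hat Q_l)^\perp$ is well defined and contained in $T_l\EL_{\pa N}=\ker(\hat Q_l)$. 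The symplectic reduction $\underline{\EL}_{\pa N}$ is by definition the space of leaves of this characteristic foliation.

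Next I would identify the $Q$-distribution with the characteristic distribution. By the same computation as in Proposition \ref{el-coiso}, the $L_{Q_{\pa N}}$-invariance of $\omega_{\pa N}$ gives that $\hat Q_l$ is (graded) self-adjoint at a zero $l$ of $Q_{\pa N}$, whence $\mathrm{Im}(\hat Q_l)$ is isotropic and $\mathrm{Im}(\hat Q_l)^\perp=\ker(\hat Q_l)$. This yields the a priori inclusion $\mathrm{Im}(\hat Q_l)\subset\ker(\hat Q_l)^\perp$, exactly as recorded before Assumption (\ref{assump}). The assumption (\ref{assump 1 for bdry}) promotes this inclusion to the equality $\ker(\hat Q_l)^\perp=\mathrm{Im}(\hat Q_l)$, which is precisely the statement that the characteristic distribution of $\EL_{\pa N}$ coincides with the $Q$-distribution $Vect_{Q_{\pa}}$ restricted to $\EL_{\pa N}$.

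Finally, since the two foliations agree, their leaf spaces agree locally, giving $\underline{\EL}_{\pa N}=\EL_{\pa N}/Q_{\pa}$ near smooth points, and at such a point the tangent space is
\[
T_{[l]}\underline{\EL}_{\pa N}=\ker(\hat Q_l)/\mathrm{Im}(\hat Q_l),
\]
as in (\ref{tsp}). The structure of the argument is a routine transcription, so the main point requiring care is not a deep obstacle but a bookkeeping matter: verifying that the self-adjointness of $\hat Q_l$ genuinely goes through for degree-zero $\omega_{\pa N}$ with its sign conventions, and that the notion of smooth point from Appendix \ref{smooth} applies to $\EL_{\pa N}\subset\FF_{\pa N}$ so that the characteristic leaf space has a well-defined tangent space. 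Since nondegeneracy of $\omega_{\pa N}$ is all that was used in the closed case, and the computations in Proposition \ref{el-coiso} used only $L_{Q}\omega=0$ and $\hat Q^2=0$ (both of which hold here), the transcription is valid in both the finite- and infinite-dimensional settings.
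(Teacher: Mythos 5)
Your proof is correct and follows exactly the paper's intended argument: the paper's own proof is the one-line remark that the boundary case is the closed-manifold argument of Remark \ref{rem: symp red = Q-red BV w/o bdry} ``up to a shift of gradings,'' and your transcription—self-adjointness of $\hat Q_l$ from $L_{Q_{\pa N}}\omega_{\pa N}=0$, the inclusion $Im(\hat Q_l)\subset\ker(\hat Q_l)^\perp$, and assumption (\ref{assump 1 for bdry}) upgrading it to equality of the characteristic and $Q$-distributions—is precisely that argument spelled out. Nothing is missing.
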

The difference from Remark \ref{rem: symp red = Q-red BV w/o bdry} is just in the shift of gradings.

Denote by $\underline{\cL}_N\subset \underline{\EL}_{\pa N}$
the image of $\cL_N$ with respect to this reduction.
That is $\underline{\cL}_N$ is the space of leaves of the
characteristic foliation of $\EL_{\pa N}$ through $\cL_N$.
This Proposition implies that locally the  reduction $\underline{\cL}_N$
of $\cL_N$ is equal to its $Q$-reduction $\cL_N/Q$ .

Under the regularity assumption (see Definition \ref{def: regular}, Proposition \ref{prop: L Lagrangian}),
$\underline{\cL}_N\subset \underline{\EL}_{\pa N}$ is a Lagrangian submanifold.


\subsection{The reduction of the bulk BV theory}
\label{sec: Q-red fibers}

The meaning of the $Q$-reduction is passing from fields to gauge classes of fields. This is why it is natural to reduce not the whole space of solutions to Euler-Lagrange equations but a subspace with fixed boundary
conditions. Points of this reduced space correspond to
gauge classes of fields with boundary values in a given
boundary gauge class.

The space $\EL_N$ is naturally fibered over $\cL_N=\pi(\EL_N)$
where $\pi: \F_N\to \F_{\pa N}$ is the restriction mapping.

For $l\in \cL_N$ denote by $[l]$ the leaf of $Vect_{Q_{\pa N}}$ through $l$, i.e. the image of $l$ in the $Q$-reduced space $\cL_N/Q$. Denote by $\EL(N,[l])$ the fiber of $\pi$ over $[l]$,
\[
\EL(N,[l])=\pi^{-1}([l])\cap \EL_N
\]

Because the restriction mapping $\pi$ intertwines vector fields
$Q_N$ and $Q_{\pa N}$, the vector field $Q_N$ is parallel to
$\EL(N, [l])$. It also induces the projection of reduced spaces
$p: \EL_N/Q\to \cL_N/Q$ where $\EL_N/Q$ is the space of leaves of
$Vect_Q$ in $\EL_N$.

\begin{definition} The reduced space $\EL(N,[l])/Q$ is the space
of leaves of $Vect_Q$ on $\EL(N,[l])$.
\end{definition}

Because the projection $\pi$ intertwines vector fields $Q_N$ and
$Q_{\pa N}$ the reduced space $\EL(N, [l])/Q$ is the fiber of
$p$ over $[l]$: $\EL(N, [l])/Q=p^{-1}([l])$.

Assume that the image of $X\in \EL_N$ in $\EL(N, [\pi(X)])/Q$ is a smooth point. We have the natural mappings of tangent spaces:
\begin{equation}
\label{global zero-modes}
\begin{CD}
T_{[X]}\EL_N/Q @<\iota << T_{[X]}\EL(N,[l])/Q \\
@V \delta p VV \\
T_{[l]}\cL_N/Q \subset T_{[l]}\EL_{\pa N}/Q
\end{CD}
\end{equation}
where $l=\pi(X)$,  and the horizontal mapping is the natural inclusion of fibers, $Im(\iota)=\ker(\delta p)$. We know that $T_{[X]}\EL_N/Q\simeq H(T_X\F_N;\hat{Q}_X)$, and that
$T_{[l]}\EL(\pa N)/Q\simeq H(T_l\F_{\pa N};\hat{Q}_l)$\footnote{Recall that $\hat{Q}_X$ is the linearization of
$Q_N$ at $X\in \EL_N$ and $\hat{Q}_l$ is the linearization of $Q_{\pa N}$ at $l=\pi(X)\in \EL_{\pa N}$.}.

We also know that the short exact sequence
$$
\begin{CD}
T_X \FF_N @< i << \ker (\delta\pi)_X \\
@V(\delta\pi)_X VV \\
T_{l} \FF_{\dd N}
\end{CD}
$$
gives the long exact sequence
$$\cdots\rightarrow H^\bullet_{\hat{Q}_X}(\ker (\delta\pi)_X)\stackrel{[i]}{\rightarrow} H_{\hat{Q}_X}^\bullet(T_X \FF_N) \stackrel{[\delta\pi_X]}{\rightarrow} H^\bullet_{\hat{Q}_{l}} (T_{l} \FF_{\pa N})\stackrel{\phi}\rightarrow H_{\hat{Q}_X}^{\bullet+1}(\ker (\delta\pi)_X)\rightarrow \cdots $$
on the $\hat{Q}$-cohomology spaces.

Truncating this long exact sequence we obtain the short exact sequence

\begin{equation}
\begin{CD}
H_{\hat{Q}_X}(T_X\F_N) @<[i]<<   H_{\hat{Q}_X}(\ker (\delta\pi)_X)/\phi(H_{\hat{Q}_{l}}(T_{l}\FF_{\dd N})) \\
@V[\delta\pi_X]VV \\
[\delta\pi_X](H_{\hat{Q}_X}(T_X\F_N))\subset H_{\hat{Q}_{l}}(T_{l}\FF_{\dd N})
\end{CD}
\end{equation}
Because we have natural isomorphisms $H_{\hat{Q}_X}(T_X\F_N)\cong
T_{[X]}\EL_N/Q$, $H_{\hat{Q}_l}(T_l\F_{\pa N})\cong T_{[l]}\EL_{\pa N}/Q$ and $[\delta\pi_X](H_{\hat{Q}_X}(T_X\F_N))=T_{[l]}\cL_N/Q\subset T_{[l]}\EL_{\pa N}/Q$ we can identify $T_{[X]}\EL(N, [l])/Q$ with the fiber of $[\delta\pi_X]$. We proved the following:

\begin{Proposition} If $[\pi(X)]\in \EL_N/Q$ is a smooth point,
then
\[
T_{[X]}\EL(N, [l])/Q\cong
H_{\hat{Q}_X}(\ker(\delta\pi_X))/\phi(H_{\hat{Q}_{l}}(T_{l}\FF_{\dd N}))
\]
\end{Proposition}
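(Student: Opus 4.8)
The plan is to reduce the whole statement to a diagram chase in the long exact sequence of $\hat{Q}$-cohomology displayed above, after first pinning down $T_{[X]}\EL(N,[l])/Q$ as a kernel. Since $\EL(N,[l])/Q$ is by construction the fiber $p^{-1}([l])$ of the projection $p\colon \EL_N/Q\to\cL_N/Q$, at a smooth point its tangent space is the kernel of the differential $\delta p$. This is exactly the content of diagram (\ref{global zero-modes}): the comparison map $\iota$ is the injective inclusion of the fiber with $\mathrm{Im}(\iota)=\ker(\delta p)$, so that $T_{[X]}\EL(N,[l])/Q\cong\ker(\delta p)$.

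Next I would transport this kernel through the natural isomorphisms $T_{[X]}\EL_N/Q\cong H_{\hat{Q}_X}(T_X\F_N)$ and $T_{[l]}\EL_{\pa N}/Q\cong H_{\hat{Q}_l}(T_l\F_{\pa N})$ of Proposition \ref{QisS} and its boundary analogue, under which the differential $\delta p$ corresponds to the induced map $[\delta\pi_X]$ on cohomology, with image precisely $T_{[l]}\cL_N/Q$. Granting this, $T_{[X]}\EL(N,[l])/Q\cong\ker\bigl([\delta\pi_X]\colon H_{\hat{Q}_X}(T_X\F_N)\to H_{\hat{Q}_l}(T_l\F_{\pa N})\bigr)$, i.e.\ the fiber of $[\delta\pi_X]$ over zero. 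From here the argument is purely homological. Exactness of the long exact sequence at $H_{\hat{Q}_X}(T_X\F_N)$ gives $\ker[\delta\pi_X]=\mathrm{Im}\,[i]$, while exactness at $H_{\hat{Q}_X}(\ker(\delta\pi)_X)$ gives $\ker[i]=\mathrm{Im}\,\phi=\phi(H_{\hat{Q}_l}(T_l\F_{\pa N}))$. Hence the first isomorphism theorem yields $\mathrm{Im}\,[i]\cong H_{\hat{Q}_X}(\ker(\delta\pi)_X)/\phi(H_{\hat{Q}_l}(T_l\F_{\pa N}))$, which is exactly the truncated short exact sequence displayed right before the Proposition. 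Composing these identifications gives the claim.

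The step I expect to be the main obstacle is the compatibility asserted in the second paragraph: that the geometrically defined $\delta p$ really is intertwined with the algebraically defined $[\delta\pi_X]$ under the cohomology identifications. The input that makes this work is that the restriction $\pi$ intertwines the cohomological vector fields, $\delta\pi\circ Q_N=Q_{\pa N}\circ\delta\pi$, so that the linearization $\delta\pi_X$ is a chain map, $\hat{Q}_l\circ\delta\pi_X=\delta\pi_X\circ\hat{Q}_X$; this is what makes $[\delta\pi_X]$ well defined on $\hat{Q}$-cohomology and what makes the square relating $p$ to $\pi$ commute on tangent spaces. Once this naturality of the isomorphisms of Proposition \ref{QisS} with respect to $\pi$ is in place, the remaining diagram chase is formal. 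A secondary point to verify is that the smoothness hypothesis (on the image of $X$ in the reduced fiber, together with smoothness of $[\pi(X)]$) suffices to ensure that all the reductions and tangent spaces involved are the expected linear subquotients, so that the exact-sequence computation genuinely computes $T_{[X]}\EL(N,[l])/Q$.
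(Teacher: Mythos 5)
Your proposal is correct and follows essentially the same route as the paper: both identify $T_{[X]}\EL(N,[l])/Q$ with $\ker(\delta p)$ via diagram (\ref{global zero-modes}), transport this through the isomorphisms $T_{[X]}\EL_N/Q\cong H_{\hat{Q}_X}(T_X\F_N)$ and $T_{[l]}\EL_{\pa N}/Q\cong H_{\hat{Q}_l}(T_l\F_{\pa N})$ to the kernel of $[\delta\pi_X]$, and then read off the answer from exactness of the long exact sequence (the paper's ``truncated'' sequence is precisely your two applications of exactness plus the first isomorphism theorem). The compatibility of $\delta p$ with $[\delta\pi_X]$, which you flag as the main point to check, is indeed what the paper relies on via $\delta\pi\circ Q_N=Q_{\pa N}\circ\delta\pi$ and the naturality of the identifications.
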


Here is another, equivalent characterization of this space.
By definition
\[
T_{X}\EL(N, [l])=\{\xi\in \ker(\hat{Q}_X) \;|\; \delta\pi_X(\xi)\in Im(\hat{Q}_{l})\}
\]
On the other hand $(Vect_Q)_X=Im(\hat{Q}_X)$. Therefore
the tangent space to the reduced space is the quotient
\[
T_{X}\EL(N, [l])/Q=\{\xi\in \ker(\hat{Q}_X) \;|\; \delta\pi_X(\xi)\in Im(\hat{Q}_{l})\}/Im(\hat{Q}_{X})
\]
%
This implies the following.
\begin{Proposition} \label{P1}We have
$$
T_{X}\EL(N, [l])/Q\cong \{\xi\in \ker(\hat{Q}_X) \;|\; \delta\pi_X(\xi)=0\}/
\{\xi\in Im(\hat{Q}_X) \;|\; \delta\pi_X(\xi)=0\}
$$
\end{Proposition}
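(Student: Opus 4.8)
The plan is to deduce this purely linear-algebraically from the description established just above the statement, namely
\[
T_X \EL(N,[l])/Q = \{\xi \in \ker(\hat{Q}_X) \mid \delta\pi_X(\xi) \in Im(\hat{Q}_l)\}/Im(\hat{Q}_X).
\]
Write $K = \ker(\hat{Q}_X)$, $I = Im(\hat{Q}_X)$, and set $A = \{\xi \in K \mid \delta\pi_X(\xi) \in Im(\hat{Q}_l)\}$, so the space above is $A/I$. The right-hand side of the claim is $B/C$ with $B = \{\xi \in K \mid \delta\pi_X(\xi)=0\}$ and $C = \{\xi \in I \mid \delta\pi_X(\xi) = 0\}$. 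The three structural inputs I would use are: $\hat{Q}_X^2 = 0$, which gives $I \subset K$; the intertwining relation $\delta\pi_X \circ \hat{Q}_X = \hat{Q}_l \circ \delta\pi_X$, obtained by linearizing $\delta\pi \circ Q_N = Q_{\pa N}\circ\delta\pi$ at $X$, which makes $\delta\pi_X$ a chain map; and the surjectivity of $\delta\pi_X$, which is the linearization of the standing assumption that $\pi$ is a surjective submersion. Note first that these inputs already make the two quotients well-defined: $I \subseteq A$ (since $\delta\pi_X(I) \subseteq Im(\hat{Q}_l)$ by the chain-map relation) and $C \subseteq B$ (since $I \subseteq K$).

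The heart of the argument is to prove the two identities $B \cap I = C$ and $A = B + I$; the second isomorphism theorem then yields $A/I = (B+I)/I \cong B/(B\cap I) = B/C$, which is exactly the claim. The first identity is immediate: since $I \subset K$, imposing $\delta\pi_X(\xi)=0$ on elements of $I$ reproduces precisely $C$. The inclusion $B+I \subseteq A$ is equally routine, since $B \subset A$ trivially and $I \subseteq A$ as noted above.

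The one step that requires work is the reverse inclusion $A \subseteq B+I$, which I would carry out as follows. Given $\xi \in A$, write $\delta\pi_X(\xi) = \hat{Q}_l w$ for some boundary tangent vector $w$; by surjectivity of $\delta\pi_X$ choose $\zeta$ with $\delta\pi_X(\zeta) = w$, and set $b = \xi - \hat{Q}_X \zeta$. Then $\hat{Q}_X b = -\hat{Q}_X^2\zeta = 0$, so $b \in K$, while $\delta\pi_X(b) = \delta\pi_X(\xi) - \hat{Q}_l \delta\pi_X(\zeta) = \hat{Q}_l w - \hat{Q}_l w = 0$ by the intertwining relation; hence $b \in B$, and $\xi = b + \hat{Q}_X\zeta \in B + I$. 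The main obstacle is precisely this correction step: one must produce an element of $I$ whose boundary image equals $\delta\pi_X(\xi)$, and this is exactly where surjectivity of the restriction map is essential, since without it a $\zeta$ solving $\delta\pi_X(\zeta)=w$ need not exist. I would close by observing that the argument uses only $\hat{Q}_X^2=0$, the chain-map relation, and surjectivity of $\delta\pi_X$, and no finiteness hypothesis, so it applies verbatim in the infinite-dimensional setting.
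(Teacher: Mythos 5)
Your proposal is correct and follows the same route the paper intends: the paper states the characterization $T_X\EL(N,[l])/Q=\{\xi\in\ker(\hat{Q}_X)\;|\;\delta\pi_X(\xi)\in Im(\hat{Q}_l)\}/Im(\hat{Q}_X)$ and then asserts the Proposition with the linear algebra left implicit, which is exactly what you have made explicit via the second isomorphism theorem together with the correction step $\xi\mapsto\xi-\hat{Q}_X\zeta$ using the chain-map relation and surjectivity of $\delta\pi_X$. Your filled-in argument is sound and correctly identifies the standing assumptions ($\hat{Q}_X^2=0$, $\delta\pi_X\circ\hat{Q}_X=\hat{Q}_l\circ\delta\pi_X$, and $\pi$ a surjective submersion) that make it work.
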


\subsection{Symplectic $\EL$-moduli spaces} \label{part-red-sec}
In previous sections we introduced $\EL$-moduli space as the $Q$-reduction
of the space of solutions to Euler-Lagrange equations. In case when $\pa N=\emptyset$
we proved that they carry a symplectic structure coming from symplectic reduction, but this is
no longer true in the case with boundary.

In this subsection, for the case with boundary we will introduce a different reduction which is
symplectic (see subsection \ref{symp-red-sub}), fibers over the $Q$-reduction and has simple gluing properties.
We call it the symplectic $\EL$-moduli space.

Denote the bulk and boundary $\EL$-moduli spaces by
\[
\cM_N=\EL_N/Q_N, \ \ \cM_{\pa N}=\EL_{\pa N}/Q_{\pa N}
\]
When $N$ is fixed, we will usually suppress the subscript $N$ for brevity.

\subsubsection{The main construction} Denote by $\widetilde{\Vect}_Q^\rel$ the space of vector fields of the form $[Q_N, v]$ where $v$ is a \textit{vertical} vector field, i.e. a vector field tangent to the fibers of $\pi$.
This space is also a Lie subalgebra of $\widetilde{\Vect}_Q$.
The restriction of $\widetilde{\Vect}_Q^\rel$ to 
$\EL$ gives a distribution which we denote by  $\Vect_Q^\rel$ .

Define the \textit{symplectic} $\EL$-moduli space $\MM^\p$
as the space of leaves of $\Vect_Q^\rel$ on $\EL$:
\[
\cM^\p={\EL}/\Vect_Q^\rel
\]
It is clear that
\[
T_{[X]}\cM^\p=\ker(\hat{Q}_X)/\hat{Q}_X(\ker(\delta\pi_X))
\]

The restriction to the boundary $\pi: \cF\to \cF_{\pa}$
induces the projection $\pi_*: \cM^\p\to \EL_{\pa}$.

The distribution $\Vect_Q$ on $\EL$ induces the distribution $b$ on $\cM^\p$ which projects to
the distribution $\Vect_{Q^\pa}$ on $\EL_{\pa}$ by $\delta\pi_*$.
It is easy to see that $b$ is involutive
and that
\[
b_{[X]}\simeq Im(\hat{Q}_X)/\hat{Q}_X(\ker(\delta\pi_X))
\]
where $[X]$ is the leaf of $\Vect_Q^\rel$ through $X$.
The image of this projection can be thought of as a quotient distribution $\Vect_Q/\Vect_Q^\rel$.

Notice that there is a natural isomorphism $T_X\cF/\ker{\delta\pi_X}\simeq T_{l}\cF_{\pa}$ where $l=\pi(X)$.
We also have a natural degree 1 surjective map
\[
\beta: T\cF_{\pa}\twoheadrightarrow b,  \ \ \beta_X: T_{l}\cF_{\pa}\twoheadrightarrow b_{[X]}
\]
defined as follows: for $\xi\in T_X\cF$ and  $\xi+\ker(\delta \pi)$ being identified with an element in $T_{l}\cF_{\pa}$, we set
\[
\beta_X(\xi+\ker(\delta \pi))=\hat{Q}_X(\xi)+\hat{Q}_X(\ker(\delta \pi))
\]

The following statement follows immediately from $\hat{Q}_X^2=0$.
\begin{lemma} $\beta$ vanishes on $Im(\hat{Q}_l)$.
\end{lemma}

The usual $\EL$-moduli space, i.e. space of leaves of
$\Vect_Q$ on $\EL$, is naturally isomorphic to
the space of leaves of $b$ on $\cM^\p$. This
follows immediately from the definition of $\cM^\p$.

Finally, from the definitions above, it is clear that the following diagram is commutative:

\be
\begin{CD}
b_{[X]} @= b_{[X]} @>\subset>> T_{[X]} \MM^\p \\
@A\beta_{[X]} AA @V(\delta\pi_*)_{[X]}VV @V(\delta\pi_*)_{[X]}VV \\
T_{l}[-1]\FF_\dd @>\hat{Q}_{l}>> (\Vect_{Q_\dd})_{l} @>\subset>> T_{l} \EL_\dd
\end{CD}
\label{beta diagram}\ee

\subsubsection{More on tangent spaces}
Define the vertical component $T^\ver_{[X]}\cM^\p$ of the tangent space $T_{[X]}\cM^\p$ as
\[
T^\ver_{[X]}\cM^\p=\ker((\delta\pi_*)_X)\subset T_{[X]}\cM^\p
\]
and denote the quotient map by $\chi: T_{[X]}^\ver\cM^\p \to T_{[X]^\abs}\cM $
where $\cM$ is the usual $\EL$-moduli space for $N$, i.e. the space of leaves of $\Vect_Q$ on $\EL$, and $[X]^\abs$ is the leaf of $\Vect_Q$ through $X$.

The projection $\pi: \cF\to \cF_{\pa}$ restricted to $\EL$
induces the natural projection
$\pi^{\abs}_*: \cM\to \cM_{\pa}$. Recall that the fiber of $\pi^\abs_*$ over $[l]\in \cM_\pa$ is the space $\EL(N,[l])/Q$ discussed in Section \ref{sec: Q-red fibers}; the image of $\pi^\abs_*$ is the $Q$-reduction of the space $\cL\subset \EL$ discussed in Section \ref{sec: BFV red}.
Denote by $\psi=\delta\pi^{\abs}_*$ the corresponding mapping of
tangent bundles, $\psi_{[X]^\abs}: T_{[X]^\abs}\cM\to T_{[l]}\cM_{\pa}$.

The restriction of the mapping $\beta: T\cF_{\pa}\to b$
defined earlier to $\ker(\hat{Q}_{l})$
vanishes on $Im(\hat{Q}_{l})$
and therefore induces a linear mapping $(\beta_*)_{[X]}:
T_{l}[-1]\cM_{\pa}\to T^\mr{vert}_{[X]}\cM^\p$.

We have a sequence of linear maps:
\be \cdots \ra T_{[l]}[-1]\MM_{\dd} \xra{\beta_*} T^\ver_{[X]}\MM^\p \xra{\chi} T_{[X]^\abs}\MM \xra{\psi} T_{[l]}\MM_{\dd} \ra\cdots \label{LES of tangent spaces to moduli spaces}\ee

\begin{Proposition} This is an exact sequence.
\end{Proposition}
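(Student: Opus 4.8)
The plan is to recognize the sequence \eqref{LES of tangent spaces to moduli spaces} as the long exact cohomology sequence attached to the short exact sequence of cochain complexes
\[
0 \to (\ker\delta\pi_X,\, \hat{Q}_X) \xra{i} (T_X\FF_N,\, \hat{Q}_X) \xra{\delta\pi_X} (T_{l}\FF_{\dd N},\, \hat{Q}_l) \to 0,
\]
which is precisely the short exact sequence already written down in Section \ref{sec: Q-red fibers}. First I would check that this is genuinely a short exact sequence of complexes: the subspace $\ker\delta\pi_X$ is $\hat{Q}_X$-invariant because $\delta\pi_X\hat{Q}_X=\hat{Q}_l\delta\pi_X$ forces $\delta\pi_X(\hat{Q}_X\zeta)=0$ whenever $\delta\pi_X\zeta=0$, so it is a subcomplex; exactness on the right is the surjectivity of $\delta\pi_X$, which holds because $\pi$ is a surjective submersion; exactness in the middle and on the left is tautological.

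The core of the proof is then to match, term by term and arrow by arrow, the sequence \eqref{LES of tangent spaces to moduli spaces} with the long exact sequence of this short exact sequence. For the terms, at a smooth point we already have $T_{[X]^\abs}\MM\cong H_{\hat{Q}_X}(T_X\FF_N)$ and $T_{[l]}\MM_\dd\cong H_{\hat{Q}_l}(T_{l}\FF_{\dd N})$ (as in Proposition \ref{QisS} and its boundary counterpart). The one new identification to make is for the vertical space: abbreviating $K=\ker\hat{Q}_X\cap\ker\delta\pi_X$ and $B=\hat{Q}_X(\ker\delta\pi_X)$, the cocycles of the subcomplex $\ker\delta\pi_X$ are precisely $K$ and its coboundaries are precisely $B$, so $H_{\hat{Q}_X}(\ker\delta\pi_X)=K/B$; on the other hand, combining $T_{[X]}\MM^\p=\ker\hat{Q}_X/B$ with the definition $T^\ver_{[X]}\MM^\p=\ker((\delta\pi_*)_X)$ gives directly $T^\ver_{[X]}\MM^\p=K/B$ as well. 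The degree shift $[-1]$ on the leftmost $\MM_\dd$ is exactly the shift by which the connecting homomorphism raises cohomological degree.

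For the arrows, I would identify $\chi$ with the map induced by the inclusion $i$: the map $\chi$ comes from the quotient $\MM^\p\to\MM$ (which exists because $\Vect_Q^\rel\subset\Vect_Q$), and on representatives it sends $[\xi]_B\mapsto[\xi]_{Im(\hat{Q}_X)}$ for $\xi\in K$, which is exactly the inclusion-induced map on cohomology. Likewise $\psi=\delta\pi^\abs_*$ sends $[\xi]\mapsto[\delta\pi_X\xi]$, which is the map induced by $\delta\pi_X$. Finally, unwinding the definition of $\beta_*$ --- lift $\eta\in\ker\hat{Q}_l$ to some $\xi$ with $\delta\pi_X\xi=\eta$ and take $\hat{Q}_X\xi$ modulo $B$ --- reproduces verbatim the standard zig-zag construction of the connecting homomorphism $\phi$; the earlier Lemma asserting that $\beta$ vanishes on $Im(\hat{Q}_l)$ is precisely the statement that $\phi$ descends to cohomology. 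With all terms and maps matched, exactness of \eqref{LES of tangent spaces to moduli spaces} is nothing more than the standard long exact sequence of a short exact sequence of complexes.

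The only genuine subtlety is bookkeeping rather than mathematics: one must verify that $\beta_*$ coincides with the connecting map on the nose, in particular that the class $\hat{Q}_X\xi \bmod B$ is independent of the chosen lift $\xi$ (this is exactly where $B=\hat{Q}_X(\ker\delta\pi_X)$, rather than $Im(\hat{Q}_X)$, is the right denominator), and that the two descriptions of $T^\ver_{[X]}\MM^\p$ agree. Everything else is a direct application of homological algebra to data already assembled in the preceding subsections.
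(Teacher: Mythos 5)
Your proof is correct and follows essentially the same route as the paper: the paper likewise identifies the sequence with the long exact $\hat{Q}$-cohomology sequence of the short exact sequence $0\to T^\ver_X\FF\to T_X\FF\to T_l\FF_{\dd}\to 0$ and invokes the snake lemma. The only difference is that you spell out the details the paper leaves implicit (that $\ker\delta\pi_X$ is a subcomplex, the identification $T^\ver_{[X]}\MM^\p\cong H_{\hat{Q}_X}(\ker\delta\pi_X)$, and the matching of $\chi$, $\psi$, $\beta_*$ with the induced and connecting maps), which is a sound and complete elaboration.
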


\begin{proof}
Sequence (\ref{LES of tangent spaces to moduli spaces}) can be written as
$$\cdots \ra H^{\bt-1}_{\hat{Q}_{l}}(T_{l}\FF_{\dd}) \ra H^\bt_{\hat{Q}_X} (T^\ver_X\FF)\ra H^\bt_{\hat{Q}_X}(T_X\FF) \ra H^\bt_{\hat{Q}_{l}}(T_l\FF_{\dd})\ra \cdots$$
which is induced from the short exact sequence in the non-reduced picture
$$0\ra T^\ver_X \FF \ra T_X \FF \ra T_l\FF_{\dd}\ra 0$$
by passing to $\hat{Q}$-cohomology and so is exact by snake lemma.
\end{proof}

\subsubsection{Symplectic structure, Lefschetz duality}\label{symp-red-sub}

Denote by $\hat{Q}_X^\ver$ the restriction of $\hat{Q}_X$ to $T_X^\ver \FF$.

We know (cf. the proof of Proposition \ref{paN-coisot}) that in $T_X\FF$ we have
$$Im(\hat{Q}_X^\ver)^\perp= \ker (\hat{Q}_X),\quad Im(\hat{Q}_X)^\perp= \ker (\hat{Q}_X^\ver)$$
\begin{assumption}
We will assume that also the opposite holds:
\be \ker (\hat{Q}_X)^\perp = Im(\hat{Q}_X^\ver),\quad \ker (\hat{Q}_X^\ver)^\perp = Im(\hat{Q}_X) \label{assump bdry} \ee
\end{assumption}
This is a stronger version of assumption (\ref{assump}); in our examples it follows from Hodge-Morrey decomposition theorem for manifolds with boundary \cite{CGMT}.
It immediately implies that $\Vect_Q^\rel$ coincides with the characteristic distribution on $\EL$, thus we have the following.

\begin{Proposition}
Assuming (\ref{assump bdry}), $\MM^\p$ is the symplectic reduction of $\EL$ and carries a degree $-1$ symplectic structure $\underline{\omega}$ coming from $\FF$.
\end{Proposition}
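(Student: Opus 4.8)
The plan is to reduce the whole statement to a single linear identity at a smooth point and then invoke the standard reduction of coisotropic submanifolds reviewed in Appendix \ref{co-isotr}. Concretely, the goal is to prove that, under (\ref{assump bdry}), the distribution $\Vect_Q^\rel$ is exactly the characteristic distribution of the coisotropic submanifold $\EL$. Once this is established, $\MM^\p=\EL/\Vect_Q^\rel$ is by definition the symplectic reduction of $\EL$ and therefore inherits a symplectic form from $\omega_N$.

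First I would fix a smooth point $X\in\EL$ and pass to tangent spaces, where every object becomes linear. By Proposition \ref{paN-coisot}(b) the subspace $T_X\EL=\ker(\hat Q_X)$ is coisotropic in $T_X\FF$, so its characteristic subspace is the symplectic orthogonal $\ker(\hat Q_X)^\perp$. Next I would evaluate the generators of $\Vect_Q^\rel$ at $X$: since $[Q_N,v]$ at a zero $X$ of $Q_N$ equals $\hat Q_X(v_X)$ with $v_X\in\ker(\delta\pi_X)=T^\ver_X\FF$ an arbitrary vertical vector, one obtains $(\Vect_Q^\rel)_X=\hat Q_X(\ker(\delta\pi_X))=Im(\hat Q_X^\ver)$, in agreement with the formula for $T_{[X]}\MM^\p$ recorded above.

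The crux is then the single equality $Im(\hat Q_X^\ver)=\ker(\hat Q_X)^\perp$, which is precisely the first relation of Assumption (\ref{assump bdry}). This matches $(\Vect_Q^\rel)_X$ with the characteristic subspace $(T_X\EL)^\perp$ at every smooth point, so the leaves of $\Vect_Q^\rel$ coincide with the leaves of the characteristic foliation and $\MM^\p$ is the symplectic reduction of $\EL$. To produce $\underline\omega$ I would then apply coisotropic reduction: the restriction of $\omega_N$ to $T_X\EL=\ker(\hat Q_X)$ has radical $\ker(\hat Q_X)\cap\ker(\hat Q_X)^\perp=\ker(\hat Q_X)^\perp$ (using coisotropy), which by the same equality is $Im(\hat Q_X^\ver)$, i.e. exactly the subspace divided out in $T_{[X]}\MM^\p=\ker(\hat Q_X)/Im(\hat Q_X^\ver)$; hence the induced form is nondegenerate. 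Since $\omega_N$ has ghost number $-1$ and reduction preserves the grading, $\underline\omega$ has degree $-1$.

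The main obstacle is not the linear algebra but the global and infinite-dimensional bookkeeping behind it: one must ensure that $\Vect_Q^\rel$ is a genuinely integrable distribution whose leaf space is a graded manifold near smooth points, and that the pointwise identification of symplectic orthogonals assembles into an honest characteristic foliation. The integrability is in fact a bonus once $\Vect_Q^\rel$ is recognized as the characteristic distribution, but the existence of a smooth leaf space is a real hypothesis, controlled by restricting to smooth points (Appendix \ref{smooth}) and by the Hodge-Morrey decomposition that underlies (\ref{assump bdry}).
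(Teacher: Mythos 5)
Your proposal is correct and follows essentially the same route as the paper, which simply notes that assumption (\ref{assump bdry}) ``immediately implies that $\Vect_Q^\rel$ coincides with the characteristic distribution on $\EL$'' and then invokes coisotropic reduction; your write-up spells out exactly this identification (characteristic subspace $=\ker(\hat{Q}_X)^\perp$, $(\Vect_Q^\rel)_X=Im(\hat{Q}_X^\ver)$, matched by the first relation of the assumption). The only difference is that you make explicit the pointwise linear algebra and the degree bookkeeping that the paper leaves implicit.
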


\begin{definition}\label{def: regular} We call a BV-BFV theory \textbf{regular} if the assumption (\ref{assump bdry}) holds for any $X\in \EL$, together with the assumption (\ref{assump 1 for bdry}) for any $l\in \EL_\pa$.
\end{definition}


Symplectic structure $\omega$ on $\FF$ induces a bilinear pairing
$H_{\hat{Q}^\ver_X}\otimes H_{\hat{Q}_X}[1] \ra \RR$ or equivalently
$T^\ver_{[X]}\cM^\p \otimes T_{[X]^\abs}[1]\cM  \ra \RR$
which is well-defined due to (\ref{Q kinda self-adjoint}) and non-degenerate due to (\ref{assump bdry}). Together with the symplectic structure on $\MM_{\pa}=\underline{\EL}_{\pa}$ it gives the Lefschetz duality for the long exact sequence (\ref{LES of tangent spaces to moduli spaces}), i.e.
a non-degenerate pairing on $T^\ver_{[X]}\cM^\p\oplus T_{[X]^\abs}[1]\cM \oplus T_{[l]}\cM_{\pa}$:

\be \langle \bt,\bt\rangle:\quad
\left\{\begin{array}{lll} T^\ver_{[X]}\cM^\p \otimes T_{[X]^\abs}[1]\cM & \ra &\RR \\
 T_{[X]^\abs}[1]\cM \otimes T^\ver_{[X]}\cM^\p & \ra & \RR \\
 T_{[l]}\cM_{\pa} \otimes T_{[l]}\cM_{\pa} & \ra & \RR
\end{array} \right. \label{Lefshetz pairing}\ee
Observe that (\ref{Q kinda self-adjoint}) implies that the map $\chi$ is self-adjoint with respect to $\langle\bt,\bt\rangle$, and $\psi$ and $\beta_*$ are adjoint to each other, therefore Lefschetz duality can be stated as an injective chain map between the complex (\ref{LES of tangent spaces to moduli spaces}) and the dual one:
{\tiny
$$ 
\begin{CD}
\cdots @>>> T_{[l]}[-1]\MM_{\dd} @>\beta_*>> T_{[X]}^\ver\MM^\p @>\chi>> T_{[X]^\abs} \MM @>\psi>> T_{[l]}\MM_{\dd} @>>> \cdots \\
@. @VVV @VVV @VVV @VVV @. \\
\cdots @>>> T_{[l]}^*[-1]\MM_{\dd}@>\psi^*>> T_{[X]^\abs}^*[-1] \MM @>\chi^*>>  (T_{[X]}^\ver)^*[-1]\MM^\p @>(\beta_*)^*>> T_{[l]}^* \MM_{\dd}@>>> \cdots
\end{CD}
$$
}
In case of a topological field theory, the complex (\ref{LES of tangent spaces to moduli spaces}) consists of finite dimensional vector spaces and vertical arrows in the diagram above become isomorphisms.

The form $\underline{\omega}$ induces presymplectic structures $\omega^{\p,\ver}$ on the fibers of $\pi_*: \cM^{\p} \ra \EL_{\pa}$. The form $\omega^{\p,\ver}$ can be written in terms of Lefschetz duality as:
$$ \omega^{\p,\ver}_{[X]}= \langle \bt , \chi (\bt) \rangle : \qquad  T^\ver_{[X]}\MM^\p\otimes  T^\ver_{[X]}[1]\MM^\p \ra \RR $$

\begin{Proposition}\label{prop: red fibers symp} Fibers of $\pi_*^\abs: \cM\ra \cM_{\pa}$ carry a natural degree $-1$ symplectic structure coming from $\FF$.
\end{Proposition}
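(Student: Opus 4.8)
The plan is to realize each fiber of $\pi_*^\abs$ as the reduction of a fiber of the \emph{un}-quotiented boundary map $\pi_*\colon \MM^\p\to\EL_\pa$ by the null (radical) distribution of the vertical presymplectic form $\omega^{\p,\ver}$, and then to use the standard fact that a closed $2$-form descends to its leaf space of kernels as a symplectic form. Concretely, I would fix $[l]\in\cM_\pa$ and a smooth point $[X]^\abs$ of the fiber $F_{[l]}:=(\pi_*^\abs)^{-1}([l])=\EL(N,[l])/Q$. Write $\Phi_l:=\pi_*^{-1}(l)\subset\MM^\p$ for the fiber of $\pi_*$ over the actual point $l\in\EL_\pa$; at smooth points $\pi_*$ is a submersion, so $\Phi_l$ is a submanifold with $T_{[X]}\Phi_l=\ker(\delta\pi_*)_{[X]}=T^\ver_{[X]}\MM^\p$. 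Since $\omega^{\p,\ver}$ is, by construction, the form induced by $\underline\omega$ on the fibers of $\pi_*$, i.e. the pullback of $\underline\omega$ to $\Phi_l$, and $\underline\omega$ is the closed (degree $-1$) symplectic form of the previous Proposition, $\omega^{\p,\ver}$ is a genuine closed $2$-form on $\Phi_l$, not merely a pointwise pairing. This already secures closedness and degree $-1$.

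Next I would compute the radical of $\omega^{\p,\ver}$ pointwise. Writing $\omega^{\p,\ver}_{[X]}(u,v)=\langle u,\chi(v)\rangle$ and using that $\chi$ is self-adjoint for the Lefschetz pairing $\langle\bt,\bt\rangle$, one gets $\langle u,\chi(v)\rangle=\pm\langle\chi(u),v\rangle$, so $u\in\ker\omega^{\p,\ver}_{[X]}$ iff $\langle\chi(u),v\rangle=0$ for all $v\in T^\ver_{[X]}\MM^\p$. Because the Lefschetz pairing between $T^\ver_{[X]}\MM^\p$ and $T_{[X]^\abs}[1]\MM$ is non-degenerate (this is exactly what (\ref{Q kinda self-adjoint}) and (\ref{assump bdry}) buy), this forces $\chi(u)=0$. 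Hence the radical of $\omega^{\p,\ver}_{[X]}$ equals $\ker\chi$, and by exactness of the sequence (\ref{LES of tangent spaces to moduli spaces}) one has $\ker\chi=\im\beta_*$. Being the kernel of a closed $2$-form of locally constant rank (ensured on the smooth locus), this distribution is integrable.

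Finally I would identify the leaf space $\Phi_l/\!\ker\omega^{\p,\ver}$ with $F_{[l]}$. The directions in $\im\beta_*$ are precisely the vertical part of the distribution $b$, i.e. the residual $Q$-gauge directions that fix the boundary value $l$; passing from $\Phi_l$ (bulk classes modulo relative gauge, with boundary value exactly $l$) to its quotient by $\im\beta_*$ is exactly the passage to full $Q$-gauge classes whose boundary value lies in the leaf $[l]$, which is $F_{[l]}$. On tangent spaces this is the isomorphism $T^\ver_{[X]}\MM^\p/\ker\chi\xrightarrow{\ \sim\ }\im\chi=\ker\psi=T_{[X]^\abs}F_{[l]}$, again supplied by exactness of (\ref{LES of tangent spaces to moduli spaces}). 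Therefore $\omega^{\p,\ver}$ descends to a closed, non-degenerate, degree $-1$ two-form on $F_{[l]}$, inherited from $\omega_N$ on $\FF$, which is the asserted symplectic structure. The main obstacle is precisely the bookkeeping of this last step: verifying that the null foliation of the vertical form coincides with the gauge directions collapsed by $\pi_*^\abs$, so that $\Phi_l/\!\ker\omega^{\p,\ver}=F_{[l]}$, and that the rank is constant on the smooth locus so the quotient is a manifold. Once this matching is established, non-degeneracy and closedness follow automatically from the Lefschetz pairing and from $\underline\omega$ being symplectic.
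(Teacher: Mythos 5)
Your proof is correct and follows essentially the same route as the paper's: the paper's own (two-line) argument likewise identifies the radical of $\omega^{\p,\ver}$ with $\ker(\chi)$ via non-degeneracy of the Lefschetz pairing and then transports the form to $Im(\chi)=\ker(\psi)$, the tangent space to the fiber of $\pi_*^\abs$. The extra bookkeeping you supply --- realizing the fiber as the presymplectic reduction of $\pi_*^{-1}(l)$, closedness via pullback of $\underline{\omega}$, and matching the null foliation $Im(\beta_*)$ with the gauge directions collapsed by $\pi_*^\abs$ --- is a faithful elaboration of what the paper leaves implicit, not a different method.
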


\begin{proof}
By non-degeneracy of (\ref{Lefshetz pairing}), the kernel of $\omega^{\p,\ver}$ is exactly the kernel of $\chi$. Thus $\omega^{\p,\ver}$ induces a non-degenerate degree $-1$ symplectic structure on $Im(\chi)=\ker(\psi)\subset T_{[X]^\abs}\cM$ and hence on fibers of $\pi_*^\abs: \cM\ra \cM_{\pa}$.
\end{proof}

Thus, the $Q$-reduced fibers $\EL(N,[l])/Q$ have natural symplectic
structure. We call these reduced fibers \emph{the moduli spaces of vacua}.

\begin{Proposition}\label{prop: L Lagrangian}
The image of $\pi_*^\abs: \cM\ra \cM_{\pa}$ is locally Lagrangian in $\cM_{\pa}$.
\end{Proposition}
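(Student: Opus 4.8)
The plan is to prove the statement at the level of tangent spaces at a smooth point, exhibiting the tangent space to the image of $\pi_*^\abs$ as its own symplectic orthogonal by exploiting the self-duality of the long exact sequence (\ref{LES of tangent spaces to moduli spaces}). Fix a smooth point $[X]\in\cM$ with boundary image $[l]=\pi_*^\abs([X])\in\cM_\pa$. The tangent space to the image of $\pi_*^\abs$ at $[l]$ is $\im(\psi)\subset T_{[l]}\cM_\pa$, where $\psi=\delta\pi_*^\abs$ is the last map displayed in (\ref{LES of tangent spaces to moduli spaces}). Recall that $\cM_\pa=\underline{\EL}_{\pa N}$ carries the degree-zero symplectic form $\omega_{\pa N}$ coming from symplectic reduction, and that $\omega_{\pa N}$ is precisely the self-pairing block $T_{[l]}\cM_\pa\otimes T_{[l]}\cM_\pa\to\RR$ of the Lefschetz pairing (\ref{Lefshetz pairing}). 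So computing the Lagrangian condition for $\im(\psi)$ amounts to comparing it with its orthogonal under this block.

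The key input is the adjointness built into Lefschetz duality: $\psi$ and $\beta_*$ are mutually adjoint, so that $\omega_{\pa N}(\psi(a),v)=\pm\langle a,\beta_*(v)\rangle$ for all $a\in T_{[X]^\abs}\cM$ and $v\in T_{[l]}\cM_\pa$, the right-hand pairing being the nondegenerate block $T_{[X]^\abs}[1]\cM\otimes T^\ver_{[X]}\cM^\p\to\RR$. I would then read off the symplectic orthogonal of the image directly: $v$ lies in $\im(\psi)^{\perp_{\omega_{\pa N}}}$ iff $\omega_{\pa N}(\psi(a),v)=0$ for all $a$, iff $\langle a,\beta_*(v)\rangle=0$ for all $a$, iff $\beta_*(v)=0$ by nondegeneracy of that block, iff $v\in\ker(\beta_*)=\im(\psi)$ by exactness of (\ref{LES of tangent spaces to moduli spaces}) at the term $T_{[l]}\cM_\pa$. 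This gives $\im(\psi)^{\perp}=\im(\psi)$, which is exactly the statement that the image is (locally) Lagrangian; note that isotropy, which already refines Proposition \ref{sub-loc-isot}, is the inclusion $\im(\psi)\subset\im(\psi)^\perp$ and follows separately from $\beta_*\circ\psi=0$, while the reverse inclusion is the new content recovered above.

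The steps I expect to demand care, rather than the orthogonal computation itself, are the bookkeeping items. First, the adjointness relation drawn from (\ref{Lefshetz pairing}) must be applied with the correct degree shift and sign, so that the pairing $\langle a,\beta_*(v)\rangle$ is genuinely nondegenerate in the variable $a$; this nondegeneracy is where the regularity assumption (\ref{assump bdry}) enters. Second, I must read exactness off the correct graded position of the periodic sequence (\ref{LES of tangent spaces to moduli spaces}), i.e. confirm that the map following $\psi$ is indeed the shifted $\beta_*$ and hence $\ker(\beta_*)=\im(\psi)$; this rests on the snake-lemma derivation of the sequence. Conceptually the whole argument is the tangent-space incarnation of the general principle that reduction carries a Lagrangian (here $\cL_N$) to a Lagrangian, so once these identifications are pinned down the conclusion follows with no further calculation.
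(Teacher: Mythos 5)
Your proposal is correct and follows essentially the same route as the paper's own proof: identify the tangent space to the image with $Im(\psi)$, compute its symplectic orthogonal via the adjointness of $\psi$ and $\beta_*$ under the nondegenerate Lefschetz pairing, and conclude $Im(\psi)^\perp=\ker(\beta_*)=Im(\psi)$ by exactness of the long exact sequence (\ref{LES of tangent spaces to moduli spaces}). The bookkeeping points you flag (signs, degree shifts, and reading exactness at the correct term) are indeed the only places requiring care, and the paper handles them exactly as you describe.
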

\begin{proof} Indeed, for a smooth point $[l]=[\pi(X)]\in \cM_{\pa}$,
tangent space to the image of $\pi_*^\abs$ is
$Im(\psi: H^\bt_{\hat{Q}_X}(T_X\FF)\ra H^\bt_{\hat{Q}_l}(T_l\FF_\pa))$.
Lagrangianity is proven as follows:
\begin{eqnarray*} Im(\psi)^\perp &=& \{a\in H_{\hat{Q}_l} (T_l\FF_\pa) \;|\; \langle a,\psi (b) \rangle=0\;\forall b\in H_{\hat{Q}_X}(T_X\FF) \} \\
&=& \{a\in H_{\hat{Q}_l} \;|\; \langle \beta_*(a), b \rangle=0\;\forall b\in H_{\hat{Q}_X} \}\\
&=& \ker(\beta_*)\\
&=& Im(\psi)
\end{eqnarray*}
where in the third line we used the non-degeneracy of Lefschetz pairing.
\end{proof}

The following is a corollary of the above, using also Propositions \ref{lagr} and \ref{prop: Vect_Q tangent to L}:
\begin{corollary} \label{cor: L Lagr}
The space $\cL=\pi(\EL)\subset \FF_\dd$ is locally Lagrangian.
\end{corollary}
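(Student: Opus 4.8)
The plan is to exhibit $\cL$ as the full preimage, under the coisotropic reduction of $\EL_\dd$, of the Lagrangian $Im(\pi_*^\abs)\subset\cM_\dd$ furnished by Proposition \ref{prop: L Lagrangian}, and then to invoke the general reduction principle (Proposition \ref{lagr}) that the preimage of a Lagrangian in $\cM_\dd$ is Lagrangian in the ambient symplectic manifold $\FF_\dd$. Throughout, everything is read locally, near smooth points in the sense of Appendix \ref{smooth}.

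First I would recall that $\EL_\dd$ is locally coisotropic in $\FF_\dd$ (Proposition \ref{paN-coisot}a) and that, under the regularity assumption (\ref{assump 1 for bdry}), its characteristic foliation coincides with the foliation by $Vect_{Q_\dd}$; hence the symplectic reduction of $\EL_\dd$ is exactly $\cM_\dd=\underline{\EL}_\dd$. Writing $p\colon\EL_\dd\twoheadrightarrow\cM_\dd$ for the leaf projection, Proposition \ref{prop: Vect_Q tangent to L} (which asserts that $Vect_{Q_\dd}$ is parallel to $\cL$) shows that $\cL$ is saturated with respect to the characteristic foliation, so that $\cL=p^{-1}(\underline{\cL})$ with $\underline{\cL}=\cL/Q$ its image in the reduction. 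By the identification recorded in Section \ref{sec: BFV red}, this image is precisely $Im(\pi_*^\abs)$, which is locally Lagrangian in $\cM_\dd$ by Proposition \ref{prop: L Lagrangian}.

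It then remains to apply the reduction principle of Proposition \ref{lagr}. Concretely, at a smooth point $l\in\cL$ one has $T_l\cL=(dp_l)^{-1}(T\underline{\cL})$, which contains the characteristic subspace $\ker dp_l=(T_l\EL_\dd)^\perp$. Since the reduced symplectic form on $\cM_\dd$ is the descent of $\omega_\dd$ along $\EL_\dd$ and vanishes on $T\underline{\cL}$, the orthogonal $(T_l\cL)^\perp$ lies in $T_l\EL_\dd$ (using that $\EL_\dd$ is coisotropic), descends to $(T\underline{\cL})^\perp=T\underline{\cL}$, and contains $(T_l\EL_\dd)^\perp$; hence $(T_l\cL)^\perp=(dp_l)^{-1}(T\underline{\cL})=T_l\cL$. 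This upgrades the isotropy already known from Proposition \ref{sub-loc-isot} to the maximal-isotropy (Lagrangian) condition, proving the corollary.

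The step I expect to require the most care is making ``$\cL$ is saturated'' precise: Proposition \ref{prop: Vect_Q tangent to L} gives only pointwise tangency of $Vect_{Q_\dd}$ to $\cL$, so to conclude $\cL=p^{-1}(\underline{\cL})$ one must combine this with integrability of the characteristic distribution and its identification with $Vect_{Q_\dd}$ under assumption (\ref{assump 1 for bdry}), all within the smooth-point framework. Once this preimage description is secured, the Lagrangian conclusion is the purely formal content of Proposition \ref{lagr}.
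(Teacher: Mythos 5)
Your proposal is correct and is precisely the paper's argument: the paper derives Corollary \ref{cor: L Lagr} from Proposition \ref{prop: L Lagrangian} (Lagrangianity of $Im(\pi_*^\abs)$ in $\cM_\dd$), combined with Proposition \ref{prop: Vect_Q tangent to L} (so that $\cL$ is saturated under the characteristic foliation of $\EL_\dd$, which regularity identifies with the $Vect_{Q_\dd}$-foliation) and the reduction principle of Proposition \ref{lagr}. Your write-up merely makes explicit the linear-algebra step that the paper leaves implicit in its one-line citation of these propositions.
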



\subsubsection{Digression: fibers of $\pi_*^\abs$ via symplectic reduction}

Let $\Lambda\subset \F_{\pa}$ be a Lagrangian submanifold
which is transversal to $\cL=\pi(\EL)$. Assume that $\Lambda$ intersects $\cL$ at the point $l$. The subspace $\pi^{-1}(\Lambda)\subset \F$ is the space of
fields with boundary values on $\Lambda$. We will assume that the subspace $\pi^{-1}(\Lambda)$ is symplectic.

Let us prove that 
the subspace $\pi^{-1}(\Lambda)\cap \EL$ is locally coisotropic in $\pi^{-1}(\Lambda)$. For this we need
to prove that at a smooth point $X$ the tangent space $\cS_X=T_X\pi^{-1}(\Lambda)\cap \EL$ is a coisotropic subspace in $T_X\F$.

Because Lagrangian subspaces $\cL$ and $\Lambda$ are transversal,
the intersection of their tangent spaces is trivial,
$T_X\Lambda\cap T_X\cL=\{0\}$, and therefore for each $\xi\in
T_X\pi^{-1}(\Lambda)\cap \EL$ we have $\delta\pi_X(\xi)=0$. Thus
\[
\cS_X=\{\xi\in \ker(\hat{Q}_X) \;|\; \delta\pi(\xi)=0\}
\]
Denote
$$I_X=\{\xi\in Im(\hat{Q}_X) \;|\; \delta\pi(\xi)=0\}$$
Observe that $I_X\subset \cS_X$ and that
\be \label{S/I} \cS_X/I_X \cong T_{X} \EL(N,[l])/Q
\ee
by Proposition \ref{P1}.


\begin{lemma}\label{lm1 symp red fib} For a smooth point $X$, the symplectic orthogonal subspace to  $I_X$ in $T_X\pi^{-1}(\Lambda)$ is $\cS_X$.
\end{lemma}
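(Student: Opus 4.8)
The plan is to establish the two inclusions $\cS_X\subseteq I_X^\perp$ and $I_X^\perp\subseteq \cS_X$ separately, where $\perp$ is the symplectic orthogonal taken inside $T_X\pi^{-1}(\Lambda)$, on which $\omega_N$ is nondegenerate by the standing assumption that $\pi^{-1}(\Lambda)$ is symplectic. The tools throughout are the self-adjointness identity (\ref{Q kinda self-adjoint}), the relation $Im(\hat{Q}_X^\ver)^\perp=\ker(\hat{Q}_X)$ recorded just before Assumption (\ref{assump bdry}), the boundary Assumption (\ref{assump 1 for bdry}), Proposition \ref{prop: Vect_Q tangent to L}, and the transversality of $\Lambda$ and $\cL$.

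For the easy inclusion $\cS_X\subseteq I_X^\perp$, I would take $\eta\in\cS_X$ (so $\hat{Q}_X\eta=0$ and $\delta\pi(\eta)=0$) and a generator $\xi=\hat{Q}_X\lambda\in I_X$, and evaluate $\omega_N(\eta,\xi)$ with (\ref{Q kinda self-adjoint}): the term involving $\hat{Q}_X\eta$ drops because $\hat{Q}_X\eta=0$, and the boundary term $\omega_{\pa N}(\delta\pi(\lambda),\delta\pi(\eta))$ drops because $\delta\pi(\eta)=0$, giving $\omega_N(\eta,\xi)=0$. Since also $\delta\pi(\cS_X)=0\subseteq T_l\Lambda$, we have $\cS_X\subseteq T_X\pi^{-1}(\Lambda)$ and the inclusion follows.

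The substance is the reverse inclusion. Given $\eta\in T_X\pi^{-1}(\Lambda)$ orthogonal to every $\xi=\hat{Q}_X\lambda\in I_X$, identity (\ref{Q kinda self-adjoint}) rewrites the orthogonality condition as
\[
\omega_N(\hat{Q}_X\eta,\lambda)+\omega_{\pa N}(\delta\pi(\eta),\delta\pi(\lambda))=0
\]
for every $\lambda$ with $\delta\pi(\lambda)\in\ker(\hat{Q}_l)$. Specializing to $\lambda\in\ker(\delta\pi)$ kills the boundary term and shows that $\hat{Q}_X\eta$ is $\omega_N$-orthogonal to $\ker(\delta\pi)$. The \emph{main obstacle} is the next step, which is exactly where the terse phrase ``the first term should vanish'' in the analogous computation conceals real content: one must show $Im(\hat{Q}_X)\cap\ker(\delta\pi)^\perp=\{0\}$. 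I would prove this by taking $\hat{Q}_X\xi$ orthogonal to $\ker(\delta\pi)$ and feeding it into (\ref{Q kinda self-adjoint}) against arbitrary $\eta'\in\ker(\delta\pi)$; the boundary term vanishes, yielding $\omega_N(\xi,\hat{Q}_X\eta')=0$ for all such $\eta'$, i.e. $\xi\perp \hat{Q}_X(\ker(\delta\pi))=Im(\hat{Q}_X^\ver)$, hence $\xi\in Im(\hat{Q}_X^\ver)^\perp=\ker(\hat{Q}_X)$ and $\hat{Q}_X\xi=0$. Applying this with $\xi=\eta$ forces $\hat{Q}_X\eta=0$, so $\eta\in\ker(\hat{Q}_X)$.

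Once $\hat{Q}_X\eta=0$, the orthogonality condition collapses to $\omega_{\pa N}(\delta\pi(\eta),\delta\pi(\lambda))=0$ for all $\lambda$ with $\delta\pi(\lambda)\in\ker(\hat{Q}_l)$; since $\delta\pi$ is a surjective submersion, $\delta\pi(\lambda)$ exhausts $\ker(\hat{Q}_l)$, so $\delta\pi(\eta)$ is $\omega_{\pa N}$-orthogonal to $\ker(\hat{Q}_l)$ and Assumption (\ref{assump 1 for bdry}) gives $\delta\pi(\eta)\in Im(\hat{Q}_l)$. By Proposition \ref{prop: Vect_Q tangent to L} this lies in $T_l\cL$, while $\eta\in T_X\pi^{-1}(\Lambda)$ forces $\delta\pi(\eta)\in T_l\Lambda$; transversality of $\cL$ and $\Lambda$ then yields $\delta\pi(\eta)=0$. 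Thus $\eta\in\ker(\hat{Q}_X)$ with vanishing boundary value, i.e. $\eta\in\cS_X$, completing the proof.
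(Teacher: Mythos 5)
Your proof is correct, and its skeleton is the same as the paper's: rewrite the orthogonality $\omega_N(\eta,\hat{Q}_X\lambda)=0$ via the identity (\ref{Q kinda self-adjoint}), conclude first that $\eta\in\ker(\hat{Q}_X)$, then that $\delta\pi(\eta)$ is $\omega_{\pa}$-orthogonal to $\ker(\hat{Q}_l)$, and finish with Assumption (\ref{assump 1 for bdry}), the inclusion $Im(\hat{Q}_l)\subset T_l\cL$ (Proposition \ref{prop: Vect_Q tangent to L}), and the trivial intersection $T_l\cL\cap T_l\Lambda=\{0\}$ coming from transversality of the two Lagrangians. Where you genuinely add something is the step the paper dispatches with ``because the first term should vanish for all $\lambda$, we have $\eta\in\ker(\hat{Q}_X)$'': as written, that step silently uses that a vector $\omega_N$-orthogonal to all of $\ker(\delta\pi_X)$ must vanish (true in the field-theoretic setting, where vertical vectors include all variations supported away from $\pa N$ and $\omega_N$ is a local pairing, but not a formal consequence of anything displayed). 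You instead close the loop abstractly: from $\hat{Q}_X\eta\perp\ker(\delta\pi_X)$ you apply (\ref{Q kinda self-adjoint}) a second time against vertical vectors to get $\eta\perp Im(\hat{Q}_X^\ver)$, and then invoke the relation $Im(\hat{Q}_X^\ver)^\perp=\ker(\hat{Q}_X)$ recorded in Section \ref{symp-red-sub}. This makes the lemma a formal consequence of facts explicitly stated in the paper (the residual analytic content being absorbed into that stated relation), and you also spell out the easy inclusion $\cS_X\subseteq I_X^\perp$, which the paper leaves implicit in the word ``equivalent''. Both routes are sound; yours is the more self-contained write-up.
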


\begin{proof}
The symplectic orthogonal of $I_X$ in $T_X\pi^{-1}(\Lambda)$ is:
\[
I_X^\perp=\{\eta\in  T_X\pi^{-1}(\Lambda) \;|\; \omega_N(\eta,\xi)=0, \mbox{ for any } \xi\in I_X\}
\]
That is for any $\xi=\hat{Q}_X\lambda$ such that $\delta\pi(\lambda)\in \ker(\hat{Q}_{\pa})$. The orthogonality of $\xi$ and $\eta$ is equivalent to
\[
\omega(\hat{Q}_X\eta, \lambda)+\omega_{\pa}(\delta\pi(\eta), \delta\pi(\lambda))=0
\]
for any $\lambda$ such that $\delta\pi(\lambda)\in \ker(\hat{Q}_{l})$. Because the first term should
vanish for all $\lambda$, we have $\eta\in \ker(\hat{Q}_X)$.
Thus the last term should vanish separately for
any such $\lambda$. This implies that $\delta\pi(\eta)$ is symplectic orthogonal to $\ker(\hat{Q}_{l})$. 
By condition (\ref{assump 1 for bdry}) this implies that $\delta\pi(\eta)\in Im(\hat{Q}_{l})$. Because $Im(\hat{Q}_{l})\subset T_{l}\cL$ and because $\cL$ and $\Lambda$ are transversal, we have $\delta\pi(\eta)=0$.
Therefore $I_X^\perp=\cS_X$.

\end{proof}

\begin{corollary} The subspace $\cS_X\subset T_X\pi^{-1}(\Lambda)$
is coisotropic. Thus $\pi^{-1}(\Lambda)\cap \EL$ is locally coisotropic in  $\pi^{-1}(\Lambda)$.
\end{corollary}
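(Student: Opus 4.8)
The plan is to deduce coisotropy of $\cS_X$ directly from Lemma \ref{lm1 symp red fib} together with the evident inclusion $I_X\subset \cS_X$, using only the elementary fact that passing to the symplectic orthogonal reverses inclusions. Recall that a subspace $\cS_X\subset T_X\pi^{-1}(\Lambda)$ is coisotropic precisely when $\cS_X^\perp\subset \cS_X$, the orthogonal being taken inside the symplectic vector space $T_X\pi^{-1}(\Lambda)$ (which is genuinely symplectic by the standing assumption that $\pi^{-1}(\Lambda)$ is symplectic, so that the restriction of $\omega_N$ is nondegenerate there).

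First I would record the inclusion $I_X\subset \cS_X$. Indeed, $I_X=\{\xi\in Im(\hat{Q}_X) \;|\; \delta\pi(\xi)=0\}$ and $\cS_X=\{\xi\in \ker(\hat{Q}_X) \;|\; \delta\pi(\xi)=0\}$, so the inclusion follows at once from $Im(\hat{Q}_X)\subset \ker(\hat{Q}_X)$, which is in turn a consequence of $\hat{Q}_X^2=0$.

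Next, since orthogonal complementation reverses inclusions, the inclusion $I_X\subset \cS_X$ yields $\cS_X^\perp\subset I_X^\perp$. But Lemma \ref{lm1 symp red fib} identifies $I_X^\perp=\cS_X$. Combining the two gives $\cS_X^\perp\subset I_X^\perp=\cS_X$, which is exactly the coisotropy condition. I would emphasize that this inclusion-reversal argument avoids appealing to reflexivity of the double orthogonal $(I_X^\perp)^\perp=I_X$: in finite dimensions reflexivity would give the sharper identity $\cS_X^\perp=I_X$, but it is not automatic in the infinite-dimensional field-theoretic setting, whereas the inclusion $\cS_X^\perp\subset \cS_X$ is valid in both cases and suffices.

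Finally, the assertion that $\pi^{-1}(\Lambda)\cap \EL$ is locally coisotropic in $\pi^{-1}(\Lambda)$ is then immediate from the definition of local coisotropy: by construction $\cS_X$ is the tangent space $T_X\big(\pi^{-1}(\Lambda)\cap \EL\big)$ at the smooth point $X$, and we have just shown it to be a coisotropic subspace of $T_X\pi^{-1}(\Lambda)$. I do not expect any real obstacle here, since all of the analytic content is already packaged into the computation of $I_X^\perp$ carried out in Lemma \ref{lm1 symp red fib}, which itself rests on the boundary regularity condition (\ref{assump 1 for bdry}); the only points requiring care are the inclusion-reversal step and the deliberate avoidance of double-orthogonal reflexivity in the infinite-dimensional case.
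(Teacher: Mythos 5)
Your proof is correct and is precisely the argument the paper leaves implicit: the corollary is stated without proof because it follows from Lemma \ref{lm1 symp red fib} exactly as you describe, via $I_X\subset \cS_X$ and inclusion-reversal of symplectic orthogonals, giving $\cS_X^\perp\subset I_X^\perp=\cS_X$. Your remark that this route avoids the double-orthogonal identity $(I_X^\perp)^\perp=I_X$, which is not automatic in infinite dimensions, is a sound and worthwhile point of care.
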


\begin{lemma} \label{lm2 symp red fib}
The symplectic orthogonal subspace to $\cS_X$ in $T_X\pi^{-1}(\Lambda)$ is $I_X$.
\end{lemma}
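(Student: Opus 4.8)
The plan is to show that $\cS_X^\perp = I_X$ by combining the previous lemma with the nondegeneracy of $\omega_N$ on $T_X\pi^{-1}(\Lambda)$. Lemma \ref{lm1 symp red fib} establishes $I_X^\perp = \cS_X$, where orthogonals are taken inside the symplectic space $T_X\pi^{-1}(\Lambda)$. Since $\pi^{-1}(\Lambda)$ is assumed symplectic, the form $\omega_N$ restricts to a nondegenerate pairing on $T_X\pi^{-1}(\Lambda)$, and for a nondegenerate form the double-orthogonal of any subspace returns the subspace itself: $(W^\perp)^\perp = W$. Applying this to $W = I_X$ immediately gives $\cS_X^\perp = (I_X^\perp)^\perp = I_X$, which is exactly the claim.

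Thus the shortest route is simply to invoke $(I_X^\perp)^\perp=I_X$ together with $I_X^\perp=\cS_X$ from Lemma \ref{lm1 symp red fib}. The only subtlety worth flagging explicitly is finite- versus infinite-dimensionality: the identity $(W^\perp)^\perp=W$ for a nondegenerate symplectic form is automatic in finite dimensions but requires $W$ to be closed (or the relevant spaces to be suitably well-behaved) in the infinite-dimensional functional-analytic setting. In the spirit of the rest of the paper, where nondegeneracy of $\omega_N$ on the relevant tangent spaces is used repeatedly (as in the proof of Proposition \ref{el-coiso}, which explicitly notes the argument works in both the finite- and infinite-dimensional case), I would treat this as unproblematic and state the conclusion directly.

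Alternatively, if one prefers a self-contained computation not relying on double-orthogonality, I would redo the pairing argument of Lemma \ref{lm1 symp red fib} with the roles of $\cS_X$ and $I_X$ interchanged. Namely, take $\eta \in \cS_X^\perp$, so $\eta \in T_X\pi^{-1}(\Lambda)$ with $\omega_N(\eta,\xi)=0$ for all $\xi \in \cS_X=\{\xi\in\ker(\hat Q_X)\mid \delta\pi(\xi)=0\}$. Using the fundamental identity (\ref{Q kinda self-adjoint}) and $\delta\pi(\xi)=0$, the orthogonality condition reduces to $\omega_N(\eta,\xi)=0$ for all such $\xi$, and one shows $\eta \in Im(\hat Q_X)$ with $\delta\pi(\eta)=0$, i.e. $\eta \in I_X$; the reverse inclusion $I_X \subset \cS_X^\perp$ follows since $I_X \subset \cS_X$ is isotropic (as $I_X \subset Im(\hat Q_X)$, which is isotropic by the computation in Proposition \ref{el-coiso}). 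The main obstacle in this second approach is bookkeeping with the boundary term in (\ref{Q kinda self-adjoint}) when $\delta\pi$ of the test vector does not vanish, but this is precisely the kind of routine verification already carried out in Lemma \ref{lm1 symp red fib}, so I expect no genuine difficulty. I would present the one-line double-orthogonal argument as the main proof.
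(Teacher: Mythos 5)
Your main argument has a genuine gap. The identity $(W^\perp)^\perp=W$ that you invoke for $W=I_X$ does \emph{not} follow from nondegeneracy of $\omega_N$ on $T_X\pi^{-1}(\Lambda)$ in this setting: the spaces of fields are infinite-dimensional and the symplectic forms are only weakly nondegenerate, and for weak symplectic structures double-orthogonality can fail even for closed subspaces. Crucially, the paper itself is explicit that statements of exactly this type are \emph{not} automatic here: Proposition \ref{el-coiso} proves $Im(\hat{Q}_X)^\perp=\ker(\hat{Q}_X)$, but the reverse identity $\ker(\hat{Q}_X)^\perp=Im(\hat{Q}_X)$ --- which is precisely the double-orthogonal $\bigl(Im(\hat{Q}_X)^{\perp}\bigr)^{\perp}=Im(\hat{Q}_X)$ --- is introduced as assumption (\ref{assump}), and its strengthening (\ref{assump bdry}) is what defines regularity, justified in examples by Hodge--Morrey theory. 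So treating double-orthogonality as ``unproblematic in the spirit of the paper'' assumes exactly what the regularity framework was built to supply; it is the substantive content of the lemma, not a formality.

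The paper's actual proof routes around this. From Lemma \ref{lm1 symp red fib} one gets only the inclusion that is always valid, $I_X\subseteq \cS_X^\perp$ (every $v\in I_X$ pairs to zero with all of $I_X^\perp=\cS_X$). For the reverse inclusion the paper uses the isomorphism (\ref{S/I}) together with Proposition \ref{prop: red fibers symp}: the induced form on $\cS_X/I_X\cong T_X\EL(N,[l])/Q$ is \emph{nondegenerate} (this is where regularity, via the Lefschetz pairing, enters), so any $v\in\cS_X^\perp\subseteq\cS_X$ projects to the kernel of that form and hence lies in $I_X$. Your alternative sketch (running the pairing argument with the roles of $\cS_X$ and $I_X$ exchanged) can be made to work, but the step ``one shows $\eta\in Im(\hat{Q}_X)$'' is not bookkeeping: it requires the regularity assumption $\ker(\hat{Q}^\ver_X)^\perp=Im(\hat{Q}_X)$ from (\ref{assump bdry}), noting $\cS_X=\ker(\hat{Q}^\ver_X)$, followed by the transversality argument showing $Im(\hat{Q}_X)\cap T_X\pi^{-1}(\Lambda)=I_X$. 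Either cite Proposition \ref{prop: red fibers symp} as the paper does, or invoke (\ref{assump bdry}) explicitly; without one of these inputs the proof is incomplete.
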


\begin{proof} By Lemma \ref{lm1 symp red fib}, $I_X^\perp=\cS_X$, which implies $I_X\subseteq \cS_X^\perp$. On the other hand, due to (\ref{S/I}) and to Proposition \ref{prop: red fibers symp}, $\cS_X/I_X$ is symplectic and thus $I_X$ has to coincide with $\cS_X^\perp$.
\end{proof}

The following is the immediate corollary of the Lemma above.
\begin{Proposition}\label{C1} The symplectic reduction of $\cS_X$ is
\[
\underline{\cS}_X=\cS_X/I_X
\]
Comparing with the $Q$-reduction of $T_X\EL(N,[l])$ we
see that two reductions are naturally isomorphic:
\[
\underline{T_X\pi^{-1}(\Lambda)\cap \EL}=T_X\EL(N,[l])/Q
\]
Thus the fiber of $\pi_*^\abs:\cM\ra \cM_\pa$ over $[l]$ coincides with the symplectic reduction of $\pi^{-1}(\Lambda)\cap \EL$ in $\pi^{-1}(\Lambda)$.
\end{Proposition}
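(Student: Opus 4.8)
The plan is to read this off directly from Lemma~\ref{lm2 symp red fib} together with the description of $T_X\EL(N,[l])/Q$ recorded in~(\ref{S/I}); the statement is formal once those two ingredients are in hand. First I would recall that, by definition, the symplectic reduction of a coisotropic subspace $C$ of a symplectic vector space is the quotient $C/C^\perp$ by its symplectic orthogonal (for coisotropic $C$ one has $C^\perp\subseteq C$, so this makes sense). The Corollary preceding Lemma~\ref{lm2 symp red fib} establishes that $\cS_X$ is coisotropic in $T_X\pi^{-1}(\Lambda)$, and Lemma~\ref{lm2 symp red fib} identifies its symplectic orthogonal as $\cS_X^\perp=I_X\subseteq\cS_X$. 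Substituting into $C/C^\perp$ gives at once the first identity $\underline{\cS}_X=\cS_X/I_X$.

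Next I would invoke the isomorphism~(\ref{S/I}), equivalently Proposition~\ref{P1}, which realizes $\cS_X/I_X=\{\xi\in\ker(\hat{Q}_X)\mid\delta\pi(\xi)=0\}/\{\xi\in Im(\hat{Q}_X)\mid\delta\pi(\xi)=0\}$ as a model for $T_X\EL(N,[l])/Q$. Chaining the two identifications yields the natural isomorphism $\underline{T_X\pi^{-1}(\Lambda)\cap\EL}=T_X\EL(N,[l])/Q$, which is the second assertion. No further computation is required here, since both sides have already been given the same explicit presentation as a subquotient of $T_X\F$.

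Finally, to obtain the statement about fibers I would pass from tangent spaces back to the spaces themselves: by construction the fiber of $\pi_*^\abs:\cM\ra\cM_\pa$ over $[l]$ is the $Q$-reduction $\EL(N,[l])/Q$, whose tangent space at the class of $X$ is precisely $T_X\EL(N,[l])/Q\cong\underline{\cS}_X$. The main subtlety, and the only place where genuine (though light) checking is hidden, is the matching of symplectic structures needed for the two spaces to \emph{coincide} rather than merely be abstractly isomorphic. The form on $\underline{\cS}_X$ produced by reducing $\pi^{-1}(\Lambda)$ descends from $\omega_N$ restricted to the symplectic subspace $\pi^{-1}(\Lambda)$, while the fiber of $\pi_*^\abs$ carries the form of Proposition~\ref{prop: red fibers symp}, expressed there as $\langle\bt,\chi(\bt)\rangle$; I would verify that both are the pairing induced by $\omega_N$ on $\ker(\hat{Q}_X)$ under the condition~(\ref{Q kinda self-adjoint}), so that they agree by construction. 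Transversality of $\Lambda$ to $\cL$ and the assumption that $\pi^{-1}(\Lambda)$ is symplectic guarantee that neither space, nor the induced form, depends on the auxiliary choice of $\Lambda$, completing the identification.
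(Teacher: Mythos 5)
Your proposal is correct and takes essentially the same route as the paper, which derives this proposition as an immediate corollary of Lemma \ref{lm2 symp red fib} (giving $\cS_X^\perp=I_X$, hence $\underline{\cS}_X=\cS_X/\cS_X^\perp=\cS_X/I_X$) combined with the identification (\ref{S/I}) from Proposition \ref{P1}. Your added verification that the two symplectic structures agree --- both induced by $\omega_N$ on $\ker(\hat{Q}_X)$ via (\ref{Q kinda self-adjoint}) --- is a sound spelling-out of what the paper leaves implicit.
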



\subsubsection{Example: symplectic $\EL$-moduli space for the abelian Chern-Simons theory} In the abelian Chern-Simons theory, see section \ref{ab-cs} for details, the following occurs:
\begin{itemize}
\item The symplectic $\EL$-moduli space is
\begin{multline*}
\MM^\p = \EL/\Vect^\rel_{Q}=\\
=\{A\in \Omega^{\bt+1} (N)  \,|\, dA=0 \}/(A\sim A+d\alpha\quad \mbox{for any}\; \alpha\in \Omega^\bt(N)\; \mbox{s.t.}\; \alpha|_{\dd N}=0)
\end{multline*}
\item
The restriction map $\pi_*: \MM^\p \ra \EL_{\dd} = \Omega^{\bt+1}_\mr{closed}(\dd N)$ sends the class $[A]\in \MM^\p$ to $A|_{\dd N}\in \EL_{\dd}$ (which is well defined).
\item Fibers of $\pi_*$ are isomorphic to the relative cohomology $H^{\bt+1}(N,\dd N)$.
\item The map $\beta_{[A]}: T_{A|_{\dd N}} \FF_{\dd N} \ra T_{[A]}\MM^\p$ sends $\alpha_\dd\in \Omega^\bt(\dd N)$ to $[d\tilde{\alpha}]$ for arbitrary $\tilde{\alpha}\in \Omega^{\bt}(N)$ such that $\tilde{\alpha}|_{\dd N} = \alpha_\dd$.
\item Taking the quotient of $\MM^\p$ by the distribution $b= \mr{im}(\beta)$ gives the usual $\EL$-moduli space
$$\MM^\p/b = \MM= H^{\bt+1}(N)$$
which is the absolute de Rham cohomology of $N$.
\item Exact sequence (\ref{LES of tangent spaces to moduli spaces}) is the usual long exact sequence of relative cohomology
$$ \cdots\ra H^\bt(\dd N)\ra H^{\bt+1}(N,\dd N) \ra H^{\bt+1}(N)\ra H^{\bt+1}(\dd N)\ra \cdots $$
\item The symplectic form $\omega$ on $\Omega^\bullet(N)$ descends to
$$\underline{\omega}([A], [B])= \int_N A \wedge B$$
on $\cM^\p$. Lefschetz duality (\ref{Lefshetz pairing}) is the usual Lefschetz duality between absolute and relative cohomology, plus Poincar\'e duality for the cohomology of the boundary.

\end{itemize}

\subsection{The gluing (cutting) of symplectic $\EL$-moduli  spaces}

\subsubsection{The non-reduced case}
Assume that a manifold $N$ is cut into two pieces $N_1$, $N_2$ along a codimension 1 submanifold $N^\dd_2$ such that $\pa N_1=N_1^{\pa}\sqcup N_2^\pa$ and $\pa N_2=N_2^{\pa}\sqcup N_3^\pa$.
We have natural commutative diagram:

\begin{equation}
\label{FF fiber product}
\begin{CD}
\FF_N @> >> \FF_{N_1} \\
@VV V @VV V \\
\FF_{N_2} @> >> \FF_{N_2^\pa}
\end{CD}
\end{equation}

The space of fields $\cF_N$ is the subspace in the fiber product of spaces $\cF_{N_1}$ and $\cF_{N_2}$ over $\cF_{ N_{\pa}}$ consisting of fields which are smooth at $N_\pa$.

The Euler-Lagrange space for
$N$ is the fiber product of Euler-Lagrange spaces
for $N_1$ and for $N_2$ :
\[
\EL_N=\EL_{N_1}\times_{\EL_{N_2^\pa}}\EL_{N_2}
\]

\subsubsection{The gluing}

The gluing for symplectic $\EL$-moduli spaces goes as follows. Let $N_1, N_2$ be two spacetime manifolds with boundaries $\pa N_1= N^\pa_1 \sqcup N^\pa_2$ and $\pa N_2= (N^\pa_2)' \sqcup N^\pa_3$ respectively. Assume that $N_2^\pa$ diffeomorphic to $(N^\pa_2)'$, and denote by $N$ the result of
gluing $N_1$ and $N_2$ along the common boundary component:
$$N=(N_1\sqcup N_2)/(N^\pa_2\sim (N^\pa_2)')$$

The space $\cM^\p_N$ can be constructed intrinsically in terms of $\cM^\p_{N_1}$ and $\cM^\p_{N_2}$ as follows:
\begin{enumerate}[(i)]
\item First consider the fiber product
\be \widetilde{\cM}=\cM_{N_1}^\p\times_{\EL_{N^\dd_2}} \cM_{N_2}^\p \label{M tilde}\ee
For any point $\tilde{X}\in\widetilde\MM$, we have a map
\be \tilde\beta_{\tilde{X}} 
: T_{\pi_1(\tilde{X})}\cF_{N^\pa_1} \times T_{\pi_2(\tilde{X})}\cF_{N^\pa_2} \times T_{\pi_3(\tilde{X})}\cF_{N^\pa_3} \ra T_{\tilde{X}}\widetilde{\cM}
\label{beta tilde}\ee
induced from the two maps
\begin{eqnarray*}
(\beta_1\times \beta_2)_{[X_1]}: T_{\pi_1(X_1)}\cF_{N^\pa_1}\times T_{\pi_2(X_1)}\cF_{N^\pa_2} &\ra& T_{[X_1]}\cM^\p_{N_1} \\
(\beta_{2'}\times \beta_3)_{[X_2]}: T_{\pi_{2'}(X_2)}\cF_{(N^\pa_2)'}\times T_{\pi_3(X_2)}\cF_{N^\pa_3} &\ra& T_{[X_2]}\cM^\p_{N_2}
\end{eqnarray*}
\item  
The space $\cM_N^\p$ now can be identified with the leaf space of the distribution $\tilde{\beta}(0\times  T\cF_{N^\pa_2}\times 0)$ on $\widetilde{\cM}$:
\[
\cM_N^\p=\widetilde{\cM}/\tilde{\beta}(0\times  T\cF_{N^\pa_2}\times 0)
\]
It inherits the quotient distribution
\[
b_N=b_N^1\times b_N^3= \mr{im}(\tilde\beta)/\tilde{\beta}(0\times T\cF_{N^\pa_2}\times 0)
\]
parameterized by
$T\cF_{N^\pa_1}\times T\cF_{N^\pa_3}$.
\end{enumerate}

The important point here is that the gluing of symplectic $\EL$-moduli spaces is done in intrinsic terms of
 the symplectically reduced picture, i.e. in terms of the ingredients of diagram (\ref{beta diagram}): $(\cM^\p, \cF_\pa, Q_\pa, \pi_*, \beta)$.

\subsubsection{Gluing tangent spaces} The construction of gluing of symplectic $\EL$-moduli
spaces described above implies  the following Mayer-Vietoris type long exact sequence for $T_{[X]}\cM^\p$:
\begin{multline} \cdots \ra T_{[\pi_2(X)]}[-1] \cM_{N_2^\pa} \xra{(\beta_2-\beta_{2'})_*} T_{[X]}\cM_N^\p \ra \\
\ra T_{[X_1]} \cM^{\p,\abs_2}_{N_1}\oplus T_{[X_2]} \cM^{\p,\abs_{2'}}_{N_2} \ra T_{[\pi_2(X)]}\cM_{N_2^\pa}\ra \cdots \label{Meyer-Vietoris for part reduced mod spaces}\end{multline}
Here $\cM^{\p,\abs_2}_{N_1}$ is the quotient of $\cM^\p_{N_1}$ by the distribution $b_2$.
Similarly,  $\cM^{\p,\abs_{2'}}_{N_2}$ is the quotient of $\cM^\p_{N_2}$ by the distribution $b_{2'}$. \footnote{An equivalent description: $\cM^{\p,\abs_2}_{N_1}=\EL_{N_1}/Vect_Q^{\rel_1}$ where $Vect_Q^{\rel_1}$ is the distribution generated by vector fields of the form $[Q_{N_1},v]$ with $v$ tangent to the fibers of $\pi_1$. Likewise, $\cM^{\p,\abs_{2'}}_{N_2}=\EL_{N_2}/Vect_Q^{\rel_3}$.}

We have a similar long exact sequence for the tangent space to the usual $\EL$-moduli space:
\be \cdots \ra T_{[\pi_2(X)]}[-1] \cM_{N^\dd_2} \ra T_{[X]} \cM_N \ra T_{[X_1]} \cM_{N_1}\oplus T_{[X_2]} \cM_{N_2} \ra T_{[\pi_2(X)]}\cM_{N^\dd_2}\ra \cdots \label{Meyer-Vietoris for absolute mod spaces}\ee
Here $[X]$ is the space of leaves of $\Vect_Q$ through $X$.

\subsubsection{Example: gluing in abelian Chern-Simons theory}

Here we will give an example of gluing symplectic $\EL$-moduli spaces. Let $N=N_1\cup_{N_2^\dd} N_2$ as before. Then
\[
\widetilde{\MM}= \frac{\{A\in \Omega^{\bt+1}(N) \,|\, dA=0\}}{A\sim A+ d\alpha\quad \mbox{for any}\; \alpha\in \Omega^\bt(N)\;\mbox{s.t.}\; \alpha|_{N^\dd_1}=\alpha|_{N^\dd_2}=\alpha|_{N^\dd_3}=0}
\]
Mapping (\ref{beta tilde}) acts as

\[
\tilde{\beta}_{[A]}: (\alpha^\dd_1,\alpha^\dd_2,\alpha^\dd_3) \mapsto [d\tilde\alpha]
\]

Here $(\alpha^\dd_1,\alpha^\dd_2,\alpha^\dd_3)\in T_{A|_{N^\dd_1}}\FF_{N^\dd_1}\times T_{A|_{N^\dd_2}}\FF_{N^\dd_2}\times T_{A|_{N^\dd_3}}\FF_{N^\dd_3}$ and $\tilde\alpha\in \Omega^\bt(N)$ is any form such that $\tilde\alpha|_{N^\dd_k}=\alpha^\dd_k$ for $k=1,2,3$. The class $[d\tilde\alpha]\in T_{[A]}\widetilde{\MM}$
does not depend on the choice of $\tilde\alpha$. To pass
from from $\widetilde\MM$ to $\MM_N^\p$ we should mod out differentials of forms on the interface $N_2^\dd$ extended to $N$.
Now we can write the symplectic $\EL$-moduli space as the quotient
\begin{multline*}
\MM_N^\p= \widetilde\MM/\tilde\beta(0\times \Omega^\bt(N_2^\dd)\times 0)\simeq \\
\simeq \frac{\{A\in \Omega^{\bt+1}(N) \,|\, dA=0\}}{A\sim A+ d\alpha\quad \mbox{for any}\; \alpha\in \Omega^\bt(N)\;\mbox{s.t.}\; \alpha|_{N^\dd_1}=\alpha|_{N^\dd_3}=0}
\end{multline*}
It is equipped with two commuting distributions $b_N^1, b_N^3$ parameterized by $\Omega^\bt(N^\dd_1), \Omega^\bt(N^\dd_3)$ respectively.

Mayer-Vietoris sequence (\ref{Meyer-Vietoris for part reduced mod spaces}) becomes the following:
\begin{multline*}
\cdots\ra H^\bt(N_2^\dd) \ra H^{\bt+1}(N, N_1^\dd\sqcup N_3^\dd) \ra \\
\ra H^{\bt+1}(N_1, N_1^\dd) \oplus H^{\bt+1}(N_2, N_3^\dd) \ra H^{\bt+1}(N_2^\dd) \ra\cdots
\end{multline*}
whereas the version for usual $\EL$-moduli spaces (\ref{Meyer-Vietoris for absolute mod spaces}) reads
$$ \cdots \ra H^\bt(N_2^\dd)\ra H^{\bt+1}(N) \ra H^{\bt+1}(N_1)\oplus H^{\bt+1}(N_2) \ra H^{\bt+1}(N_2^\dd)\ra\cdots $$

\subsection{Higher codimensions}

The BV-BFV theory is the extension of the BV theory to manifolds with boundary.
In a similar way the BV theory extends to higher codimension submanifolds.

\begin{definition} The following collection of data is called a {\bf $k$-extended}
BV theory in dimension $n$.

For each $i=0,\dots,k-1$,
a $k$-extended BV theory assigns to every $(n-i)$-dimensional manifold $N_i$ with boundary $N_{i+1}$:
\begin{enumerate}
\item a space of fields $\F_{N_i}$ which is a graded manifold with exact symplectic form $\omega_{N_i}=(-1)^{n-i}\delta\alpha_{N_i}$ with $gh(\omega_{N_i})=gh(\alpha_{N_i})=i-1$, and with cohomological vector field $Q_{N_i}$,
\item a projection $\pi_{i}: \F_{N_{i}}\to \F_{N_{i+1}}$,
\item an action functional $S_{N_i}$ on $\F_{N_i}$ with $gh(S_{N_i})=i$.
\end{enumerate}

These data should satisfy the following axioms:

\begin{itemize}
\item $\delta\pi_{i-1}(Q_{N_{i-1}})=Q_{N_i}$,
\item $\iota_{Q_{N_{i-1}}}\omega_{N_{i-1}}=
    (-1)^{n-i+1}\delta S_{N_{i-1}}+\pi_{i}^*(\alpha_{N_{i}})$
\end{itemize}
We call {\bf $k$-extension} of a given BV theory in dimension $n$ a $k$-extended
theory with the original data for $i=0$.
\end{definition}



\begin{definition} A BV field theory has {\bf length} $k$ if $k$ is the maximal number such that
in its $k$-extension   for $i=1, \dots, k$,  all $\pi_{i-1}(\EL_{N_{i-1}})=\cL_{N_{i-1}}\subset \EL_{N_{i}}\subset \cF_{N_{i}}$ are Lagrangian
and all $Q$-reduced fibers are finite dimensional. Here $\EL_{N_i}\subset \cF_{N_i}$ is the set of zeroes of the
vector field $Q_{N_i}$. If $k$ is equal to $n$, we say that the theory is {\bf maximally extended}.
\end{definition}

Usually BV field theories have length $1$. For example, scalar field theory in dimension
greater than one or Yang-Mills
theory in dimension greater than two. We never consider theories of length zero.
In the rest of the paper we will show that scalar field theory in dimension $1$,
Yang-Mills theory in dimension $2$ and
all AKSZ theories are maximally extended (this includes $BF$, Chern-Simons theories and the Poisson sigma model).
Notice that scalar theory in dimension one is just quantum mechanics and
the Yang-Mills theory is dimension $2$ is known to be almost topological
\cite{W-QCD} (meaning that it only depends on the topology of the
spacetime manifold and on its volume).




\begin{remark}[Extended BV theories on manifolds with corners]
A $k$-extended BV theory in dimension $n$ naturally leads to an associated theory on $n$-dimensional manifolds with corners up to codimension $k$. Namely, to such a manifold $N$ we associate the data for $N_0=N$ and $N_1$ the union of the codimension one\footnote{Here and in the following, the codimension of a boundary stratum is  computed in terms of the bulk manifold.} strata in $\pa N$. To each such stratum $N'\subset\pa N$ we associate the data for  $N_1=N'$ and $N_2$ the union of the codimension one strata in
$\pa N'$ (notice that this is the union of only some codimension two strata in $\pa N$). To each
such stratum $N''\subset\pa N'$ we associate the data for  $N_2=N''$ and $N_3$ the union of the codimension one strata in $\pa N''$
(notice that this is the union of only some codimension two strata in $\pa N'$ and in turn of some codimension three strata in $\pa N$),
and so on.
\end{remark}

\subsection{Boundary conditions}

In the current paper we insist on having free boundary conditions. An alternative approach, first explored in \cite{CF-AKSZ,JP}, fixes boundary conditions in
a way compatible with the BFV structure on the boundary. This corresponds to choosing a Lagrangian submanifold $\cL$  of the BFV space of boundary fields to which the boundary
cohomological vector field is tangent and
on which the boundary $1$-form vanishes. Equivalently,  the boundary action should vanish on $\cL$. We call such Lagrangian submanifolds {\it adapted}.

At first sight, it would also seem natural to impose $\cL$ to be transversal to the Lagrangian submanifold
$L_M$. However, this condition is too restrictive and rules out a lot of interesting boundary conditions. Instead, a better assumption would be that
the intersection of $\cL$ with $L_N$ should be finite dimensional after reduction.


As shown in \cite{CF-AKSZ},
in the case of AKSZ theories,  one can obtain an adapted Lagrangian submanifold $\cL$ by choosing an adapted Lagrangian submanifold $L_i$ of the target manifold $\cM$ for each boundary component $\partial_i N$ of the spacetime $N$. The submanifold  $\cL=\prod_i\Map(T[1]\partial_iN,L_i)$ in this case is an adapted
Lagranian in $\cF_{\pa N}$. Notice that usually one does not require the decomposition
$\partial M=\cup_i\partial_iM$ to be disjoint, rather the pairwise intersections are assumed to be of lower dimension. The adapted Lagrangian submanifolds
of the target are usually referred to as {\it branes}.

In the case when the target is $T^*[1]P$, with $P$ a Poisson manifold, one can easily show that a brane is necessarily of the form
$N^*[1]C$, where $C$ is a coisotropic submanifold of $P$ and $N^*C$ denotes its conormal bundle.
When the target is a differential graded symplectic manifold (with symplectic form of degree $2$) associated to a Courant algebroid over a manifold $N$, then a brane covering $N$ is the same as a Dirac structure.

An intermediate type of boundary condition consists in splitting the boundary into two components and in choosing an adapted Lagrangian submanifold of the BFV space of fields on the first
component while keeping free boundary conditions for the second
component. The analysis performed in this paper still holds for  boundary fields of the second component. This mixed approach might have several applications. For example \cite{CC}, the study of the Poisson Sigma Model on the disk whose boundary is split into an even number of ordered intervals $I_i$ with boundary
conditions on $I_{2s}$ determined by the $C=P$ $\forall s$ leads to the construction of what is known as the relative symplectic groupoid integrating $P$.

\subsection{$gh=0$ part of the BV-BFV theory}

\subsubsection{Non reduced theory}
The $gh=0$ part of the BV-BFV theory is a first order
classical field theory with the space of fields $F_N=\cF_N^{(0)}$
(the $gh=0$ part of $\cF_N$),
the classical action $S^{(0)}_N=S_N|_{\cF_N^{(0)}}$.

The $gh=0$ part of the space of boundary fields $F_{\pa N}=\cF_{\pa N}^{(0)}$
is a symplectic manifold with the symplectic form $\omega^{(0)}_{\pa N}=\delta\alpha^{(0)}_{\pa N}$ which is
the $gh=0$ part of the symplectic form $\omega_{\pa N}$ on
boundary BFV fields. The one-form $\delta\alpha^{(0)}_{\pa N}$
is the $gh=0$ part of the form $\alpha_{\pa N}$ and it determined by the $gh=0$ part of the classical action.

Gauge transformations are $gh=0$ part of the BV-BFV gauge
transformations generated by vector fields $[V,Q]$ and they are Hamiltonian on the boundary.

The $gh=0$ part of the coisotropic submanifold $\EL_{\pa N}\subset \cF_N$ is a coisotropic submanifold $C_{\pa N}=\EL_{\pa N}^{(0)}\subset F_N$. It consists of boundary fields which can be extended to a solution to Euler-Lagrange equations in the vicinity of the boundary in $N$.

In regular BV theories the space of solutions to the Euler-Lagrange equations $EL_N=\EL^{(0)}_N$ projects to the Lagrangian subspace $L_N=\cL_N^{(0)}=\pi(EL_N)\subset C_{\pa N}\subset F_{\pa N}$.

\subsubsection{The reduction}
The classical moduli space $EL_N/G_N$ is the $gh=0$ part of the $\EL$-moduli space $\mathcal{EL}_N/Q_N$ and maps to the reduced phase space $C_{\pa N}/G_{\pa N}=\underline{C_{\pa N}}$, which is in turn the $gh=0$ part of the boundary $\EL$-moduli space $\underline{\EL_{\pa N}}$. Here $G_N$ denotes the distribution on $EL_N$ induced by $\mathrm{Vect}_{Q_N}$, likewise for $G_{\pa N}$; also $G_{\pa N}$ can be seen as the coisotropic distribution on $C_{\pa N}$.

In the regular case, the image of
\begin{equation}\pi_*: EL_N/G_N \rightarrow \underline{C_{\pa N}}\label{pi_*}\end{equation}
is the reduced Lagrangian $L_N/G_{\pa N}=\underline{L_N}$.

\begin{remark}
In regular case, there is also the following relation between the smooth loci (cf. Appendix \ref{smooth}) of moduli spaces:
\begin{equation}
\begin{CD}
T^*_\mathrm{vert}[-1](EL_N/G_N)^\mathrm{smooth} @>\subseteq>> (\EL_N/Q_N)^\mathrm{smooth} \\
@V\pi_*VV @V\pi_*VV \\
(\underline{C_{\pa N}})^\mathrm{smooth} @>\subseteq>> (\underline{\EL_{\pa N}})^\mathrm{smooth}
\end{CD}
\label{relation between BV mod spaces and gh=0 part}
\end{equation}
Here $T^*_\mathrm{vert}$ denotes the dual of the vertical tangent bundle of the fibration (\ref{pi_*}).
Horizontal arrows in (\ref{relation between BV mod spaces and gh=0 part}) may be strict inclusions (e.g. abelian Chern-Simons theory where $\EL$-moduli spaces contain additional smooth pieces: $H^0(N)$ and $H^3(N)$, cf. section \ref{ab CS reduced}), or may be equalities (e.g. non-abelian Chern-Simons theory with a simple gauge group $G$).
\end{remark}


\section{The BFV category}\label{bfv-category}

\subsection{Spacetime categories}
Recall that an $n$-dimensional spacetime category is the category of $n$-dimensional cobordisms
which may have additional structure (smooth, Riemannian etc.). See \cite{At,Seg,ST}
for the discussion of various examples.

In most general terms {\it objects} of a {\it $d$-dimensional
spacetime category} \index{spacetime category} are
$(d-1)$-dimensional manifolds (space manifolds). In specific
examples of spacetime categories, space manifolds are equipped with
a structure (orientation, symplectic structure, Riemannian metric, etc.).

A {\it morphism} between two space manifolds $\Sigma_1$ and
$\Sigma_2$ is a  $d$-dimensional manifold $M$, possibly with a
structure (orientation, symplectic, Riemannian metric, etc.), together with the identification of $\Sigma_1\sqcup \overline{\Sigma_2}$ with the boundary of $M$. Here $\overline{\Sigma}$ is the manifold
$\Sigma$ with reversed orientation.

{\it Composition} of morphisms is the gluing along the common boundary. Here are examples of spacetime categories.

\vspace{0.5cm}

{\bf The $d$-dimensional topological category}. \index{topological
category} Objects are smooth, compact, oriented $(d-1)$-dimensional
manifolds. A morphism between $\Sigma_1$ and $\Sigma_2$ is a $d$-dimensional smooth compact oriented manifold
with $\pa M=\Sigma_1\sqcup \overline{\Sigma_2}$. The orientation on $M$
should agree with the orientations of $\Sigma_i$ in a natural way.
The composition consists of gluing two morphisms along the common boundary.

\vspace{0.5cm}

{\bf The $d$-dimensional Riemannian category}.  Objects are oriented $(d-1)$ Riemannian manifolds with collars.  Morphisms
between two objects  $N_1$
and $N_2$ are  oriented $d$-dimensional Riemannian manifolds $M$ ,
such that $\pa M =N_1\sqcup \overline{N_2}$. The orientation on all
three manifolds should naturally agree, and the metric on $M$ agrees
with the metric on $N_1$ and $N_2$ on a collar of the boundary. The
composition is the gluing of such Riemannian cobordisms. For
details see \cite{ST}.

\vspace{0.5cm}

{\bf The $d$-dimensional metrized cell complexes}. Objects are $(d-1)$-dimensional oriented metrized cell complexes (edges have length,
2-cells have area, etc.). A morphism between two such complexes
$C_1$ and $C_2$ is an oriented metrized $d$-dimensional cell complex $C$ together with two
embeddings of metrized cell complexes $i: C_1\hookrightarrow C$,
$j: \overline{C_2}\hookrightarrow C$ where $i$ is orientation reversing and $j$ is orientation preserving. The composition is the gluing of
such triples along the common $(d-1)$-dimensional subcomplex.

This is the underlying category for all lattice models
in statistical mechanics.

\vspace{0.5cm}

{\bf The Pseudo-Riemannian category} The difference between this category and the Riemannian category is that morphisms
are pseudo-Riemannian with the signature $(d-1,1)$.
This is the most interesting category for physics.
When $d=4$ it represents the spacetime structure of our universe.

\subsection{The BFV category}\label{bfv-cat}

The category $\mathsf{BFV}$
has the following objects, morphisms and compositions of morphisms.

{\it Objects} of $\mathsf{BFV}$ are triples $(\cF, \alpha, Q)$ where $\cF$ is an exact graded symplectic manifolds with the symplectic form $\omega=d\alpha$ with ghost number $0$, and $Q$ is a cohomological vector field (i.e. its Lie derivative squares to zero) with ghost number $1$. The symplectic form should be preserved by $Q$:
\[
L_Q\omega=0
\]
Let $\EL$ be the space of zeroes of the vector field $Q$.
The restriction of the Lie subalgebra $Vect_Q=[Vect(\cF), Q]\subset Vect(\cF)$ defines an involutive\footnote{ We will
always assume that this distribution is actually integrable, which is not automatic in
the infinite dimensional case.} distribution on $\EL$.

{\it Morphisms} between $(\cF_1, \alpha_1, Q_1)$ and $(\cF_2,\alpha_2, Q_2)$ are differential graded manifolds $\cF$ with symplectic
form $\omega^\cF$ with ghost number $-1$, with cohomological vector
field $Q^\cF$ with ghost number $1$, with the function $S^\cF$ (action function) on $\cF$ with $gh(S^\cF)=0$, and with two projection mappings $\pi_i: \cF\to \cF_i$. These data should satisfy the following conditions:

\begin{itemize}

\item Projections $\pi_i$ are mappings of differential graded manifolds, i.e. $\delta\pi_i(Q^\cF)=Q_i$.

\item  The following identity should hold
\[
\iota_{Q^\cF}\omega^\cF=(-1)^n\delta S^\cF-\pi^*_1(\alpha_1)+\pi^*_2(\alpha_2),
\]

\item Let $\EL^\cF$ be the zero-locus of the vector field $Q^\cF$, then $\cL^\cF=(\pi_1\times\pi_2)(\EL^\cF)\subset \overline{\cF_1}\times \cF_2$ should be Lagrangian.
    Here $\overline{\cF}_1$ is the dg manifold $\cF_1$ with symplectic form $-\omega_1$.

\item For each Lagrangian submanifold $\cL$ which is generic, relative to $\cL^\cF$ (the intersection is transversal), the preimage $(\pi_1\times \pi_2)^{-1}(\cL)\subset \cF$ should be symplectic.

\end{itemize}

{\it Composition of morphisms} Let $\cF: \cF_1\to \cF_2$ and  $\cF': \cF_2\to  \cF_3$ be two morphisms in $\mathsf{BFV}$.
The composition $\cF'\cdot \cF$ is the fiber product $\cF\times_{\cF_2}\cF'=\{(x,x')\in \cF\times \cF'| \pi_2(x)=\pi_2(x')\}$. It is a submanifold in $\cF'\times\cF$.

The symplectic form on  $\cF'\cdot \cF$ is the pullback of the symplectic form on $\cF'\times \F$. The vector field $Q+Q'$ on $\cF'\times \F$ is tangent to $\cF'\cdot \cF\subset\cF'\times \F$
and it induced the vector field $Q^{comp}$ on the fiber product.
The action function is additive: $S^{\cF'\cdot\cF}(x,x')=S^{\cF}(x)+S^{\cF'}(x')$.

Let $f$ be a mapping which assigns to each object $\cF$ of $\mathsf{BFV}$ the  functional $f^\cF$ on $\cF$.
Define the mapping $F_f : \mathsf{BFV}\to \mathsf{BFV}$  as as follows. It acts trivially on $\cF$. It acts on one forms $\alpha_\Sigma$ as
\[
\alpha^\cF\to \alpha^\cF+\delta f^\cF
\]
and does not change $Q^\cF$. On the morphism $(\cF, \omega^\cF, Q^\cF, S^\cF):\cF_1\to \cF_2$ this mapping acts it as follows. It acts
trivially on $\omega^\cF$, on $Q^\cF$, and on $\cF$ while
on the action $S^\cF$ it acts as
\[
S^\cF\to S^\cF+\pi_1^*(f^{\cF_1})-\pi_2^*(f^{\cF_2})
\]

It is easy to see that $F$ is a covariant endofunctor for $\mathsf{BFV}$.

\begin{remark} The notion of BV-BFV category introduced above
has a natural generalization where the objects are quadruples
$(\cF, {\mathfrak L}, \alpha, Q)$. Here $\cF$ is a $\ZZ$-graded manifold, ${\mathfrak L}$ is a line bundle over it, $\alpha$ is
a connection on ${\mathfrak L}$ and $Q$ is a cohomological degree $1$ vector field on $\cF$.

A morphism between $(\cF_1, {\mathfrak L}_1, \alpha_1, Q_1)$ and $(\cF_2, {\mathfrak L}_2, \alpha_2, Q_2)$ is a $\ZZ$-graded manifold with the same data as before but instead of $S^\cF$
being a function on $\cF$ we have a section of the line bundle ${\mathfrak L}^\cF=\pi^*({\mathfrak L}_1\times {\mathfrak L}_2)$ over $\cF$  which is horizontal for the flat connection $(-1)^{n-1}\frac{i}{\hbar}(\iota_{Q^\cF}\omega^\cF+\pi_1^*(\alpha_1)-\pi_2^*(\alpha_2))$
on ${\mathfrak L}^\cF$, cf. Remark \ref{line-bun}.
\end{remark}

\subsection{The BV functor}
Now classical BV-BFV theories can be regarded as functors from
spacetime categories to BFV category.

Fix a spacetime category.
A classical BV-BFV field theory
for a spacetime from this category defines the covariant  functor,
the {\it BV functor}, from the spacetime category to the BFV category.

{\it On objects}: The BV functor assigns the space of fields
$\cF_\Sigma$ to the object $\Sigma$ of the spacetime category
with the BFV data $\alpha_\Sigma$ and $Q_\Sigma$.

{\it On morphisms}: The BV functor assigns the space of
fields $\cF_N$ for a morphism $N:\Sigma_1\to \Sigma_2$
with the BV data $\omega_N$, $Q_N$, $S_N$, and mappings
$\pi_i: \cF_N\to \cF_{\Sigma_i}$.

The properties of the BV-BFV field theories guarantee
that this mapping is a covariant functor.

\begin{remark} The BFV category is $1$-category. The corresponding BV functor is a functor from $1$-category of cobordisms to the
BFV category. This can be extended to higher categories.
The natural target structure is a $k$-extended BFV category.
It is a $k$-category. The $k$-extended classical BV field theory is the $k$-functor from the $k$-category of $k$-cobordisms
to the $k$-extended BFV category, similar to $k$-extended
topological quantum field theories \cite{BD,Lurie}. We will
discuss extended theories in another publication.
\end{remark}

\section{Examples of BV-BFV theories}
\label{sec: examples}

\subsection{Electrodynamics}\label{bv-electro}

Here we will consider the BV-extended classical Euclidean electrodynamics
in the trivial $U(1)$-bundle. Spacetime manifolds in Euclidean electrodynamic are smooth oriented $n$-dimensional Riemannian manifolds. By $*:\Omega^i(N)\to \Omega^{n-i}(N)$ we denote the
Hodge operation induced by the metric on $N$.

\subsubsection{The BV-BFV structure for classical electrodynamics}
The space of fields in the BV-extended classical
electrodynamics on the spacetime manifold $N$ is $T^*[-1]E_N$.
Here $E_N=\Omega^1(N)\oplus \Omega^{n-2}(N)\oplus \Omega^0(N)[1]$
where the first summand is the space of connections $A$ in the trivial
$U(1)$ bundle over $N$, the second summand is the space of fields $B$, the Hamiltonian counterpart (``momentum") of $A$, the third summand is the space of ghost fields. The total space of fields
in BV classical electrodynamics is
\[
\FF_N=\Omega^1(N)\oplus \Omega^{n-2}(N)\oplus \Omega^0(N)[1]\oplus
\Omega^{n-1}(N)[-1]\oplus \Omega^{2}(N)[-1]\oplus \Omega^n(N)[-2]
\]
We will use notations $A, B, c, A^\dag, B^\dag, c^\dag$ for
the fields from corresponding summands\footnote{Here we discuss the \emph{minimal} BV extension of Hamiltonian classical electrodynamics.}. Here we regard $\Omega^k(N)$ as a vector space concentrated in degree zero.

The BV-symplectic form on the space of fields is the canonical
symplectic form on $T^*[-1]E_N$:
\[
\omega_N=\int_N (\delta A\wedge \delta A^\dag+ \delta B\wedge \delta B^\dag +\delta c\wedge \delta c^\dag)
\]
It has degree $-1$.

The BV-extended action of the classical electrodynamics on $N$
has degree zero:
\[
S_N=\int_N \left(B\wedge F(A)+\frac{1}{2} B\wedge *B+ A^\dag \wedge dc\right),
\]
where $F(A)=dA$ is the curvature of the connection $A$.
The vector field $Q_N$ is
\[
Q_N=\int_N \left(dc\wedge \frac{\delta}{\delta A} + dB\wedge \frac{\delta}{\delta A^\dag}+ (*B+dA)\wedge \frac{\delta}{\delta B^\dag}+dA^\dag\wedge \frac{\delta}{\delta c^\dag}\right)
\]
and it has $gh=1$.

It acts on coordinate fields as
\[
Q_NA=dc, \ \ Q_N A^\dag =dB, \ \ Q_N B^\dag= *B+dA, \ \ Q_Nc^\dag =dA^\dag
\]
On other coordinate fields $Q_N$ acts trivially. Here and below we are using the same notation for the vector field $Q_N$ an for its Lie derivative.

The boundary BFV theory has the space of fields
\[
\F_{\pa N}=\Omega^1(\pa N)\oplus
\Omega^{n-2}(\pa N)\oplus\Omega^0(\pa N)[1]\oplus\Omega^{n-1}(\pa N)[-1]
\]
We will denote corresponding fields by $A, B, c, A^\dag$ respectively. The projection $\pi: \F_N\to \F_{\pa N}$ acts as
\[
\pi(A)=i^*(A), \ \  \pi(B)=i^*(B), \ \ \pi(c)=i^*(c), \ \ \pi(A^\dag)=i^*(A^\dag), \ \  \pi(B^\dag)=0, \ \ \pi(c^\dag)=0
\]

The boundary symplectic form is the differential
of the form
\[
\alpha_{\pa N}=\int_{\pa N} (B\wedge \delta A + A^\dag \wedge \delta c)
\]
\[
\omega_{\pa N}=\delta\alpha_{\pa N}=\int_{\pa N}(\delta B\wedge \delta A+\delta A^\dag\wedge \delta c)
\]

The boundary vector field $Q_{\pa N}=\delta\pi Q_N$ is
\begin{equation}\label{qpa}
Q_{\pa N}=\int_{\pa N} \left(dB\wedge \frac{\delta}{\delta A^\dag}+dc\wedge \frac{\delta}{\delta A}\right)
\end{equation}
The boundary action is
\[
S_{\pa N}=\int_{\pa N} c\wedge dB
\]

\begin{Proposition}
The data described above satisfy the BV-BFV axioms.
\end{Proposition}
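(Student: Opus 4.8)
The plan is to verify, one by one, the defining conditions of a BV-BFV theory as listed after equation~(\ref{cme}): namely $Q_N^2=0$, $\delta\pi(Q_N)=Q_{\pa N}$, $L_{Q_{\pa N}}\omega_{\pa N}=0$, and the master equation $\iota_{Q_N}\omega_N=(-1)^n\delta S_N+\pi^*(\alpha_{\pa N})$. Since the bulk symplectic form $\omega_N$ is the canonical form on $T^*[-1]E_N$ and is manifestly non-degenerate and closed with $gh=-1$, and the boundary form is exact by construction, the only genuine content is these four compatibility identities.

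First I would compute $Q_N^2$ on the coordinate fields. Using the action on generators $Q_NA=dc$, $Q_NA^\dag=dB$, $Q_NB^\dag=*B+dA$, $Q_Nc^\dag=dA^\dag$, one checks that each generator is sent by $Q_N$ into something that is either closed or annihilated on a second application: e.g. $Q_N^2 A=d(Q_Nc)=0$ since $Q_Nc=0$, $Q_N^2A^\dag=d(Q_NB)=0$, $Q_N^2B^\dag=*(Q_NB)+d(Q_NA)=0+d\,dc=0$, and $Q_N^2c^\dag=d(Q_NA^\dag)=d\,dB=0$. This uses only $d^2=0$ and the fact that $Q_N$ kills $B$, $c$, $c^\dag$, $B^\dag$. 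Second, the identity $\delta\pi(Q_N)=Q_{\pa N}$ is checked by pushing forward the bulk vector field along the restriction $\pi=i^*$: since $i^*d=d\,i^*$, the components $dc$ and $dB$ restrict to the boundary expressions in~(\ref{qpa}), while the components along $B^\dag,c^\dag$ are killed because $\pi(B^\dag)=\pi(c^\dag)=0$ and the term $*B+dA$ along $\delta/\delta B^\dag$ has no boundary image. Third, $L_{Q_{\pa N}}\omega_{\pa N}=0$ follows either by direct computation from~(\ref{qpa}) or, more cleanly, by invoking the general argument in the subsection ``The boundary structure from the bulk'': $Q_N^2=0$ together with~(\ref{lo}) gives $L_{Q_{\pa N}}\omega_{\pa N}=0$ automatically, so this step is essentially free once the first two are established.

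The heart of the matter, and the step I expect to be the main obstacle, is the master equation~(\ref{cme}). Here I would compute $\delta S_N$ directly and carefully integrate by parts, tracking the boundary term produced by Stokes' theorem. From
\[
S_N=\int_N\Big(B\wedge dA+\tfrac12 B\wedge *B+A^\dag\wedge dc\Big),
\]
the variation is
\[
\delta S_N=\int_N\Big(\delta B\wedge(dA+*B)+dB\wedge\delta A+\delta A^\dag\wedge dc+dA^\dag\wedge\delta c\Big)+\int_{\pa N}(\text{bdry}),
\]
where the boundary contribution arises from moving $d$ off of $\delta A$ and $\delta c$ via $\int_N dB\wedge\delta A=\pm\int_N B\wedge d\delta A\mp\int_{\pa N}i^*(B\wedge\delta A)$ and similarly for the antighost term. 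One must fix the signs so that the interior part reproduces $(-1)^n\iota_{Q_N}\omega_N$ (read off by contracting $\omega_N$ with the explicit $Q_N$) and the boundary part reproduces exactly $\pi^*\alpha_{\pa N}=\int_{\pa N}i^*(B\wedge\delta A+A^\dag\wedge\delta c)$. Getting every $(-1)^n$ and every Koszul sign from the graded wedge products to line up is the delicate bookkeeping; the conceptual content is only Stokes' theorem plus the definition of the canonical form. Once the signs match, the four conditions are verified and, by Proposition~\ref{LQS} and the discussion following~(\ref{cme}), all remaining BV-BFV axioms (including~(\ref{q2z}) and the identity~(\ref{CME})) hold automatically.
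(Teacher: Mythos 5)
Your proposal is correct and takes essentially the same route as the paper, whose proof likewise declares the master equation (\ref{cme}) the only nontrivial check and verifies it by computing $\iota_{Q_N}\omega_N$ and $\delta S_N$ and comparing them via Stokes' theorem, exactly as you outline. Two harmless slips worth noting: $Q_N$ annihilates only $B$ and $c$ (not $B^\dag$ and $c^\dag$), though your $Q_N^2$ computations never actually use the false part of that claim; and deducing $L_{Q_{\pa N}}\omega_{\pa N}=0$ from (\ref{lo}) presupposes (\ref{cme}), so that ``cleaner'' route must come after your final step rather than before it --- your direct-computation alternative avoids the issue entirely.
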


\begin{proof}
The only non-trivial computation is to check the classical
master equation (\ref{cme}) when $\pa N\neq \emptyset$. Contracting
the vector field $Q_N$ with the symplectic form $\omega_N$
we obtain
\[
\iota_{Q_N}\omega_N=\int_N(dc\wedge \delta A^\dag+\delta A\wedge dB+\delta B\wedge (*B+dA)+dA^\dag\wedge \delta c)
\]
The differential of the action is easy to compute:
\[
\delta S_N=\int_N(\delta B\wedge dA+B\wedge d\delta A+\delta B\wedge *B+ \delta A^\dag\wedge dc +
A^\dag\wedge d\delta c)
\]
Comparing these two formulae and using the Stokes formula we obtain the classical master equation .
\end{proof}

\begin{remark}The connection field $A$ in electrodynamics is called the vector potential. When $n=4$ and $N=[t_1,t_2]\times M$ is equipped with Minkowsky metric choose a basis $e_0, e_1, e_2,e_3$ in the tangent space where $e_0$ is a time direction. On $EL_N$ the components $B_{0i}$, $i=1,2,3$ give the magnetic
field on $M$ and the components $B_{ij}$  give the electric field.
\end{remark}

\subsubsection{The $Q$-reduction of $\EL_N$} The Euler-Lagrange equations in the bulk are:
\[
dB=0, \ \ B=-*dA, \ \ dA^\dag=0, \ \ dc=0
\]
Note that this implies the usual Maxwell's equation for the vector potential $d^*dA=0$.

This defines the subspace $\EL_N\subset \F_N$:
\begin{multline}\label{el-maxw}
\EL_N
=\{(A,B)\in \Omega^1(N)\oplus \Omega^{n-2}(N)\; |\; d*dA=0, B=-*dA\} \oplus \\ \oplus \Omega^0_{closed}(N)[1]
\oplus \Omega^{n-1}_{closed}(N)[-1]\oplus \Omega^2(N)[-1]\oplus\Omega^n(N)[-2]
\end{multline}
The summands correspond to fields $A, B, c, A^\dag, B^\dag, c^\dag$ respectively.

\begin{Proposition} The $Q$-reduced space of solutions to the Euler-Lagrange
equations is
\begin{equation}\label{Q-red-b}
\EL_N/Q\simeq \Omega^1_{Maxw}(N)/\Omega^1_{exact}(N)\oplus H^0(N)[1]\oplus H^{n-1}(N)[-1]\oplus H^n(N)[-2]
\end{equation}
Here $\Omega^1_{Maxw}(N)$ is the space of solutions to
Maxwell's equations, i.e. $1$-forms $A$ , such that $d*dA=0$. The summands represent quotient spaces in fields
$A, c, A^\dag, c^\dag$ respectively.

\end{Proposition}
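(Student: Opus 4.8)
The plan is to exploit that electrodynamics is a \emph{linear} theory. All components of $Q_N$ are linear in the fields, so the linearization $\hat{Q}_X$ does not depend on the base point and coincides with $Q_N$ acting on $T\FF_N\cong\FF_N$. Hence $\EL_N=\ker\hat{Q}$ is a linear subspace, the distribution $\mathrm{Vect}_Q|_X=\mathrm{Im}(\hat{Q}_X)$ is the fixed subspace $\mathrm{Im}(\hat{Q})\subset\EL_N$, and its leaves are the affine cosets $X+\mathrm{Im}(\hat{Q})$. Thus the $Q$-reduction is globally the quotient vector space
\[
\EL_N/Q=\ker(\hat{Q})/\mathrm{Im}(\hat{Q})=H^\bullet(\FF_N,\hat{Q}),
\]
the linear global version of Proposition~\ref{QisS}, and the statement becomes a computation of the cohomology of the explicit complex $(\FF_N,\hat{Q})$.

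First I would record the infinitesimal gauge action read off from $Q_N$: on $\xi=(\delta A,\delta B,\delta c,\delta A^\dag,\delta B^\dag,\delta c^\dag)$ it acts by $\hat{Q}\xi=(d\,\delta c,\,0,\,0,\,d\,\delta B,\,*\delta B+d\,\delta A,\,d\,\delta A^\dag)$ in the slots $(A,B,c,A^\dag,B^\dag,c^\dag)$. The essential structural observation is that the Hodge star enters only through the isomorphism $*\colon\Omega^{n-2}(N)\to\Omega^2(N)$ linking the $B$ and $B^\dag$ directions, and this is an acyclic two-term subquotient of the complex. Eliminating it (cancelling $\delta B$ against $\delta B^\dag$) leaves the reduced complex on the directions $\delta c,\delta A,\delta A^\dag,\delta c^\dag$, with induced differentials
\[
\Omega^0(N)\xrightarrow{\;d\;}\Omega^1(N)\xrightarrow{\;\pm d*d\;}\Omega^{n-1}(N)\xrightarrow{\;d\;}\Omega^n(N),
\]
the middle arrow being exactly the Maxwell operator $A\mapsto d*dA$ produced by the cancellation. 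This already explains why $B$ and $B^\dag$ drop out and why $\Omega^1_{\mathrm{Maxw}}$ appears.

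Reading off the cohomology of this reduced complex sector by sector then gives the four summands, each carried in the ghost number of the corresponding field: the $c$-direction (ghost number $1$) contributes $\ker(d\colon\Omega^0\to\Omega^1)=H^0(N)$; the $A$-direction (ghost number $0$) contributes $\ker(\pm d*d)/\mathrm{im}(d)=\Omega^1_{\mathrm{Maxw}}(N)/\Omega^1_{\mathrm{exact}}(N)$, with $B=-*dA$ determined by $[A]$; the $A^\dag$-direction (ghost number $-1$) contributes $\Omega^{n-1}_{\mathrm{closed}}(N)/\mathrm{im}(\pm d*d)$; and the $c^\dag$-direction (ghost number $-2$) contributes $\Omega^n(N)/\Omega^n_{\mathrm{exact}}(N)=H^n(N)$. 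Assembling these reproduces \eqref{Q-red-b}.

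The one nontrivial point, and the step I expect to be the main obstacle, is identifying the $A^\dag$-sector $\Omega^{n-1}_{\mathrm{closed}}/\mathrm{im}(d*d)$ with the honest de Rham group $H^{n-1}(N)=\Omega^{n-1}_{\mathrm{closed}}/\Omega^{n-1}_{\mathrm{exact}}$. The inclusion $\mathrm{im}(d*d)\subseteq\Omega^{n-1}_{\mathrm{exact}}$ is clear; the content is the reverse, that every exact $(n-1)$-form is of the form $d*d\,a$, which after applying $*$ is equivalent to the Hodge identity
\[
\Omega^2(N)=\Omega^2_{\mathrm{exact}}(N)+*\,\Omega^{n-2}_{\mathrm{closed}}(N)=\Omega^2_{\mathrm{exact}}(N)+\Omega^2_{\mathrm{coclosed}}(N).
\]
On a closed $N$ this is immediate from the Hodge--de Rham decomposition together with $*\,\Omega^{n-2}_{\mathrm{closed}}=\mathcal H^2\oplus\Omega^2_{\mathrm{coexact}}$; in the bounded case it is the Hodge--Morrey input. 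This is precisely the ellipticity/Hodge assumption \eqref{assump} on which the whole reduction rests, so no new analytic ingredient beyond it is required.
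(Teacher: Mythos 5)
Your proposal is correct, and at its core it is the same computation as the paper's: both exploit linearity to identify $\EL_N/Q$ with the cohomology $\ker(\hat Q)/\mathrm{Im}(\hat Q)$ of the fixed linear complex $(\FF_N,\hat Q)$, and both ultimately rest on the same Hodge--Morrey input $\Omega^\bullet(N)=\Omega^\bullet_{exact}(N)+\Omega^\bullet_{coclosed}(N)$. The difference is organizational. The paper computes $\mathrm{Im}(\hat Q)$ directly on the six-field complex; its key step, equation (\ref{imqb}), stating that the image in the $(A^\dag,B^\dag)$-sector equals $\Omega^{n-1}_{exact}(N)\oplus\Omega^2(N)$, packages in one statement both of your steps: the ``$\oplus\,\Omega^2$'' part is exactly your cancellation of the acyclic $B$--$B^\dag$ pair, and the ``$\Omega^{n-1}_{exact}$'' part is exactly your inclusion $\Omega^{n-1}_{exact}(N)\subseteq \mathrm{im}(d{*}d)$. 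Your Gaussian-elimination route is a valid alternative presentation (it is purely algebraic, since $*$ is a genuine isomorphism, so there is no infinite-dimensional subtlety), and it has the merit of making the appearance of the Maxwell operator $d{*}d$ and the disappearance of $B,B^\dag$ structurally transparent rather than emerging from the image computation. One small imprecision: the identity you need is not ``precisely'' assumption (\ref{assump}) --- that assumption is the statement $\mathrm{Im}(\hat Q_X)=\ker(\hat Q_X)^\perp$ about symplectic orthogonals, whereas the paper's proof of this Proposition invokes the Hodge--Morrey decomposition directly rather than through (\ref{assump}); but since you state and justify the identity you actually use, this looseness of attribution does not create a gap.
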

\begin{proof}
Let us first find the $Q$-reduced tangent space to $\EL_N$:
\[
T_X\EL_N/Q=\ker(\hat{Q}_X)/Im(\hat{Q}_X)
\]
Because $\EL_N$ is a vector space its tangent space, which is isomorphic to $\ker(\hat{Q}_X)$, is given by
(\ref{el-maxw}). The image of $\hat{Q}_X$ is easy to compute:
\begin{multline}
Im(\hat{Q}_X)=\Omega^1_{exact}(N)\oplus
\{ (A^\dag, B^\dag)\in \Omega^{n-1}(N)[-1]\oplus \\ \Omega^2(N)[-1] \; | \; A^\dag=d\beta, B^\dag=*\beta+ \mbox{exact}, \beta\in \Omega^{n-2}(N)\}\oplus \Omega^n_{exact}(N)[-2]
\end{multline}
Here the components correspond to fields $A,A^\dag, B^\dag, c^\dag$ respectively.
Let us prove now that
\begin{multline}\label{imqb}
\{  (A^\dag, B^\dag)\in \Omega^{n-1}(N)\oplus\Omega^2(N) \; | \; A^\dag=d\beta, \ \ \\ B^\dag=*\beta+ \mbox{exact}, \beta\in \Omega^{n-2}(N)\}=
 \Omega^{n-1}_{exact}(N)\oplus \Omega^2(N)
\end{multline}
By the Hodge-Morrey decomposition (see for example \cite{CGMT} and references therein)
we can write $\Omega(N)=\Omega_{coclosed}(N)+\Omega_{exact}(N)$. Using this decomposition,
for an exact $(n-1)$-form $A^\dag=d\gamma$
and an arbitrary $2$-form $B^\dag$ we can write $B^\dag-*\gamma=*\theta+d\eta$ with $\theta$ closed.
Then $A^\dag=d\beta$, $B^\dag=*\beta+d\eta$ for $\beta=\gamma+\theta$ and therefore the r.h.s. of
(\ref{imqb}) is the subspace of l.h.s. and since to opposite inclusion is obvious we proved
(\ref{imqb}).

Now we can write
\[
Im(\hat{Q}_X)=\Omega^1_{exact}(N)\oplus \Omega^{n-1}_{exact}(N)[-1]\oplus \Omega^2(N)[-1]\oplus \Omega^n_{exact}(N)[-2]
\]
Together with the formula (\ref{el-maxw}) this proves the Proposition.
\end{proof}

\begin{Proposition} When $\pa N=\emptyset $
\[
\EL_N/Q=H^1(N)\oplus H^0(N)[1]\oplus H^{n-1}(N)[-1]\oplus H^n(N)[-2]
\]
The summands correspond to fields $A,c,A^\dag, c^\dag$ respectively.
\end{Proposition}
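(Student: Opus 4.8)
The plan is to obtain the closed-manifold formula directly from the preceding Proposition by re-identifying a single summand. That earlier result, stated for general $N$, writes $\EL_N/Q$ as the $A$-contribution $\Omega^1_{Maxw}(N)/\Omega^1_{exact}(N)$ together with $H^0(N)[1]\oplus H^{n-1}(N)[-1]\oplus H^n(N)[-2]$, the latter three being the surviving $c,A^\dag,c^\dag$ degrees of freedom (recall that $B$ is determined by $A$ via $B=-*dA$ on $\EL_N$, while $B^\dag$ is entirely quotiented out). These last three summands are already cohomology groups and are unaffected by whether $N$ has boundary. Hence the only thing I need to prove is that, when $\partial N=\emptyset$, the $A$-summand becomes $\Omega^1_{Maxw}(N)/\Omega^1_{exact}(N)=H^1(N)$.

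The key step is to show that on a closed manifold every Maxwell solution is closed, i.e. $\Omega^1_{Maxw}(N)=\Omega^1_{closed}(N)$. First I would rewrite the Maxwell equation $d*dA=0$ in terms of the codifferential: since $*$ is an isomorphism and $d^*=\pm\, {*}d{*}$ on $2$-forms, the equation $d*dA=0$ is equivalent to $d^*dA=0$. Then, using that $\partial N=\emptyset$ so that $d^*$ is the genuine adjoint of $d$ with no boundary contributions, I compute
\[
\|dA\|^2=\langle dA,dA\rangle=\langle A,d^*dA\rangle=0,
\]
which forces $dA=0$. Thus every Maxwell solution is closed. The reverse inclusion is immediate, since $dA=0$ trivially gives $d*dA=0$, so $\Omega^1_{closed}(N)\subseteq\Omega^1_{Maxw}(N)$. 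With the identity $\Omega^1_{Maxw}(N)=\Omega^1_{closed}(N)$ established, the quotient $\Omega^1_{Maxw}(N)/\Omega^1_{exact}(N)$ equals $\Omega^1_{closed}(N)/\Omega^1_{exact}(N)$, which is precisely $H^1(N)$ by the definition of de Rham cohomology. Substituting this back into the previous Proposition yields the stated decomposition.

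I do not expect a genuine obstacle: the whole argument rests on the one integration-by-parts identity above, whose validity is guaranteed exactly by the absence of boundary (this is the single place where closedness of $N$ is used). The only point needing mild care is the sign in $d^*=\pm\,{*}d{*}$, but since I only need the \emph{vanishing} of $d^*dA$, equivalently of ${*}d{*}dA$ and hence of $d*dA$, the sign is irrelevant. An alternative, slightly longer route would be to recompute $\ker(\hat{Q}_X)$ and $\operatorname{Im}(\hat{Q}_X)$ directly from the ordinary Hodge decomposition $\Omega^1(N)=\mathcal{H}^1\oplus d\Omega^0\oplus d^*\Omega^2$ (valid because $\partial N=\emptyset$), but specializing the already-proven Proposition is cleaner.
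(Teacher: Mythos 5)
Your proof is correct, and its skeleton is the same as the paper's: specialize the preceding Proposition and reduce everything to identifying the first summand $\Omega^1_{Maxw}(N)/\Omega^1_{exact}(N)$ with $H^1(N)$ (the ghost summands are already cohomology groups and carry over, as you note). The difference lies in how that identification is made. The paper invokes the Hodge decomposition $\Omega(N)=H(N)\oplus \Omega_{d-exact}(N)\oplus \Omega_{d^*-exact}(N)$ and reads the claim off from it: $dA$ is exact and coclosed, hence zero by orthogonality of the decomposition, so Maxwell $1$-forms split as (harmonic)$\,\oplus\,$(exact) and the quotient is the space of harmonic $1$-forms $\cong H^1(N)$. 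You instead establish $\Omega^1_{Maxw}(N)=\Omega^1_{closed}(N)$ by the direct identity $\|dA\|^2=\langle A,d^*dA\rangle=0$ and then use the de Rham description of $H^1(N)$ as closed modulo exact. Your route is more elementary --- it needs only Stokes' theorem (adjointness of $d$ and $d^*$ in the absence of boundary) and positive-definiteness of the $L^2$ pairing, rather than the elliptic theory underlying the Hodge theorem --- while the paper's route produces canonical harmonic representatives, in the same spirit as the neighboring results (e.g.\ regularity of electrodynamics), which are proved via Hodge and Hodge--Morrey decompositions. Both arguments consume the hypothesis $\pa N=\emptyset$ at exactly the same point: it is what makes exact and coclosed forms orthogonal, i.e.\ what kills the boundary term in your integration by parts.
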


\begin{proof} A $1$-form $A$ satisfies the Maxwell's equation
$d*dA=0$ if and only if $dA\in ker(d^*)$. The metric on $N$
gives the Hodge decomposition:
\[
\Omega(N)=H(N)\oplus \Omega_{d-exact}(N)\oplus \Omega_{d^*-exact}(N)
\]
where $H(N)$ are harmonic forms representing cohomology classes
of $N$. It is clear from this decomposition that the first summand
in  (\ref{Q-red-b}) is $H^1(N)$.
\end{proof}

When $N$ does not have a boundary 
the space $\EL_N/Q$ described above has a natural symplectic
structure given by the Poincar\'e pairing between $H^1(N)$
an $H^{n-1}(N)$ and between $H^0(N)$ and $H^n(N)$.

\subsubsection{The reduction of boundary structures} Recall that the space of boundary fields is
\[
\F_{\pa N}=\Omega^1(\pa N)\oplus
\Omega^{n-2}(\pa N)\oplus\Omega^0(\pa N)[1]\oplus\Omega^{n-1}(\pa N)[-1]
\]
Where the summands correspond to pullbacks of fields
$A, B, c, A^\dag$ respectively.
The Euler-Lagrange equations on the boundary
(equations for zeroes of the vector field $Q_{\pa N}$) are
\[
dB=0, \ \ dc=0
\]
Thus, the space of solutions to boundary Euler-Lagrange equations
is
\[
\EL_{\pa N}=\Omega^1(\pa N)\oplus \Omega^{n-2}_{closed}(\pa N)\oplus \Omega_{closed}^0(\pa N)[1]\oplus\Omega^{n-1}(\pa N)[-1]
\]
Because it is a vector space, it is isomorphic to its tangent space at every point.

Similarly to the discussion for the bulk, for $l\in \EL_{\pa N}$ we get
\begin{equation}\label{imq}
Im(\hat{Q}_l)=\Omega^1_{exact}(\pa N)
\oplus \Omega^{n-1}_{exact}(\pa N)[-1]\subset T_l \EL_{\pa N}
\end{equation}

Taking the quotient space $ker(\hat{Q}_{\pa N})/Im(\hat{Q}_{\pa N})$ and identifying the tangent space with the space itself we
prove the following.

\begin{Proposition} The $Q$-reduced space of boundary fields is
\[
\EL(\pa N)/Q=\Omega^1(\pa N)/\{exact\}\oplus \Omega^{n-2}_{closed}(\pa N)\oplus H^0(\pa N)[1]\oplus H^{n-1}(\pa N)[-1]
\]
Here summands correspond to fields $A, B, c, A^\dag$ respectively.
\end{Proposition}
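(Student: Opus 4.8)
The plan is to compute the $Q$-reduction directly as the quotient $\ker(\hat{Q}_{\pa N})/Im(\hat{Q}_{\pa N})$, working one field direction at a time. Since $\EL_{\pa N}$ is a linear subspace of the vector space $\F_{\pa N}$ and $Q_{\pa N}$ acts linearly in the fields, the linearization $\hat{Q}_l$ is independent of the point $l$ and coincides with $Q_{\pa N}$ itself. Hence $\ker(\hat{Q}_l)=T_l\EL_{\pa N}$ may be identified with $\EL_{\pa N}$, the leaves of $Vect_{Q_{\pa N}}$ are the affine translates of the fixed subspace $Im(\hat{Q}_{\pa N})$ recorded in (\ref{imq}), and the leaf space is simply the honest vector-space quotient of one by the other.

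First I would write $\ker(\hat{Q}_{\pa N})=\EL_{\pa N}=\Omega^1(\pa N)\oplus\Omega^{n-2}_{closed}(\pa N)\oplus\Omega^0_{closed}(\pa N)[1]\oplus\Omega^{n-1}(\pa N)[-1]$ and, from (\ref{imq}), $Im(\hat{Q}_{\pa N})=\Omega^1_{exact}(\pa N)\oplus 0\oplus 0\oplus\Omega^{n-1}_{exact}(\pa N)[-1]$, both split along the four field directions $A,B,c,A^\dag$. The shape of the image is dictated by (\ref{qpa}): $Q_{\pa N}$ sends $A\mapsto dc$ and $A^\dag\mapsto dB$ while annihilating $B$ and $c$, so the image lives entirely in the $A$- and $A^\dag$-directions. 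Consequently the quotient splits as a direct sum of the four componentwise quotients, which I read off as: $\Omega^1(\pa N)/\Omega^1_{exact}(\pa N)=\Omega^1(\pa N)/\{exact\}$ in the $A$-direction; $\Omega^{n-2}_{closed}(\pa N)$ unchanged in the $B$-direction; $\Omega^0_{closed}(\pa N)[1]$ in the $c$-direction; and $\Omega^{n-1}(\pa N)/\Omega^{n-1}_{exact}(\pa N)$, shifted by $[-1]$, in the $A^\dag$-direction.

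The only substantive observations are the two cohomological identifications at the ends of the de Rham complex of the \emph{closed} $(n-1)$-manifold $\pa N$. Since $\Omega^{-1}=0$, a closed $0$-form is exactly a representative of $H^0(\pa N)$, so $\Omega^0_{closed}(\pa N)=H^0(\pa N)$, giving the summand $H^0(\pa N)[1]$. Dually, because $\dim\pa N=n-1$ every $(n-1)$-form is top-degree and hence automatically closed, so $\Omega^{n-1}(\pa N)=\Omega^{n-1}_{closed}(\pa N)$ and the quotient by exact forms is precisely $H^{n-1}(\pa N)$, giving the summand $H^{n-1}(\pa N)[-1]$. Assembling the four pieces yields the stated isomorphism.

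I do not expect a genuine obstacle: once (\ref{imq}) is granted the argument is pure bookkeeping, and the foliation being linear guarantees that the local leaf-space description integrates to a global vector-space quotient. The single point deserving a moment's care is the $A^\dag$-direction, where one must invoke the degree-counting fact that all $(n-1)$-forms on $\pa N$ are closed in order to replace the naive quotient $\Omega^{n-1}/\Omega^{n-1}_{exact}$ by the honest cohomology group $H^{n-1}(\pa N)$; the same observation in complementary degree underlies the $H^0$ identification.
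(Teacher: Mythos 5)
Your proposal is correct and follows essentially the same route as the paper: identify $\ker(\hat{Q}_{\pa N})$ with $\EL_{\pa N}$ and $Im(\hat{Q}_{\pa N})$ with $\Omega^1_{exact}(\pa N)\oplus\Omega^{n-1}_{exact}(\pa N)[-1]$ (the paper's (\ref{imq}), which you correctly re-derive from (\ref{qpa})), use linearity to identify the leaf space of the constant distribution with the vector-space quotient, and read off the four componentwise quotients. Your explicit justification of the identifications $\Omega^0_{closed}(\pa N)=H^0(\pa N)$ and $\Omega^{n-1}(\pa N)/\{exact\}=H^{n-1}(\pa N)$ (using that $\pa N$ is a closed $(n-1)$-manifold) simply fills in bookkeeping the paper leaves implicit.
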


This space is clearly infinite-dimensional. It coincides with
the symplectic reduction of $\EL_{\pa N}$ with symplectic structure given by the natural pairing between the first and the second and
between the third and the fourth
summands.

\begin{Proposition} BV-extended classical electrodynamics is a regular theory (in the sense of Definition \ref{def: regular}).
\end{Proposition}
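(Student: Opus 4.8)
The plan is to verify the three conditions bundled into Definition \ref{def: regular}: Assumption (\ref{assump 1 for bdry}) for the boundary operator $\hat Q_l$ at every $l\in\EL_{\pa N}$, and the two equalities of Assumption (\ref{assump bdry}) for the bulk operator $\hat Q_X$ at every $X\in\EL_N$. Since $Q_N$ and $Q_{\pa N}$ are linear in the fields, the linearizations are the constant operators read off above: on a tangent vector $(a,b,\gamma,a^\dag,b^\dag,\gamma^\dag)$ one has $\hat Q_X(a,b,\gamma,a^\dag,b^\dag,\gamma^\dag)=(d\gamma,0,0,db,*b+da,da^\dag)$, and $\hat Q_l(a,b,\gamma,a^\dag)=(d\gamma,0,0,db)$. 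In particular the conditions do not depend on the chosen point, so it is enough to check them once. Every space involved ($\ker\hat Q_X$, $\mathrm{Im}\,\hat Q_X$, their vertical analogues, and the boundary versions) then splits into the separate form-degree sectors, and each condition becomes a statement about $d$, $*$ and the pullbacks $i^*$. The single analytic input is the Hodge--Morrey--Friedrichs decomposition for a compact Riemannian manifold with boundary \cite{CGMT}, which I would use both to split $\Omega^\bullet(N)$ into mutually $L^2$-orthogonal pieces (exact with Dirichlet potential, coexact with Neumann potential, harmonic fields) and to guarantee that the images of $d$ occurring below are closed.

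I would dispose of the boundary condition (\ref{assump 1 for bdry}) first, as $\pa N$ is closed and the argument is identical to the closed-manifold verification of (\ref{assump}). Here $\ker\hat Q_l=\{(a,b,\gamma,a^\dag):\ b,\gamma\ \text{closed}\}$ with $a,a^\dag$ free, and $\mathrm{Im}\,\hat Q_l=\Omega^1_{exact}(\pa N)\oplus 0\oplus 0\oplus\Omega^{n-1}_{exact}(\pa N)$ as in (\ref{imq}). Writing out the boundary pairing $\int_{\pa N}(\delta B\wedge\delta A+\delta A^\dag\wedge\delta c)$ and using that $a,a^\dag$ are unconstrained forces the $B$- and $c$-components of any element of $\ker(\hat Q_l)^\perp$ to vanish; Stokes on the closed manifold $\pa N$ together with ordinary Hodge--de Rham theory then identifies the surviving $A$- and $A^\dag$-components with exact forms (pairing against closed $(n-2)$-forms, respectively against locally constant functions, kills precisely the non-exact part), giving $\ker(\hat Q_l)^\perp=\mathrm{Im}\,\hat Q_l$.

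For the bulk condition (\ref{assump bdry}) I would exploit the relations $\mathrm{Im}(\hat Q_X^\ver)^\perp=\ker\hat Q_X$ and $\mathrm{Im}(\hat Q_X)^\perp=\ker\hat Q_X^\ver$ established before Assumption (\ref{assump bdry}) (consequences of (\ref{Q kinda self-adjoint})). Taking orthogonals already yields $\mathrm{Im}\,\hat Q_X^\ver\subseteq\ker(\hat Q_X)^\perp$ and $\mathrm{Im}\,\hat Q_X\subseteq\ker(\hat Q_X^\ver)^\perp$, so only the reverse inclusions remain, i.e. the reflexivity of these images under the symplectic orthogonal. Explicitly $\mathrm{Im}\,\hat Q_X=\Omega^1_{exact}(N)\oplus 0\oplus 0\oplus\Omega^{n-1}_{exact}(N)\oplus\Omega^2(N)\oplus\Omega^n_{exact}(N)$, the $B^\dag$-slot being all of $\Omega^2(N)$ exactly by (\ref{imqb}); its vertical analogue $\mathrm{Im}\,\hat Q_X^\ver$ is obtained by imposing the Dirichlet conditions $i^*a=i^*b=i^*\gamma=i^*a^\dag=0$ on the generators, which replaces each exact slot $d\Omega^{k-1}(N)$ by the relative version $d\Omega^{k-1}_D(N)$ and the $B^\dag$-slot by $\{*b+da:\ i^*a=i^*b=0\}$. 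Because the symplectic form is the Poincar\'e pairing $\int_N\cdot\wedge\cdot$, I would convert each orthogonality condition into an $L^2$-orthogonality by inserting Hodge stars (so that a Poincar\'e-orthogonal complement inside $\Omega^k$ becomes $*$ of an $L^2$-orthogonal complement inside $\Omega^{n-k}$), and then read off the answer sector by sector from the orthogonal Hodge--Morrey--Friedrichs splitting, which guarantees that $d\Omega^{k}(N)$ and $d\Omega^{k}_D(N)$ are closed and hence reflexive.

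The hard part will be the coupled $(A,B,B^\dag)$ sector, where $\hat Q_X$ mixes $d$ with the Hodge star through $*b+da$: computing $\mathrm{Im}\,\hat Q_X^\ver$ in this slot, identifying the $B^\dag$-component, and matching the Dirichlet boundary conditions all pass through the Maxwell-type operator $d*d$ and need the full force of Hodge--Morrey--Friedrichs (closedness of $d\Omega^k_D(N)$ and the decomposition of harmonic fields into Dirichlet and Neumann parts), exactly as in the proof of (\ref{imqb}). By contrast the purely ghost sectors, namely $c\xrightarrow{d}A$ and $A^\dag\xrightarrow{d}c^\dag$, reduce to the closedness of $d\Omega^0(N)$, $d\Omega^0_D(N)$, $d\Omega^{n-1}(N)$, $d\Omega^{n-1}_D(N)$ and are immediate. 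Assembling the sectors yields both equalities of (\ref{assump bdry}) for every $X\in\EL_N$, which together with (\ref{assump 1 for bdry}) is precisely regularity in the sense of Definition \ref{def: regular}.
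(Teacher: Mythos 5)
Your proposal is correct and takes essentially the same route as the paper: you compute the constant linearizations $\hat{Q}_X$, $\hat{Q}_X^{\mathrm{vert}}$, $\hat{Q}_l$ sector by sector (with the coupled $(A^\dag,B^\dag)$-part of the image identified via (\ref{imqb})), then verify (\ref{assump 1 for bdry}) by the Hodge decomposition on the closed boundary and (\ref{assump bdry}) by the Hodge--Morrey decomposition on $N$, which is exactly the paper's proof. Your added remark that the general relations $Im(\hat{Q}_X^{\mathrm{vert}})^\perp=\ker(\hat{Q}_X)$ and $Im(\hat{Q}_X)^\perp=\ker(\hat{Q}_X^{\mathrm{vert}})$ already give one inclusion, so that only reflexivity of the images under the symplectic orthogonal remains, is a mild streamlining of the same argument rather than a different method.
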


\begin{proof} By direct calculation, we have
\begin{eqnarray}
\label{ED ker Q}\ker(\hat{Q}_X) &=& \underbrace{\{(a,b)\in \Omega^1\oplus\Omega^{n-2}\,|\, b=-*da,\, db=0 \}}_{\cong \Omega^1_{cl}\oplus \Omega^{n-2}_{coex,cl}}\oplus \\
\nonumber && \oplus \Omega^0_{cl}[1] \oplus \Omega^{n-1}_{cl}[-1] \oplus \Omega^2[-1] \oplus \Omega^n[-2] , \\
\label{ED im Q}Im(\hat{Q}_X) &=& \Omega^1_{ex}\oplus \Omega^{n-1}_{ex}[-1]\oplus \Omega^2[-1]\oplus \Omega^n_{ex}[-2] , \\
\label{ED ker Q^ver}\ker(\hat{Q}^\ver_X) &=& \Omega^1_{cl,D}\oplus \Omega^0_{cl,D}[1] \oplus \Omega^{n-1}_{cl,D}[-1] \oplus \Omega^2[-1] \oplus \Omega^n[-2] , \\
\label{ED im Q^ver}Im(\hat{Q}^\ver_X) &=& \Omega^1_{ex,D}\oplus \Omega^n_{ex,D}[-2] \oplus \\
\nonumber &&\oplus \underbrace{\{ (db, da+ *b) \in \Omega^{n-1}\oplus\Omega^2 \,|\, a\in \Omega^1_D,\, b\in \Omega^{n-2}_D\}}_{\cong \Omega^{n-1}_{ex,D}\oplus
\left(\Omega^2_{ex,D}\oplus \Omega^2_{cocl,N}\right)}[-1]
\end{eqnarray}
Here we use a shorthand notation with $ex, cl, coex, cocl, D,N$ standing for exact, closed, coexact, coclosed, Dirichlet, Neumann respectively (the last two indicating the imposed boundary condition; $\Omega_{ex,D}$ means exact, with the \textit{primitive} being subject to Dirichlet condition). Hodge-Morrey decomposition theorem implies that subspaces (\ref{ED ker Q}) and (\ref{ED im Q^ver}) are mutually orthogonal in $T_X\FF_N$, and (\ref{ED ker Q^ver}) and (\ref{ED im Q}) are mutually orthogonal too. Thus the assumption (\ref{assump bdry}) holds.

Also,
$$\ker(\hat{Q}_l)= \Omega^1_\pa\oplus \Omega^{n-2}_{cl,\pa} \oplus \Omega^0_{cl,\pa}[1] \oplus\Omega^{n-1}_\pa[-1]$$
and
$$Im(\hat{Q}_l)= \Omega^1_{ex,\pa}\oplus \Omega^{n-1}_{ex,\pa}[-1]$$
(where $\pa$ stands for forms on the boundary) are mutually orthogonal in $T_l\FF_{\dd N}$ due to Hodge decomposition on $\pa N$, thus the assumption (\ref{assump 1 for bdry}) also holds. Therefore electrodynamics is a regular theory.
\end{proof}

\subsubsection{The Lagrangian subspace $\cL_N$ and its reduction}

Now let us describe the evolution relation $\cL_N=\pi(\EL_N)\subset \F_{\pa N}$ and its reduction. Due to regularity, $\cL_N$ is Lagrangian. This subspace consists
of pullbacks of fields $A, B, c, A^\dag$ satisfying Euler-Lagrange equation in $N$:
\begin{multline}
\cL_N=\{(i^*(A), i^*(-*dA))\in \Omega^1(\pa N)\oplus \Omega^{n-2}(\pa N) \;|\; d*dA=0, A\in \Omega^1(N)\}\\
\oplus i^*(\Omega_{closed}^0(N))[1]\oplus
i^*(\Omega_{closed}^{n-1})[-1]
\end{multline}

The $Q$-reduction of the Lagrangian subspace $\cL_N$, or equivalently, its
symplectic reduction is:
\begin{multline}\label{red-evol-qcd}
\cL_N/Q=\{(i^*(A), i^*(-*dA))\in \Omega^1(\pa N)\oplus \Omega^{n-2}(\pa N) \;|\\ d*dA=0, A\in \Omega^1(N)\}/\{\mbox{ exact } A\} \oplus \tilde{H}^0(\pa N)[1]\oplus \tilde{H}^{n-1}(\pa N)[-1]
\end{multline}
Here $\tilde{H}^i(\pa N)$ are cohomology classes of $\pa N$ which
are pullbacks of cohomology classes on $N$.

\subsubsection{Gauge classes of solutions to Euler-Lagrange equations
with fixed gauge classes of boundary values}

It is clear that the tangent space $T_X \EL(N,[l])$ depends neither on $X$ nor on $l$ and is
\begin{multline}
\{(A, -*dA)\in \Omega^1(\pa N)\oplus \Omega^{n-2}(\pa N) \;|\; d*dA=0, i^*(A) \mbox{ exact }, A\in \Omega^1(N)\}\oplus \\ \oplus \Omega_{closed}^0(N, \pa N)[1]\oplus \{A^\dag\in \Omega_{closed}^{n-1}(N)[-1] \;|\; i^*(A^\dag) \mbox{ exact }\}\oplus \\ \oplus \Omega^2(N)[-1]\oplus \Omega^n(N)[-2]
\end{multline}

For the quotient space we have:
\begin{multline}\label{red-qcd}
T_X \EL(N,[l])/Im(\hat{Q}_X)=\\
=\frac{H^1(N,\pa N)}{H^0(\pa N)}\oplus \frac{H^{n-1}(N,\pa N)}{H^{n-2}(\pa N)}[-1]\oplus H^0(N, \pa N)[1] \oplus H^n(N)[-2]
\end{multline}
It is the $Q$-reduction of the tangent space $T_X \EL(N,[l])$. It is a finite dimensional
symplectic space. Since we are in a linear case, $\EL(N, [l])/Q$ is isomorphic to (\ref{red-qcd}).

\subsubsection{Codimension $2$ BV structure}
Here we will describe the extension of the BV electrodynamics
to codimension $2$ strata.
Let $\Sigma$ be an $(n-1)$-dimensional manifold with the
boundary $\pa \Sigma$. The boundary $\pa \Sigma$ is closed
and the space of boundary fields is
\[
\cF_{\pa \Sigma}=\Omega^{n-2}(\pa \Sigma)\oplus \Omega^0(\pa \Sigma)[1]
\]
where the summands correspond to pullbacks of
fields $B$ and $c$ from $\Sigma$ to the boundary. We will
denote the pullbacks of $B$ and $c$ by the same letters.

The restriction mapping $\pi: \F_\Sigma\to \F_{\pa \Sigma}$ acts as $\pi(A)=\pi(A^\dag)=0$ and $\pi(B)=B$, $\pi(c)=c$.

The symplectic structure on this space is exact
\[
\omega_{\pa \Sigma}=\int_{\pa \Sigma} \delta B\wedge \delta c=\delta\alpha_{\pa \Sigma}
\]
where
\[
\alpha_{\pa \Sigma}=\int_{\pa \Sigma}\delta B\wedge c
\]

The action and the vector field $Q$ on $\F_{\pa \Sigma}$ are
trivial:
\[
S_{\pa \Sigma}=0, \ \ Q_{\pa \Sigma}=0
\]

\begin{Proposition} The action $S_\Sigma=
\int_\Sigma c\wedge dB$ satisfies the equation (\ref{cme}).
\end{Proposition}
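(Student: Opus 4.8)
The plan is to verify the stated instance of (\ref{cme}) directly, in exactly the same style as the proof of the classical master equation for the bulk electrodynamics theory: compute both sides as $1$-forms on $\F_\Sigma$ and match them, with the boundary term produced by Stokes' theorem on $\Sigma$ playing the role of $\pi^*(\alpha_{\pa\Sigma})$. Since $\dim\Sigma=n-1$, the identity to be checked is the $\dim=n-1$ version of (\ref{cme}), namely
\[
\iota_{Q_\Sigma}\omega_\Sigma=(-1)^{n-1}\delta S_\Sigma+\pi^*(\alpha_{\pa\Sigma}),
\]
with $\omega_\Sigma=\int_\Sigma(\delta B\wedge\delta A+\delta A^\dag\wedge\delta c)$, $Q_\Sigma A=dc$, $Q_\Sigma A^\dag=dB$ (and $Q_\Sigma$ acting trivially on $B$ and $c$), and $\alpha_{\pa\Sigma}=\int_{\pa\Sigma}\delta B\wedge c$.

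First I would contract $Q_\Sigma$ into $\omega_\Sigma$. Because $Q_\Sigma$ is nontrivial only in the $A$- and $A^\dag$-directions, only the substitutions $\iota_{Q_\Sigma}\delta A=dc$ and $\iota_{Q_\Sigma}\delta A^\dag=dB$ contribute, producing a purely bulk expression of the schematic shape $\int_\Sigma(\pm\,\delta B\wedge dc\pm dB\wedge\delta c)$, with no boundary contribution.

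Next I would compute $\delta S_\Sigma$ for $S_\Sigma=\int_\Sigma c\wedge dB$, obtaining $\int_\Sigma(\delta c\wedge dB\pm c\wedge d\,\delta B)$, and integrate the summand carrying $d\,\delta B$ by parts. Stokes' theorem on $\Sigma$ splits it into a bulk piece proportional to $\int_\Sigma dc\wedge\delta B$ and a boundary piece $\int_{\pa\Sigma}c\wedge\delta B$. Since $\pi$ retains only $B|_{\pa\Sigma}$ and $c|_{\pa\Sigma}$ while annihilating $A$ and $A^\dag$, this boundary piece is, up to reordering, exactly $\pi^*(\alpha_{\pa\Sigma})$; as in the bulk computation it appears with the sign $-(-1)^{n-1}$ inside $\delta S_\Sigma$, so after multiplying by $(-1)^{n-1}$ it cancels against the $+\pi^*(\alpha_{\pa\Sigma})$ on the right-hand side. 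Matching the remaining bulk terms of $(-1)^{n-1}\delta S_\Sigma$ with $\iota_{Q_\Sigma}\omega_\Sigma$ then establishes the identity.

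The only genuinely delicate point — and thus the main obstacle — is the bookkeeping of signs: the Koszul signs from contracting the odd vector field $Q_\Sigma$ with the graded $2$-form $\omega_\Sigma$, the sign from commuting $\delta$ past the ghost $c$ of $gh=1$ and past the de Rham differential, and the orientation sign from Stokes on an $(n-1)$-dimensional $\Sigma$, all of which must conspire to reproduce precisely the prescribed factor $(-1)^{n-1}$ and the orientation of $\alpha_{\pa\Sigma}$. As a consistency check along the way I would confirm that $\delta\pi(Q_\Sigma)=Q_{\pa\Sigma}=0$, which is automatic because $\pi(A)=\pi(A^\dag)=0$, so that the assignment $S_{\pa\Sigma}=0$, $Q_{\pa\Sigma}=0$ is compatible with the computed restriction to $\pa\Sigma$.
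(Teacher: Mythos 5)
Your proposal is correct and is essentially the paper's own argument: the paper's proof consists of the single sentence ``The proof is a straightforward computation,'' and the computation it has in mind is exactly the one you lay out, mirroring the earlier bulk verification for electrodynamics (contract $Q_\Sigma$ into $\omega_\Sigma$, vary $S_\Sigma=\int_\Sigma c\wedge dB$, integrate by parts so that the Stokes boundary term reproduces $\pi^*(\alpha_{\pa\Sigma})$ with the factor $(-1)^{n-1}$ dictated by $\dim\Sigma=n-1$, and match the remaining bulk terms). Your consistency check that $\delta\pi(Q_\Sigma)=0=Q_{\pa\Sigma}$, forced by $\pi(A)=\pi(A^\dag)=0$, is likewise in accord with the data the paper assigns to $\pa\Sigma$.
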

The proof is a straightforward computation.

It is clear that
\[
\EL_{\pa \Sigma}=\F_{\pa \Sigma},
\]
The evolution relation $\cL_{\Sigma}=\pi(\EL_{\Sigma})$
\[
\cL_{\Sigma}=i^*(\Omega^{n-2}_{closed}(\Sigma))\oplus i^*(\Omega^0_{closed}(\Sigma))[1]
\]
When there is only one connected component of the
boundary
\[
\cL_{\Sigma}=\Omega^{n-2}_{exact}(\pa \Sigma)\oplus
\Omega^0_{closed}(\pa \Sigma)[1]
\]

Because $Q_{\pa \Sigma}=0$ the reduction is trivial: the reduced
structures are the same as non-reduced ones. The reduced fiber over
$l=(B,c)$ is
\[
\EL(\Sigma, [l])/Q=\Omega^1(\Sigma)/\{exact\}\oplus \Omega^{n-2}_{closed}(\Sigma)\oplus H^0(\Sigma, \pa \Sigma)[1]\oplus H^{n-1}(\Sigma)[-1]
\]
This space is infinite dimensional for $n>2$ and it is
finite dimensional when $n=2$.

\begin{remark} The reduced space $\cL_{\Sigma}/Q$ and fibers of
$\pi: \EL_{\Sigma}/Q\to \cL_{\Sigma}/Q$ are infinite dimensional when
$n>2$. When $n=2$ they are finite dimensional. This corresponds to
two dimensional electrodynamics which is an almost topological field theory \cite{W-QCD}.
 \end{remark}

\subsection{Yang-Mills theory}\label{bv-ymills}

As in the classical Euclidean electrodynamics spacetime manifolds in the Yang-Mills theory are oriented compact smooth, possibly with boundary (and with corners),  Riemannian manifolds. Let $\g$ be the Lie algebra of a  finite dimensional simply connected Lie group $G$ with $\g$-invariant scalar product.
To simplify notations we assume that $\g$ is a matrix algebra and
that the scalar product is given by the trace: $<a,b>=tr(ab)$.

\subsubsection{The non-reduced theory}
In the first order formulation of Yang-Mills theory fields are
connections $A$ in a principal $G$-bundle $P$ over $N$ and
$(n-2)$-forms $B$ with coefficients in the associated adjoint bundle.
For simplicity, we assume that the principal bundle is trivial and consider connections as $1$-forms with coefficients in $\g$ and $B$ fields as $(n-2)$-forms with coefficients in $\g$.  The ghost fields $c$
are $0$-forms with coefficients in $\g$. The BV extension includes anti-fields $A^\dag, B^\dag, c^\dag$. The total space of BV extended Yang-Mills theory is\footnote{Here, as in the case of classical
electrodynamics we only discuss the minimal BV extension. When $n=4$ the BV extension of Yang-Mills theory can also be presented in a different way using the decomposition of $2$-forms into self-dual and anti-self-dual parts, see \cite{CCFMRTZ} and \cite{Co}. 
}
\begin{eqnarray*}
\FF_N &=& \underbrace{\g\otimes \Omega^1(N)}_{A}\oplus \underbrace{\g\otimes \Omega^{n-2}(N)}_{B}\oplus \underbrace{\g\otimes \Omega^0(N)[1]}_{c}\oplus \\
& & \oplus\underbrace{\g\otimes
\Omega^{n-1}(N)[-1]}_{A^\dag}\oplus \underbrace{\g\otimes \Omega^2(N)[-1]}_{B^\dag}\oplus \underbrace{\g\otimes \Omega^{n}(N)[-2]}_{c^\dag}
\end{eqnarray*}
This is a graded infinite dimensional vector space with the
symplectic form
\begin{eqnarray*}
\omega_N &=& \int_N tr \left(\delta A\wedge \delta A^\dag + \delta B\wedge \delta B^\dag + \delta c\wedge \delta c^\dag\right)
\end{eqnarray*}
The action functional is
\begin{eqnarray*}
S_N &=& \int_N tr\left( B\wedge F_A + \frac{1}{2} B\wedge *B + A^\dag\wedge d_A c + B^\dag\wedge [B,c]+ \frac{1}{2}c^\dag\wedge [c,c]\right)
\end{eqnarray*}
where $F_A=dA+\frac{1}{2}[A, A]$, and the cohomological vector field is
\begin{eqnarray*}
Q_N &=& \int_N tr\left( d_A c\wedge\frac{\delta}{\delta A}  + [B,c]\wedge \frac{\delta}{\delta B}  + \frac{1}{2}[c,c]\wedge\frac{\delta}{\delta c}  +  \left(d_A B + [A^\dag,c]\right)\wedge \frac{\delta}{\delta A^\dag}  +\right.\\
&& \left.+ \left(F_A + *B+ [B^\dag,c]\right)\wedge \frac{\delta}{\delta B^\dag}  +  \left(d_A A^\dag + [B,B^\dag] + [c,c^\dag]\right)\wedge \frac{\delta}{\delta c^\dag} \right)
\end{eqnarray*}

\textbf{The boundary structure.} We will denote the pullback to the boundary of forms $A, B, A^\dag, c$ by the same letters. The space of boundary fields
is the quotient space of the pullback of $\cF_N$ to the boundary
over the kernel of the form $\delta\tilde{\alpha}_{\pa N}$, as
it is explained in section \ref{bvbfv}:
\begin{eqnarray*}
\FF_{\dd N} &=& \underbrace{\g\otimes\Omega^1(\dd N)[1]}_{A,\; \gh=0} \oplus \underbrace{\g\otimes\Omega^{n-2}(\dd N)[n-2]}_{B,\; \gh=0} \oplus \\
&& \oplus\underbrace{\g\otimes\Omega^0(\dd N)[1]}_{c,\; \gh=1} \oplus \underbrace{\g\otimes\Omega^{n-1}(\dd N)[n-2]}_{A^\dag,\; \gh=-1}
\end{eqnarray*}
The structure of an exact symplectic manifold on $\cF_{\pa N}$ is
given by
\begin{eqnarray*}
\alpha_{\dd N} &=&  \int_{\dd N} tr\left(B\wedge \delta A + A^\dag\wedge \delta c\right), \\
\omega_{\dd N} &=&  \int_{\dd N} tr\left(\delta B\wedge \delta A + \delta A^\dag\wedge \delta c\right)
\end{eqnarray*}
The vector field $Q_{\pa N}$ and the action $S_{\pa N}$ for the
boundary BFV theory are
\begin{eqnarray*}
Q_{\dd N} &=& \int_{\dd N} tr\left( d_{A}c\wedge \frac{\delta}{\delta A}  +  [B,c]\wedge\frac{\delta}{\delta B}   +\right.\\
& & \left.+ (d_{A}B+[A^\dag,c])\wedge\frac{\delta}{\delta A^\dag} +  \frac{1}{2}[c,c]\wedge\frac{\delta}{\delta c}\right),\\
S_{\dd N} &=& \int_{\dd N} tr\left( B \wedge d_{A} c + \frac{1}{2}A^\dag\wedge[c,c]\right)
\end{eqnarray*}

\textbf{The codimension $2$ structure.} Let $\Sigma$ be a stratum of codimension 2. The BV-BFV theory on $N$ and on $\pa N$ induce
the following data associated on $\Sigma$ (see section \ref{bvbfv}. The space of fields:

\[
\FF_\Sigma = \underbrace{\g\otimes\Omega^{n-2}(\Sigma)[n-2]}_{B,\; \gh=0} \oplus \underbrace{\g\otimes\Omega^{0}(\Sigma)[1]}_{c,\; \gh=1}, \]
Here we denoted the pullback of $B$ and of $c$ to $\Sigma$ by the same letters. The rest of the BFV data, the exact symplectic form,
the vector field $Q$ and the action $S$ (which can be obtained from $Q$ by the Roytenberg's construction) are:
\begin{eqnarray*}
\alpha_\Sigma &=& \int_\Sigma tr (B\wedge \delta c),\\
\omega_\Sigma &=& \int_\Sigma tr (\delta B\wedge \delta c), \\
Q_\Sigma &=& \int_\Sigma tr\left( [B,c]\wedge \frac{\delta}{\delta B} + \frac{1}{2}[c,c]\wedge\frac{\delta}{\delta c}\right), \\
S_\Sigma &=& \int_\Sigma tr \left(\frac{1}{2} B\wedge [c,c]\right)
\end{eqnarray*}


\subsubsection{The non-reduced $\gh=0$ part}
\textbf{\\The bulk.} The fields $A$ and it Hamiltonian counterpart remain,
so the space of fields is
\[
F_N = \underbrace{\g\otimes\Omega^1(\dd N)}_{A} \oplus \underbrace{\g\otimes \Omega^{n-2}(\dd N)}_{B}
\]
The classical action is
\[
S^{cl}_N = \int_N tr\left(B\wedge F_A + \frac{1}{2} B\wedge *B\right)
\]
Its critical points are solutions to  Euler-Lagrange equations
\[
d_A B = 0, \ \ F_A+*B=0
\]
or, equivalently
\[
d_A * F_A = 0, \ \  B=-*F_A
\]

Infinitesimal gauge transformations are infinitesimal
automorphisms of the trivial $G$-bundle over $N$, i.e.
elements of the Lie algebra $\Map(N, \g)$, They act as
\[
A\mapsto A+d_A\alpha, \ \  B\mapsto B+[B,\alpha]
\]
where $\alpha\in \Omega^0(N,\g)$.

\textbf{The boundary.} Boundary fields in the Yang-Mills theory are pullbacks of fields $A$ and $B$. We will denote them by the same letter:
\[
F_{\dd N} =\underbrace{\g\otimes \Omega^1(\dd N)}_{A} \oplus \underbrace{\g\otimes \Omega^{n-2}(\dd N)}_B
\]
The exact symplectic structure on $\cF_{\pa N}$ is given by
\begin{eqnarray*}
\alpha_{\dd N}^{cl} &=& \int_{\dd N} tr(B \wedge \delta A) \\
\omega_{\dd N}^{cl} &=& \int_{\dd N} tr ( \delta B \wedge \delta A) =\delta \alpha_{\dd N}^{cl}
\end{eqnarray*}
Euler-Lagrange subspace, $gh=0$ part of $\EL_{\pa N}$,
is defined by the constraint
\[
d_{A} B=0
\]
Boundary gauge transformations are:
\[
A\mapsto A+d_{A}\alpha, \ \ B\mapsto B+[B,\alpha]
\]
The action of the Lie algebra of gauge transformations is Hamiltonian with the moment map $\mu: F_{\dd N}  \longrightarrow \g\otimes\Omega^{n-1}(\dd N)$:
\[
(A,B)  \mapsto  d_{A}B
\]

\textbf{Codimension $2$ part.} The $\gh=0$ part of the codimension $2$ structure is
given by the pullback of $B$ to the boundary:
\[
F_\Sigma = \g\otimes \Omega^{n-2}(\Sigma)
\]
with no constraints and the gauge transformations given by
\[
B\mapsto B+[B,\alpha]
\]
where $\alpha\in \Omega^0(\Sigma, \g)$.

\subsubsection{The reduction in the $\gh=0$ part}
The space $EL_N$ is naturally isomorphic to the space of solutions to the Yang-Mills equation $d_A*F(A)=0$.

The classical moduli space is naturally isomorphic to
the space of gauge classes of solutions to the YM equation:
\begin{multline*}EL_N/G_N= \\
=\{(A,B)\; |\; A\in \g\otimes \Omega^1(N),\; d_A *F_A=0,\; B=-*F_A\}/\{(A,B)\sim (A+d_A\alpha, B+[B,\alpha])\} \end{multline*}

The reduced phase space is naturally isomorphic to the cotangent bundle of the space of gauge classes
of all connections:
\begin{multline*}
EL_{\dd N}/G_{\dd N}=\\ \{\{(A,B)\;|\; A\in \g\otimes \Omega^1(\dd N),\; B\in \g\otimes \Omega^{n-2}(\dd N),\; d_{A}B=0\}\}/\{(A,B)\sim (A+d_{A}\alpha, B+[B,\alpha])\}
\end{multline*}

The $gh=0$ part of the $\EL$-moduli space for a codimension 2 stratum
is simply
\[
EL_{\Sigma}/G_\Sigma \cong (\g\otimes \Omega^{n-2}(\Sigma))/G
\]
where $\Sigma$ is $(n-2)$-dimensional and the quotient is by the adjoint action of $G$.

The restriction map
\[
\pi_*: EL_N/G_N \ra EL_{\dd N}/G_{\dd N}
\]
is the pullback of the connection $A$ and of the form $B$
to the boundary. The fibers are finite-dimensional. One way
to see this is by homological perturbation theory around electrodynamics.

The reduced phase space $EL_{\dd N}/G_{\dd N}$ is infinite-dimensional for $n\geq 3$. The case $n=2$ is special as for $n=2$ we have
\[
EL_{\dd N}/G_{\dd N}\cong T^* \MM^G_{\dd N}
\]
where $\MM^G_{\pa N}$ denotes the moduli space of flat $G$-connections on $\pa N$ which is
finite-dimensional.

Under the regularity assumption, the image of $\pi_*$ is Lagrangian.
Smooth loci of $\EL$-moduli spaces for the bulk and the boundary are described by diagram (\ref{relation between BV mod spaces and gh=0 part}) with horizontal arrows being equalities (we assume the group $G$ to be simple for this to hold).

\subsection{Scalar field}\label{sc-field}
\subsubsection{The non-reduced picture}
For the massless free $n$-dimensional scalar field the bulk BV data is:
\begin{eqnarray*}
\F_N &=& \underbrace{\Omega^0(N)}_{\phi} \oplus \underbrace{\Omega^{n-1}(N)}_{p}
\oplus \underbrace{\Omega^n(N)[-1]}_{\phi^\dag} \oplus \underbrace{\Omega^1(N)[-1]}_{p^\dag} \\
\omega_N &=& \int_N (\delta \phi \wedge \delta \phi^\dag + \delta p \wedge \delta p^\dag) \\
S_N &=& \int_N (p\wedge d\phi +\frac{1}{2} p\wedge *p )\\
Q_N &=& \int_N \left(dp \wedge \frac{\delta}{\delta \phi^\dag} + (d\phi+ *p)\wedge \frac{\delta}{\delta p^\dag}\right)
\end{eqnarray*}
The boundary BFV data in this theory is:
\begin{eqnarray*}
\F_{\pa N} &=& \underbrace{\Omega^0(\pa N)}_{\phi} \oplus \underbrace{\Omega^{n-1}(\pa N)}_{p} \\
\alpha_{\pa N} &=& \int_{\pa N} p\wedge \delta\phi \\
\omega_{\pa N} &=& \delta\alpha_{\pa N}= \int_{\pa N} \delta p\wedge \delta \phi \\
Q_{\pa N} &=& 0 \\
S_{\pa N} &=& 0
\end{eqnarray*}
It is easy to derive them from the BV data in the bulk as it is explained in section \ref{bvbfv}. It is clear that
the theory has length one.

For the Euler-Lagrange space we have
\be \EL_N=\{(\phi,p,\phi^\dag,p^\dag)\;| \; dp=0,\; p=-*d\phi\} \ee
At the boundary $\EL_{\pa N}=\cF_{\pa N}$.

\subsubsection{The reduction}


Similarly to the case of electrodynamics we have,
\[
\ker(\hat{Q})/Im(\hat{Q})\simeq \Omega^0_{Harm}(N) \oplus H^n(N)[-1]
\]
where $\Omega^0_{Harm}(N)$ are harmonic zero-forms on $N$;
the first summand correspond to the field $\phi$ and the second to $\phi^\dag$.
Because of the linearity, the moduli space $\EL_N/Q_N$ is given by the same formula.

\begin{remark} If $\pa N=\emptyset$ then $\EL_N/Q_N \simeq H^0(N)\oplus H^n(N)[-1]$ is a finite-dimensional odd-symplectic space.
\end{remark}

Because $Q_{\pa N}=0$, the reduced boundary Euler-Lagrange space is the space of boundary fields:
\be  \EL_{\pa N}/Q_{\pa N} =\F_{\pa N} = \Omega^0(\pa N)\oplus \Omega^{n-1}(\pa N)\ee
The reduced evolution relation is the same as the non-reduced one:
\begin{multline} \underline{\LL}_N= \\
=\{(\phi, p)\in \F_{\pa N}\; | \; p=-(*d \tilde{\phi})|_{\pa N} \;\mbox{ for }\; \tilde{\phi}\;\mbox{a harmonic continuation of}\; \phi \;\mbox{into}\; N\}\simeq\\
\simeq \Omega^0(\pa N) \end{multline}
Because the harmonic continuation $\tilde{\phi}$ exists and is unique, the subspace $\underline{\LL}_N$ is Lagrangian.

The restriction (the pullback to the boundary) mapping is surjective over $\underline{\LL}_N$
\be
\begin{CD}
\EL_N/Q_N  \\
@VVV \\
\underline{\cL}_{ N}\subset \EL_{\pa N}/Q_{\pa N}
\end{CD}
\ee
Its fibers are  finite-dimensional odd-symplectic spaces
canonically isomorphic to
\be  H^0(N,\pa N) \oplus H^n(N)[-1] \simeq T^*[-1]\mathbb{R}^k \ee
Where $k\geq 0$ is the number of closed connected components of $N$.

\subsubsection{Massive scalar field} The BV extension of the massive scalar field is similar to the massless scalar filed.
The space of fields and the symplectic structure are the same.
The action and the vector field $Q_N$ are
\begin{eqnarray}
S_N &=& \int_N \left(p\wedge d\phi + \frac{1}{2} p\wedge *p - \frac{m^2}{2} \phi\wedge *\phi\right) \\
Q_N &=& \int_N \left( (dp - m^2 *\phi)\wedge \frac{\pa}{\pa \phi^\dag}+ (d\phi+ *p)\wedge \frac{\pa}{\pa p^\dag} \right)
\end{eqnarray}
The boundary data are unchanged. The reduced bulk Euler-Lagrange
space is
\be
\EL_N/Q_N \simeq  \Omega^0_{Klein-Gordon}(N)
\ee
Here $\Omega_{Klein-Gordon}^0(N)$ is the space of functions satisfying the equation $\Delta\phi-m^2\phi=0$.
It fibers over $\underline{\LL}\simeq \Omega^0(\pa N)$ with zero fibers. That is, when $m\neq 0$ the fibers of the $\EL$-moduli spaces over given boundary values are trivial.

It is also easy to construct BV extensions of scalar fields interacting with the Yang-Mills theory, and to generalize it to Dirac and Majorana fermions.

\subsection{Abelian $BF$ theory}\label{BFBV}

\subsubsection{The BV-BFV structure of $BF$ theory}
Here we will focus on the BV-BFV extension of the
classical abelian $BF$ gauge theory. Let us start with the
description of the corresponding BV theory.

The space of fields in this theory is:
\[
\cF_N=\Om^\bullet(N)[1]\oplus \Om^\bullet(N)[n-2]
\]
For coordinate fields we will write $\cA\in \Om^\bullet(N)[1]$
and $\cB\in \Om^\bullet(N)[n-2]$. Here $\Omega^\bullet(N)$ is regarded as a
$\ZZ$-graded vector space with $\Omega^k(N)$ being its component of degree $-k$.

\begin{remark} The BV-$BF$ theory is an example of the AKSZ theory ( cf. section \ref{sec-aksz}) with the target manifold $T^*[1]\RR$ and with $\Theta=0$.
\end{remark}
\begin{remark} Dimensions $n=2,3$ are special. When $n=2$ the
theory is equivalent to the topological sector of electrodynamics.
When $n=3$ it is equivalent to two copies of abelian Chern-Simons theories.
\end{remark}

The BV symplectic form
\[
\omega_N=
\int_N \delta\cA\wedge \delta \cB
\]
The total degree (ghost number) is $gh(\omega_N)=-1$.

The action functional is:
\[
S_N=\int_N \cB\wedge d\cA
\]

The vector field $Q_N$ is
\[
Q_N=\int_N\left(d\cA\wedge\frac{\delta}{\delta \cA}+d\cB\wedge \frac{\delta}{\delta \cB}\right)
\]
On ``coordinate functions" it acts as
\begin{equation}\label{bfbvgauge}
Q_N\cA= d\cA , \ \ Q_N\cB= d\cB
\end{equation}

The space $\EL_N$ of solutions to Euler-Lagrange equations, i.e. the space of zeroes of the vector
field $Q_N$, is the space of closed forms:
\[
\EL_N=\Omega^\bullet_{closed}(N)[1]\oplus \Omega^\bullet_{closed}(N)[n-2]
\]
Clearly this space is coisotropic in $\F_N$.

The space of boundary fields is:
\[
\cF_{\pa N}=\Om^\bullet (\pa N)[1]\oplus \Om^\bullet(\pa N)[n-2]
\]
and the natural restriction map $\pi:\cF_{ N}\to \cF_{\pa N}$
is the pullback of forms to the boundary.

The space of boundary fields is a symplectic space
with the exact symplectic form of degree $0$:
\[
\omega_{\pa N}=\delta\alpha_{\pa N}, \ \ \alpha_{\pa N}=\int_{\pa N}  \cA\wedge \delta\cB
\]
The boundary cohomological vector field $Q_{\pa N}=\delta\pi Q_N$ is
\[
Q_{\pa N}\cA=d\cA, \ \ Q_{\pa N}\cB=d\cB
\]

The Euler-Lagrange subspace $\EL_{\pa N}\subset \cF_{\pa N}$ is the subspace of closed forms on $\pa N$.

\begin{Proposition} The data described above is an example
of the BV-BFV theory.
\end{Proposition}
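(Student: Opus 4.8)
The plan is to verify, in turn, each clause in the definition of a BV-BFV theory from Section~\ref{bvbfv}: that the restriction map $\pi$ is a surjective submersion, and that the data satisfy $Q_N^2=0$, $\delta\pi(Q_N)=Q_{\pa N}$, $L_{Q_{\pa N}}\omega_{\pa N}=0$, together with the modified master equation~(\ref{cme}). Most of these are immediate. The map $\pi$ is the pullback of differential forms to $\pa N$; it is a surjective submersion since every form on $\pa N$ extends to $N$ and pullback is linear. Since $Q_N$ acts on the coordinate fields by $Q_N\cA=d\cA$ and $Q_N\cB=d\cB$, the identity $Q_N^2=0$ follows from $d^2=0$. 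As pullback of forms intertwines the de Rham differential on $N$ with the one on $\pa N$, we have $\delta\pi(Q_N)=Q_{\pa N}$; indeed $Q_{\pa N}$ was defined as $\delta\pi\,Q_N$.

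The one substantive point is the master equation~(\ref{cme}),
\[
\iota_{Q_N}\omega_N=(-1)^n\delta S_N+\pi^*(\alpha_{\pa N}).
\]
First I would contract $Q_N$ with $\omega_N=\int_N\delta\cA\wedge\delta\cB$, using $\iota_{Q_N}\delta\cA=d\cA$ and $\iota_{Q_N}\delta\cB=d\cB$, obtaining an integral over $N$ of $d\cA,\,d\cB$ wedged against $\delta\cB,\,\delta\cA$. Then I would compute $\delta S_N=\int_N(\delta\cB\wedge d\cA+\cB\wedge d\,\delta\cA)$ and integrate the second summand by parts via Stokes' theorem, splitting it into a bulk integral over $N$ and a boundary integral over $\pa N$. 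The bulk terms of $(-1)^n\delta S_N$ should reproduce $\iota_{Q_N}\omega_N$ exactly (this is the statement that $Q_N$ is, up to the boundary contribution, the Hamiltonian vector field of $S_N$), while the leftover boundary integral is identified with $\pi^*\alpha_{\pa N}$ for $\alpha_{\pa N}=\int_{\pa N}\cA\wedge\delta\cB$.

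Once~(\ref{cme}) holds, the remaining condition $L_{Q_{\pa N}}\omega_{\pa N}=0$ is automatic: as shown in Section~\ref{bvbfv}, (\ref{cme}) implies~(\ref{lo}), and applying $L_{Q_N}$ once more together with $Q_N^2=0$ (so $L_{Q_N}^2=0$) gives $\pi^*(L_{Q_{\pa N}}\omega_{\pa N})=0$, whence $L_{Q_{\pa N}}\omega_{\pa N}=0$ since $\pi$ is a submersion; the BFV axiom~(\ref{q2z}) then follows as well. Alternatively, once the AKSZ construction of Section~\ref{sec-aksz} is available, the statement is immediate, since abelian $BF$ is the AKSZ theory with target $T^*[1]\RR$ and $\Theta=0$; but the direct check above is self-contained.

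The main obstacle is not conceptual but lies in the careful bookkeeping of signs, which is genuinely delicate because $\cA$ and $\cB$ are \emph{inhomogeneous} forms carrying both a de Rham degree and a ghost-number shift ($[1]$ and $[n-2]$ respectively). Every wedge product, the contraction $\iota_{Q_N}$, and each use of Stokes' theorem generate Koszul signs depending on the total degree, and these must be tracked to confirm that the bulk parts cancel on the nose. One must also fix the convention for the boundary primitive: integrating $S_N=\int_N\cB\wedge d\cA$ by parts naturally produces a multiple of $\int_{\pa N}\cB\wedge\delta\cA$, which agrees with the stated $\alpha_{\pa N}=\int_{\pa N}\cA\wedge\delta\cB$ only up to the $\delta$-exact term $\delta\int_{\pa N}\cA\wedge\cB$ (equivalently, up to a boundary term added to $S_N$); reconciling this is exactly the freedom in the choice of primitive $\alpha_{\pa N}$, i.e. the Liouville form on the AKSZ target $T^*[1]\RR$, and pinning it down is the delicate part of the verification.
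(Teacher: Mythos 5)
Your proposal is correct and follows essentially the same route as the paper, whose entire proof is the ``straightforward computation which proves identities from section \ref{bvbfv}'' that you spell out: the easy axioms ($\pi$ a surjective submersion, $Q_N^2=0$, $\delta\pi(Q_N)=Q_{\pa N}$) plus the master equation (\ref{cme}), obtained by contracting $Q_N$ into $\omega_N$ and comparing with $\delta S_N$ after an application of Stokes' theorem. Your closing caveat is well taken and handled correctly: integration by parts produces a multiple of $\int_{\pa N}\cB\wedge\delta\cA$, which differs from the stated $\alpha_{\pa N}=\int_{\pa N}\cA\wedge\delta\cB$ by the exact term $\delta\int_{\pa N}\cA\wedge\cB$ (note the paper itself uses the other ordering, $\int_{\pa N}\tr(\cB\wedge\delta\cA)$, in the non-abelian $BF$ section), and absorbing this into the choice of primitive --- equivalently a boundary term added to $S_N$ --- is exactly the right resolution.
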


The proof is a straightforward computation which proves
identities from section \ref{bvbfv}.

The evolution relation $\cL_N\subset \EL_{\pa N}$ consists of closed
forms on $\pa N$ which extend to closed forms on $N$.

\begin{Proposition} \label{LagrLN}The subspace
$\cL_N\subset \EL_{\pa N}\subset\cF_{\pa N}$
is Lagrangian.
\end{Proposition}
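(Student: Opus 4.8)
The plan is to show that $\cL_N$ equals its own symplectic orthogonal $\cL_N^\perp$ inside $\cF_{\pa N}$. Since $\EL_N=\Omega^\bullet_{closed}(N)[1]\oplus\Omega^\bullet_{closed}(N)[n-2]$ is a direct sum and $\pi=i^*\oplus i^*$ acts componentwise (with $i:\pa N\hookrightarrow N$ the inclusion), we have $\cL_N=L\oplus L$, where $L:=i^*\big(\Omega^\bullet_{closed}(N)\big)\subseteq\Omega^\bullet_{closed}(\pa N)$, the two copies carrying the shifts $[1]$ and $[n-2]$ respectively. The form $\omega_{\pa N}=\int_{\pa N}\delta\cA\wedge\delta\cB$ pairs the $\cA$-slot of degree $k$ with the $\cB$-slot of degree $n-1-k$, so it is built from the wedge-integration pairing $\langle\alpha,\beta\rangle:=\int_{\pa N}\alpha\wedge\beta$ on $\Omega^\bullet(\pa N)$.

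Isotropy of $\cL_N$ is already established in Proposition \ref{sub-loc-isot}; alternatively it follows directly from Stokes, since for $\alpha=i^*\tilde\alpha$, $\beta=i^*\tilde\beta$ with $\tilde\alpha,\tilde\beta$ closed on $N$ one has $\int_{\pa N}\alpha\wedge\beta=\int_N d(\tilde\alpha\wedge\tilde\beta)=0$. For the reverse inclusion $\cL_N^\perp\subseteq\cL_N$ I would use that $\omega_{\pa N}$ couples only the $\cA$- and $\cB$-slots and that $(\alpha,0),(0,\beta)\in\cL_N$ for $\alpha,\beta\in L$: hence $(\alpha_0,\beta_0)$ is $\omega_{\pa N}$-orthogonal to $\cL_N$ iff both $\alpha_0$ and $\beta_0$ lie in the annihilator $L^\circ:=\{\eta:\langle\alpha,\eta\rangle=0\ \forall\alpha\in L\}$. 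Thus the whole statement reduces to the single identity $L^\circ=L$ for the pairing $\langle\cdot,\cdot\rangle$.

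To analyze $L^\circ$, note that $L$ contains every exact form (the zero class lies in $\im i^*$), so integration by parts gives $L^\circ\subseteq(\Omega^\bullet_{exact})^\circ=\Omega^\bullet_{closed}(\pa N)$; in particular every element of $L^\circ$ is closed and the defining condition depends only on cohomology classes. A routine extension argument (correct a representative by $d\tilde\gamma$, where $\tilde\gamma$ extends a primitive across a collar of $\pa N$ via a bump function) identifies $L$ with the \emph{full} preimage in $\Omega^\bullet_{closed}(\pa N)$ of the subspace $\bar L:=\im\big(i^*:H^\bullet(N)\to H^\bullet(\pa N)\big)$. Consequently $L^\circ$ is the preimage of $\bar L^{\perp_{\mathrm{PD}}}$, the orthogonal of $\bar L$ under the Poincar\'e pairing on the closed oriented $(n-1)$-manifold $\pa N$, and the desired equality $L=L^\circ$ is equivalent to $\bar L$ being Lagrangian for that pairing.

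Finally I would prove that $\bar L=\im(i^*)$ is Lagrangian using the long exact sequence of the pair $(N,\pa N)$ together with Poincar\'e--Lefschetz duality. By exactness, $\bar L=\ker\big(\delta:H^\bullet(\pa N)\to H^{\bullet+1}(N,\pa N)\big)$. The standard adjunction in the long exact sequence says that $\delta$ is, up to sign, the transpose of $i^*$ with respect to the nondegenerate Lefschetz pairing between $H^\bullet(N)$ and $H^\bullet(N,\pa N)$ and the Poincar\'e pairing on $H^\bullet(\pa N)$; hence $\bar L^{\perp_{\mathrm{PD}}}=\ker\delta=\bar L$. This yields $L=L^\circ$ and therefore $\cL_N=\cL_N^\perp$, proving the proposition. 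The main obstacle is precisely this last step, namely correctly matching $\delta$ with the adjoint of $i^*$ and invoking the right duality theorem, together with the care needed to pass from the form-level annihilator $L^\circ$ to a purely cohomological statement (the extension lemma and the fact that $L^\circ$ consists of closed forms).
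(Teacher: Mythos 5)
Your proof is correct, and it takes a genuinely different route from the paper's. Both arguments ultimately rest on the same cohomological core --- that $\bar L=\im\bigl(i^*:H^\bullet(N)\to H^\bullet(\pa N)\bigr)$ is Lagrangian for the Poincar\'e pairing --- but the key lemma differs. The paper proves isotropy and coisotropy separately: isotropy of $\cL_N$ is quoted from the general Proposition \ref{sub-loc-isot}, while coisotropy of $\im(\iota^*)$ is deduced from Lemma \ref{lnlagr} (the kernel of $\iota_*:H_\bullet(\pa N)\to H_\bullet(N)$ is isotropic for the intersection pairing), whose proof is geometric --- two relative cycles in general position intersect in a $1$-chain cobounding $u\cap v$ --- combined with the duality statement that the annihilator of $\ker(\iota_*)$ is $\im(\iota^*)$. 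You instead obtain the equality $\bar L=\bar L^{\perp}$ in one stroke: $\bar L=\ker\delta$ by exactness of the long exact sequence of the pair $(N,\pa N)$, and $\delta$ is adjoint to $i^*$ under the Lefschetz and Poincar\'e pairings (a Stokes computation), so $\bar L^{\perp}=\ker\delta=\bar L$; in particular your separate Stokes argument for isotropy is subsumed by this equality. The two proofs also differ in how they pass from forms to cohomology: the paper invokes Proposition \ref{lagr} (the preimage of a Lagrangian under presymplectic reduction is Lagrangian provided it contains the kernel of the presymplectic form, here the exact forms), whereas you verify by hand, via the collar-extension argument, that $L$ is the full preimage of $\bar L$ and $L^\circ$ that of $\bar L^{\perp}$. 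The paper's route buys economy, reusing general machinery already in place so that the only new input is a short homological lemma; your route buys self-containedness within de Rham theory, replacing the general-position argument by the standard adjunction between $\delta$ and $i^*$ and delivering both inclusions simultaneously, at the cost of redoing the reduction step explicitly.
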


To prove this Proposition we need the following lemma.

\begin{lemma}\label{lnlagr} Denote  by $\iota_*: H_\bullet(\pa N)\to H_\bullet(N)$ the pushforward by $\iota: \pa N \hookrightarrow N$  in homology. The subspace $ker(\iota_*)$ is isotropic in
$H_\bullet(\pa N)$ with respect to
the bilinear form given by the intersection pairing.
\end{lemma}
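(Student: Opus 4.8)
The plan is to identify the intersection pairing on $\pa N$ with Poincar\'e duality on the closed oriented $(n-1)$-manifold $\pa N$, and then to exploit the adjointness between the connecting homomorphism of the pair $(N,\pa N)$ and the inclusion-induced map $\iota_*$. First I would recall that, by the long exact sequence of the pair in homology,
\[
\cdots \to H_{\bullet+1}(N,\pa N) \xrightarrow{\pa} H_\bullet(\pa N)\xrightarrow{\iota_*} H_\bullet(N)\to\cdots,
\]
exactness gives $\ker(\iota_*)=\operatorname{im}(\pa)$, where $\pa\colon H_{k+1}(N,\pa N)\to H_k(\pa N)$ is the connecting map. Hence it suffices to prove that $\operatorname{im}(\pa)$ is isotropic for the intersection form on $\pa N$.

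The key step is the following adjointness relation: for $\alpha\in H_{k+1}(N,\pa N)$ and $y\in H_{n-1-k}(\pa N)$,
\[
(\pa\alpha)\cdot_{\pa N} y = \pm\,\alpha\cdot_N (\iota_* y),
\]
where $\cdot_{\pa N}$ is the intersection pairing on $\pa N$ and $\cdot_N\colon H_{k+1}(N,\pa N)\times H_{n-1-k}(N)\to\RR$ is the intersection pairing coming from Poincar\'e--Lefschetz duality (the degrees match, since $(k+1)+(n-1-k)=n=\dim N$). I would establish this either geometrically---representing $\alpha$ by a relative cycle $A$ with $\pa A\subset \pa N$, representing $y$ by a cycle $Y\subset\pa N$, pushing $Y$ slightly into the interior to represent $\iota_* y$, and matching the intersection points of $\pa A$ with $Y$ in $\pa N$ to those of $A$ with the pushed-off copy in $N$---or algebraically, by invoking the compatibility of the homology long exact sequence of $(N,\pa N)$ with its cohomology counterpart under Poincar\'e--Lefschetz duality, in which the connecting map $\pa$ is dual to the restriction $\iota^*$.

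Granting the adjointness, the conclusion is immediate: given $a,b\in\ker(\iota_*)$, write $a=\pa\alpha$ using the first step; then
\[
a\cdot_{\pa N} b = (\pa\alpha)\cdot_{\pa N} b = \pm\,\alpha\cdot_N(\iota_* b)=0,
\]
since $b\in\ker(\iota_*)$ forces $\iota_* b=0$. Thus $\ker(\iota_*)$ is isotropic, as claimed. The main obstacle is the adjointness relation together with the bookkeeping of signs: the isotropy is then a purely formal consequence, but getting the duality ladder to commute with the correct signs (or carrying out the push-off argument cleanly) is the only real content of the lemma.
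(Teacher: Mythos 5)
Your proof is correct, but it takes a different route from the paper's. The paper argues symmetrically and bare-handedly: given $a,b\in\ker(\iota_*)$, it represents them by cycles $u=\pa U$, $v=\pa V$ bounding chains $U,V$ in $N$ (this is where $\iota_*a=\iota_*b=0$ enters, for \emph{both} classes), puts $U$ and $V$ in general position so that $U\cap V$ is a $1$-dimensional chain, and observes that this chain is a cobordism between $u\cap v$ and the empty set, so the signed count of points in $u\cap v$ vanishes. You instead use the long exact sequence of the pair to write $\ker(\iota_*)=\operatorname{im}(\pa)$, and then the adjointness $(\pa\alpha)\cdot_{\pa N} y=\pm\,\alpha\cdot_N(\iota_* y)$ of the connecting map and $\iota_*$ with respect to the boundary intersection form and the Lefschetz pairing; only one class needs to bound, the other is killed algebraically by $\iota_*b=0$. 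Both arguments are sound. The paper's proof is more elementary and self-contained, at the cost of a general-position argument for chains; yours is the standard duality-ladder argument, which avoids transversality of chains and buys more than isotropy: the same adjointness shows that $\ker(\iota_*)$ and the Poincar\'e dual of $\operatorname{im}(\iota^*)$ are annihilators of each other, which is exactly the ingredient the paper has to add separately (in the proof of Proposition \ref{LagrLN}) to upgrade the lemma to the Lagrangian statement. The sign ambiguity you flag is immaterial here, since the conclusion is a vanishing statement.
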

\begin{proof} Let $U$ and $V$ be two cycles in $N$ relative to the boundary, with $dim(U)+dim(V)=dim(N)+1$,
and let $u=\pa U $, $v=\pa V$ be their boundaries in $\pa N$. By general position argument we can assume
that $U\cap V$ is a one dimensional chain. This chain gives a cobordism
between $u\cap v$ and the empty set. Hence the sum of coefficients in $u\cap v$, which is the
intersection pairing, vanishes.
\end{proof}

\begin{proof}[Proof of Proposition \ref{LagrLN}]

By the natural identification of $H_\bullet(\pa N)^*$
with $H^\bullet(\pa N)$, the annihilator of $\ker(\iota_*)$ is  $Im(\iota^*)$.
Therefore, by Lemma \ref{lnlagr},  the subspace $Im(\iota^*)$ is coisotropic in
$H^\bullet(\pa N)$.


Because $\cL_N\subset \EL_N$ is isotropic by Proposition \ref{sub-loc-isot}, the reduced space
$\underline{\cL}_N$ is also isotropic. Since we just proved that
it is coisotropic, we conclude that it is Lagrangian. By Proposition \ref{lagr} the preimage of a Lagrangian subspace
with respect to the symplectic reduction is Lagrangian if it contains
the kernel of the presymplectic form. In our case this kernel consists
of exact forms on $\pa N$, and exact forms are clearly in $\cL_N$.


\end{proof}

\begin{Proposition} \label{prop: ab BF is regular}
Abelian $BF$ theory is regular.
\end{Proposition}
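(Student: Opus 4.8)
The plan is to exploit the linearity of abelian $BF$ theory so as to reduce the regularity conditions of Definition \ref{def: regular} to Poincaré--Lefschetz duality for differential forms, which is supplied by the Hodge--Morrey decomposition \cite{CGMT}. Since $\cF_N$ is a vector space and $Q_N$ acts linearly, the linearization is $\hat{Q}_X=d\oplus d$, independent of the point $X\in\EL_N$. I would first record the four relevant subspaces of $T_X\cF_N=\Omega^\bullet(N)[1]\oplus\Omega^\bullet(N)[n-2]$. In each of the two summands one has $\ker(\hat{Q}_X)=\Omega^\bullet_{closed}(N)$ and $Im(\hat{Q}_X)=\Omega^\bullet_{exact}(N)$, while the vertical part $\hat{Q}_X^\ver$ is $d$ restricted to forms pulling back to zero on $\pa N$, so that $\ker(\hat{Q}_X^\ver)=\Omega^\bullet_{closed}(N,\pa N)$ (relative cocycles) and $Im(\hat{Q}_X^\ver)=\Omega^\bullet_{exact,D}(N)$ (differentials of forms vanishing on the boundary).

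The key observation is that the symplectic pairing $\omega_N=\int_N\delta\cA\wedge\delta\cB$ couples the two summands only through the wedge-and-integrate pairing $\int_N\alpha\wedge\beta$, which is metric independent and pairs complementary degrees. Consequently the symplectic orthogonal of a subspace $S_\cA\oplus S_\cB$ is $S_\cB^\circ\oplus S_\cA^\circ$, where $(\cdot)^\circ$ denotes the annihilator with respect to $\int_N\alpha\wedge\beta$. Under this reduction, assumption (\ref{assump bdry}) becomes the pair of form-level duality statements
\[
\big(\Omega^\bullet_{closed}(N)\big)^\circ=\Omega^\bullet_{exact,D}(N),\qquad \big(\Omega^\bullet_{closed}(N,\pa N)\big)^\circ=\Omega^\bullet_{exact}(N),
\]
while assumption (\ref{assump 1 for bdry}) on the closed manifold $\pa N$ becomes $\big(\Omega^\bullet_{closed}(\pa N)\big)^\circ=\Omega^\bullet_{exact}(\pa N)$.

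I would prove each identity in two inclusions. The easy inclusions (equivalently, isotropy) are immediate from Stokes' theorem: for a relatively exact form $d\gamma$ with $\gamma|_{\pa N}=0$ one has $\int_N d\gamma\wedge z=\pm\int_{\pa N}\gamma\wedge z=0$ for every closed $z$, and these inclusions are in any case already contained in Propositions \ref{paN-coisot} and \ref{sub-loc-isot}. The substantive reverse inclusions are where Hodge theory enters: pairing a candidate form $a$ against all \emph{exact} test forms forces, via Stokes, that $a$ be a relative cocycle, and pairing the resulting class against all of $H^\bullet(N)$ forces it to vanish by the non-degeneracy of the Lefschetz pairing $H^k(N,\pa N)\otimes H^{n-k}(N)\to\RR$; hence $a$ is relatively exact. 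The Hodge--Morrey--Friedrichs decomposition for manifolds with boundary \cite{CGMT} guarantees precisely this non-degeneracy (equivalently, the closedness of the ranges of $\hat{Q}_X$ and $\hat{Q}_X^\ver$). On the closed boundary $\pa N$ the analogous statement is ordinary Hodge/Poincaré duality.

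The main obstacle is this reverse inclusion, i.e. controlling the cokernel of $d$ on a manifold with boundary with the two boundary conditions running in tandem; this is exactly the content of the Hodge--Morrey--Friedrichs theorem, and the only genuine bookkeeping is to verify that the wedge-pairing annihilators $(\cdot)^\circ$ (rather than the $L^2$ inner product appearing in that theorem) are identified correctly through the Hodge star. Once these three duality identities are established, assumptions (\ref{assump bdry}) and (\ref{assump 1 for bdry}) hold at every $X\in\EL_N$ and every $l\in\EL_{\pa N}$, so the theory is regular by Definition \ref{def: regular}.
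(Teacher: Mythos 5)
Your proof is correct and takes essentially the same approach as the paper: you identify exactly the same six subspaces (kernels and images of $\hat{Q}_X$, $\hat{Q}_X^\ver$, $\hat{Q}_l$ as closed/exact forms with absolute or relative boundary conditions) and rest the duality on the Hodge--Morrey decomposition, which is precisely the paper's argument. The only difference is packaging: where the paper cites Hodge--Morrey directly for the exact symplectic complementarity of these pairs, you unfold that citation into the easy Stokes inclusion plus non-degeneracy of the Lefschetz pairing $H^k(N,\pa N)\otimes H^{n-k}(N)\to\RR$, which is a more explicit rendering of the same mechanism rather than a different route.
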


\begin{proof}
We have
\begin{eqnarray}
\label{ab BF ker Q}\ker (\hat{Q}_X) &=& \Omega^\bt_{closed}(N)[1] \oplus \Omega^\bt_{closed}(N)[n-2], \\
\label{ab BF im Q}Im (\hat{Q}_X) &=& \Omega^\bt_{exact}(N)[1] \oplus \Omega^\bt_{exact}(N)[n-2], \\
\label{ab BF ker Q^ver}\ker (\hat{Q}_X^\ver) &=& \Omega^\bt_{closed}(N,\pa N)[1] \oplus \Omega^\bt_{closed}(N,\pa N)[n-2], \\
\label{ab BF im Q^ver}Im (\hat{Q}_X^\ver) &=& d(\Omega^\bt(N,\pa N))[0] \oplus d(\Omega^\bt(N,\pa N))[n-3], \\
\label{ab BF ker Q bdry}\ker (\hat{Q}_l) &=& \Omega^\bt_{closed}(\pa N)[1]\oplus \Omega^\bt_{closed}(\pa N)[n-2], \\
\label{ab BF im Q bdry}Im (\hat{Q}_l) &=& \Omega^\bt_{exact}(\pa N)[1] \oplus \Omega^\bt_{exact}(\pa N)[n-1]
\end{eqnarray}
Due to Hodge-Morrey decomposition for forms on $N$, pairs of subspaces (\ref{ab BF ker Q}), (\ref{ab BF im Q^ver}) and (\ref{ab BF ker Q^ver}), (\ref{ab BF im Q}) are mutually orthogonal. Due to Hodge decomposition on $\pa N$, subspaces (\ref{ab BF ker Q bdry}), (\ref{ab BF im Q bdry}) are also mutually orthogonal. Thus the theory is regular.
\end{proof}

\subsubsection{The reduction of boundary structures}

From section \ref{bv-sym-red} we know that the $Q$-reduced Euler-Lagrange space
coincides with its symplectic reduction and is
\[
\EL_{\pa N}/Q=\underline{\EL}_{\pa N}= H^\bullet (\pa N)[1]\oplus H^\bullet(\pa N)[n-2]
\]
The reduced space $\cL_N/Q$ is the space of cohomology classes of
closed forms on $\pa N$ which continue to closed forms on $N$, and we already proved
that it is Lagrangian.

\subsubsection{The reduction of of bulk fields}
\label{sec: ab BF red fibers}

The $Q$-reduced space of solutions to the EL equations is:
\[
\EL_N/Q=
H^\bullet( N)[1]\oplus H^\bullet(N)[n-2]
\]
To compute the space $\EL(N, [l])/Q$, for $l\in \cL_N$ consider the natural exact sequence:
\[
0\to \Om^\bullet(N,\pa N)\to \Om^\bullet(N) \stackrel{\iota^*}{\to} \Om^\bullet(\pa N)\to 0
\]
where $\iota^*$ is the pullback to the boundary corresponding to the
inclusion mapping $\iota:\pa N\hookrightarrow N$, and $\Omega^\bullet(N, \pa N)$ are
forms vanishing on the boundary (their pullback to the boundary is zero). It induces
the standard long exact sequence for cohomology spaces:
\begin{eqnarray*}
\dots & \to & H^\bullet(N,\pa N)\stackrel{\chi}{\to} H^\bullet(N) \stackrel{\iota^*}{\to} H^\bullet(\pa N)\stackrel{\beta}{\to} \\
& \to & H^{\bullet+1}(N,\pa N)\stackrel{\chi}{\to} H^{\bullet+1}(N) \stackrel{\iota^*}{\to} H^{\bullet+1}(\pa N)\stackrel{\beta}{\to} \dots
\end{eqnarray*}

For $l\in \cL_N$ the $Q$-reduction of the space $\EL(N, [l]))=\pi^{-1}([l])\cap \EL_N$ is the $Q$-reduced space
$\EL(N, [l])/Q=(\pi^{-1}([l])\cap \EL_N)/Q$
\begin{multline}\label{elqbf}
\EL(N, [l])/Q=Im(\chi)[1]\oplus Im(\chi)[n-2]\simeq \\
\simeq H^\bullet (N, \pa N)/Im(\beta)[1]\oplus H^\bullet(N, \pa N)/Im(\beta)[n-2]
\end{multline}
Recall that $[l]$ is the leaf of $Vect_Q$ through $l$.
Due to regularity, (\ref{elqbf})
is symplectic, with the symplectic structure coming from the Lefschetz duality between $H^\bullet(N)$ and $H^\bullet(N,\pa N)$.

In terms of the long exact sequence of the pair $(N, \pa N)$, the reduced space $\cL_N$ is:
\[
\cL_N/Q=Im(\iota^*)[1]\oplus Im(\iota^*)[n-2]\quad  \subset\; H^\bullet (\pa N)[1]\oplus H^\bullet(\pa N)[n-2]
\]

\begin{remark} The symplectic reduction of $\EL_N$ is infinite dimensional and, as a vector space,
\[
\underline{\EL}_N=H^\bullet(N, \pa N)[1]\oplus H^\bullet(N,\pa N)[n-2]\oplus
\Omega^\bullet_{closed}(\pa N)[1]\oplus
\Omega^\bullet_{closed}(\pa N)[n-2]
\]
Indeed, it is easy to check that the symplectic orthogonal  subspace
$\EL_N^{\perp, symp}$ to $\EL_N$ is
\[
\Omega_{exact}(N,\pa N)[1]\oplus \Omega_{exact}(N,\pa N)[n-2]
\]
where $\Omega_{exact}(N,\pa N)$ is the space of exact
forms with the pullback to the
boundary being zero. The  symplectic reduction is the quotient space
$\EL_N/\EL_N^{\perp, symp}$. It is the symplectic $\EL$-moduli space
discussed in section \ref{part-red-sec} and comes with residual gauge symmetry
data, which in particular allow for a simple gluing formula.
\end{remark}

\subsubsection{BV extensions to strata of codimension $k$}
The $BF$ theory can be maximally extended.
On an $n-k$ dimensional stratum $N_k$ the space of fields
is
\[
\F_{N_k}=\Omega^\bullet(N_k)[1]\oplus\Omega^\bullet(N_k)[n-2]
\]
The symplectic form is
\[
\omega_{N_k}=\int_{N_k}\delta\cA\wedge \delta\cB
\]

The cohomological vector field $Q_{N_k}$ is
\[
Q_{N_k}=\int_{N_k} \left(d\cA\wedge \frac{\delta}{\delta \cA}+
d\cB\wedge \frac{\delta}{\delta \cB}\right)
\]

The action is
\[
S_{N_k}=\int_{N_k} \cB\wedge d\cA
\]

Formulae for $\alpha, Q, S$ are structurally the same for all $k$.
This is a general feature of AKSZ theories of which abelian $BF$ theory is an
example.

\subsubsection{The $gh=0$ part of the abelian $BF$ theory}

In this section we will consider the restriction of
the abelian $BF$ theory to the fields with $gh=0$.

The $gh=0$ part of the space of fields is $F_N=\Omega^1(N)\oplus \Om^{n-2}(N)$.
The action is the restriction of the BV action
to the $gh=0$ sector:
\[
S_N=\int_N B\wedge dA
\]
where $A\in \Om^1(N)$ and $B\in \Om^{n-2}(N)$.

Euler-Lagrange equations are:
\[
dA=0, \ \ dB=0,
\]
Solutions are closed forms
\[
EL_N=\Omega^1_{closed}(N)\oplus \Omega^{n-2}_{closed}(N)
\]
This is the degree zero part of the space $\EL_N$ in the
$BF$-BV theory.

The gauge group $G_N=\Om^0(N)\oplus \Om^{n-3}(N)$ acts on fields as
\[
A\mapsto A+d\alpha, \ \ B\mapsto B+d\beta
\]
Vector fields generated by these transformations are degree zero parts of vector fields $Vect_Q$ . The action of the abelian $BF$ theory is invariant with respect to these transformations.

The set of gauge classes of solutions is the degree zero part of the $Q$-reduced Euler-Lagrange space:
\[
EL_N/G_N=H^1(N)\oplus H^{n-2}(N)
\]

\textbf{Non-reduced boundary theory.} The boundary fields are:
\[
F_{\pa N}=\Omega^1(\pa N)\oplus \Om^{n-2}(\pa N)
\]

This is a symplectic manifold with the symplectic
form induced by the intersection pairing:
\[
\omega_{\pa N}=\int_N \delta A\wedge \delta B
\]
This form is restriction of the symplectic form
in the $BF$ theory to the $gh=0$ subspace in the space of fields.

The boundary Euler-Lagrange subspace consists of closed $1$- and $(n-2)$-forms on $\pa N$:
\[
EL_{\pa N}=\Omega^1_{closed}(\pa N)\oplus \Om^{n-2}_{closed}(\pa N)
\]
It is a coisotropic subspace of $F_{\pa N}$.

The evolution relation $L_N=\pi(EL_N)$ is the space of closed $1$- and $(n-2)$-forms on $\pa N$ which continue to closed forms on $N$. As follows from Lemma \ref{lnlagr}, $L_N$ is a
Lagrangian subspace in $EL_{\pa N}$.

\textbf{Reduced boundary theory.}
Boundary gauge group $G_{\pa N}=\Om^0(\pa N)\oplus \Om^{n-3}(\pa N)$ is the pullback of the group of gauge transformations on $N$ to the boundary.

The symplectically reduced Euler-Lagrange subspace $\underline{EL}_{\pa N}$ is
the set of leaves of the characteristic foliation of
$EL_{\pa N}$ (the foliation of $EL_{\pa N}$ by
Hamiltonian vector fields generated by the ideal $I_{EL_{\pa N}}$  of functions on $F_{\pa N}$ vanishing on $EL_{\pa N}$.

\begin{remark} Because the action of the gauge group is Hamiltonian
with
\[
H_\alpha=\int_{\pa N} \alpha\wedge dB, \ \ H_\beta=\int_{\pa N}\beta\wedge dA
\]
The reduced space $\underline{EL}_{\pa N}$ is also the result of Hamiltonian reduction: $\overline{EL}_{\pa N}=J^{-1}(0)/G_{\pa N}$, where $J: F_{\pa N}\to [\Om^0(\pa N)\oplus \Om^{n-3}(\pa N)]^*$ is the moment map
$(A, B)\mapsto (H_\alpha, H_\beta)$. It is also clear that $EL_{\pa N}=J^{-1}(0)$.
\end{remark}

The $Q$-reduced Euler-Lagrange space
\[
EL_{\pa N}/G_{\pa N}=H^1(\pa N)\oplus H^{n-2}(\pa N)
\]
has the natural symplectic structure given by the Poincar\'e pairing.

The reduced evolution relation $L_N/G_{\pa N}$ consists of
cohomology classes of closed forms on $\pa N$ which continue to closed forms on $N$. In other words, this is the image of the restriction mapping:
\[
H^1(N)\oplus H^{n-2}(N)\stackrel{\iota^*\oplus \iota^*}{\to}H^1(\pa N)\oplus H^{n-2}(\pa N)
\]
where $\iota: \pa N\hookrightarrow N$ is the inclusion of the boundary
mapping. As follows from the Proposition \ref{LagrLN},
it is a Lagrangian subspace.

The kernel of the restriction map is isomorphic to

\[
H^1(N)/Im (\beta_0)\oplus H^{n-2}( N)/Im (\beta_{n-3})
\]

Here $\beta_i: H^i(\pa N)\to H^{i+1}(N, \pa N)$ is the mapping in the long exact sequence of the pair $(N, \pa N)$.

\section{The AKSZ construction of classical topological gauge theories}\label{sec-aksz}

In this section we will recall the construction of classical topological field theories for closed manifolds known
as the AKSZ construction  \cite{AKSZ} and we will extend this construction to manifolds with boundary. We will show that
this extension gives an example of a BV theory for spacetimes with boundary. The AKSZ construction generalizes the BV extension of
the $BF$ gauge theory.

\subsection{The target manifold}
\subsubsection{Hamiltonian dg manifolds} The AKSZ construction  requires the choice of a Hamiltonian
differential graded manifold as the target space.

Recall that a {\it differential graded (dg) manifold } is a pair $(\M, Q)$
where $\M$ is a graded manifold and $Q$ is a cohomological vector field of
degree one. A vector field is cohomological if its Lie derivative
squares to zero.

A dg manifold is a {\it dg symplectic manifold} of degree $m$ if it has a symplectic form $\omega$ of
degree $m$ which is $Q$-invariant (i.e. $L_Q\omega=0$ where
$L_Q$ is the Lie derivative with respect to $Q$).  We denote the degree by $deg$.

\begin{definition} A  dg symplectic manifold $(\M, \omega, Q)$
of degree $m$ is Hamiltonian if there exists
an element $\Theta\in Fun(\M)$ with $deg(\Theta)=m+1$ such that

\begin{equation}\label{cme-theta}
\{\Theta, \Theta\}=0
\end{equation}
and $Q$ is the Hamiltonian vector field of $\Theta$.
\end{definition}

\begin{remark} A graded symplectic manifold is always
exact when $deg(\omega)\neq 0$: $\omega=d\alpha$.
If $deg(\omega)=0$ we will require it to be exact. A dg symplectic manifold is automatically
Hamiltonian when $deg(\omega)\neq -1,-2$. See section \ref{o-degr} and  \cite{Royt}.
\end{remark}

Notice that a Hamiltonian dg manifold is actually defined by the symplectic
form and by the function $\Theta$ satisfying (\ref{cme-theta}). The vector
field $Q$ is defined as the Hamiltonian vector field of $\Theta$ and
acts of functions as
on functions on $\M$ as:
\[
Qf=\{\Theta, f\}
\]
In local coordinates
\[
Qf=\sum_{ab} \Theta\frac{\overleftarrow \pa }{\overleftarrow{\pa} x^a}\omega^{ab}
\frac{\overrightarrow{\pa}}{\overrightarrow{\pa} x^b} f
\]
For polynomial functions $f$ we have
\[
\frac{d f}{dt}(x+t\epsilon)=\sum_a \epsilon^a \frac{\overrightarrow{\pa}}{\overrightarrow{\pa} x^a} f=
\sum_a f \frac{\overleftarrow{\pa} }{\overleftarrow{\pa} x^a}
\epsilon^a
\]
were $deg(t)=0$.

The condition (\ref{cme-theta}) implies
\[
Q^2=0
\]
Because  $deg(\omega)=m$, we have $deg(\omega^{-1})=-m$ and
because $deg(\Theta)=m+1$, we have $deg(Q)=1$.

\subsubsection{Examples of Hamiltonian dg manifolds}\label{aksz-ex}

Here we will give few examples of Hamiltonian dg manifolds.

{\bf Example 1}. Let $m=2$, $\g$ be a finite dimensional Lie algebra with an invariant inner product,
and $x^a$ be coordinates on $\g$ in an orthonormal linear basis $e_a$\footnote{ $\g$ can be also a Lie superalgebra} . Choose $\M=\g[1]$ with $deg(x^a)=1$ and define
\[
\omega=\frac{1}{2}\sum_a d x^a\wedge d x^a
\]
and
\[
\Theta=\frac{1}{6} \sum_{abc} f_{abc}x^ax^bx^c
\]
where $f_{abc}$ are structure constants of $\g$ in the basis $e_a$. Clearly $deg(\omega)=2$, and $deg(\Theta)=3$, which agrees with $m=2$.

The differential $Q$ in this example is given by
\[
Q=\frac{1}{2} \sum_{abc} f_{ab}^cx^ax^b\frac{\pa}{\pa x^c}
\]
which corresponds to the Chevalley-Eilenberg differential
for the Lie algebra $\g$.

This example describes the target space for the  Chern-Simons theory in the BV formalism.

{\bf Example 2} Let $n$ be any integer and $\g$ be
a finite dimensional Lie algebra. Define
\[
\M=\g[1]\oplus \g^*[n-2]=T^*[n-1](\g[1])
\]
Let $x^a$ be coordinates on $\g$, and $p_a$ be corresponding
coordinates on the dual space $\g^*$, $gh(x^a)=1, \ \ gh(p_a)=n-2$. Define
\[
\omega=\sum_a d p_a \wedge d x^a, \ \  \alpha=\sum_a p_adx^a
\]
\[
\Theta=\frac{1}{2} \sum_{abc} f^a_{bc} p_ax^bx^c
\]
It is is clear that $deg(\omega)=n-1$ and $deg(\Theta)=n$.

This is the target space for $BF$ models. When $n=2$ this space
is the same as in the next example (Poisson sigma model) for $M=g^*$
with the Kirillov-Kostant Poisson structure.

{\bf Example 3} Let $m=1$ and $\M=T^*[1]M$, where $M$ is a
finite dimensional Poisson
manifold with the Poisson tensor $\pi$. Let $x^i$ be local coordinates on $M$, and $p_i$ be corresponding coordinates
on the cotangent space $T^*_xM$. The grading is such that
$deg(x^i)=0$ and $deg(p_i)=1$

Define
\[
\omega=\sum_i d p_i \wedge d x^i, \ \ \alpha=\sum_i p_i dx^i
\]
and
\[
\Theta=\frac{1}{2} \sum_{ij}\pi^{ij}(x)p_ip_j
\]
Condition (\ref{cme-theta}) is equivalent to the fact that $\pi$
is Poisson.
Clearly $deg(\omega)=1$, and $deg(\Theta)=2$ which
agrees with $m=1$.

This example describes the target space in the Poisson
sigma model.

{\bf Example 4} When $m=2$ and the Lie algebra has an invariant bilinear form,
the Example 2 can be modified
by adding a cubic term in $p$ to $\Theta$. For this we identify $\g$ and
$\g^*$ using the Killing form and assume $x^a$, $p_a$ are coordinates in an orthonormal basis.
The new potential is
\[
\Theta= \frac{1}{2}\sum_{abc} f^a_{bc} p_ax^bx^c\pm\frac{1}{6}f^{abc}p_ap_bp_c
\]

This Hamiltonian dg manifold is isomorphic to two copies of
the Hamiltonian dg manifold from the first example.

{\bf Example 5} When $m=3$ and $\g$ has an invariant bilinear form, we can add a quadratic
term to $\Theta$. The new potential is
\[
\Theta=\frac{1}{2}\sum_{abc} f^a_{bc} p_ax^bx^c+\frac{1}{2}\sum_ap_a^2
\]

\subsection{The space of fields}Fix a Hamiltonian dg manifold $(\M, \omega,
\Theta)$ of degree $n-1$. The classical $n$-dimensional AKSZ field theory
with the target manifold $\M$ on the spacetime manifold $N$
has  $\F_N=\Map(T[1]N, \M)$  as the space of fields\footnote{The AKSZ construction does not have to have $T[1]N$ as the source graded manifold. However we will consider only these cases, as
they seem to be more important in field theory}. We assume that
the spacetime is a compact oriented smooth manifold.
See Appendix \ref{graded} for basic facts on graded manifolds and their mapping spaces.
As in the previous sections we will say that the grading on the space of fields
is given by ghost numbers $gh$.

Let $f$ be a smooth function of fixed degree on $\M$
and $X\in \F_N$,
then the composition $X_f=f\circ X$ is a
smooth function on $T[1]N$. Let $\xi^i$ be local coordinates on
the fiber $T_u[1]N$, $u\in N$. Then $X_f$ at $(\xi,u)$ can be written
as
\[
X_f(\xi, u)=\sum_{k=0}^n X_f(u)_{i_1,\dots, i_k} \xi^{i_1}\dots\xi^{i_k}
\]
with $gh(X_f(u)_{i_1,\dots, i_k})=deg(f)-k$.

We have natural identification
$\phi: C^\infty(T[1]N)\simeq \Omega(N)$. In coordinates
this mapping brings the function $X_f$ to the form
\[
\phi(X_f)( u)=\sum_{k=0}^n X_f(u)_{i_1,\dots, i_k} du^{i_1}\wedge \dots \wedge du^{i_k}
\]

Let $x^a$, $a=1,\dots n$ be homogeneous local coordinates on $\M$.  Denote by $X^a$
the composition of the field $X\in \cF_N$ and the
coordinate function $x^a$.  Component fields $X^a$ can be regarded
as forms
$\phi(X^a)(u)$ on $N$, $a=1,\dots, n$:

\begin{equation}\label{fields}
\phi(X^a)(u)=\sum_{k=0}^d X^a(u)_{i_1,\dots, i_k} du^{i_1}\wedge\dots\wedge  du^{i_k}
\end{equation}
We will denote the component of degree $k$ by $X^a_k(u)$.
We will also call forms $X^a(u)$ coordinate fields, or superfields, when it will
not cause a confusion.

As it follows from the definition of $X^a(u)_{i_1,\dots, i_k}$
the ghost number of this field is  $gh(X^a_k)=deg(x^a)-k$.

\subsection{The non-reduced AKSZ theory for spacetime manifolds with boundary}

\subsubsection{The AKSZ theory in terms of coordinate fields}

The AKSZ action consists of ``kinetic" and ``interaction" parts:
\[
S_N[X]=S^{kin}_N[X]+S^{int}_N[X]
\]

Let $x^a$ be local coordinates on $\M$ and $\alpha(x)=\sum_a \alpha_a(x)dx^a$. In local coordinates the ``kinetic" part of the
AKSZ action is
\[
S^{kin}_N[X]=\int_N \sum_a \alpha_a(X(u))\wedge dX^a(u)
\]

Here  and below $X^a(u)$
are coordinate components of fields as in (\ref{fields}).
It is easy to see that $gh(S^{kin}_N)=0$.

The ``interaction" part of the AKSZ action is the functional
\[
S^{int}_N[X]=\int_N\Theta(X)
\]
where $\Theta$ is the potential function for the target manifold.
Because $deg(\Theta)=n$, the action has the
zero grading, i.e. $gh(S^{int}_N)=0$. Here and below
the expression $\int_N \Theta(X)$ for a homogeneous polynomial $\Theta(x)=
\sum_{\{a\}} \Theta_{\{a\}} x^{a_1}\dots x^{a_k}$ of degree
$k$
on $\M$ means
$\sum_{\{a\}} \Theta_{\{a\}} \int_N X^{a_1}\wedge\dots\wedge X^{a_k}$.

Assume that in local coordinates $\omega_\M=\frac{1}{2}\sum_{ab} \omega_{ab} dx^a\wedge dx^b$, then
\[
\omega_N=\frac{1}{2}\int_N \sum_{ab} \omega_{ab}(X(u))\wedge
\delta X^a(u)\wedge \delta X^b(u)
\]
where $\delta X^a(x)$ is the de Rham differential on the space of fields.

In local coordinates, the action of $Q_N$ on local functionals is:
\[
Q_{N}F=\int_N\left(dX^a+\omega^{ab}(X)\wedge\frac{\pa \Theta}{\pa x^b}(X)\right)\wedge \frac{\delta F}{\delta X^a}
\]
Here $\omega^{ab}$ are components of the Poisson bivector
field corresponding to the symplectic form $\omega$.
Because $deg(\omega)=n-1$ and $deg(\Theta)=n$, we have $gh(Q_N)=1$.

It acts on coordinate fields as:
\begin{equation}\label{qvf-coord}
Q_NX^a=dX^a+\omega^{ab}(X)\wedge\frac{\pa \Theta}{\pa x^b}(X)
\end{equation}

The variation of $S_{N}$:
\[
\delta S_{N}[X]= \int_N( \delta X^a(u)\wedge \pa_a\alpha_b(X(u))\wedge dX^b(u)
+\alpha_a(X(u))\wedge\delta dX^a(u)+ \delta \Theta(X))
\]
After integration by parts this expression becomes
\[
\int_{\pa N} \alpha_a(X(u))\wedge \delta X^a(u)+ \int_N\left(\omega_{ab}(X(u))\wedge dX^a(u)+\frac{\pa \Theta}{\pa X^b}(X(u))\right)\wedge \delta X^b(u)
\]
In this section we assume that $\pa N=\emptyset $, which means
the first term is absent. The Euler-Lagrange equations
are:
\begin{equation}\label{BV-EL}
dX^a(u)+ \sum_b\omega^{ab}(X(u))\wedge\frac{\pa \Theta}{\pa x^a}(X(u))=0
\end{equation}

The same formulae hold for the boundary BFV action $S_{\pa N}$, for
the boundary symplectic form $\omega_{\pa N}$, and for the boundary cohomological vector field $Q_{\pa N}$.
One should simply substitute $\pa N$ instead of $N$. The only difference is the degree change.
Because $dim(\pa N)=n-1$, $gh(S_{\pa N})=1, \ \ gh(\omega_{\pa N})=-1$ and $gh(Q_{\pa N}=-1$ as
in the bulk $N$.

The BV and BFV data described above have the right grading
and give an example of a BV-BFV theory:

\begin{Proposition}\label{aksz=bvbfv} The cohomological vector field $Q_N$ is Hamiltonian, up to a boundary term:
\be
\iota_{Q_N} \omega_N = (-1)^{\dim (N)} \delta S_N +\pi^* \alpha_{\dd N} \label{Q_N Ham}
\ee
\end{Proposition}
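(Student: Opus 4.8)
The plan is to split the cohomological vector field into its two natural pieces and contract each with $\omega_N$ separately. Writing $Q_N = Q^{dR} + Q^\Theta$, where on coordinate fields $Q^{dR} X^a = dX^a$ is the de Rham differential on $N$ and $Q^\Theta X^a = \omega^{ab}(X)\,\partial_b\Theta(X)$ is the transgression of the target cohomological vector field, I would compute $\iota_{Q^{dR}}\omega_N$ and $\iota_{Q^\Theta}\omega_N$ in coordinate fields. Since $\iota_{Q_N}$ replaces $\delta X^a$ by $Q_N X^a$ and $\omega_N = \frac12\int_N \omega_{ab}(X)\,\delta X^a\wedge\delta X^b$ with $\omega_{ab}$ graded-antisymmetric, one gets $\iota_{Q_N}\omega_N = \int_N \omega_{ab}(X)\,(Q_N X^a)\wedge\delta X^b$, and the two summands of $Q_N$ are then treated in turn.

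For the target piece, the key input is that on $\M$ the vector field is the Hamiltonian vector field of $\Theta$, so $\iota_{Q_\M}\omega_\M = \delta_\M\Theta$; in components this says that contracting the Poisson bivector $\omega^{ab}$ (the inverse of $\omega_{ab}$) with $\partial_b\Theta$ reproduces $\delta_\M\Theta$. Transgressing this pointwise relation to $\F_N$ gives $\iota_{Q^\Theta}\omega_N = \int_N \partial_b\Theta(X)\wedge\delta X^b = \delta S^{int}_N$ (up to the overall conventional sign), with no boundary contribution, since this step is purely algebraic in the fiber and involves no integration by parts over $N$.

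For the de Rham piece I would use the variation $\delta S_N$ already computed in the excerpt. Integrating $\delta S^{kin}_N = \int_N \delta\bigl(\alpha_a(X)\wedge dX^a\bigr)$ by parts via Stokes produces exactly the bulk term $\int_N \omega_{ab}(X)\,dX^a\wedge\delta X^b = \iota_{Q^{dR}}\omega_N$ together with the boundary term $\int_{\pa N}\alpha_a(X)\wedge\delta X^a = \pi^*\alpha_{\pa N}$. Thus $\iota_{Q^{dR}}\omega_N$ differs from $(-1)^n\,\delta S^{kin}_N$ precisely by the pullback of the boundary one-form, and adding the two pieces assembles $\iota_{Q_N}\omega_N = (-1)^{\dim N}\delta S_N + \pi^*\alpha_{\pa N}$, which is equation (\ref{cme}).

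The hard part will be the sign bookkeeping. The factor $(-1)^{\dim N}$ is produced by commuting the field-space differential $\delta$ past the source differential $d$ and by the Koszul signs coming from the degrees of $\alpha_a$, $\omega_{ab}$ and the superfields $X^a$; similarly one must verify that the sign in front of $\pi^*\alpha_{\pa N}$ does not pick up a stray $(-1)^n$. I would control these systematically using the Cartan calculus for local forms on mapping spaces developed in Appendix~\ref{Cart-Calc}, rather than manipulating components by hand. As a consistency check, setting $\pa N = \emptyset$ must reduce the identity to the closed-manifold Hamiltonian relation $\iota_{Q_N}\omega_N = (-1)^n\delta S_N$, which fixes the overall sign convention.
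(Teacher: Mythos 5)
Your proposal is correct and is essentially the proof the paper gives in Appendix \ref{Cart-Calc}: the same splitting $Q_N=Q_N^{kin}+Q_N^{int}$ into the lifted source de Rham field and the lifted target field, the same transgression of the target relation $\iota_{Q}\omega=d\Theta$ to obtain $(-1)^{\dim N}\delta S_N^{int}$ with no boundary contribution, and the same Stokes-type integration by parts for the kinetic piece --- formalized there as the rule $\delta(T_N^k\chi)=(-1)^{\dim N}\,(T_N^k d\chi - k\,\pi^*(T_{\partial N}^{k-1}\chi))$ --- which is what produces both $(-1)^{\dim N}\delta S_N^{kin}$ and the boundary term $\pi^*\alpha_{\partial N}$. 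Delegating the sign bookkeeping to the Cartan calculus of local forms, as you suggest, is precisely how the paper settles it, so nothing essential is missing.
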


We defer the proof to Appendix \ref{Cart-Calc}.

\begin{remark} When $N$ has a non-empty boundary, the AKSZ action depends of the choice of the form $\alpha$, not only on its
cohomology class as in case of closed manifolds. Let $\tilde{S_N}$ be the
action corresponding to $\tilde{\alpha}=\alpha+df$, then
\[
\tilde{S}_N[X]=S_N[X]+\int_N d f(X)=S_N+\int_{\pa N} f(X)
\]

\end{remark}

Any AKSZ theory can be maximally extended.
On an $n-k$ dimensional stratum $\Sigma$ the space of fields is the space of maps $\F_\Sigma=\Map( T[1]\Sigma, \M)$.
The symplectic form $\omega_\Sigma$, the form $\alpha_\Sigma$, the action functional $S_\Sigma$ and the  vector field $Q_\Sigma$ are all given by the same formulae as above.
The difference from the $n$-dimensional stratum is only in
the grading:
\[
gh(\omega_\Sigma)=k-1, \ \ gh(Q_\Sigma)=1, \ \ gh(S_\Sigma)=k
\]

\section{Examples of AKSZ theories}\label{examples-aksz}

\subsection{Abelian Chern-Simons theory}\label{ab-cs}
The target space for this AKSZ theory is $\RR[1]$ with
symplectic structure $\omega=da\wedge da$, where $a$ is the coordinate
on $\RR[1]$ and $\Theta=0$.

\subsubsection{The bulk BV theory} The space of fields on
the $3$-dimensional spacetime manifold $N$ is :
\[
\cF_N=\Omega^\bullet(N)[1]
\]
The fields corresponding to forms of degree $0,1,2,3$ will be denoted by $c, A, A^\dag, c^\dag$ respectively. The ghost numbers are $1, 0, -1, -2$. We will write $\cA=c+A+A^\dag+c^\dag$ for the
BV superfield.

The symplectic form,  the vector field
$Q$, and the classical action are:
\[
\omega_N=\frac{1}{2}\int_N \delta\cA\wedge \delta\cA=\int_N(\delta c\wedge \delta c^\dag+\delta A\wedge \delta A^\dag)
\]
\[
Q_N=\int_Nd\cA\wedge \frac{\delta}{\delta \cA}=\int_N(dc\wedge \frac{\delta}{\delta A}+ dA\wedge \frac{\delta}{\delta A^\dag}+
dA^\dag\wedge \frac{\delta}{\delta c^\dag})
\]
\[
S_N=\frac{1}{2}\int_N \cA\wedge d\cA=\frac{1}{2}\int_N(A\wedge dA+A^\dag\wedge dc+c\wedge dA^\dag)
\]

The Euler-Lagrange space is
\[
\EL_N=\Omega_{closed}^\bullet(N)[1]
\]
\subsubsection{The boundary BFV theory}
Boundary fields are pullbacks of the bulk fields to the
boundary.
\[
\cF_{\pa N}=\Omega^\bullet(N)[1]
\]
We will use the same notation for pullbacks as for bulk fields.
This means $0,1,2$ forms will be denoted by $c, A, A^\dag$ respectively. They have ghost numbers $1,0,-1$.

The one form $\alpha_{\pa N}$, the symplectic structure, the
vector field $Q$ and the action for the boundary BFV theory are:
\[
\alpha_{\pa N}=\frac{1}{2}\int_{\pa N} \cA\wedge \delta\cA=\frac{1}{2}
\int_{\pa N}(A\wedge \delta A+c\wedge \delta A^\dag+A^\dag\wedge \delta c)
\]

\[
\omega_{\pa N}=\frac{1}{2}\int_{\pa N} \delta\cA\wedge \delta\cA=\int_{\pa N}\left(\frac{1}{2}\,\delta A\wedge \delta A+\delta c\wedge \delta A^\dag\right)
\]
\[
Q_{\pa N}=\int_{\pa N}d\cA\wedge \frac{\delta}{\delta \cA}=\int_{\pa N}\left(dc\wedge \frac{\delta}{\delta A}+ dA\wedge \frac{\delta}{\delta A^\dag}\right)
\]
\[
S_{\pa N}=\frac{1}{2}\int_{\pa N} \cA\wedge \delta\cA=\int_{\pa N}c\wedge dA
\]
The boundary Euler-Lagrange space is
\[
\EL_{\pa N}=\Omega^\bullet_{closed}(\pa N)[1]
\]
The evolution relation $\cL_N\subset \EL_{\pa N}$ is the subspace of forms in $\Omega^\bullet_{closed}(\pa N)[1]$ which continue to closed forms on $N$.

Abelian Chern-Simons theory is regular, which is proven similarly to Proposition \ref{prop: ab BF is regular}.

\subsubsection{Reduced BV-BFV theory}
\label{ab CS reduced}
Reduced bulk and boundary Euler-Lagrange spaces
are
\[
\EL_N/Q\simeq H^\bullet(N)[1], \ \ \EL_{\pa N}/Q\simeq H^\bullet(\pa N)[1]
\]
respectively. The symplectic form on $\EL_{\pa N}/Q$ is
given by the Poincar\'e duality.

The pullback mapping $\pi: \cF_N\to \cF_{\pa N}$ induces
the mapping of reduced spaces:
\[
\pi_*: \EL_N/Q\to \EL_{\pa N}/Q
\]
The reduced evolution relation $\cL_N/Q=Im(\pi_*)\subset \EL_{\pa N}/Q $ is a Lagrangian subspace.

Because $\pi_*$ is a linear mapping, its fiber over any point
of $\EL_{\pa N}/Q$ is simply $\ker(\pi_*)$. This space
\[
\ker(\pi_*)\simeq H^\bullet(N,\pa N)/H^{\bullet -1}(\pa N)[1]
\]
has a natural symplectic structure of degree $-1$ coming from the Lefschetz duality.

\subsubsection{The $gh=0$ part of the theory}
The $gh=0$ part of the space of fields is $F_N=\Omega^1(N)$. The $gh=0$ part of
$Vect_Q$ gives the gauge action of $\Omega^0(N)$ on the
space of fields: $A\to A+d\beta$ where $A\in \cF_N$ and $\beta\in \Omega^0(N)$. The abelian Chern-Simons action
\[
S^{cl}_N=\frac{1}{2}\int_N A\wedge dA
\]
is gauge invariant when $\pa N=\emptyset$. When $N$ has non-empty boundary the gauge transformation generated by $\beta$ changes the action by $\int_{\pa N} \beta\wedge A$.


The space $EL_N$ is the space of closed $1$-forms on $N$.
The space of boundary fields $F_{\pa N}=\Omega^1(\pa N)$ is
exact symplectic with
\[
\alpha_{\pa N}=\frac{1}{2}\int_{\pa N} A\wedge \delta A, \ \ \omega_{\pa N}=\delta\alpha_{\pa N}=\frac{1}{2}\int_{\pa N}\delta A\wedge \delta A
\]
The $gh=0$ part of the boundary Euler-Lagrange space, the  space $C_{\pa N}$, is the space of
closed $1$-forms on $\pa N$. It is clearly coisotropic.
The gauge action is Hamiltonian with the momentum map $\mu=d: \Omega^1(\pa N)\to \Omega^2(\pa N)$, where we consider $\Omega^2(N)$ as the dual space to the abelian Lie algebra of $0$-forms.

The $gh=0$ part of the moduli space $\EL_N/Q$ is the space of gauge orbits $EL_N/G_N$, and we have the natural isomorphism $EL_N/G_N\simeq H^1(N)$. The $gh=0$ part of
the moduli space $\EL_{\pa N}/Q$ is isomorphic to the space $C_{\pa N}/G_{\pa N}$ of gauge orbits through $C_{\pa N}$, or equivalently, since the action is Hamiltonian, this space is the symplectic reduction of $C_{\pa N}$. It is clear that we have the natural isomorphism
$C_{\pa N}/G_{\pa N}\simeq H^1(\pa N)$. This space is symplectic with
the symplectic structure given by the Poincar\'e duality.

The pullback to the boundary $\pi: F_N\to F_{\pa N}$ induces the
restriction mapping $\pi_*: H^1(N)\to H^1(\pa N)$.
The subspace
\[
L_N=Im(\pi_*)\subset  H^1(\pa N)
\]
is Lagrangian. The fiber over any point of $C_{\pa N}/G_{\pa N}$ is
$\ker(\pi_*)$.

\subsection{Non-abelian Chern-Simons theory}
In this case the target manifold is constructed from a Lie algebra $\g$ with
an invariant scalar product (for example a simple Lie algebra).
The target manifold is described in details in section \ref{aksz-ex}.
We assume that the Lie algebra and the corresponding
simply connected Lie group are matrix groups
and will write $tr(ab)$ for the Killing form evaluated on
two Lie algebra elements.

\subsubsection{The bulk BV theory}

Fields are graded connections in a principal $G$-bundle
over the spacetime $N$. We assume the bundle is trivial,
so the space of fields is
\[
\cF_N=\Omega^\bullet(N, \g)[1] 
\]
Here $\Omega^\bullet(N, \g)[1]=\Omega^\bullet(N)\otimes \g[1]$.
The fields corresponding to forms of degree $0,1,2,3$ will be denoted by $c, A, A^\dag, c^\dag$ respectively. The ghost numbers
are $1, 0, -1, -2$. We will write $\cA=c+A+A^\dag+c^\dag$ for the
BV superfield.

The symplectic form,  the vector field
$Q$, and the classical action are:
\[
\omega_N=\frac{1}{2}\int_N tr(\delta\cA\wedge \delta\cA)=\int_Ntr(\delta c\wedge \delta c^\dag+\delta A\wedge \delta A^\dag),
\]
\begin{multline}
Q_N=\int_N tr\left((d\cA+\frac{1}{2}[\cA, \cA])\wedge \frac{\delta}{\delta \cA}\right)=\\ =
\int_N tr\left(d_Ac\wedge \frac{\delta}{\delta A}+ (F(A)+[c,A^\dag])\wedge \frac{\delta}{\delta A^\dag}+
(d_AA^\dag+[c,c^\dag])\wedge \frac{\delta}{\delta c^\dag}+
\frac{1}{2}[c,c]\wedge \frac{\delta}{\delta c}\right),
\end{multline}
\begin{multline}
S_N=\int_N tr\left(\frac{1}{2}\cA\wedge d\cA+\frac{1}{6}\cA\wedge[\cA, \cA]\right)=\\ =
\int_N tr\left(\frac{1}{2}A\wedge dA+\frac{1}{6}A\wedge[A, A]+\frac{1}{2}A^\dag\wedge d_Ac+\frac{1}{2}c\wedge d_AA^\dag+
\frac{1}{2}c^\dag\wedge[c,c]\right)
\end{multline}

The Euler-Lagrange space is the space of flat graded connections
\[
\EL_N=\{\cA\in \cF_N| \ \ d\cA+\frac{1}{2}[\cA, \cA]=0\}
\]

In coordinate fields the Euler-Lagrange space consists of
$c, A, A^\dag, c^\dag$ which satisfy
\[
[c,c]=0, \ \ d_Ac=0, \ \ F(A)+[c,A^\dag]=0, \ \ d_AA^\dag+[c,c^\dag]=0
\]
The tangent space at $\cA\in \EL_N$ is kernel of $\hat{Q}_\cA$.
It consists of elements $(\gamma,\alpha,\alpha^\dag, \gamma^\dag)
\in \Omega^\bullet(N)[1]$ such that
\begin{multline}\label{tspace-blk}
[\gamma,c]=0,
\ \ d_A\gamma+[\alpha,c]=0, \ \ d_A\alpha+[c,\alpha^\dag]+[\gamma, A^\dag]=0, \\
d_A\alpha^\dag+[c,\gamma^\dag]+[A^\dag, \alpha]+[c^\dag ,\gamma]=0,
\end{multline}

The point $\cA$ belongs to the $gh=0$ part of $\EL_N$
if and only if $c=c^\dag=A^\dag=0$ and $F(A)=0$. The $gh=0$ part of $\EL_N$ is the space flat connections. The tangent space
to $\EL_N$ at such point is naturally isomorphic to
\begin{equation}\label{tscspace-blk}
\Omega^\bullet_{d_A-closed}(N,\g)
\end{equation}
where forms are closed with respect to the differential
$d_A=d+[A, \cdot  ]$.

In this case the reduced tangent space is
\[
\ker(\hat{Q}_\cA)/Im(\hat{Q}_\cA)\simeq H^\bullet_{d_A}(N,\g)
\]


\subsubsection{The boundary BFV theory}
Boundary fields in the non-abelian Chern-Simons theory are pullbacks of the bulk fields to the
boundary.
\[
\cF_{\pa N}=\Omega^\bullet(N,\g)[1]
\]
We will use the same notation for pullbacks as for bulk fields.
This means $0,1,2$ forms will be denoted by $c, A, A^\dag$ respectively. They have ghost numbers $1,0,-1$.

The one form $\alpha_{\pa N}$, the symplectic structure, the
vector field $Q$ and the action for the boundary BFV theory are:
\[
\alpha_{\pa N}=\frac{1}{2}\int_{\pa N}tr( \cA\wedge \delta\cA)=\frac{1}{2}
\int_{\pa N}tr(A\wedge \delta A+c\wedge \delta A^\dag+A^\dag\wedge \delta c),
\]

\[
\omega_{\pa N}=\frac{1}{2}\int_{\pa N} tr(\delta\cA\wedge \delta\cA)=\int_{\pa N} tr\left(\frac{1}{2}\delta A\wedge \delta A+\delta c\wedge \delta A^\dag\right),
\]
\begin{multline*}
Q_{\pa N}=\int_{\pa N}tr\left(\left(d\cA+\frac{1}{2}[\cA, \cA]\right)\wedge \frac{\delta}{\delta \cA}\right)=\\ =\int_{\pa N}tr\left(d_Ac\wedge \frac{\delta}{\delta A}+ (F(A)+[c,A^\dag])\wedge \frac{\delta}{\delta A^\dag}+\frac{1}{2}[c,c]\wedge\frac{\delta}{\delta c}\right),
\end{multline*}
\[
S_{\pa N}=\int_{\pa N}tr\left(\frac{1}{2} \cA\wedge d\cA+\frac{1}{6}\cA\wedge [\cA, \cA]\right)=\int_{\pa N} tr\left(c\wedge F(A)+\frac{1}{2}[c,c]\wedge A^\dag\right)
\]
The boundary Euler-Lagrange space is the space of graded
flat $\g$-connections on $\pa N$:
\[
\EL_{\pa N}=\{\cA\in \cF_{\pa N}| d\cA+\frac{1}{2}[\cA, \cA]=0\}
\]
The evolution relation $\cL_N\subset \EL_{\pa N}$ is the subspace of graded flat $\g$-connections on $G\times \pa N$ which continue
to flat graded $\g$-connections on $G\times N$.

The flatness of the graded connection $\cA$ means its graded components satisfy
\[
[c,c]=0, \ \ d_Ac=0, \ \ F(A)+[c,A^\dag]=0,
\]
The tangent space $T_\cA\EL_{\pa N}\subset T_\cA\cF_{\pa N}$ is the kernel of $\hat{Q}_\cA$:
\begin{equation}\label{tspace-bd}
\ker(\hat{Q}_\cA)=\{(\gamma, \alpha, \alpha^\dag)| [\gamma,c]=0,
\ \ d_A\gamma+[\alpha,c]=0, \ \ d_A\alpha+[c,\alpha^\dag]+[\gamma, A^\dag]=0 \}
\end{equation}

In $gh=0$ part of $\EL_{\pa N}$ we have $c=A^\dag=0$ and $F(A)=0$
\begin{equation}\label{tscspace-bd}
\ker(\hat{Q}_\cA)=\{(\gamma, \alpha, \alpha^\dag)|
\ \ d_A\gamma=0, \ \ d_A\alpha=0 \}\simeq \Omega^\bullet_{d_A-closed}(\pa N,\g)[1]
\end{equation}

\[
Im(\hat{Q}_\cA)=\{(0, d_A\tilde{\gamma}, d_A\tilde{\alpha})\}
\]
Thus, in this case the reduced tangent space is
\[
\ker(\hat{Q}_\cA)/Im(\hat{Q}_\cA)\simeq H^\bullet_{d_A}(\pa N,\g)[1]
\]

The smooth locus of the Euler-Lagrange space $\EL_{\pa N}$ is
the vector bundle over the space of flat connections on $\pa N$  with fibers
$\oplus_{i\neq 1}\Omega^i_{d_A-closed}(\pa N)$. The smooth locus of the boundary $\EL$-moduli space for the non-abelian Chern-Simons theory is isomorphic to a vector bundle over the smooth locus of the representation variety $\mr{Hom}(\pi_1(\pa N), G)/G$ with fiber $H^0_{d_A}(\pa N,\g)\oplus
H^2_{d_A}(\pa N,\g)$ with the symplectic structure naturally
extending the Atiyah-Bott symplectic form on the representation
variety.

Regularity for non-abelian Chern-Simons theory can be proven using homological perturbation theory around the abelian Chern-Simons theory. This argument extends to general AKSZ theories.

\subsubsection{Reduced BV-BFV theory}
\label{non-ab CS reduced}
Reduced Euler-Lagrange spaces for the bulk and for the boundary are the spaces of leaves of the foliation $Vect_Q$ on $\EL_N$
and $\EL_{\pa N}$ respectively. We will write
\[
\EL_N/Q_N=\{\cA\in \cF_N| d\cA+\frac{1}{2}[\cA, \cA]=0\}/\{\cA\mapsto \cA+ d\lambda+ [\cA,\lambda]|\lambda\in \Omega^\bullet(N)\}
\]
and similarly for $\pa N$. These notations indicate that leaves of $Vect_Q$ should be considered as gauge orbits on $\EL_N$.

The projection $\pi: \cF_N\to \cF_{\pa N}$ (the pullback to the boundary) defines
the mapping
\[
\pi_*: \EL_N/Q_N\to \EL_{\pa N}/Q_{\pa N}
\]
Its image $\cL_N/Q\subset \EL_{\pa N}/Q$ is Lagrangian.


Tangent spaces to $\EL_N$ and $\EL_{\pa N}$ are kernels of
corresponding operators $\hat{Q}$ and are described in components
in (\ref{tspace-blk}), (\ref{tspace-bd}). The image of $\hat{Q}$ is
also easy to compute. For the bulk we have
\[
Im(\hat{Q}_\cA)=\{[\tilde{\gamma},c], d_A\tilde{\gamma}+[\tilde{\alpha},c], d_A\tilde{\alpha}+[c,\tilde{\alpha}^\dag]+[\tilde{\gamma}, A^\dag], d_A\tilde{\alpha}^\dag +[c,\gamma^\dag]+[A^\dag, \tilde{\alpha}]+[c^\dag,\gamma] \}
\]
for the boundary:
\[
Im(\hat{Q}_\cA)=\{([\tilde{\gamma},c], d_A\tilde{\gamma}+[\tilde{\alpha},c], d_A\tilde{\alpha}+[c,\tilde{\alpha}^\dag]+[\tilde{\gamma}, A^\dag])\}
\]
The reduced spaces are quotient spaces $\ker(\hat{Q})/Im(\hat{Q})$.
In other words these are cohomology spaces of the cochain complexes $\Omega^\bullet(N)[1]$ and $\Omega^\bullet(\pa N)[1]$ respectively
with respect to the differential $d_\cA\omega=[c,\omega]+d_A\omega+[A^\dag, \omega]$.

When $\cA$ is of degree zero, i.e. $c=c^\dag=A^\dag=0$ and $F(A)=0$ we have
\[
T_\cA\EL_N/Q_N=H^\bt_{d_A}(N,\g)[1], \ \ T_\cA\EL_{\pa N}/Q_{\pa N}=H^\bt_{d_A}(\pa N,\g)[1]
\]

\subsubsection{The $gh=0$ part of the theory}
The non-reduced $gh=0$ part of the theory has the space
of fields $F_N=\Omega^1(N,\g)$. In case of a general principal $G$-bundle, this is the space of connections. The classical action is the Chern-Simons functional:
\[
S_N^{cl}=\int_N tr\left(\frac{1}{2}A\wedge dA+\frac{1}{6}A\wedge [A, A]\right)
\]
Its critical points are flat connections: $F(A)=dA+\frac{1}{2}[A, A]=0$. The $\g$-valued zero-forms
act by infinitesimal gauge transformations $A\mapsto A+d_A\alpha$,
$\alpha\in \omega^0(N,\g)$.

The $gh=0$ part of the space of boundary fields is $F_{\pa N}=\Omega^1(N,\g)$. This is an exact symplectic space with
\[
\alpha_{\pa N}=\frac{1}{2}\int_{\pa N} tr(A\wedge \delta A),  \ \
\omega_{\pa N}=\delta\alpha_{\pa N}=\frac{1}{2}\int_{\pa N} tr(\delta A\wedge \delta A)
\]
The degree zero part $C_{\pa N}$ of the boundary
Euler-Lagrange space is the space of flat connections
in a trivial $G$-bundle over $\pa N$. It is coisotropic
in the space of all connections.

The gauge group $G_{\pa N}=\Map(\pa N, G)$ acts on
$F_{\pa N}$ by Hamiltonian transformations. Infinitesimally, the action is $A\to A+d_A\alpha$. The momentum map
$\mu: \Omega^1(N,\g)\to \Omega^2(N,\g)$ is the curvature.
The symplectic reduction of $C_{\pa N}$ coincides with
the Hamiltonian reduction. The reduced space $C_{\pa N}/G_{\pa N}$ is the moduli space of flat $G$-connections on the trivial $G$-bundle over $\pa N$.

The reduced $gh=0$ part of the Euler-Lagrange space is
the moduli space of flat connections in the trivial $G$-bundle over $N$. 
The mapping $\pi_*: \cM_N^G \to \cM_{\pa N}^G$ is
the natural restriction mapping of representations of $\pi_1(N)$ to representations of $\pi_1(\pa N)\subset \pi_1(N)$. The image of this mapping is the Lagrangian subvariety in $\cM_{\pa N}^G$, the
fibers are those representations of $\pi_1(N)$ which restrict trivially to $\pi_1(\pa N)$. For detailed
exposition of the global aspects of the classical Chern-Simons theory see \cite{Fr}. 

\subsection{Non-abelian $BF$ theory}
The target space for the non-abelian $BF$ theory is
described in section \ref{aksz-ex}.

\subsubsection{The bulk BV theory} The space of fields in
the BV-extended non-abelian $BF$ theory is
\[
\cF_N=\Omega^\bullet(N,\g)[1]\oplus \Omega^\bullet(N,\g)[n-2]
\]

\[
\omega_N=\int_N tr(\delta\cB\wedge \delta\cA)
\]
\[
Q_N=\int_Ntr\left(\left(d\cA+\frac{1}{2}[\cA,\cA]\right)\wedge \frac{\delta}{\delta \cA}+ d_\cA \cB\wedge \frac{\delta}{\delta B}\right)
\]
\[
S_N=\int_N tr\left(\cB\wedge \left(d\cA+\frac{1}{2}[\cA,\cA]\right)\right)
\]

The Euler-Lagrange space:
\[
\EL_N=\{(\cA,\cB)\in \cF_N|d\cA+\frac{1}{2}[\cA, \cA]=0, \ \ d_{\cA}\cB=0\}
\]

As in the Chern-Simons theory we are interested in the
smooth locus of the Euler-Lagrange space which is, in this case,
a vector bundle over the space of flat connections
on $G\times N$ with fiber $(\oplus_{i\neq 1}\Omega^i_{d_A-closed}(N,\g)[1])\oplus  \Omega^\bullet_{d_A-closed}(N,\g)[n-2]$ over a flat connection $A$.

\subsubsection{The boundary BFV theory}
The space of boundary fields is
\[
\cF_{\pa N}=\Omega^\bullet(\pa N, \g)[1]\oplus \Omega^\bullet(\pa N,\g)[n-2]
\]
The BFV structure on it is given by the AKSZ construction:
\[
\alpha_{\pa N}=\int_{\pa N} tr( \cB\wedge \delta\cA), \ \  \omega_{\pa N}=\delta\alpha_{\pa N}=\int_{\pa N} tr (\delta\cA\wedge \delta\cB)
\]
\[
Q_{\pa N}=\int_{\pa N} tr\left(\left( d\cA+\frac{1}{2}[\cA, \cA]\right)\wedge \frac{\delta}{\delta \cA}+d_{\cA}\cB\wedge \frac{\delta}{\delta \cB}\right)
\]
\[
S_{\pa N}=\int_{\pa N} tr\left( \cB\wedge \left(d\cA+\frac{1}{2}[\cA,\cA]\right)\right)
\]

The boundary Euler-Lagrange space is
\[
\EL_{\pa N}=\{(\cA,\cB)\in \cF_{\pa N}|d\cA+\frac{1}{2}[\cA, \cA]=0, \ \ d_{\cA}\cB=0\}
\]

\subsubsection{The $gh=0$ part of the theory}
The degree zero gauge theory has the space of fields
$F_N=\Omega^1(N,\g)\oplus \Omega^{n-2}(N,\g)$ we will denote
$1$-forms $A$ and $(n-2)$-forms $B$.
The action is
\[
S^{cl}_N=\int_N tr(B\wedge F(A))
\]
Euler-Lagrange equations are
\[
F(A)=0, \ \ d_AB=0
\]
Infinitesimal gauge transformations act as
\[
A\to A+d_A\mu, \ \ B\to B+[B,\mu]+d_A\lambda
\]

The space of boundary fields is the pullback of bulk fields.
The restriction mapping is the pullback. The symplectic structure
on the space of bulk fields is exact with $\omega_N=\delta\alpha_N$
where
\[
\alpha_N=\int_{\pa N} tr(B\wedge \delta A), \ \ \omega_N=\int_{\pa N} tr(\delta A\wedge \delta B)
\]
The 
coisotropic submanifold $C_{\pa N}$ is the $gh=0$ part of $\EL_{\pa N}$. The Lagrangian submanifold $L_N=\pi(EL_N)$ consists of pairs $(A,B)$ where $A$ is a
flat connection on $\pa N$ which continues to a flat connection on $N$ and $B\in \Omega^{n-2}(\pa N,\g)$ is horizontal with respect to the flat connection $A$ and extends to a horizontal $(n-2)$-form on $N$.

The reduced space $EL_N/G_N$ is a vector bundle over
the moduli space of flat connections on the trivial $G$-bundle on $N$ with fiber $H_A^{n-2}(N,\g)$ over a gauge class $[A]$.

The moduli space $EL_{\pa N}/G_{\pa N}$ has the same structure and can be identified with  $T^*M^G_{\pa N}$ .

\subsection{$BF+B^2$ theory}
In this theory the space is $4$-dimensional.
The target space is the same as the one for
the $BF$ theory.

\subsubsection{The bulk BV theory}
The
space of fields is the same as in the $4$-dimensional
$BV$-extended $BF$-theory with the same symplectic structure.

The action and the vector field $Q_N$ are:
\[
Q_N=tr \int_N \left(\left(d\cA+\frac{1}{2}[\cA, \cA]+\cB\right)\wedge \frac{\delta}{\delta \cA}+ d_{\cA}\cB\wedge \frac{\delta}{\delta \cA}\right)
\]
\[
S_N=tr \int_N \left(\cB\wedge \left(d\cA+\frac{1}{2}[\cA, \cA]\right)+\frac{1}{2}\cB\wedge\cB\right)
\]

The Euler-Lagrange space:
\[
\EL_N=\{(\cA,\cB)\in \F_N \;|\; d\cA+\frac{1}{2}[\cA, \cA]+\cB=0,\; d_{\cA}\cB=0 \}\simeq \Omega^\bullet(N,\g)[1]
\]
Indeed, the first condition $F(\cA)+\cB=0$ gives no restriction on
$\cA$, and the second condition $ d_{\cA}\cB=0$ follows from the
first one and from the Bianchi identity.

\subsubsection{The boundary BFV theory}
The space of boundary fields is the same as for the
$BF$ theory with the same symplectic structure of
degree $0$.

The boundary vector field $Q_{\pa N}$ and the boundary action
are
\[
Q_{\pa N}=tr\int_{\pa N} \left(\left(d\cA+\frac{1}{2}[\cA, \cA]+\cB\right)\wedge \frac{\delta}{\delta \cA}+ d_{\cA}\cB\wedge \frac{\delta}{\delta \cA}\right)
\]
\[
S_{\pa N}=tr \int_{\pa N} \left(\cB\wedge \left(d\cA+\frac{1}{2}[\cA, \cA]\right)+\frac{1}{2}\cB\wedge\cB\right)
\]

The boundary Euler-Lagrange space:
\[
\EL_{\pa N}\simeq \Omega^\bullet(\pa N)[1]
\]

\subsubsection{The $gh=0$ part of the theory}
The space of fields is the same as the space of fields
of the non-abelian $BF$ theory in four dimensions.
The action is
\[
S^{cl}_N=tr\int_N \left(B\wedge F(A)+\frac{1}{2}B\wedge B\right)
\]
Euler-Lagrange equations are
\[
B+F(A)=0, \ \ d_AB=0
\]
The second equation follows from the first and from
the Bianchi identity. Infinitesimal gauge transformations are
\[
A\mapsto A+d_A\alpha-\beta, \ \ B\mapsto B+[B,\alpha]+d_A\beta
\]

The reduced theory is trivial:
\[
\EL_N/Q=\{0\}, \ \ \EL_{\pa N}/Q=\{0\}
\]
so the $gh=0$ part of it also trivial.

\subsection{The Poisson sigma model}
The target space for the Poisson sigma model is described in
section \ref{aksz-ex}. In this section $M$ is a Poisson manifold with the Poisson tensor
$\pi(x)$. In local coordinates $\pi(x)=\sum_{ab}\pi^{ab}(x)\pa_a\pa_b$.

\subsubsection{The bulk BV theory} In case of the Poisson sigma model the spacetime is two dimensional. The space of fields
in the models is $\cF_N=\Map(T[1]N, T^*[1]M)$. We will use coordinate field $\tX^a=X^a+\eta^{+,a}+\beta^{+,a}$ for coordinates $q^a$ on $M$
and $\teta_a=\beta_a+\eta_a+X^\dag_a$ for coordinates $p_a$ in the cotangent directions. Components of fields $\tX^a$ and $\teta_a$
are forms of degree $0,1,2$ respectively. They have ghost numbers $gh(X)=0,\; gh(\eta^\dag)=-1,\; gh(\beta^\dag)=-2,\; gh(\beta)=1,\; gh(\eta)=0,\; gh(X^\dag)=-1$. The symplectic form and the action functional are
given by the AKSZ construction:
\[
\omega_N=\int_N\sum_a \delta \teta_a\wedge \delta \tX^a=\int_N \sum_a( \delta X^a\wedge \delta X^\dag_a+
\delta\eta^{+,a}\wedge \delta\eta_a+\delta\beta^{+,a}\wedge X^\dag_a)
\]
\[
S_N=\int_N \sum_a \left(\teta_a\wedge d\tX^a+\frac{1}{2}\int_N \sum_{ab} \pi^{ab}(\tX)\wedge\teta_a\wedge \teta_b\right)
\]
In  field components this action is
\begin{multline}
S_N=\int_N\left(\eta_a\wedge dX^a-\eta^{+,a}\wedge\beta_a+\frac{1}{2}
\pi^{ab}(X)\wedge\eta_a\wedge\eta_b+ \pi^{ab}(X)\wedge\beta_a\wedge X^\dag_b+\right. \\
\left.\eta^{+,a}\wedge\pa_a\pi^{bc}(X)\wedge\beta_b\wedge\eta_c+
\frac{1}{2}\beta^{+,c}\wedge\pa_c\pi^{ab}(X)\wedge\beta_b\wedge\beta_c+
\frac{1}{4}\eta^{+,c}\wedge\eta^{+,d}\wedge\pa_c\pa_d\pi^{ab}(X)\wedge\beta_a\wedge
\beta_b\right)
\end{multline}
The vector field $Q$ is
\[
Q_N=\int_N\left((d\tX^a+\pi^{ab}(\tX)\wedge\teta_a\wedge\teta_b)\wedge \frac{\delta}{\delta \tX^a}+(d\teta_a+\frac{1}{2}\pa_a\pi^{bc}(\tX)\wedge\teta_b\wedge\teta_c)
\wedge \frac{\delta}{\delta \teta_a}\right)
\]

The Euler-Lagrange space $\EL_N$ is the space of zeroes of $Q$ (the space of critical points of $S_N$), or the space of
solutions to
\[
d\tX^a+\pi^{ab}(\tX)\wedge\teta_a\wedge\teta_b=0, \ \ d\teta_a+\frac{1}{2}\pa_a\pi^{bc}(\tX)\wedge\teta_b\wedge\teta_c
\]

\subsubsection{The boundary BFV theory} We will use the same notations for coordinate fields in the space of boundary fields
$\cF_{\pa N}=\Map(T[1]\pa N, T^*[1]M)$. In terms of this coordinate fields the BFV data for the Poisson sigma model are:
\[
\alpha_{\pa N}=\int_{\pa N} \teta_a\wedge \delta\tX^a
\]
\[
\omega_{\pa N}=\int_{\pa N} \delta\teta_a\wedge \delta\tX^a
\]
\[
Q_{\pa N}=\int_{\pa N}\left((d\tX^a+\pi^{ab}(\tX)\wedge\teta_a\wedge\teta_b)\wedge \frac{\delta}{\delta \tX^a}+(d\teta_a+\frac{1}{2}\pa_a\pi^{bc}(\tX)\wedge\teta_b\wedge\teta_c)
\wedge \frac{\delta}{\delta \teta_a}\right)
\]
\[
S_{\pa N}=\int_{\pa N} \left( \sum_a \teta_a\wedge d\tX^a+\frac{1}{2} \sum_{ab} \pi^{ab}(\tX)\wedge\teta_a\wedge \teta_b\right)
\]

\appendix

\section{Coisotropic submanifolds and reduction}\label{co-isotr}

In this paper we use the definition of Lagrangian submanifolds
as submanifolds which are both isotropic and coisotropic. This is because many of the symplectic spaces we work with are
infinite dimensional. In finite dimensional symplectic manifolds,
Lagrangian submanifolds have half-dimension of the total
manifold. In the infinite dimensional case this property
becomes meaningless.

Let $W$ be a presymplectic space (possible infinite-dimensional) and $K\subset W$ be the kernel of its presymplectic form. Then $W'=W/K$ is symplectic. Denote by $p: W\to W'$ the natural projection map.
For a subspace $L\subset W$ define $L'\subset W'$ as $L'=p(L)=L/L\cap K$.

It is clear that $p^{-1}(L)=L+K$, and it is easy to see that
$p^{-1}((L')^\perp)=L^\perp$. Here $A^\perp$ is the orthogonal
space with respect to the symplectic (presymplectic) form.

\begin{Proposition}\label{lagr} The subspaces $L$ and $L'$ have the following properties:
\begin{enumerate}
\item $L$ is isotropic if and only if $L'$ is isotropic.
\item If $L$ is Lagrangian, then $L'$ is also Lagrangian.
\item If $L'$ is Lagrangian and $K\subset L$, then $L$ is Lagrangian.
\end{enumerate}
\end{Proposition}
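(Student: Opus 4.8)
The plan is to reduce all three statements to the two identities recorded just before the Proposition, namely $p^{-1}(L') = L + K$ and $p^{-1}((L')^{\perp}) = L^{\perp}$, combined with the elementary fact that since $p\colon W\to W'$ is surjective, $p^{-1}$ is an inclusion-reflecting bijection from subspaces of $W'$ onto subspaces of $W$ containing $K$. Concretely, for $A,B\subseteq W'$ one has $p(p^{-1}(A))=A$ and $A\subseteq B \iff p^{-1}(A)\subseteq p^{-1}(B)$. First I would record the standing observation that $K\subseteq L^{\perp}$ always holds: taking $A=\{0\}$ in the second identity gives $K=p^{-1}(\{0\})\subseteq p^{-1}((L')^{\perp})=L^{\perp}$ (equivalently, $K=W^{\perp}\subseteq L^{\perp}$). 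This is what makes replacing $L$ by $L+K$ harmless on the $\perp$ side, and it is the technical hinge of the whole argument.

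For part (1), I would use that, since $K\subseteq L^{\perp}$, the isotropy condition $L\subseteq L^{\perp}$ is equivalent to $L+K\subseteq L^{\perp}$. Rewriting both sides via the two identities turns this into $p^{-1}(L')\subseteq p^{-1}((L')^{\perp})$, which by inclusion-reflection is exactly $L'\subseteq (L')^{\perp}$, i.e. isotropy of $L'$. Chaining these equivalences proves the "iff" in a single line, in either direction.

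For part (2), the isotropy half of Lagrangianity of $L'$ is already supplied by part (1). For the coisotropy half I would argue $p^{-1}((L')^{\perp})=L^{\perp}\subseteq L\subseteq L+K=p^{-1}(L')$, where the middle inclusion is precisely the coisotropy $L^{\perp}\subseteq L$ of the Lagrangian $L$; applying the surjection $p$ yields $(L')^{\perp}\subseteq L'$, so $L'$ is Lagrangian. For part (3), Lagrangianity of $L'$ reads $L'=(L')^{\perp}$; taking preimages gives $L+K=L^{\perp}$, and the hypothesis $K\subseteq L$ collapses the left-hand side to $L$, so $L=L^{\perp}$ and $L$ is Lagrangian.

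The only genuine subtlety — and the one place I would comment rather than compute — is the asymmetry between (2) and (3) regarding the assumption $K\subseteq L$. In (2) Lagrangianity of $L$ forces $K\subseteq L^{\perp}=L$ automatically, so no extra hypothesis is needed, whereas in (3) a Lagrangian $L'$ only determines $L$ up to adding $K$, which is exactly why $K\subseteq L$ must be imposed to conclude. Everything else is formal manipulation of the two displayed identities, so I expect no analytic difficulty in the infinite-dimensional setting: nondegeneracy of the induced form on $W'=W/K$ is built into the construction, and no half-dimension counting is ever invoked.
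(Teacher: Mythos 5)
Your proof is correct, and it reorganizes the paper's argument in a genuinely different (if modest) way. The paper works at the level of elements: it uses the defining identity $([\xi],[\eta])_{W'}=(\xi,\eta)$ and unfolds the quantifiers in the definitions of isotropic and coisotropic, inserting elements $\kappa_1,\kappa_2\in K$ by hand; in effect its proofs of parts (2) and (3) re-derive inline the identity $p^{-1}((L')^{\perp})=L^{\perp}$ that is stated just before the Proposition. You instead take the two displayed identities as the sole input and run everything through the order-theoretic properties of $p^{-1}$ (monotone, inclusion-reflecting, and $p\circ p^{-1}=\mathrm{id}$ by surjectivity), which turns the three parts into one-line subspace manipulations: the equivalence $L\subseteq L^{\perp}\iff L+K\subseteq L^{\perp}$ for (1), the chain $L^{\perp}\subseteq L\subseteq L+K$ for (2), and pulling back $L'=(L')^{\perp}$ to $L+K=L^{\perp}$ for (3). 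What your packaging buys is that the quantifier bookkeeping disappears and the role of the hypothesis $K\subseteq L$ in (3) --- together with its automatic validity in (2), since $K\subseteq L^{\perp}=L$ for Lagrangian $L$ --- becomes completely transparent; what the paper's element-wise route buys is self-containedness, since it never needs to invoke the preimage identities as external facts. Both arguments avoid double-orthogonals and dimension counting, so both are sound in the infinite-dimensional setting, as you note.
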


\begin{proof} For $\xi, \eta\in W$, denote by $[\xi], [\eta]\in W'$ the
equivalence classes of these elements. By definition of the symplectic structure on $W'$:
\[
([ \xi ], [ \eta ])_{W'}=(\xi, \eta)
\]
The first statement is obvious from this definition.

Now assume that $L$ is coisotropic, i.e. if  $(\xi, \eta)=0$ for
any $\eta\in L$, then $\xi\in L$. This implies that also $(\xi+\kappa_1,\eta+\kappa_2)=0$ for each $\kappa_i\in K$ . But this means that if $([\xi], [\eta])_{W'}=0$
for each $[\eta]\in L'$ then  $[\xi]\in L'$. This
means $L'$ is contains its symplectic orthogonal and therefore
is coisotropic. We proved the second statement.

To prove the last statement, we have to prove that if $L'$ is
coisotropic and $K\subset L$ then $L$ is coisotropic. Then
the first statement implies the third one.

Assume $L'$ is coisotropic, i.e. that if $([\xi], [\eta])_{W'}=0$
for each $[\eta]\in L'$ then $[\xi]\in L'$.
But this means that if $(\xi+\kappa_1,\eta+\kappa_2)=0$ for each
$\eta\in L$ and $\kappa_1,\kappa_2\in K$ then $\xi\in L+K$.
If $K\subset L$, this means that if $(\xi,\eta)=0$ for each
$\eta\in L$ then $\xi\in L$. That is that $L\subset W$
is coisotropic.

\end{proof}

An important particular case is when $W$ is a
coisotropic subspace in a
bigger symplectic space. In this case $K=W^\perp$.
The space $W'$ is the Hamiltonian reduction of the space $W$.

\begin{remark} If $L\subset V$, the space $V$ is symplectic and
$W\subset V$ is coisotropic but we do not assume that $L\subset W$, then the first statement holds
when $L'=p(L\cap W)$ but the second and the third statements
do not hold in general unless $V$ is finite dimensional.
\end{remark}

Now, let $W$ be a presymplectic manifold and $K\subset TW$ be
the integrable distribution which is the kernel
of the presymplectic form.
Assume the space of leaves of $K$ is a smooth manifold $W'$.
Let $L\subset W$ be a submanifold, and $L'$ be the space of
leaves of $K$ which passes through $L$. Assume it is also smooth.
Then Proposition \ref{lagr} holds but the condition in the
last statement should be replaced with $K|_L\subset TL$.

\section{Some facts on graded manifolds}\label{graded}

\subsection{Graded manifolds}
Recall that a smooth super manifold $M$ with body $M_{e}$ is a sheaf of super algebras over $M_{e}$ locally isomorphic to the tensor product of the algebra  of smooth functions on the body with an exterior algebra. Namely, there is an atlas
$\{(U_{\alpha},\phi_{\alpha})\}$ of $M_{e}$ with super algebra isomorphisms
\[
\Phi_{\alpha}\colon M|_{U_{\alpha}}\to C^{\infty}(\phi_{\alpha}(U_{\alpha}))\otimes \bigwedge V^{{*}}=: \calA_{\alpha}
\]
for a fixed vector space $V$. The coordinate map $\phi_{\alpha}$s take values in some given vector space $W$. For infinite dimensional supermanifolds one needs more structure: $W$ is assumed to be a Banach or Fr\'echet space and, if $V$ is infinite dimensional, the tensor product has to be completed (the dual has also to be defined properly).

If $V=\oplus_{k\in\mathbb{Z}}V_{k}$ and $W=\oplus_{{k\in\mathbb{Z}}}W_{k}$ are $\mathbb{Z}$-graded vector spaces (it is safer to assume they have only finitely many nontrivial, but possibly infinite dimensional, summands), then $A_{\alpha}$ gets
additional structure as it contains the $\mathbb{Z}$-graded subalgebra of polynomial functions where by definition linear functions on $W_{k}$ or $V_{k}$ have degree $-k$. To extend the notion of grading to nonpolynomial functions, we introduce the local graded Euler vector field $E$ as follows: Pick graded bases (i.e., bases adapted to the decompositions)
$\{x^{i}\}$  and $\{y^{i}\}$
of $W^{*}$ and $V^{*}$, respectively; then define
\[
E := \sum_{i} |x^{i}| x^{i} \frac{\partial}{\partial x^{i}} + \sum_{i} |y^{i}| y^{i} \frac{\partial}{\partial y^{i}},
\]
where $|x^{i}|:=-k$ for $x^{i}\in (W_{k})^{*}$ and $|y^{i}|:=-k$ for $y^{i}\in (V_{k})^{*}$. Notice that this definition is independent of the choice of graded bases. We then say that a function $f$ is of degree $k$ if it satisfies $E(f)=kf$.

If it is possible to choose an atlas of $M$ as above such that all transition functions are compatible with the local graded Euler vector fields on charts,\footnote{That is, for any two charts $U_{\alpha}$ and $U_{\beta}$ we have
$E_{\beta}(\Phi_{\beta}\circ\Phi_{\alpha}^{{-1}})^{*}f=(\Phi_{\beta}\circ\Phi_{\alpha}^{{-1}})^{*}E_{\alpha}f$ for all $f \in A_{\alpha}$.} then we say that $M$ is a graded manifold. Notice that a graded manifold is a super manifold with additional structure; namely, that of a globally well-defined graded Euler vector field, which we will keep denoting by $E$.
A global function is said to have degree $k$ if it satisfies
\[
E(f)=kf.
\]
One may let $E$ act on vector fields and differential forms by Lie derivative and define degree accordingly. Namely, a vector field $X$ has degree $k$ if satisfies $[E,X]=kX$ and a differential form $\alpha$ has degree $k$ if $L_{E}\alpha:=i_{E}d\alpha+di_{E}\alpha = k\alpha$. Notice that the graded Euler vector field has degree zero. 
We recall from \cite{Royt} some useful facts  whose proof is a straightforward computation:
\begin{enumerate}
\item Let $\omega$ be a closed form of degree $m\not=0$. Then $\omega=d\theta$ with $\theta=\frac1m\iota_E\omega$.
\item Let  $X$ be a vector field of degree $l$ and
$\omega$ a closed $X$-invariant form of degree $m$ with $m+l\not=0$. Then $\iota_X\omega=d S$
with $S=\frac1{m+l}\iota_E\iota_X\omega$.
\end{enumerate}
In particular, this implies that symplectic forms of degree different from zero are automatically exact and that a symplectic cohomological vector field is automatically Hamiltonian if the degree of the symplectic form is different from $-1$.

\begin{definition}
Morphisms of graded manifolds are morphisms of the underlying supermanifolds that respect the graded Euler vector fields.
\end{definition}

\begin{remark}
A different definition of graded manifolds commonly used in the literature is that of a sheaf $M$ of graded algebras over the degree zero body $M_{0}$ locally isomorphic to the tensor product of smooth functions  on the degree zero body with the graded symmetric algebra of a graded vector space. In this setting functions (and transition functions) are polynomial in the coordinates of degree different from zero. To distinguish graded manifolds defined this way from the ones defined above we will call them polygraded manifolds. Notice that a polygraded manifold is not a super manifold with additional structure. On the other hand, if a graded manifold happens to have an atlas for which all transition functions are polynomial in the coordinates of degree different from zero, then it can be given the structure of a polygraded manifold simply by restricting the sheaf.
In all the examples discussed in this paper we could work with polygraded manifolds, but this would be problematic for functional integral quantization where one exponentiates the action and wants to consider integration.
\end{remark}

\begin{remark} A common notion in the literature is that of an N-manifold. This is a graded manifold with no coordinates of negative degree. They are commonly used as targets of the AKSZ construction. Notice that by degree reasons
all transition functions are polynomial in the coordinates of degree different from zero.
\end{remark}

\begin{remark} A special class of (poly)graded manifolds are those in which the $\mathbb{Z}$ and $\mathbb{Z}_{2}$ gradings agree. This means that $W_{2k+1}=0=V_{2k}$ for all $k$ or, equivalently, that for all $k$ homogenous sections of degree $2k$ ($2k+1$) are even (odd) with respect to the original super algebra grading. These graded manifolds occur in the BV formalism whenever no fermionic physical fields are present. In this paper we will restrict to this case throughout.
\end{remark}

\subsection{Mapping spaces}
Let $M$ and $N$ be finite dimensional super manifolds. Then the set of morphisms $\Mor(M,N)$ can naturally be given the structure of a (usually infinite dimensional) manifold (non super) as follows:
If the target is a super vector space $Z$, then $\Mor(M,Z)$ is the
(usually infinite dimensional) vector space
\begin{equation}\label{eCe}
(C^{\infty}(M)\otimes Z)_{e}= C^{\infty}(M)_{e}\otimes Z_{e} \oplus C^{\infty}(M)_{o}\otimes Z_{o}
\end{equation}
For the general case, one applies this construction to local charts of the target to define local charts of the manifold of morphisms.

It is important to extend this construction to define the mapping space $\Map(M,N)$ as a (usually infinite dimensional) super manifold with body $\Mor(M,N)$. Again this is done in terms of local charts, so it is enough to define $\Map(M,Z)$. This is just the super space $C^{\infty}(M)\otimes Z$ (of which \eqref{eCe} is the even part). The main property of the mapping space is
\[
\Mor(X\times M,N)=\Mor(X,\Map(M,N)) \quad \forall X.
\]

Vector fields on the source and on the target can naturally be lifted to the mapping space. If $X_{M}$ ($Y_{N}$) is a vector field on $M$ ($N$), we denote by $\Check X_{M}$ ($\Hat Y_{N}$) its lift. Notice that the map $Y_{N}\mapsto\Hat Y_{N}$ is a morphism of Lie algebras, whereas the map $X_{M}\mapsto\Check X_{M}$ is an antimorphism.

If $M$ and $N$ are graded manifolds, with graded Euler vector fields $E_{M}$ and $E_{N}$, then one can give $\Map(M,N)$ the structure of a (usually infinite dimensional) graded manifold with graded Euler vector field
\[
E_{\Map(M,N)}=\Hat E_{N}-\Check E_{M}.
\]
The set of morphisms from $M$ to $N$ in the category of graded manifolds can then be regarded as the degree zero submanifold of $\Map(M,N)$.

\begin{remark} In the category of polygraded manifolds, one can also define the mapping space as a polygraded manifold. In this case the local data are given by $\Map(M,Z)$ for $Z$ a graded vector space. Here $\Map(M,Z)$ is defined as the graded tensor product
\[
C^{\infty}(M)\otimes Z = \bigoplus_{k} (C^{\infty}(M)\otimes Z)_{k}
\]
with
\[
(C^{\infty}(M)\otimes Z)_{k}=\bigoplus_{j} C^{\infty}(M)_{j}\otimes Z_{k-j}.
\]
\end{remark}

\section{On smooth points of the moduli space $\EL/Q$}\label{smooth}

Here we will discuss the notion of smooth points in the $\EL$-moduli
spaces which we use throughout the paper. We will use notations:
$\FF$ for the space of BV fields, $F$ for its $\gh=0$ part,  $\EL$ for BV Euler-Lagrange space and $EL$ for its $\gh=0$ part.
We assume $\cF$ is given together with
the cohomological vector field $Q$. Denote by $Vect_Q$ the
Lie subalgebra of the Lie algebra of vector fields on $\cF$
formed by Lie brackets with $Q$ and by $G$ the distribution
on the body $F$ of $\cF$ induced by the span of $Vect_Q$. The distribution $G$ on $F$ should be regarded as infinitesimal gauge transformations.

For a  $x\in F$ the Taylor expansion of a vector field $Q$ in the formal neighborhood of x is:
$$Q^{\mathrm{formal}}_x=Q^{(0)}_x + Q^{(1)}_x+Q^{(2)}_x+\cdots$$
where each $$Q^{(k)}_x \in  \mbox{Coder}(\hat S T_x \FF)$$
is the extension of a $k$-linear map
$\hat{Q}^{(k)}_x: S^k (T_x \FF) \rightarrow T_x \FF$
to a coderivation of $\hat ST_x\FF$  by co-Leibniz identity.
The linear maps $\hat{Q}^{(k)}_x$ arise from the
Taylor expansion of $Q$ at $x$. By the natural inclusion $\mbox{Coder}(\hat S T_x \FF) \hookrightarrow \mbox{Der}(\hat{S} T^*_x \FF)$, $Q_x^{\mr{formal}}$ acts on $\hat S T^*_x \FF$ as a derivation.

If $x\in EL\subset F$, then $\hat{Q}^{(0)}_x=0$ and $Q^{\mathrm{formal}}_x$ endows $T_x[-1] \FF$ with the structure of $L_\infty$ algebra with differential $\hat{Q}^{(1)}_x$ and higher polylinear operations $\hat{Q}^{(k)}_x$, $k\geq 2$.

We define the formal neighborhood of $[x]\in EL/G$  in $\EL/Q$ as
\be U^{\mathrm{formal}}_{[x]}(\EL/Q):=\mr{Spec}(H_{Q^{\mathrm{formal}}_x}(\hat S T^*_x \FF))
\label{S-of-Q-coh}
\ee
It can be regarded as the ``BV" (or ``stable") version of the Maurer-Cartan set for the $L_\infty$ structure on $T_x[-1]\cF$.
When operations $\hat{Q}^{(k)}_x$ for $k\geq 2$ are identically zero, the spectrum (\ref{S-of-Q-coh}) is
\[
H_{\hat{Q}^{(1)}_x}(T_x \FF)
\]
which means that the formal neighborhood of $[x]$ in $\EL/Q$ is the graded vector space $H_{\hat{Q}^{(1)}_x}(T_x \FF)$.

When operations $\hat{Q}^{(k)}_x$ do not vanish for $k\geq 2$
they induce operations $\hat{Q}^{'(k)}_x$ on the cohomology space
$V_x=H_{\hat{Q}^{(1)}_x}(T_x \FF)$. These operations endow the
graded space $V_x[-1]$ with the structure of minimal $L_\infty$ algebra. In other words the formal series
\[
Q_x'=Q'^{(0)}_x + Q'^{(1)}_x+Q'^{(2)}_x+\cdots \in\mr{Coder}(\hat{S}(V_x))
\]
is a (formal) cohomological vector field on $V_x$.
Note that first two terms vanish by our assumptions.

This construction is known as homological perturbation theory
\cite{GL} (see also \cite{GKR} for a more concise exposition),
more specifically, as the homotopy transfer of $L_\infty$ algebras \cite{KS,VL}. By the same homological perturbation theory the
formal neighborhood of $[x]\in EL/G\subset \EL/Q$ is
\[
U^{\mr{formal}}_{[x]}(\EL/Q)=\mr{Spec}(H_{Q'_x}(\hat{S} (V_x)^*))
\]
This is a singular variety unless $Q'_x$ vanishes.

This is why we define \textit{a smooth point $[x]\in EL/G$}
as a point for which all induced $L_\infty$ operations $\hat{Q}^{'(k)}_x$ vanish. The set of such points we will call
the smooth locus of the body of the $\EL$-moduli space and will denote it $(EL/G)^{\mbox{smooth}}$.

The smooth locus in the $\EL$-moduli space  is a graded vector bundle over $(EL/G)^{\mbox{smooth}}$ with fiber $H^{\neq 0}_{\hat{Q}^{(1)}_x}(T_x \FF)$ over $[x]$. Here $H^{\neq 0}$ means that we do not count the cohomology in ghost number 0 to avoid double counting of geometric ($gh=0$) directions for the tangent space.


\section{Cartan calculus of local differential forms}\label{Cart-Calc}

\subsection{Transgression map from forms on the target to forms on the mapping space}
Recall that the space of fields in the AKSZ theory is
$\F_N=\Map(T[1]N, \M)$ with $\M$ a Hamiltonian dg manifold.

Consider natural projections
\[
\begin{array} {ccc}\F_N \times T[1]N& \stackrel{ev}{\rightarrow} & \M \\
p\downarrow &  & \\
\F_N & &
\end{array}
\]
Here $ev(f, x)=f(x)$ and $p(f,x)=f$.
The pullback $ev^*$ of a from $\M$ gives a
form on $\F_N \times T[1]N$. Let $\Omega$ be a form on
$\F_N \times T[1]N$ of type $(k,0)$. The pushforward $p_*$ of a form on $\F_N \times T[1]N$  gives a form on $\F_N$.
\[
p_*(\Omega)(X)=\int_{T[1]N} i_X^*(\Omega)
\]
Here  $i_X$ is the
composition of the natural isomorphism of $T[1]N$ with the fiber
$(X, T[1]N)=p^{-1}(X)$ and the embedding of this
fiber into $\F_N\times T[1]N$. The integration over $T[1]N$ is defined as usual
taking into account natural isomorphism $\phi: C^\infty(T[1]N)\simeq \Omega^\bullet(N)$. If $f\in C^\infty(T[1]N)$
\[
\int_{T[1]N}f\stackrel{\text{def}}{=}\int_{N}\phi(f)
\]
Having in mind this formula we will write for $p_*$
\[
p_*(\Omega)(X)=\int_{N} \Omega
\]

The transgression map is defined as
\[
T_N:=p_* \ev^*:\qquad \Omega^\bt(\MM)\ra \Omega^\bt(\FF_N)
\]

Given  $\alpha\in \Omega^\bullet(\M)$ which is $\alpha=\sum_{\{a\}}\alpha_{a_1,\dots, a_k}dx^{a_1}\dots dx^{a_k}$ in local coordinates, its transgression is
\begin{equation}\label{ul-form}
(T_N\alpha)(X)=\int_N \sum_{a_1,\dots, a_k}
\alpha_{a_1,\dots, a_k}(X(u))\delta X^{a_1}(u)\dots
\delta X^{a_k}(u)
\end{equation}
where $X^a(u)$ are coordinate fields and $\delta X^a(u)$ are
their de Rham differentials (on $\cF_N$).

Define the space of  {\it ultra local forms} $\Omega_{uloc}^\bullet(\F_N)$
as the image of the transgression map.

\subsection{De Rham and lifted vector fields}\label{cart-vect}

Recall that the de Rham differential for $N$ can be regarded
as a vector field $D_N$ on $T[1]N$ of degree $1$.  The
identification of forms on $N$ with functions on $T[1]N$
identifies the action of the de Rham differential on forms
with the action of the vector field $D_N$ on functions on $T[1]N$. In
local coordinates $\{u^i\}$ on $N$ and $\xi^i=du^i$ on $T_u[1]N$
the Lie derivative of a function along the de Rham vector field on $T[1]N$  is
$D_N f=\sum_i \xi^i\frac{\pa f}{\pa u^i}$.

The {\it de Rham vector field} $\Check D_N$ is defined as the lift of de Rham vector field $\xi^i \frac{\dd}{\dd u^i}$ on  $N$ to the mapping space $\FF_N$.

Indeed, the tangent space $T_X\F_N$ is the space of
mappings $Y: T[1]N\to T\M$
such that $(\xi, u)\mapsto Y(u,\xi)\in T_{X(u,\xi)}\M$.
A vector field $V$ on $\F_N$ is a section of the tangent
bundle, i.e. it assigns a vector $V_X: \F_N\to T_X\F_N$
to each $X\in \F_N$.

For de Rham vector field $\Check D_N$ on $\F_N$ the
vector $(\Check D_N)_X\in T_X\FF_N$ is the mapping
\[
(\Check D_N)_X: T[1]N\stackrel{D_N}{\rightarrow} T(T[1]N)
\stackrel{dX}{\rightarrow} T\M
\]
Here $dX: T(T[1]N)\to T\M$ is the differential of $X$.
In local coordinates:
\[
(\Check D_N)_X(\xi,u)=\sum_a\sum_{\{i\}, j}\frac{\pa X^a_{i_1,\dots, i_k}}{\pa u^j}(u)
\xi^j\xi^{i_1}\dots \xi^{i_k}\frac{\pa}{\pa x^a}
\]
The de Rham vector field on $\FF_N$ has ghost number $+1$.

Another important class of vector fields on $\FF_N$ comes from lifting vector fields on $\MM$ to the mapping space $\FF_N$.
If $v: \MM\to T\MM$ is a vector field on $\MM$, its lifting
is the vector field $\hat{v}$ on $\FF_N$ with $\hat{v}_X\in T_X \FF_N$ given by the mapping
\[
T[1]N\stackrel{X}{\rightarrow} \MM
\stackrel{v}{\rightarrow} T\M
\]

\subsection{Local differential forms}

{\it Local differential forms}\footnote{Note that, by our definition, local $0$-forms are special cases of local functionals.} are defined as substitutions of several copies of $\Check D_N$ into an ultra-local form:
\be
T^k_N\chi:= (\iota_{\Check D_N})^k T_N\chi,\qquad k\geq 0 \label{T^k def}
\ee

The space of forms $\Omega^\bullet(\MM)$ is naturally
bi-graded with (de Rham degree of forms on $\M$, $\ZZ$-grading on $\M$).
The space $\Omega^\bullet(\F_N)$ is also naturally bi-graded with (de Rham degree of forms on $\F_N$, ghost number).
Transgression mappings $T_N^k: \Omega^\bullet(M)\to
\Omega^\bullet(\F_N)$ are bi-graded with
\[
deg(T^k_N)=(-k,-\dim(N)+k)
\]

Also note that if $\chi\in \Omega^n(\MM)$ and $k>n$, by degree
reasons we have $T^k_N\chi=0$.

Let $\{x^a\}$ be local coordinates on $\M$ and $X^a(u)$ be corresponding coordinate fields. If $\alpha=\sum_{\{a\}}\alpha_{a_1,\dots, a_k} dx^{a_1}\wedge\dots
\wedge dx^{a_k}$ is a differential form on $\M$ written in local
coordinates, the corresponding local forms on $\F_N$ can be written in coordinate fields as

\[
T^l_N(\alpha)=\int_N \sum_{\{a\}=\{a_I\}\sqcup \{a_{II}\}} 
\alpha_{\{a_I\}\sqcup \{a_{II}\}}(X(u))\wedge dX^{\{a_I\}}(u)
\wedge \delta X^{\{a_{II}\}}(u)
\]
Here the sum is taken over partitions of
the set $\{a\}$ into two subsets, so that $\{a\}$ is a shuffle
of $\{a_{I}\}$ and $\{a_{II}\}$.
We denote $dX^{\{a\}}=
dX^{a_1}\wedge\dots
\wedge dX^{a_k}$.

We will use the notation $\Omega^\bullet_{loc}(\F_N)$
for the space of such forms.

\subsection{The Cartan calculus}\label{cart-calc}

Let us denote $\pi: \FF_N\ra \FF_{\dd N}$ the restriction map (pullback by the natural inclusion $\dd N\hookrightarrow N$) and let
\be \pi^*:\qquad \Omega^\bt(\FF_{\dd N})\ra \Omega^\bt (\FF_N) \label{pi^*}\ee
be the pullback by $\pi$ for differential forms on the space of fields.

Objects we introduced satisfy the following properties:
\begin{enumerate}[(i)]
\item \label{Cartan calculus: 1st rule} De Rham differential $\delta$ on $\FF_N$ acts on local forms by
\be \delta (T_N^k\chi)= (-1)^{\dim(N)}(T_N^kd\chi-k\cdot \pi^*(T_{\dd N}^{k-1}\chi)) \label{D on local forms}\ee
where $d$ is the de Rham differential on target.

\item ``Total derivative property'', which is a special case of (\ref{D  on local forms}) for $k=\deg\chi+1$:
\be
\chi\in \Omega^n(\MM) \quad \Rightarrow \qquad T^{n+1}_N(d\chi) = (n+1)\cdot \pi^*(T^n_{\dd N}\chi)
\label{total derivative}\ee

\item Substitution of $\Check D_N$ into a local form:
\be \iota_{\Check D_N} (T^k_N \chi)= T^{k+1}_N\chi \label{subst of hat D_N}\ee
by definition  (\ref{T^k def}).

\item Substitution of a lifted vector field into a local form:
\be \iota_{\hat{v}} (T^k_N \chi)= (-1)^{(\gh(v)+1)\cdot \dim(N)} T^k_N (\iota_v \chi) \label{subst L_N v}\ee
for any vector field $v\in \Vect(\MM)$ of ghost number $\gh(v)$ on target.

\item Rules for commuting the substitution of a vector field on $\FF_N$ with the map $\pi^*$  (\ref{pi^*}):
\begin{eqnarray}
\iota_{\Check D_N} \pi^* &=& \pi^* \iota_{\Check D_{\dd N}} \label{hat D_N on bdry}\\
\iota_{\hat{v}} \pi^* &=& \pi^* \iota_{\tilde{v}} \label{L_N v on bdry}
\end{eqnarray}
Here $\tilde{v}$ is the lifting of the vector field $v$ on $\M$ to a vector field on $\F_{\pa N}$. This is a manifestation of the fact that vector fields $\Check D_N, \hat{v}\in \Vect(\FF_N)$ are projectable by the restriction map $\pi: \FF_N \ra \FF_{\dd N}$ with projections being $\Check D_{\dd N}, \tilde{v}\in \Vect(\FF_{\dd N})$ respectively.

\item The Lie derivative of a local form along $\Check D_N$:
\be L_{\Check D_N}(T^k_N\chi)= (-1)^{\dim(N)} \pi^*(T^k_{\dd N} \chi) \label{Lie derivative along hat D_N}\ee
We define the Lie derivative by Cartan formula $L_V:=[\iota_V,\delta]= \iota_V\circ \delta + (-1)^{\gh(V)} \delta\circ \iota_V$ for any vector field $V\in \Vect(\FF_N)$. With this definition the formula (\ref{Lie derivative along hat D_N}) comes as a consequence of (\ref{D  on local forms}), (\ref{subst of hat D_N}), (\ref{hat D_N on bdry}).

\item Lie derivative of a local form along a lifted vector field:
\be L_{\hat{v}}(T^k_N \chi)= (-1)^{\gh(v)\cdot \dim(N)} T^k_N(L_v\chi) \label{Lie derivative along L_N v}\ee
(this is a consequence of (\ref{D  on local forms}), (\ref{subst L_N v}), (\ref{L_N v on bdry})).

\item \label{Cartan calculus: last rule} Commutators of vector fields:
\be [\Check D_N, \Check D_N] = 0 \label{hat D_N squares to zero} \ee
This is because $d^2=0$ on the source $N$. Also
\be [\Check D_N, \hat{v}] = 0 \label{hat D_N and L_N v commute} \ee
and
\be [\hat{u}, \hat{v}] = \widehat{[u,v]} \label{L_N u and L_N v commutator} \ee
\end{enumerate}

\subsection{Applications to AKSZ theories}

\subsubsection{Invariant definition of AKSZ theories}

Recall that the target manifold in an $n$-dimensional AKSZ theory is a Hamiltonian dg manifold of degree $n-1$. In other words, it is
symplectic with an exact symplectic form $\omega=d\alpha\in\Omega^2(\MM)$ and $deg(\omega)=n-1$
and with a potential function $\Theta$ of degree $n$ such that
$\{\Theta, \Theta\}=0$. The potential function generates the
cohomological vector field $Q$ of degree $1$:
\[
\iota_{Q}\omega = d \Theta
\]
By definition of $Q$, the symplectic form $\omega$ is $Q$-invariant, i.e.
\[
L_{Q} \omega =0
\]
where $L_{Q}\eta$ is the Lie derivative of $\eta$.
The condition  $\{\Theta, \Theta\}=0$ implies that
\[
L_{Q}^2=0
\]
which we will write as $Q^2=0$.

The AKSZ action is the following local functional on
the space $\Map(T[1]N\to \MM)$
$$
S_N=S^{kin}_N+S^{int}_N
$$
where
\begin{eqnarray*}
S^{kin}_N &=& T^1_N \alpha \\
S^{int}_N &=& T^0_N \Theta
\end{eqnarray*}
are kinetic and interaction parts of the action
respectively. Here $T^k_N$ are the transgression maps
defined above.

The BV cohomological vector field $Q_N$ in this
theory is defined as
$$Q_N= Q^{kin}_N+Q^{int}_N \quad \in \Vect(\FF_N)$$
with
\begin{eqnarray*}
Q^{kin}_N &=& \Check D_N \\
Q^{int}_N &=& \hat{Q}
\end{eqnarray*}
Here $\hat{Q}$ is the lift of the vector field $Q$
on $\M$ to a vector field on $\cF_N$, see subsection \ref{cart-vect}.

The BV 2-form and its primitive 1-form:
\begin{eqnarray*}
\omega_N &=& T^0_N \omega \\
\alpha_N &=& T^0_N \alpha
\end{eqnarray*}

Note that with our conventions we have an extra sign: $\omega_N=(-1)^{\dim(N)}\delta\alpha_N $.

These formulae applied to the boundary $\pa N$ of $N$
also define the BFV action $S_{\pa N}$, the cohomological
vector field $Q_{\dd N}$, BFV 2-form $\omega_{\dd N}$ and its primitive 1-form $\alpha_{\dd N}$.

\subsubsection{Proof of Proposition \ref{aksz=bvbfv}}

Using the definition of $Q_N$ and the Cartan calculus
we can compute the contraction of kinetic and interaction
parts of $Q_N$ with $\omega_N$:
\[
\iota_{Q_N^{kin}} \omega_N = \iota_{\Check D_N} (T^0_N \omega)=T^1_N \omega =
T^1_N \delta \alpha
\]
The rule (\ref{D on local forms}) from Cartan calculus on local forms implies
that this is equal to
\begin{equation}\label{Q_N^kin Ham}
(-1)^{\dim(N)} \delta T^1_N \alpha+ \pi^* (T^0_{\dd N} \alpha) =
(-1)^{\dim (N)} \delta S_N^{kin}+ \pi^* \alpha_{\dd N}
\end{equation}
Contracting $Q^{int}_N$ with $\omega_N$ we obtain
\[
\iota_{Q_N^{int}} \omega_N = \iota_{\hat{Q}} (T^0_N \omega)= T^0_N (\iota_{Q} \omega)
\]
Here in the last equality we used (\ref{subst L_N v}). Because $Q$ is a Hamiltonian vector field generated by $\Theta$, $\iota_{Q} \omega=d\Theta$ and for the above expression we obtain
\begin{equation}\label{Q_N^int Ham}
T^0_N (d \Theta) = (-1)^{\dim (N)} \delta(T^0_N \Theta)= (-1)^{\dim(N)} \delta S_N^{int}
\end{equation}
Here the first equality follows from (\ref{D  on local forms}).

Equations (\ref{Q_N^kin Ham}), (\ref{Q_N^int Ham}) together give (\ref{Q_N Ham}).

\subsubsection{Further applications to the AKSZ formalism}

Here we will reprove the Proposition \ref{LQS} using the Cartan calculus.

\begin{Proposition}The following identity holds:
\be L_{Q_N} S_N = (-1)^{\dim(N)}\pi^* (2S_{\dd N} - \iota_{Q_{\dd N}} \alpha_{\dd N})  \label{CMEaksz}\ee
\end{Proposition}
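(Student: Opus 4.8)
The plan is to feed the AKSZ decompositions $Q_N=\check D_N+\hat Q$ and $S_N=S^{kin}_N+S^{int}_N=T^1_N\alpha+T^0_N\Theta$ into the Cartan calculus of Subsection \ref{cart-calc}, so that $L_{Q_N}S_N$ breaks into four transgression computations. First I would dispose of the lifted de Rham field: applying rule (\ref{Lie derivative along hat D_N}) termwise gives $L_{\check D_N}S_N=(-1)^{\dim N}\pi^*(T^1_{\dd N}\alpha+T^0_{\dd N}\Theta)=(-1)^{\dim N}\pi^*S_{\dd N}$, producing one copy of $\pi^*S_{\dd N}$ for free.

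Next I would compute $L_{\hat Q}S_N$ via rule (\ref{Lie derivative along L_N v}) with $\gh(Q)=1$. On the interaction term this reads $L_{\hat Q}(T^0_N\Theta)=(-1)^{\dim N}T^0_N(L_Q\Theta)$, and since $L_Q\Theta=Q\Theta=\{\Theta,\Theta\}$ vanishes by the target master equation (\ref{cme-theta}), the interaction contributes nothing. On the kinetic term, $L_{\hat Q}(T^1_N\alpha)=(-1)^{\dim N}T^1_N(L_Q\alpha)$, and here I would expand $L_Q\alpha$ by the graded Cartan formula on the target, $L_Q\alpha=\iota_Q d\alpha-d(\iota_Q\alpha)=d\Theta-d(\iota_Q\alpha)$; the minus sign is forced by $\gh(Q)=1$ and is the linchpin of the whole computation.

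I would then transport both pieces to the boundary using the total-derivative rule (\ref{total derivative}): for any $\chi\in\Omega^0(\M)$ one has $T^1_N(d\chi)=\pi^*(T^0_{\dd N}\chi)$, whence $T^1_N(d\Theta)=\pi^*S^{int}_{\dd N}$ and $T^1_N(d\iota_Q\alpha)=\pi^*(T^0_{\dd N}\iota_Q\alpha)$. To identify the last term I would evaluate $\iota_{Q_{\dd N}}\alpha_{\dd N}$ directly from $\alpha_{\dd N}=T^0_{\dd N}\alpha$ and $Q_{\dd N}=\check D_{\dd N}+\hat Q$: rule (\ref{subst of hat D_N}) gives $\iota_{\check D_{\dd N}}\alpha_{\dd N}=S^{kin}_{\dd N}$, while rule (\ref{subst L_N v}) contributes $\iota_{\hat Q}\alpha_{\dd N}=T^0_{\dd N}(\iota_Q\alpha)$ with trivial sign $(-1)^{2\dim\dd N}=+1$, so that $T^0_{\dd N}(\iota_Q\alpha)=\iota_{Q_{\dd N}}\alpha_{\dd N}-S^{kin}_{\dd N}$.

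Assembling these gives $L_{\hat Q}S_N=(-1)^{\dim N}\pi^*\big(S^{int}_{\dd N}+S^{kin}_{\dd N}-\iota_{Q_{\dd N}}\alpha_{\dd N}\big)$, and adding the $\check D_N$ contribution and using $S^{kin}_{\dd N}+S^{int}_{\dd N}=S_{\dd N}$ yields exactly $(-1)^{\dim N}\pi^*(2S_{\dd N}-\iota_{Q_{\dd N}}\alpha_{\dd N})$. The main obstacle is not conceptual but the sign bookkeeping: each use of rules (\ref{subst L_N v}) and (\ref{Lie derivative along L_N v}) carries a factor depending on $\gh(Q)\cdot\dim N$, and the single delicate point is the graded Cartan identity $L_Q\alpha=\iota_Q d\alpha-d(\iota_Q\alpha)$ on the target — its minus sign is precisely what flips $+\iota_{Q_{\dd N}}\alpha_{\dd N}$ into the required $-\iota_{Q_{\dd N}}\alpha_{\dd N}$, so pinning down that convention is the real content of the argument.
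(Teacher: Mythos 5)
Your proposal is correct and follows essentially the same route as the paper's proof: splitting $Q_N=\Check D_N+\hat Q$ and $S_N=T^1_N\alpha+T^0_N\Theta$, applying the Lie-derivative rules (\ref{Lie derivative along hat D_N}) and (\ref{Lie derivative along L_N v}), using the target Cartan identity $L_Q\alpha=\iota_Q d\alpha-d\iota_Q\alpha=d(\Theta-\iota_Q\alpha)$ together with the total-derivative rule (\ref{total derivative}), and killing $L_Q\Theta=\{\Theta,\Theta\}=0$ by the target master equation. The only cosmetic difference is bookkeeping: you group the four contributions by vector field and write $T^0_{\dd N}(\iota_Q\alpha)=\iota_{Q_{\dd N}}\alpha_{\dd N}-S^{kin}_{\dd N}$, whereas the paper computes the four terms separately and identifies $T^0_{\dd N}(\iota_Q\alpha)=\iota_{Q^{int}_{\dd N}}\alpha_{\dd N}$, which is the same identity.
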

\begin{proof}
The proof is computational:
\begin{multline}
L_{Q_N^{kin}} S_N^{kin} = L_{\Check D_N} T^1_N \alpha = (-1)^{\dim(N)} \pi^* (T^1_{\dd N}\alpha) = \\ =(-1)^{\dim(N)} \pi^* (\iota_{Q_{\dd N}^{kin}} \alpha_{\dd N})= (-1)^{\dim (N)} \pi^* (2 S_{\dd N}^{kin}-\iota_{Q_{\dd N}^{kin}} \alpha_{\dd N})
\label{CME kin-kin}
\end{multline}
Here we used $\iota_{Q_{\dd N}^{kin}}\alpha_{\pa N}=S_{\dd N}^{kin}$.
\begin{multline}
L_{Q_N^{kin}} S_N^{int} = L_{\Check D_N} (T^0_N \Theta) = (-1)^{\dim (N)} \pi^*(T^0_{\dd N} \Theta ) = \\
= (-1)^{\dim (N)} \pi^* (S_{\dd N}^{int})
\label{CME kin-int}
\end{multline}

\begin{multline}
L_{Q_N^{int}} S_N^{kin} = L_{\hat Q} (T^1_N \alpha) = (-1)^{\dim (N)} T^1_N (L_{Q} \alpha) \underbrace{=}_{\mbox{Cartan formula}} \\
=(-1)^{\dim (N)} T^1_N (\iota_{Q}{d \alpha} - d \iota_{Q} \alpha)= (-1)^{\dim (N)} T^1_N d (\Theta - \iota_{Q}\alpha)
\end{multline}
Here we used the Cartan formula and the exactness of the symplectic form on the target, $\omega=d\alpha$.
Now applying (\ref{total derivative}) we obtain
\begin{equation}\label{CME int-kin}
=(-1)^{\dim (N)} \pi^* T^0_{\dd N} (\Theta - \iota_{Q}\alpha) =
(-1)^{\dim(N)}\pi^* (S_{\dd N}^{int}-\iota_{Q_{\dd N}^{int}}\alpha_{\dd N})
\end{equation}
\be
L_{Q_N^{int}} S_N^{int} = L_{\hat Q} T^0_N \Theta = (-1)^{\dim(N)} T^0_N (L_{Q} \Theta) = 0 \label{CME int-int}
\ee
Collecting (\ref{CME kin-kin})--(\ref{CME int-int}), we obtain (\ref{CMEaksz}).

\end{proof}

The following is a corollary of Proposition \ref{aksz=bvbfv} but here we give an independent proof.

\begin{Proposition}
The following holds:
\be L_{Q_N} \omega_N = (-1)^{\dim (N)} \pi^* \omega_{\dd N} \label{Q_N symp}\ee
\end{Proposition}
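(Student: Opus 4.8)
The plan is to use the invariant AKSZ description of the BV data from the previous subsection and to reduce the whole statement to the Cartan calculus rules (\ref{Lie derivative along hat D_N}) and (\ref{Lie derivative along L_N v}). Recall that $\omega_N = T^0_N\omega$ and $Q_N = Q_N^{kin}+Q_N^{int}= \Check D_N + \hat{Q}$, where on the target $\omega = d\alpha$. Since the Lie derivative is linear in the vector field, I would split $L_{Q_N}\omega_N = L_{\Check D_N}\omega_N + L_{\hat{Q}}\omega_N$ and treat the two summands separately, exactly in parallel with the proofs of Propositions \ref{aksz=bvbfv} and \ref{LQS}.

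First I would compute the kinetic contribution. Applying (\ref{Lie derivative along hat D_N}) with $k=0$ and $\chi=\omega$ gives $L_{\Check D_N}(T^0_N\omega)=(-1)^{\dim(N)}\pi^*(T^0_{\dd N}\omega)=(-1)^{\dim(N)}\pi^*\omega_{\dd N}$, which is already the desired right-hand side. This step is the heart of the statement: it is precisely the boundary term generated by the de Rham vector field, i.e. the mapping-space incarnation of Stokes' theorem, and it is what breaks $Q$-invariance of $\omega_N$ in the presence of a boundary.

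Next I would show that the interaction contribution vanishes. Applying (\ref{Lie derivative along L_N v}) with $v=Q$, so that $\gh(v)=1$, and $k=0$ yields $L_{\hat{Q}}(T^0_N\omega)=(-1)^{\dim(N)}T^0_N(L_Q\omega)$. The essential input here is the $Q$-invariance of the target symplectic form, $L_Q\omega=0$, which holds because $Q$ is the Hamiltonian vector field of $\Theta$; hence this summand is zero. Adding the two contributions gives (\ref{Q_N symp}).

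The argument is entirely mechanical once the Cartan calculus is available, so there is no real obstacle; the only delicate point is the careful bookkeeping of the signs $(-1)^{\dim(N)}$ and of the ghost-number factor $(-1)^{\gh(v)\cdot\dim(N)}$ in (\ref{Lie derivative along L_N v}). Conceptually, the entire statement rests on two facts already established: the transgression rule for $L_{\Check D_N}$ and the $Q$-invariance of $\omega$ on the target.
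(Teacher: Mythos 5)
Your proof is correct and follows exactly the route taken in the paper: splitting $Q_N$ into $\Check D_N + \hat{Q}$, using rule (\ref{Lie derivative along hat D_N}) to produce the boundary term $(-1)^{\dim(N)}\pi^*\omega_{\dd N}$ from the kinetic part, and rule (\ref{Lie derivative along L_N v}) together with $L_Q\omega=0$ to kill the interaction part. The sign bookkeeping, including $(-1)^{\gh(Q)\cdot\dim(N)}=(-1)^{\dim(N)}$, matches the paper's computation as well.
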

\begin{proof}
Indeed:
\be
L_{Q_N^{kin}}\omega_N = L_{\Check D_N} (T^0_N \omega)= (-1)^{\dim(N)}\pi^* \underbrace{T^0_{\dd N}\omega}_{\omega_{\dd N}} \label{Q_N^kin symp}
\ee
where we used the rule (\ref{Lie derivative along hat D_N}). Also,
\be
L_{Q_N^{int}}\omega_N = L_{\hat Q}(T^0_N \omega)= (-1)^{\dim(N)}T^0_N (L_{Q} \omega)=0 \label{Q_N^int symp}
\ee
where we used the rule (\ref{Lie derivative along L_N v}) and the fact that the target cohomological vector field is symplectic. Putting (\ref{Q_N^kin symp}) and (\ref{Q_N^int symp}) together we get (\ref{Q_N symp}).
\end{proof}


\begin{thebibliography}{99}

\bibitem{AKSZ} M. Aleksandrov, M. Kontsevich, A. Schwarz and O.
Zaboronsky, \textit{The geometry of the master equation and
topological quantum field theory,} Int. J. Mod. Phys. A 12 (1997) 1405--1430.

\bibitem{AM} A. Alekseev, P. Mnev, \textit{One-dimensional Chern-Simons theory,}  Comm. in Math. Phys. 307, 1 (2011) 185--227, arXiv:1005.2111.

\bibitem{And} J. E. Andersen, \textit{The Witten-Reshetikhin-Turaev invariants of finite order mapping tori I}, arXiv:1104.5576.

\bibitem{At}  M. Atiyah, \textit{New invariants of three and four manifolds}, Proc. Symp. Pure Math. (AMS), {\bf 48} (1988).

\bibitem{AxS} S. Axelrod, I. M. Singer, \textit{Chern-Simons perturbation theory. I.} Perspectives in mathematical physics, 17--49, Conf. Proc. Lecture Notes Math. Phys., III, Int. Press, Cambridge, MA (1994);
\textit{Chern-Simons perturbation theory. II.} J. Differential Geom. 39, 1 (1994) 173--213.

\bibitem{BD} J. Baez, J. Dolan, \textit{Higher-dimensional Algebra and Topological Quantum Field Theory}, J.Math.Phys. 36 (1995) 6073--6105.

\bibitem{BN} D. Bar-Natan, \textit{Perturbative Chern-Simons Theory} , Journal of Knot Theory and its Ramifications 4, 4 (1995) 503--548.

\bibitem{BV} I. A. Batalin, G. A. Vilkovisky, \textit{Gauge algebra and quantization,} Phys. Lett. B 102, 1 (1981) 27--31;\\
    I. A. Batalin, G. A. Vilkovisky,
\textit{Quantization of gauge theories with linearly dependent generators,} Phys. Rev. D 28, 10 (1983) 2567–-2582.

\bibitem{BFV} I. A. Batalin, E. S. Fradkin, \textit{A generalized canonical formalism and quantization of reducible gauge theories,} Phys. Lett. B 122, 2 (1983) 157--164;\\
    I. A. Batalin, G. A. Vilkovisky, \textit{Relativistic S-matrix of dynamical systems with boson and fermion costraints,} Phys. Lett. B 69, 3 (1977) 309--312.

\bibitem{BRST} C. Becchi, A. Rouet and R. Stora, \textit{Renormalization of gauge theories}, Ann. Phys. 98, 2 (1976) 287--321;
I. V. Tyutin, \textit{Gauge Invariance in Field Theory and Statistical Physics in Operator Formalism}, Lebedev Physics Institute preprint 39 (1975), arXiv:0812.0580.

\bibitem{S-manif} F. A. Berezin,
\textit{Introduction to Superanalysis}, edited by A. A. Kirillov, translated by J. Niederle and R. Kotecku, Springer (2010); V. S. Varadarajan, \textit{Supersymmetry for Mathematicians: An Introduction},
Courant Lecture Notes, vol. 11 (2004) AMS.

\bibitem{BR} I. N. Bernshtein and B. I. Rozenfeld, \textit{Homogeneous spaces of infinitedimensional
Lie algebras and the characteristic classes of foliations}, Uspehi
Mat. Nauk 28 (1973) 103--13.

\bibitem{BC} R. Bott, A. S. Cattaneo, \textit{Integral invariants of $3$-manifolds.} J. Differential Geom. 48 (1998), no. 1, 91--133; \textit{Integral invariants of 3-manifolds. II.} J. Differential Geom. 53, 1 (1999) 1--13.

\bibitem{CGMT} S. Cappell, D. DeTurck, H. Gluck, E. Miller, \textit{Cohomology of harmonic forms on Riemannian manifolds with boundary}, arXiv:math/0508372 (math.DG)

\bibitem{CC} A. S. Cattaneo and I. Contreras, \textit{Groupoids and Poisson sigma models with boundary}, 	 arXiv:1206.4330 (math.SG).

\bibitem{CCFMRTZ} A. S. Cattaneo, P. Cotta-Ramusino, F. Fucito, M. Martellini, M. Rinaldi, A. Tanzini, M. Zeni, \textit{Four-Dimensional Yang-Mills Theory as a Deformation of Topological $BF$ Theory}, Commun. Math. Phys. 197 (1998) 571--621.

\bibitem{CF} A. S. Cattaneo and G. Felder, \textit{A path integral approach to the Kontsevich quantization
formula,} Comm. Math. Phys. 212 (2000) 591--611.

\bibitem{CF-AKSZ}  A. S. Cattaneo and G. Felder, \textit{On the AKSZ formulation of the Poisson sigma model,}Lett. Math. Phys. 56, 163--179 (2001).

\bibitem{CM} A. S. Cattaneo, P. Mnev, \textit{Remarks on Chern-Simons invariants,} Comm. Math. Phys. 293 (2010) 803--836.

\bibitem{Gr-manif} A. S. Cattaneo , F. Schaetz, \textit{Introduction to supergeometry} , Rev. Math. Phys. 23, (2011) 669--690; e-Print: arXiv:1011.3401 (math-ph).

\bibitem{CQZ} A. S. Cattaneo, J. Qiu, M. Zabzine, \textit{2D and 3D topological field theories for generalized complex geometry,} Adv. Theor. Math. Phys. 14, 2 (2010) 695--725.

\bibitem{Co} K. Costello, \textit{Topological conformal field theories and Calabi-Yau categories},  arXiv:math/0412149.

\bibitem{Co_renorm} K. Costello, \textit{Renormalization and effective field theory}, Mathematical Surveys and Monographs, vol. 170 (2011) AMS.

\bibitem{DW} R. Dijkgraaf, E. Witten, \textit{Topological gauge theories and group cohomology},  Comm. Math. Phys.  129, 2  (1990) 393--429.

\bibitem{FP} L. D. Faddeev and V. N. Popov, \textit{Feynman Diagrams for the Yang-Mills Field}, Phys. Lett. B 25 (1967) 29.

\bibitem{Fr} D. Freed, \textit{Classical Chern-Simons theory, Part 1}, Adv.Math. 113 (1995) 237-303, arXiv:hep-th/9206021.

\bibitem{FK} J. Fr\"ohlich, C. King, \textit{The Chern-Simons theory and knot polynomials}, Comm. Math. Phys. 126, 1 (1989) 167--199.

\bibitem{GK} I. Gelfand and D. Kazhdan, \textit{Some problems of differential geometry and the
calculation of the cohomology of Lie algebras of vector fields}, Sov. Math. Dokl.
12, 1367–1370 (1971).

\bibitem{constr} J. Glimm, A. Jaffe, \textit{Quantum physics. A functional integral point of view. Springer-Verlag}, New York-Berlin, 1981.

\bibitem{Grig} M. Grigoriev, \textit{Parent formulation at the Lagrangian level},
  arXiv:1012.1903.

\bibitem{GKR} A. L. Gorodentsev, A. S. Khoroshkin, A. N. Rudakov, \textit{On syzygies of highest wight orbits}, arXiv:math/0602316 (math.AG) ,  see Lemma 3.5.1 (p.20).

\bibitem{GM} E. Guadagnini, M. Martellini, M. Mintchev, \textit{Perturbative aspects of the Chern-Simons field theory}, Phys. Lett. B 227 (1989) 111--117.

\bibitem{GL} V. K. A. M. Gugenheim and L. A. Lambe, \textit{Perturbation theory in differential homological algebra I}, Illinois J. Math. \textbf{33} 4 (1989) 566--582.

\bibitem{JF} T. Johnson-Freyd, \textit{Feynman-diagrammatic description of the asymptotics of the time evolution operator in quantum mechanics},  Lett. Math. Phys. 94 (2010) 123--149.

\bibitem{K1} M. Kontsevich, \textit{Feynman diagrams and low-dimensional topology.} First European Congress of Mathematics, Vol. II (Paris, 1992), 97--121, Progr. Math., 120, Birkh\"{a}user, Basel, 1994.

\bibitem{K} M. Kontsevich, \textit{Deformation quantization of Poisson manifolds, I,} Lett. Math. Phys. 66 (2003) 157--216.

\bibitem{KS} M. Kontsevich, Y. Soibelman, \textit{Homological mirror symmetry and torus fibrations}, arXiv:math/0011041 (math.SG).

\bibitem{VL}  J.-L. Loday, B. Vallette, \textit{Algebraic operads}, section 10.3.7
available for download at http://math.unice.fr/\textasciitilde brunov/Operads.pdf

\bibitem{Lurie} J. Lurie,  On the Classification of Topological Field Theories. {\it Current developments in mathematics, 2008}, Int. Press, Somerville, MA 2009, 129-280.

\bibitem{st-mod} R. Mann, \textit{An introduction to particle physics and the standard model}, CRC Press (2009).

\bibitem{Mn} P. Mnev, \textit{Discrete $BF$ theory},  arXiv:0809.1160 (hep-th).

\bibitem{MS} G. Moore, N. Seiberg, \textit{Classical and quantum conformal field theory,} Comm. Math. Phys.  123  (1989),  no. 2, 177--254.

\bibitem{JP} J.-S.~Park,
``Topological open P-branes"
in K. Fukaya, Y-G. Oh, K. Ono and G. Tian (eds.),
\textit{Symplectic geometry and mirror symmetry,} 311-384, World Scientific, 2001,
arXiv:hep-th/0012141.

\bibitem{qft-text-book} M. Peskin and D. Schroeder, \textit{An Introduction to Quantum Field Theory},
Perseus Books, 1995.

\bibitem{RT} N. Reshetikhin, V. G. Turaev, \textit{Invariants of $3$-manifolds via link polynomials and quantum groups}, Invent. Math. 103, 3 (1991) 547--597.

\bibitem{Royt} D. Roytenberg, \textit{AKSZ-BV Formalism and Courant Algebroid-induced Topological Field Theories,}
Lett. Math. Phys. 79 (2007) 143--159.

\bibitem{Roz} L. Rozansky, \textit{Reshetikhin's formula for the Jones polynomial of a link: Feynman diagrams and Milnor's linking numbers. Topology and physics}. J. Math. Phys. 35, 10 (1994) 5219--5246.

\bibitem{BF}  A. S. Schwarz, \textit{The partition function of a degenerate functional,} Comm. Math. Phys. 67
(1979), 1--16.

\bibitem{AS1} A. S. Schwarz, \textit{Geometry of Batalin-Vilkovisky quantization,} Comm. Math. Phys. 155 (1993) 249--260.

\bibitem{AS2} A. S. Schwarz, \textit{Semiclassical approximation in Batalin-Vilkovisky formalism,} Comm. Math. Phys.
v. 158, 2 (1993) 373--396.


\bibitem{Seg} G. Segal, \textit{The Definition of Conormal Field Theory. In: Topology, Geometry and Quantum Field Theory},  London Mathematical Society Lecture Note Series (No. 308), Cambridge University Press, 2004, p. 421--577.

\bibitem{Sev} P. Severa, \textit{On the origin of the BV operator on odd symplectic supermanifolds,} Lett. Math. Phys. 78, 1 (2006) 55--59.

\bibitem{Sm} F. A. Smirnov,  \textit{Lectures on integrable massive models of quantum field theory}.  Introduction to quantum group and integrable massive models of quantum field theory (Nankai, 1989),  1--68, Nankai Lectures Math. Phys., World Sci. Publ., River Edge, NJ, 1990.

\bibitem{ST} S. Stoltz, P. Teichner,  \textit{Super symmetric Euclidean field theories and generalized cohomology, a survey}, preprint, 2008.

\bibitem{ST1} S. Stoltz, P. Teichner, \textit{Supersymmetric field theories and generalized cohomology}, arXiv:/1108.0189.

\bibitem{W} E. Witten,
\textit{Quantum field theory and the Jones polynomial,}
Comm. Math. Phys. 121 (1989), 351--399.

\bibitem{W-QCD} E. Witten, \textit{Two dimensional gauge
theories revisited}, J. Geom. Phys. {\bf 9} (1992) 303--368.


\end{thebibliography}
\end{document}